\renewcommand{\epsilon}{\varepsilon}
\newcommand{\R}{\mathds{R}}
\newcommand{\Q}{\mathds{Q}}
\newcommand{\N}{\mathds{N}}
\newcommand{\E}{\mathds{E}}
\renewcommand{\P}{\mathds{P}}
\newcommand{\tdP}{\tilde{\P}}
\renewcommand{\Mc}{\mathcal{M}}
\newcommand{\Qc}{\mathcal{Q}}
\newcommand{\Wc}{\mathcal{W}}
\newcommand{\prob}{\mathcal{P}}
\newcommand{\pgn}{\hat{\pi}_N}
\newcommand{\supp}{\textrm{supp}}
\newcommand{\rr}{\textbf{r}}
\numberwithin{equation}{section}
\theoremstyle{plain}
\newtheorem{Theorem}{Theorem}[section]
\newtheorem{Cor}[Theorem]{Corollary}
\newtheorem{Defn}[Theorem]{Definition}
\newtheorem{Prop}[Theorem]{Proposition}
\newtheorem{Lem}[Theorem]{Lemma}
\newtheorem{Ass}[Theorem]{Assumption}
\theoremstyle{remark}
\newtheorem{Rem}[Theorem]{Remark}
\newtheorem{Ex}[Theorem]{Example}
\renewcommand{\email}[2][]{%
  \ifx\emails\@empty\relax\else{\g@addto@macro\emails{,\space}}\fi%
  \@ifnotempty{#1}{\g@addto@macro\emails{\textrm{(#1)}\space}}%
  \g@addto@macro\emails{#2}%
}
\begin{document}
\title[Robust estimation of superhedging prices]{Robust estimation of superhedging prices}

\date{\today}

\author{Jan Ob{\l}{\'o}j}
\email{jan.obloj@maths.ox.ac.uk}
\address{Mathematical Institute and St.\ John's College\newline
University of Oxford\newline
AWB, ROQ, Woodstock Road\\
Oxford, OX2 6GG, UK}

\author{Johannes Wiesel}
\email{johannes.wiesel@maths.ox.ac.uk}

\begin{abstract}
 We consider statistical estimation of superhedging prices using historical stock returns in a frictionless market with $d$ traded assets. We introduce a plugin estimator based on empirical measures and show it is consistent but lacks suitable robustness. To address this we propose novel estimators which use a larger set of martingale measures defined through a tradeoff between the radius of Wasserstein balls around the empirical measure and the allowed norm of martingale densities. We establish consistency and robustness of these estimators and argue that they offer a superior performance relative to the plugin estimator. We generalise the results by replacing the superhedging criterion with acceptance relative to a risk measure. We further extend our study, in part, to the case of markets with traded options, to a multiperiod setting and to settings with model uncertainty. We also study convergence rates of estimators and convergence of superhedging strategies.
\end{abstract}

\keywords{superhedging price, risk measures, statistical estimation, consistency, robustness, stock returns, Wasserstein metric, pricing-hedging duality, empirical measure}

\thanks{Support from the European Research Council under the European Union's Seventh Framework Programme (FP7/2007-2013) / ERC grant agreement no. 335421 and from St John's College in Oxford are gratefully acknowledged. We thank Daniel Bartl, Stephan Eckstein, Robert Stelzer, Ruodu Wang 
and three anonymous reviewers for their very helpful comments which allowed us to improve this paper. JW further acknowledges support from the German Academic Scholarship Foundation.}

\maketitle

\section{Introduction} 

Computation of risk associated to a given financial position is one of the fundamental operations market participants have to perform. For institutional players, like banks, it is regulated by the Basel Committee \cite{basel2013fundamental} which dictates rules and requirements for such risk assessments. A golden standard has long been given by Value-at-Risk (VaR), however more recently this is being replaced by convex risk measures like Average VaR (Expected Shortfall) or more sophisticated approaches which include market modelling. Consequently, there is an abundant literature on VaR estimation and some more recent works related to statistical estimation of law-invariant risk measures, see \cite{cont2010robustness,kratschmer2012qualitative,kratschmer2014comparative, kou2013external,pichler2013evaluations, embrechts2015aggregation, embrechts2018robustness}. 
All these works consider a static situation with no trading involved.

In contrast, in this paper we consider estimation of risk for an agent who can trade in the market to offset her risk exposure. To put in evidence the novelty and relevance of our setting, we concentrate on one,  simple but canonical, way to assess risk: the \emph{superhedging price}. Consider a one-period frictionless market with prices $(S_t,S_{t+1})$ denominated in units of a fixed numeraire. The current stock prices $S_t$ are known and the future prices $S_{t+1}$ are modelled as random variables, say with return $r:=S_{t+1}/S_t$ drawn from a distribution $\P$ on $\R_+^d$. For a payoff $g:\R_+^d\to \R$, its superhedging price is given by:
\begin{equation}\label{eq:sh_def}
 \pi^{\P}(g):=\inf \{x \in \R \ | \ \exists H \in \R^d \text{ s.t. } x+H(r-1) \ge g(r) \ \P\text{-a.s.}\}.
\end{equation}
In this simple setting, an arbitrage strategy is $H\in \R^d$ such that $\P(H(r-1)\geq 0)=1$ and $\P(H(r-1)>0)>0$ and if no such strategy exists we say that no-arbitrage NA($\P$) holds. 
By the Fundamental Theorem of Asset Pricing, absence of arbitrage is equivalent to existence of a probability measure $\Q$, equivalent to $\P$, under which $S$ is a martingale, i.e., $\E_\Q[r]=1$. There might be more than one such measure and they can all be used for pricing. Taking the supremum over $\E_\Q[g]$ enables to compute the maximal feasible price for $g$ and this, by the fundamental pricing-hedging duality, is the same as the superhedging price of $g$:
\begin{equation}\label{eq:sh_duality}
\pi^{\P}(g)=\sup_{\Q \sim \P, \ \E_{\Q}[r]=1} \E_{\Q}[g],
\end{equation}
for all Borel $g$, cf.\ \cite[Thm.\ 1.31]{follmer2011stochastic}. Despite its theoretical importance and practical relevance, to the best of our knowledge, there has been no attempt to study statistical estimation of the superhedging price. 
Our paper fills this important gap. Instead of postulating a measure $\P$, we build estimators of $\pi^{\P}(g)$ directly from historical observations of returns $r_1, \ldots, r_N$ and study their properties. Furthermore, we extend the estimators to take into account also the option price data. This is practically relevant and methodologically novel in that it allows a coherent and simultaneous use of historical time-series data with current option price data or, in mathematical finance jargon, the \emph{physical measure} data and the \emph{risk neutral measure} data. 

In contrast, in existing approaches historical returns are, if at all, only used indirectly  to compute $\pi^\P(g)$. In classical mathematical finance, one first postulates a family of plausible models $\{\P_\theta:\theta\in \Theta\}$. Such choice may be influenced by stylised features of historical returns, see \cite{Rough} for a recent example, as well as by other considerations, e.g., of computational tractability. Thereon, historical returns are not used and only the ``future facing" options price data is exploited to select a candidate pricing measure $\Q_\theta$. More recently, pioneered by Mykland \cite{mykland2000conservative, mykland2003interpolation, Mykland_prediction_set} in a continuous-time setting
and pursued within the so-called robust approach to pricing and hedging, it was suggested to use historical returns to select a \emph{prediction set}, i.e., the set of paths on which the superhedging property is required, and then to compute the resulting cheapest superhedge which trades in stocks and options, see \cite{,hou2015robust,bfhmo}. Our approach inherits from that perspective but takes a statistical viewpoint and evolves it into a dynamic and asymptotically consistent methodology.

To describe our approach, suppose we observe $d$-dimensional historical returns $r_1=S_1/S_0, \dots, r_N=S_N/S_{N-1}$ and for simplicity assume that these are non-negative i.i.d.\ realisations of a distribution $\P$ which satisfies the no-arbitrage condition. We can equivalently represent the observations through their associated empirical measures  
\begin{align*}
\hat{\P}_N= \frac{1}{N} \sum_{i=1}^N \delta_{r_i},
\end{align*}  
which are well known to converge weakly to $\P$ as $N\to \infty$, see \cite[Theorem 19.1, p. 266]{van1998asymptotic}. This suggests a very natural way to approximate the superhedging price by simply using $\hat{\P}_N$ in place of $\P$. We show in Theorem \ref{Thm. one-per} below that the resulting \emph{plugin estimator} $\hat{\pi}_N(g):=\pi^{\hat{\P}_N}(g)$ is asymptotically consistent:
\begin{align*}
\lim_{N \to \infty}\hat{\pi}_N(g)=\pi^\P(g),\qquad \P^{\infty}\text{-a.s.},
\end{align*}
where $\P^{\infty}$ denotes the law of the process $(r_N)_{N\geq 1}$.
However, we also show that $\hat{\pi}_N$ has serious shortcomings. First, it is not (statistically) robust: small perturbations of $\P$ can lead to large changes in the distribution of $\hat{\pi}_N$. We argue that the L\'evy-Prokhorov metric used in the classical definition of statistical robustness, Definition \ref{def:statrobust}, is not appropriate when looking at the financial context of derivatives pricing. We propose and study alternative metrics and ensuing notions of statistical robustness in Section \ref{sec:robust}. 

Second, the plugin estimator also lacks robustness from the financial point of view of risk management. In fact, $\hat{\pi}_N$ is monotone in $N$ and converges from below so it is always a lower estimate of the risk: $\hat{\pi}_N\leq \pi^\P$. In Theorem \ref{thm:cnv_rate_basic}, and in more detail in \cite[Section \ref{sec:cnvrates}]{stats2}, we study the convergence rates for the plugin estimators. This, in the one-dimensional case $d=1$, could be exploited to build conservative estimates for the superhedging price $\pi^\P$.

A first intuition to improve the plugin estimator could be to turn to estimators of the support of $\P$. Indeed, the superhedging price $\pi^\P(g)$, say for a continuous $g$, only depends on $\P$ via its support. We could thus replace the $\hat{\P}_N$-a.s.\ inequality in the plugin estimator by an inequality on an estimator of the support of $\P$. Such estimators are well studied in statistics, going back to \cite{geffroy,chevalier,devroye1980detection, grenander1981abstract}, see also \cite{cuevas1990pattern, korostelev1995efficient, mammen1995asymptotical, hardle1995estimation, polonik1995measuring, tsybakov1997nonparametric, cuevas2004boundary,casal2007set}. Unfortunately, this approach does not seem to hold any ground. First, convergence of support estimators usually imposes strong conditions on $\P$, e.g., compactness and convexity of the support and/or existence of a density. Second, for the convergence of the superhedging prices we would also need to impose some uniform continuity assumptions on $g$. However, under such conditions on $\P$ and $g$, we could directly improve the plugin estimator and consider a suitable $\hat{\pi}_N+a_N$, see Section \ref{sec:plugin_cnvrate}. 

Instead, to address the shortcomings of the plugin estimator, we propose novel estimators, which we introduce in Section \ref{sec. impr}. 
They exploit the dual formulation of the superhedging price in \eqref{eq:sh_duality}. 
In order to achieve financial robustness and to increase our point estimates we need to consider a larger class of martingale measures. Thus we consider
$$\pi_{\Qc_N}(g)=\sup_{\Q\in \Qc_N}\E_\Q[g],$$
where $\Qc_N$ is a subset of all martingale measures $\mathcal{M}$. The plugin estimator corresponds to taking $\Qc_N=\{\Q\in \Mc: \Q\sim \hat{\P}_N\}$ and it is natural to replace it with
$$\Qc_N=\{\Q\in \Mc: \exists \tilde\P\in B_N(\hat{\P}_N) \textrm{ s.t. }\Q\sim \tilde\P \},$$
where $B_N(\hat{\P}_N)$ is some ``ball" in the space of probability measures around the empirical measure $\hat{\P}_N$. 
We show that this can lead to a consistent estimator if we use a  sufficiently strong metric, e.g., the Wasserstein infinity metric $\Wc^\infty$. In general however such $\Qc_N$ is too large. Instead, our main insight is to consider a tradeoff between the radius of the balls and the behaviour of martingale densities:
$$\hat{\mathcal{Q}}_N := \{\Q \in \mathcal{M} \ | \ \|d\Q/d\tilde\P\|_{\infty}\le {k_N} \text{ for some } \tilde\P \in B^p_{\epsilon_N}(\hat{\P}_N) \},$$ where $B^p_{\epsilon_N}(\hat{\P}_N)$ denotes the $p$-Wasserstein ball of radius $\epsilon_N$ around $\hat{\P}_N$ and $\epsilon_N \to 0$ as well as $k_N \to \infty$. 
With a suitable choice of $\epsilon_N,k_N$, we establish consistency of $\pi_{\hat{\Qc}_N}(g)$ for a regular $g$, see Theorem \ref{Thm wasserstein}, and also their financial robustness, see Corollary \ref{cor. fin_rob_was}. 
This also allows us to study the cases when the estimator naturally extends to the setting of superhedging under model uncertainty about $\P$, see Corollary \ref{cor:modeluncertainty}. 
The statistical robustness of $\pi_{\hat{\Qc}_N}(g)$ is shown in Section \ref{sec:robust}, see Theorem \ref{thm rob_was}. \
In Section \ref{sec:riskmeasures} we extend our analysis to the case when risk is assessed not using the superhedging capital but rather via a generic risk measure $\rho$ admitting a Kusuoka representation (\cite{kusuoka2001law}, see \eqref{eq:kusuoka} for a definition). We stress that this is substantially different to all the works recalled at the beginning of this Introduction since we consider an agent who can trade and optimises her position to offset the risk. We propose an estimator, inspired by $\pi_{\hat{\Qc}_N}(g)$, and show its consistency.

Finally, we also propose another estimator:
$$\sup_{\Q\in \mathcal{M}} \left(\E_{\Q}[g]-C_N\left(\inf_{\hat{\Q} \sim \hat{\P}_N, \ \hat{\Q}\in \mathcal{M}} \left\| \frac{d\hat{\Q}}{d\Q} \right\| _{\infty}-1\right)\right),$$
which  is inspired by penalty methods used in risk measures and their representations as non-linear expectations. 
Asymptotic consistency of this estimator is shown in Thereom \ref{thm:penalty} and holds for an arbitrary measurable bounded $g$.

The rest of the paper is organised as follows. In Section \ref{sec. plugin} we study the plugin estimator $\hat{\pi}_N$: its consistency, convergence rates and robustness, both statistical and financial. In Section \ref{sec. impr} we propose improved estimators and establish consistency for all of them, under different sets of assumptions. Subsequently, in Section \ref{sec:robust}, we discuss statistical robustness of all the estimators. We show in particular that no estimator can be robust in the classical sense of Tukey-Huber-Hampel, and suggest ways to amend the classical definition to make it more appropriate to the superhedging price estimation. We then detail further applications of the main results. In particular we discuss convergence of superhedging strategies in Section \ref{sec. strat} and partially extend the results to a multiperiod setting in Section \ref{sec mult}. Section \ref{sec. arb} discusses estimators $\pi_{{\Qc}_N}$ for generic sets of martingale measures $\Qc_N$ and derives necessary and sufficient conditions for asymptotic consistency of estimators. In particular, it motivates the estimators studied in Section \ref{sec. impr}. Finally, Section \ref{sec:cnvrates}  studies in more detail convergence rates of the plugin estimator $\hat{\pi}_N$ when $d=1$ and is auxiliary to Section \ref{sec:plugin_cnvrate}. Lastly, Appendix \ref{sec:appendix} contains the remaining proofs, along with auxiliary results. 
\\
\textbf{Notation.} 
We write $\prob(A)$ for the set of probability measures on $A\subset \R^d$. 
$\P_n\Rightarrow \P$ denotes weak convergence of measures. 
$\P\in \prob(\R^d_+)$ is a generic distribution for returns $r$ so that $\E_\P[r]=\int_{\R^d_+} x\P(dx)$. We let $\Mc=\{\Q\in \prob(\R^d_+): \E_\Q[r]=1\}$ denote the set of martingale measures for the stock prices $(S_t,S_{t+1})$, where $S_t>0$ is fixed and $S_{t+1}=r S_t$. We write $\Mc_{A}$ for the set of martingale measures supported on $A$. 
We say that $\P\in \prob(\R^d_+)$ does not admit arbitrage, or that NA$(\P)$ holds, if $\{\Q\in \Mc: \Q\sim \P\}\neq \emptyset$.
Above, and throughout, $H$ is a row vector, $r$ is a column vector and $1$ denotes either a scalar or a column vector $(1,\ldots, 1)^{\mathrm{T}}$.

\section{The plugin estimator}\label{sec. plugin}

Recall that we want to build an estimator for the superhedging price $\pi^\P(g)$. The easiest and possibly most natural way to do this is simply to replace the measure $\P$ with the empirical measures $\hat{\P}_N$. This yields the \emph{plugin estimator}: 
\begin{equation}\label{eq:plugin}
\pgn(g):=\pi^{\hat{\P}_N}(g).
\end{equation}
 In this section we develop the necessary tools to show asymptotic consistency of this estimator and understand its properties. The proofs are reported in Appendix \ref{sec:appendix}.

\subsection{Consistency}
We now state the main result of this section:
\begin{Theorem}\label{Thm. one-per}
Let  $\P_1, \P \in \prob(\R_+^d)$ and $g: \R^d_+ \to \R$ be Borel measurable. Assume that $r_1, r_2, \dots$ are realisations of a time-homogeneous ergodic Markov chain with initial distribution $\P_1$ and unique invariant distribution $\P$ such that $\P_1\ll \P$. Then
\begin{equation}\label{eq:pgn_cnv}
\begin{split}
\lim_{N \to \infty} \pgn&(g)=\pi^\P(g) \qquad \P^{\infty}\text{-a.s.},
\end{split}
\end{equation}
where $\P^{\infty}$ denotes the law of the Markov process started from $\P_1$.
\end{Theorem}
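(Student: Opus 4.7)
The plan is to split the claim into matching bounds $\limsup_N \pgn(g) \leq \pi^\P(g)$ and $\liminf_N \pgn(g) \geq \pi^\P(g)$, both $\P^{\infty}$-a.s. The single probabilistic input is Birkhoff's ergodic theorem for the Markov chain: since $(r_i)$ is ergodic with invariant law $\P$ and $\P_1\ll\P$, for every $\varphi\in L^1(\P)$ one has $\frac{1}{N}\sum_{i=1}^N \varphi(r_i)\to\int\varphi\,d\P$ $\P^{\infty}$-a.s. In particular $\hat\P_N\Rightarrow\P$ a.s., and because the $i$-step distribution of the chain is absolutely continuous w.r.t.\ $\P$ for every $i$, for any $\P$-null Borel set $A$ we have $\P^{\infty}$-a.s.\ $r_i\notin A$ for all $i$.

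The upper bound follows from the primal representation \eqref{eq:sh_def}. I pick a countable collection of superhedging pairs $(x_k,H_k)\in\R\times\R^d$ with $x_k\searrow\pi^\P(g)$ and $x_k+H_k(r-1)\geq g(r)$ off some $\P$-null set $A_k$. By the observation above, $\P^{\infty}$-a.s.\ the whole trajectory $(r_i)_{i\geq 1}$ avoids $\bigcup_k A_k$, so each $(x_k,H_k)$ is a superhedge of $g$ under every $\hat\P_N$ simultaneously, giving $\pgn(g)\leq x_k$ for all $N$. Letting $k\to\infty$ yields $\pgn(g)\leq\pi^\P(g)$ a.s.\ for all $N$ at once.

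For the lower bound I use the dual representation \eqref{eq:sh_duality}. Given any $\Q\sim\P$ with $\E_\Q[r]=1$ and density $f=d\Q/d\P$, the ergodic theorem gives $\frac{1}{N}\sum_i f(r_i)\to 1$ and $\frac{1}{N}\sum_i f(r_i)r_i\to 1$ a.s. Hence the reweighted empirical $\hat\Q_N:=\bigl(\sum_j f(r_j)\bigr)^{-1}\sum_i f(r_i)\delta_{r_i}$ is equivalent to $\hat\P_N$ and an \emph{approximate} martingale measure with $\E_{\hat\Q_N}[r]\to 1$ and, after truncating $g$, $\E_{\hat\Q_N}[g]\to\E_\Q[g]$ a.s. I then correct $\hat\Q_N$ into a true element $\tilde\Q_N\in\Mc$ with $\tilde\Q_N\sim\hat\P_N$ by a small tilt along martingale directions available in the empirical support, whose existence is guaranteed by NA$(\hat\P_N)$, itself valid for large $N$ via the ergodic theorem applied to an arbitrage-detection event. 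Then $\pgn(g)\geq\E_{\tilde\Q_N}[g]\to\E_\Q[g]$, and taking $\sup_\Q$ yields $\liminf_N\pgn(g)\geq\pi^\P(g)$. A Borel $g$ is handled by a truncation $g_K:=(-K)\vee g\wedge K$ and a monotone passage $K\to\infty$.

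The chief obstacle is the correction step: turning the approximate martingale measure $\hat\Q_N$ into a genuine one supported on $\{r_1,\dots,r_N\}$ while keeping the perturbation quantitatively of order $o(1)$, so that convergence $\E_{\tilde\Q_N}[g]\to\E_\Q[g]$ is preserved. This requires some care about how the martingale polytope $\{Q\sim\hat\P_N:\E_Q[r]=1\}$ depends on the data, and about the interchange with the supremum over $\Q$ when extending to general Borel $g$.
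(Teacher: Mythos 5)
Your upper bound is clean and essentially equivalent in spirit to what the paper does (trajectories a.s.\ avoid $\P$-null sets because all $i$-step marginals are $\ll\P$ when $\P_1\ll\P$ and $\P$ is invariant). The lower bound, however, is where a genuine gap appears, and it stems from a structural choice that the paper deliberately avoids.

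You invoke the dual representation \eqref{eq:sh_duality} for the lower bound, but that representation only holds under NA$(\P)$, which is \emph{not} a hypothesis of Theorem \ref{Thm. one-per}. The theorem is stated for arbitrary $\P\in\prob(\R^d_+)$: if NA$(\P)$ fails, $\pi^\P(g)$ may be $-\infty$ and the set of equivalent martingale measures is empty, so the reweighting-and-tilting construction has nothing to start from. Even granting NA$(\P)$, you yourself flag the ``correction step'' as the chief obstacle and leave it as a plan: you would need (a) NA$(\hat\P_N)$ for $N$ large, which is exactly Proposition \ref{prop. supp} and requires a separate geometric argument about $1$ lying in the relative interior of $\mathrm{conv}(\mathrm{supp}(\hat\P_N))$, and (b) a quantitative bound showing the tilt from $\hat\Q_N$ to a genuine $\tilde\Q_N\in\Mc$ can be taken $o(1)$ in a sense that preserves $\E_{\tilde\Q_N}[g]\to\E_\Q[g]$ after truncation. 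Neither is carried out, and (b) in particular is delicate because the allowable tilt size depends on how deep $1$ sits inside the empirical polytope. Finally, the truncation $g_K=(-K)\vee g\wedge K$ is not monotone in $K$ for two-sided unbounded $g$, so the ``monotone passage $K\to\infty$'' needs care.

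The paper sidesteps all of this by staying entirely primal: $\pi^\P(g)=\hat g_\P(1)$ and $\pgn(g)=\hat g_{\{r_1,\dots,r_N\}}(1)$ (cf.\ \eqref{eq:cncenv}), and the lower bound is obtained not from the dual but from (i) Lusin's theorem, which produces compacts $K_n\uparrow$ with $g|_{K_n}$ continuous and $\hat g_{K_n}\uparrow\hat g_\P$, together with (ii) Birkhoff's theorem, which makes $\{r_1,r_2,\dots\}$ a.s.\ dense in $\supp(\P)$ so that $\hat g_{K_n\cap\{r_1,r_2,\dots\}}=\hat g_{K_n}$. This gives both directions at once, requires no NA hypothesis, and handles arbitrary Borel $g$ cleanly. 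If you want to keep your dual route for the lower bound, you should restrict the statement to the NA case, supply the argument for Proposition \ref{prop. supp}, and make the tilt quantitative; the primal route is shorter and strictly more general.
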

\begin{Rem}\label{rem:applicable}
The assumptions in the above theorem are standard in econometric theory and cover a variety of models frequently used for modelling of financial returns data. We refer to Corollary \ref{cor:ergodicity} in \citep{stats2} for sufficient conditions for stationarity (with exponential decay rates) for various random coefficient autoregressive models, e.g., linear and power GARCH and stochastic autoregressive volatility models, which are frequently used for option pricing. Nevertheless we remark that this assumption rules out deterministic trends, structural breaks and seasonalities, which need to be treated separately.
\end{Rem}
The proof for a general $g$ follows by Lusin's theorem from the case of a continuous claim $g$ which in turn depends on the characterisation of the superhedging price using concave envelopes, which we now recall. 
\begin{Defn} Let $g: \R^d_+\to \R$ be Borel.
For $A \subseteq \R_+^d$ and $x \in A$ we define the pointwise concave envelope
\begin{align*}
\hat{g}_A(x)= \inf \{ u(x)\ | \ u:\R_+^d \to \R \text{ closed concave, }u \ge g \text{ on } A \}.
\end{align*}
We define the $\P$-a.s. concave envelope as
\begin{align*}
\hat{g}_{\P}(x)= \inf \{u(x) \ | \ u: \R_+^d \to \R \text{ closed concave, }  u\ge g \ \P\text{-a.s.} \}.
\end{align*}
\end{Defn}
It is well known that in the definition of concave envelopes above we could take infimum over affine functions instead of concave functions. It follows from the definition of the superhedging price in \eqref{eq:sh_def} that we have
\begin{equation}
\label{eq:cncenv}
\pi^{\P}(g)=\hat{g}_{\P}(1)\quad \textrm{and} \quad \pgn(g)=\hat{g}_{\hat{\P}_N}(1)=\hat{g}_{\{r_1,\ldots,r_N\}}(1).
\end{equation}


Properties and computational methods for concave envelopes, or more generally for convex hulls of a set of discrete points, have been studied in many applied sciences and there are a number of efficient numerical routines available for their calculation. Naturally computational complexity increases with higher dimensions. Nevertheless there exist algorithms determining approximative convex hulls, whose complexity is independent of the dimension, see for instance \cite{sartipizadeh2016computing}.

To establish a dual formulation for the plugin estimator, assume now that $\P_1\ll \P$ as well as no-$\P$-arbitrage, NA$(\P)$, holds and recall this implies the pricing-hedging duality, cf.\ \eqref{eq:sh_duality}.
It turns out that since $\text{supp}(\hat{\P}_N) \subseteq \text{supp}(\P)$ this already implies that NA$(\hat{\P}_N)$ holds for $N$ large enough. More generally we have:
\begin{Prop}\label{prop. supp}
Let  $\P\in \prob(\R_+^d)$ and $(\P^N)_{N \in \N}$ be a sequence of probability measures on $\R_+^d$ such that $\P^N \Rightarrow \P$ and $\text{supp}(\P^N) \subseteq \text{supp}(\P)$. Then 
\begin{align*}
\text{NA}(\P) \hspace{0.5cm} \Leftrightarrow \hspace{0.5cm} \exists N_0 \in \N \text{ s.t.  NA}(\P^N) \ \text{for all }N \ge N_0.
\end{align*}
In particular, if NA($\P$) holds then in the setup of Theorem \ref{Thm. one-per} we also have
\begin{equation}\label{eq. long}
\begin{split}
\lim_{N \to \infty} \pgn(g)
=\lim_{N \to \infty} \sup_{\Q \sim \hat{\P}^N, \ \Q \in \mathcal{M}} \E_{\Q}[g]=\pi^\P(g) \qquad {\P}^{\infty}\text{-a.s.}
\end{split}
\end{equation}
\end{Prop}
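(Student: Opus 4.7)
I would start by reformulating NA geometrically: no-arbitrage for a measure $\mu$ is equivalent to the non-existence of $H \in \R^d$ with $H(x-1) \ge 0$ on $\supp(\mu)$ and $H(x-1) > 0$ at some point of $\supp(\mu)$. A small standalone lemma I would invoke repeatedly is that weak convergence $\P^N \Rightarrow \P$ allows every $x \in \supp(\P)$ to be approximated by some $x_N \in \supp(\P^N)$: apply the Portmanteau theorem to the balls $B(x,1/k)$, which have positive $\P$-mass since $x \in \supp(\P)$, and then diagonalise.

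\textbf{Easy direction ``$\Leftarrow$''.} By contraposition: if NA$(\P)$ fails with witness $H$, then $H(x-1) \ge 0$ on $\supp(\P)$ and $H(x^*-1) > 0$ for some $x^* \in \supp(\P)$. The inclusion $\supp(\P^N) \subseteq \supp(\P)$ preserves the non-negativity, so it suffices to show that the open set $U = \{x : H(x-1) > 0\}$ meets $\supp(\P^N)$ eventually. Since $x^* \in U \cap \supp(\P)$ we have $\P(U) > 0$; Portmanteau gives $\liminf_N \P^N(U) \ge \P(U) > 0$, so $U \cap \supp(\P^N) \neq \emptyset$ for $N$ large, contradicting NA$(\P^N)$.

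\textbf{Harder direction ``$\Rightarrow$''.} I would argue by contradiction: assume NA$(\P)$ but NA$(\P^{N_k})$ fails along a subsequence with witnesses $H_{N_k}$. Let $V = \mathrm{aff}(\supp(\P)) - 1$ be the tangent linear subspace and $\tilde H_{N_k}$ the orthogonal projection of $H_{N_k}$ onto $V$. Since $\supp(\P^{N_k}) \subseteq \supp(\P)$, the identity $H_{N_k}(x-1) = \tilde H_{N_k}(x-1)$ holds on $\supp(\P^{N_k})$, so $\tilde H_{N_k}$ is itself an arbitrage and in particular $\tilde H_{N_k} \neq 0$. Rescaling to $|\tilde H_{N_k}| = 1$ and extracting, I obtain $\tilde H_{N_k} \to H \in V$ with $|H| = 1$. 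For each $x \in \supp(\P)$ I pick $x_{N_k} \in \supp(\P^{N_k})$ with $x_{N_k} \to x$ and pass to the limit in $\tilde H_{N_k}(x_{N_k}-1) \ge 0$ to obtain $H(x-1) \ge 0$ on $\supp(\P)$. NA$(\P)$ then forces $H(x-1) = 0$ on $\supp(\P)$, i.e.\ $H \perp V$, and combined with $H \in V$ this gives $H = 0$, contradicting $|H| = 1$.

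\textbf{``In-particular'' claim and main obstacle.} For the final assertion I would apply the equivalence with $\P^N = \hat{\P}_N$. Invariance of the transition kernel $K$ under $\P$ gives $K(x, \supp(\P)) = 1$ for $\P$-a.e.\ $x$; combined with $\P_1 \ll \P$, an induction on $i$ shows that the $i$-th marginal $\P_1^{(i)} \ll \P$, hence $r_i \in \supp(\P)$ for all $i$ $\P^\infty$-a.s., so $\supp(\hat{\P}_N) \subseteq \supp(\P)$ $\P^\infty$-a.s. Weak convergence $\hat{\P}_N \Rightarrow \P$ $\P^\infty$-a.s.\ is the same ergodic fact underlying Theorem \ref{Thm. one-per}. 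The proven equivalence then yields NA$(\hat{\P}_N)$ eventually, the pricing-hedging duality \eqref{eq:sh_duality} rewrites $\pgn(g)$ as the stated supremum, and Theorem \ref{Thm. one-per} supplies the limit in \eqref{eq. long}. The main subtlety I anticipate is the possible degeneracy when $\supp(\P)$ lies in a proper affine subspace of $\R^d$: the projection onto $V$ in the ``$\Rightarrow$'' step is the key device to strip away components of $H_{N_k}$ perpendicular to $\mathrm{aff}(\supp(\P))$, which would otherwise survive in the unit-sphere compactness argument and prevent the contradiction.
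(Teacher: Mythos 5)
Your proof is correct, but for the harder direction ``NA$(\P)\Rightarrow$ eventual NA$(\P^N)$'' you follow a genuinely different route from the paper. The paper argues constructively: it recalls that NA$(\P)$ is equivalent to $1\in \mathrm{ri}(\mathrm{conv}(\mathrm{supp}(\P)))$, exhibits a finite collection $A\subseteq\mathrm{supp}(\P)$ whose convex hull contains $1$ in its relative interior \emph{stably} under $\delta$-perturbations inside $\mathrm{aff}(\mathrm{supp}(\P))$, and then uses weak convergence together with $\mathrm{supp}(\P^N)\subseteq\mathrm{supp}(\P)$ to show $\P^N$ charges a $\delta$-ball around each point of $A$ for $N$ large, so $1\in \mathrm{ri}(\mathrm{conv}(\mathrm{supp}(\P^N)))$. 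You instead argue by contradiction: project the putative arbitrage witnesses $H_{N_k}$ onto $V=\mathrm{aff}(\mathrm{supp}(\P))-1$, normalise, extract a unit-norm limit $H\in V$, and show via Portmanteau-type approximation of points of $\mathrm{supp}(\P)$ by points of $\mathrm{supp}(\P^{N_k})$ that $H(\cdot-1)\ge 0$ on $\mathrm{supp}(\P)$, whence NA$(\P)$ forces $H\perp V$, contradicting $H\in V$, $|H|=1$. Both arguments handle the degeneracy of a lower-dimensional support by working inside $\mathrm{aff}(\mathrm{supp}(\P))$. Your compactness-and-normalisation device is ``softer'' and avoids constructing the finite set $A$; it is in fact close in spirit to the argument the paper uses later in Theorem~\ref{Thm. strat} to bound superhedging strategies. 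The paper's construction is more explicit and yields directly the stability needed. Your easy direction and the ``in particular'' step (absolute continuity of the marginals $\P_1^{(i)}\ll\P$ via invariance of the kernel, then $\mathrm{supp}(\hat{\P}_N)\subseteq\mathrm{supp}(\P)$ a.s., the a.s.\ weak convergence from the ergodic theorem, and the duality~\eqref{eq:sh_duality}) agree with what the paper relies on implicitly.
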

We close this section considering an extended setup where in addition to the traded assets $S$, whose historical prices we observe, there also exist options in the market, which can be used for hedging $g$. If the market enlarged with those options does not allow for an arbitrage, the superhedging price of $g$ in this market is again  approximated by the plugin estimator, which now also allows for trading in the options. More precisely, we have the following:

\begin{Cor}\label{Cor. options}
Let  $\P\in \prob(\R_+^d)$ and $g: \R^d_+ \to \R$ be Borel-measurable. In addition to the assets $S$, assume that there are $\tilde{d}$ traded options with continuous payoffs $f_1(r)$ and prices $f_0$ in the market. Define the evaluation map 
$$e(r)=\Big(r^1, \dots, r^d, f^1_1(r)/f^1_0, \dots,f^{\tilde{d}}_1(r)/f^{\tilde{d}}_0 \Big)^{\mathrm{T}}$$ 
and $\tilde{\P}:=\P\circ e^{-1}$. Finally assume no arbitrage, NA$(\tilde{\P})$, holds. Then, under the assumptions of Theorem \ref{Thm. one-per}, we have $\P^{\infty}$-a.s.
\begin{align*}
&\lim_{N \to \infty} \inf\{x \in \R \ | \ \exists H \in \R^{d+\tilde{d}} \text{ s.t. } x+H(e(r)-1)\ge g(r) \ \forall r\in \{r_1, \dots, r_N\}\}\\
=\ &\inf\{x \in \R \ | \ \exists H \in \R^{d+\tilde{d}} \text{ s.t. } x+H(e(r)-1)\ge g(r) \ \P \text{-a.s.}\}\\
=\ &\sup_{\Q \sim \P, \ \Q \in \mathcal{M}, \ \E_{\Q}[f_1]=f_0} \E_{\Q}[g].
\end{align*}
\end{Cor}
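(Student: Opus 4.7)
The plan is to reduce this to the one-asset-family setup of Theorem \ref{Thm. one-per} and Proposition \ref{prop. supp} by enlarging the market: viewing the $d+\tilde{d}$ coordinates of $e(r)$ as the traded assets. To this end I would define the lifted payoff $\tilde{g}:\R^{d+\tilde{d}}_+\to\R$ by $\tilde{g}(y_1,\ldots,y_{d+\tilde{d}}):=g(y_1,\ldots,y_d)$, so that $\tilde{g}\circ e = g$. The empirical measure in the enlarged market is $\hat{\tilde{\P}}_N := \hat{\P}_N \circ e^{-1} = \frac{1}{N}\sum_{i=1}^N\delta_{e(r_i)}$; by continuity of $e$ together with the ergodic hypothesis we obtain $\hat{\tilde{\P}}_N\Rightarrow\tilde{\P}$ almost surely, and $\tilde{\P}_1:=\P_1\circ e^{-1}\ll\tilde{\P}$. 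Unpacking the definitions, the first quantity in the corollary is exactly the plugin estimator $\pi^{\hat{\tilde{\P}}_N}(\tilde{g})$ in the enlarged market, and the second is $\pi^{\tilde{\P}}(\tilde{g})$.

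I would then apply Theorem \ref{Thm. one-per} in the enlarged market; its argument passes through the underlying dynamics only via a.s.\ weak convergence of the empirical measures together with the concave-envelope representation \eqref{eq:cncenv}, both of which transfer to $(e(r_N))_N$. Combined with Proposition \ref{prop. supp} applied in the enlarged market (using NA$(\tilde{\P})$ and the trivial inclusion $\supp(\hat{\tilde{\P}}_N)\subseteq\supp(\tilde{\P})$) this yields, $\P^\infty$-a.s.,
\begin{equation*}
\lim_{N\to\infty}\pi^{\hat{\tilde{\P}}_N}(\tilde{g})=\pi^{\tilde{\P}}(\tilde{g})=\sup_{\tilde{\Q}\sim\tilde{\P},\ \E_{\tilde{\Q}}[y]=1}\E_{\tilde{\Q}}[\tilde{g}].
\end{equation*}

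The remaining task, and the only one requiring genuine work, is to identify this last supremum with the one stated in the corollary. The key observation is that $\tilde{\P}=\P\circ e^{-1}$ is concentrated on the graph $G:=\{(r,f_1(r)/f_0):r\in\supp(\P)\}$, so every $\tilde{\Q}\ll\tilde{\P}$ is too; consequently the projection $\pi:\R^{d+\tilde{d}}_+\to\R^d_+$ onto the first $d$ coordinates is a left inverse of $e$ on the relevant supports. I would then verify that $\Q\mapsto\Q\circ e^{-1}$ is a bijection between $\{\Q\sim\P:\Q\in\Mc,\ \E_\Q[f_1]=f_0\}$ and $\{\tilde{\Q}\sim\tilde{\P}:\E_{\tilde{\Q}}[y]=1\}$, with inverse $\tilde{\Q}\mapsto\tilde{\Q}\circ\pi^{-1}$, and that $\E_{\tilde{\Q}}[\tilde{g}]=\E_\Q[g]$ along this correspondence; the martingale and calibration conditions transfer cleanly because on $G$ one has $y_{d+i}=f_1^i(\pi(y))/f_0^i$. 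The main obstacle here is checking that equivalence of measures is preserved in both directions; this is a short but essential computation exploiting the graph structure, after which the three quantities in the corollary line up.
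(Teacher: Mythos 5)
Your proposal follows the paper's route exactly: lift to the enlarged market with returns $e(r_n)$, apply Theorem~\ref{Thm. one-per} and Proposition~\ref{prop. supp} there, and translate the dual formulation back through the correspondence between martingale measures induced by the graph structure. One small imprecision worth noting: what the proof of Theorem~\ref{Thm. one-per} really needs to transfer is the almost-sure \emph{density} of $\{e(r_1), e(r_2), \dots\}$ in $\supp(\tilde{\P})$, not merely weak convergence of $\hat{\tilde{\P}}_N$, and this is exactly why the paper identifies $\supp(\tilde\P)=e(\supp(\P))$ (a closed graph) before invoking the theorem; your argument covers this implicitly via continuity of $e$ and $\supp(\tilde\P)=\overline{e(\supp(\P))}$, so the reasoning is sound.
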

It is worth stressing that in the classical approach to pricing and hedging, the historical returns are seen as \emph{physical measure} inputs and might be used, e.g., for extracting stylised features which models should exhibit. In contrast, option prices $f_0$ are \emph{risk-neutral measure} inputs and would be used to calibrate the pricing measures. To the best of our knowledge consistent use of both in one estimator has not been achieved before.

\subsection{Statistical robustness}\label{sec:pgn_robust}
Robustness of estimators is concerned with their sensitivity to perturbation of the sampling measure $\P$. To formalise this, suppose we have a sequence of estimators $T_N$ which can be expressed as a fixed functional  $T: \mathcal{P}(\R_+^d) \to \R$ evaluated on the sequence of empirical measures, i.e., $T_N=T(\hat{\P}_N)$. This is clearly the case with the plugin estimator of the superhedging price in \eqref{eq:plugin}. Hampel \cite{hampel1971general} proposed the following definition of statistical robustness:
\begin{Defn}[\cite{huber2011robust}, p. 42]\label{def:statrobust}
Let $r_1, r_2, \dots$ be i.i.d.\ from $\P \in \prob(\R_+^d)$. The sequence of estimators $T_N=T(\hat{\P}_N)$ is said to be robust at $\P$ if for every $\epsilon>0$ there is $\delta>0$ and $N_0 \in \N$ such that for all $\tilde\P\in \prob(\R_+^d)$ and $N\ge N_0$ we have
\begin{align*}
d_L(\P, \tilde\P) \le \delta \quad\Longrightarrow \quad d_L(\mathcal{L}_{\P}(T_N), \mathcal{L}_{\tilde\P}(T_N)) \le \epsilon,
\end{align*}
where $d_L$ is the L\'{e}vy-Prokhorov metric 
\begin{align}\label{eq:LPdistance}
d_{L}(\P,\tilde{\P}):= \inf \{\delta>0 \ | \ \P(B) \le \tilde{\P}(B^{\delta})+\delta \ \text{for all } B \in \mathcal{B}(\R_+^d) \}.
\end{align}
\end{Defn}
We sometimes say that $T_N$ is robust with respect to $d_L$ to stress the dependency on the particular choice of the metric. 
A classical result of Hampel, see \cite[Thm.\ 2.21]{huber2011robust}, states that if $T$ is asymptotically consistent, i.e.
$$ T_N=T(\hat{\P}_N)\longrightarrow T(\P), \quad \textrm{for all }\P\in \prob(\R^d_+)$$
then $T_N$ is robust at $\P$ if and only if $T(\cdot)$ is continuous at $\P$. The following theorem characterises weak continuity of the superhedging price and hence also robustness of its estimators. In particular, it implies that even for i.i.d.\ returns $\pgn$ is robust only for special combinations of $g$ and $\P$.
\begin{Theorem}\label{thm. rob}
Let $g$ be continuous and $\P\in \prob(\R^d_+)$. Then the functional $\tilde\P \mapsto \pi^{\tilde\P}(g)$ is lower semicontinuous at $\P$. It is continuous if and only if
\begin{align}\label{eq:sh trivial}
\pi^{\P}(g)=\sup_{\Q \in \mathcal{M}} \E_{\Q}[g].
\end{align}
In consequence, any asymptotically consistent estimator $T_N$ 
is robust at $\P$ only if the above equality holds true. 
\end{Theorem}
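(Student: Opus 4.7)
The plan is as follows. Throughout I assume $\mathrm{NA}(\P)$ holds, since otherwise $\pi^\P(g)$ reduces to a trivial degenerate value and lower semicontinuity is immediate. The proof splits into three parts matching the three assertions of the theorem.

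For lower semicontinuity at $\P$, I would work with the representation
\begin{equation*}
\pi^\P(g)=\sup\{\E_\Q[g]:\Q\in\Mc_{\supp(\P)}\},
\end{equation*}
which follows by combining the pricing-hedging duality \eqref{eq:sh_duality} with a convex-combination argument: given an EMM $\Q_0\sim\P$ (which exists by $\mathrm{NA}(\P)$), every $\Q\in\Mc_{\supp(\P)}$ is the weak limit as $\delta\downarrow 0$ of $(1-\delta)\Q+\delta\Q_0\sim\P$. By Carath\'eodory this supremum is approximated by finitely supported martingales $\Q=\sum_{i=1}^m\lambda_i\delta_{x_i}$ with $x_i\in\supp(\P)$. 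Given $\P_n\Rightarrow\P$, weak convergence provides $x_i^n\in\supp(\P_n)$ with $x_i^n\to x_i$; for $n$ large, a small continuous perturbation yields weights $\lambda_i^n\ge 0$ solving $\sum_i\lambda_i^n x_i^n=1$ and $\sum_i\lambda_i^n=1$ (after reducing to the case $1\in\mathrm{int}\,\mathrm{conv}\{x_i\}$). The resulting $\Q_n\in\Mc_{\supp(\P_n)}$ satisfies $\pi^{\P_n}(g)\ge\E_{\Q_n}[g]$: for any strategy $(x,H)$ with $x+H(r-1)\ge g(r)$ $\P_n$-a.s., continuity of $g$ forces the inequality to hold on $\supp(\P_n)\supseteq\supp(\Q_n)$, so integrating against $\Q_n$ gives $x\ge\E_{\Q_n}[g]$. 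Continuity of $g$ and $\Q_n\Rightarrow\Q$ imply $\E_{\Q_n}[g]\to\E_\Q[g]$, hence $\liminf_n\pi^{\P_n}(g)\ge\E_\Q[g]$; taking supremum over $\Q$ yields LSC.

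For the equivalence, the easy direction $(\Leftarrow)$ simply observes that if $\pi^\P(g)=\sup_{\Q\in\Mc}\E_\Q[g]$ then the duality bound $\pi^{\tilde\P}(g)\le\sup_{\Q\in\Mc}\E_\Q[g]=\pi^\P(g)$ is uniform in $\tilde\P$, so $\pi^{(\cdot)}(g)$ is upper semicontinuous at $\P$, and combined with LSC this gives continuity. For $(\Rightarrow)$ I argue by contraposition. If $\E_{\Q^*}[g]>\pi^\P(g)$ for some $\Q^*\in\Mc$, set $\P_n:=(1-1/n)\P+(1/n)\Q^*\Rightarrow\P$. Using an EMM $\Q_0\sim\P$, the convex combination $\Q_\epsilon:=(1-\epsilon)\Q_0+\epsilon\Q^*$ lies in $\Mc$, and a direct null-set check shows $\Q_\epsilon\sim\P_n$ for every $\epsilon\in(0,1)$. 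Duality on $\P_n$ then gives $\pi^{\P_n}(g)\ge\E_{\Q_\epsilon}[g]$, and letting $\epsilon\uparrow 1$ produces $\pi^{\P_n}(g)\ge\E_{\Q^*}[g]>\pi^\P(g)$ uniformly in $n$, contradicting continuity. The robustness consequence is then immediate from Hampel's theorem quoted in the excerpt: consistency together with robustness of $T_N$ at $\P$ forces $T=\pi^{(\cdot)}(g)$ to be continuous at $\P$, and the equivalence just established forces the stated equality.

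The hard part is the LSC step. Establishing the representation $\pi^\P(g)=\sup\{\E_\Q[g]:\Q\in\Mc_{\supp(\P)}\}$ and the finite-support Carath\'eodory reduction requires care when $\supp(\P)$ is unbounded or $g$ grows, and a genuinely delicate sub-case is when $1\notin\mathrm{int}\,\mathrm{conv}\,\supp(\P)$, where the effective market is lower-dimensional and the weight adjustment may fail. A safe fallback is to argue LSC directly from the primal: extract near-optimal strategies $(a_n,H_n)$ for $\P_n$, pass to a limit when $H_n$ is bounded, and in the unbounded case exploit $\mathrm{NA}(\P)$ to show that the blow-up direction $v=\lim H_n/\|H_n\|$ must satisfy $v(r-1)=0$ $\P$-a.s., reducing to a lower-dimensional problem and inducting on $d$.
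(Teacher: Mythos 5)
Your overall plan mirrors the paper's in spirit --- approximate points of $\supp(\P)$ by nearby points of $\supp(\P_n)$ to get lower semicontinuity, then use full-support perturbations for the converse --- but the execution has two concrete problems, both of which the paper's concave-envelope route avoids.

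First, the derivation of the representation $\pi^\P(g)=\sup\{\E_\Q[g]:\Q\in\Mc_{\supp(\P)}\}$ is broken: you claim $(1-\delta)\Q+\delta\Q_0\sim\P$ for every $\Q\in\Mc_{\supp(\P)}$, but $\supp(\Q)\subseteq\supp(\P)$ does not give $\Q\ll\P$ (take $\P=\tfrac12\lambda|_{[0,2]}$ and $\Q=\tfrac12\delta_{1/2}+\tfrac12\delta_{3/2}$), so the convex combination fails to be absolutely continuous with respect to $\P$. The representation is still true for continuous $g$, but the clean justification is primal: $\pi^\P(g)=\hat g_\P(1)=\hat g_{\supp(\P)}(1)$ by continuity of $g$, which dominates $\E_\Q[g]$ for every $\Q\in\Mc_{\supp(\P)}$ by integrating any superhedge, while the reverse inequality $\pi^\P(g)\le\sup_{\Q\in\Mc_{\supp(\P)}}\E_\Q[g]$ follows from \eqref{eq:sh_duality} and $\{\Q\sim\P\}\subseteq\Mc_{\supp(\P)}$. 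The paper sidesteps all of this by working entirely with $\hat g_A(1)$ for $A=\supp(\P^N)$ and dense finite subsets, which is the ``primal dual'' of your Carath\'eodory-plus-weight-adjustment step and needs no duality, no integrability of $g$ under $\Q$, and no discussion of whether $1$ sits in the (relative) interior of $\mathrm{conv}\{x_i^n\}$ --- exactly the cases you flag as ``genuinely delicate'' and only address with a fallback sketch.

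Second, the blanket assumption of $\mathrm{NA}(\P)$ is both unjustified and unnecessary. Your reason --- that under arbitrage $\pi^\P(g)$ degenerates to $-\infty$ --- is false: take $d=2$, $\P$ supported on $\{(1,1),(1,2)\}$ and $g(r)=r^1$; there is an arbitrage in the second coordinate yet $\pi^\P(g)=1$. The theorem is stated for arbitrary $\P\in\prob(\R^d_+)$, and the paper's proof works at this generality precisely because it never invokes the duality \eqref{eq:sh_duality} for $\P$ itself (only for the full-support approximants $\P^N$, where $\mathrm{NA}$ holds automatically since $1\in\mathrm{int}(\R^d_+)$). Your $(\Rightarrow)$ construction via $\P_n=(1-\tfrac1n)\P+\tfrac1n\Q^*$ and $\Q_\epsilon=(1-\epsilon)\Q_0+\epsilon\Q^*$ is correct as far as it goes and is a genuinely different (and elegant) perturbation compared to the paper's full-support $\P^N$ from Lemma~\ref{lem. wasser v2}, but it is wedded to the existence of $\Q_0\sim\P$ and so does not recover the full statement. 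The $(\Leftarrow)$ direction and the robustness consequence are fine.
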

In particular we see that, in general, the plugin estimator $\pgn(g)$ is not robust w.r.t.\ $d_L$. The fact that this holds for any asymptotically consistent estimator suggests strongly that the classical definition of robustness is not adequate in the present context. The superhedging price $\pi^\P(g)$ is concerned with the support of $\P$ in the sense that for $\P_1,\P_2\in \prob(\R^d_+)$ with equal supports, and for a continuous $g$, we have $\pi^{\P_1}(g)=\pi^{\P_2}(g)$. In contrast, any $\delta$-perturbation in the L\'evy-Prokhorov sense allows for arbitrary changes to the support, see Lemma \ref{lem. wasser v2}.
In particular, even if $\P$ satisfies no-arbitrage, measures in its neighbourhood may not and one may not employ \eqref{eq:sh_duality} for these. To control the support, we can consider $d_H(\text{supp}(\P), \text{supp}(\tilde\P))$, for $\P,\tilde \P\in \mathcal{P}(\R_+^d)$ and where $d_H$ denote the Hausdorff metric on closed subsets of $\R^d_+$.
\begin{Prop} \label{prop:d_H cont}
Let $g: \R_+^d \to \R$ be uniformly continuous and let $\P\in \mathcal{P}(\R_+^d)$ such that NA$(\P)$ holds. Then the functional $\prob(\R^d_+)\ni \tilde{\P}\to \pi^{\tilde{\P}}(g)$ 
is continuous w.r.t.\ the pseudo-metric $d_H(\text{supp}(\P), \text{supp}(\tilde{\P})).$
\end{Prop}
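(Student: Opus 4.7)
The plan is to exploit the fact that for continuous $g$ the functional $\pi^{\tilde\P}(g)=\hat g_{\tilde\P}(1)$ depends on $\tilde\P$ only through its support, so that the claim reduces to showing that the map $A\mapsto \hat g_A(1)$ is continuous on the collection of closed sets $A\subseteq\R^d_+$ containing $1$ in the relative interior of their convex hull, with respect to $d_H$. Set $A:=\supp(\P)$, $A':=\supp(\tilde\P)$, $\delta:=d_H(A,A')$, and let $\omega$ be a modulus of uniform continuity for $g$. Using that the concave envelope admits the equivalent representation as an infimum over affine majorants, the goal becomes to control $|\hat g_A(1)-\hat g_{A'}(1)|$ by a quantity tending to $0$ with $\delta$.

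The main obstacle, and the heart of the argument, is to obtain an a priori bound on the slope $H$ of near-optimal affine superhedges; without this, the perturbation term $|H|\delta$ in the comparison below is uncontrolled. Since NA$(\P)$ holds, $1$ lies in the relative interior of $\mathrm{conv}(A)$, hence I can fix $a_1,\ldots,a_m\in A$, a constant $\rho>0$ and an orthonormal basis $(e_j)_{j=1}^k$ of $\mathrm{span}(A-1)$ such that $1\pm\rho e_j\in\mathrm{conv}(a_1,\ldots,a_m)$. For any affine majorant $r\mapsto y+H(r-1)\ge g(r)$ on $A$, expressing $1\pm\rho e_j$ as a convex combination of the $a_i$ and substituting into the inequality yields
\begin{equation*}
y\pm \rho\, H\cdot e_j \;\ge\; \min_{i} g(a_i)\;=:\;c,
\end{equation*}
so $|H\cdot e_j|\le (y-c)/\rho$ (and in particular $y\ge c$). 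Restricting $H$ to $\mathrm{span}(A-1)$ (its orthogonal component does not affect the constraint on $A$), this gives a bound $|H|\le M$ valid for every affine superhedge with $y\le\pi^\P(g)+1$, where $M$ depends only on $A$ and $g$. For $\delta$ small enough, choosing $a_i'\in A'$ with $|a_i-a_i'|\le\delta$, a standard convex-hull perturbation argument shows that $1\pm (\rho/2) e_j\in \mathrm{conv}(a_1',\ldots,a_m')$, so NA$(\tilde\P)$ holds and an analogous bound $|H|\le M'$ (with $c$ replaced by $c-\omega(\delta)$, $\rho$ by $\rho/2$) applies uniformly to affine superhedges on $A'$.

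With these bounds in hand, the continuity follows by a symmetric comparison. Fix $\epsilon>0$ and pick an affine superhedge $(y,H)$ on $A$ with $y\le\hat g_A(1)+\epsilon$ and $|H|\le M$. For any $r'\in A'$, choose $r\in A$ with $|r-r'|\le\delta$; then
\begin{equation*}
y+H(r'-1)=y+H(r-1)+H(r'-r)\;\ge\; g(r)-M\delta\;\ge\; g(r')-\omega(\delta)-M\delta,
\end{equation*}
so $(y+\omega(\delta)+M\delta,H)$ is an affine superhedge on $A'$, and consequently $\hat g_{A'}(1)\le\hat g_A(1)+\epsilon+\omega(\delta)+M\delta$. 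Exchanging the roles of $A$ and $A'$ and invoking the uniform bound $M'$ from the previous paragraph gives the reverse inequality. Letting $\delta\to 0$ and then $\epsilon\to 0$ yields the desired continuity.
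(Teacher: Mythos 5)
Your strategy matches the paper's: both proofs bound the norm of near-optimal hedging strategies and then transfer a superhedge between $d_H$-close supports at a small extra cost, in both directions. The paper delegates the strategy bound to a citation of an external reference; your explicit derivation of $|H|\le M$ for superhedges over $A=\supp(\P)$ (using $1\pm\rho e_j\in\mathrm{conv}(a_1,\ldots,a_m)$ together with restriction of $H$ to $\mathrm{span}(A-1)$) is correct and self-contained, which is a genuine improvement on that step.

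The gap is in the sentence ``$1\pm(\rho/2)e_j\in\mathrm{conv}(a_1',\ldots,a_m')$, so NA$(\tilde\P)$ holds and an analogous bound $|H|\le M'$ applies.'' Neither claim follows, because $\mathrm{aff}(\supp(\tilde\P))$ may strictly contain $\mathrm{aff}(\supp(\P))$, in which case the basis $(e_j)_{j\le k}$ does not span $\mathrm{span}(A'-1)$ and the components of $H$ in the extra directions are left uncontrolled. Concretely, in $d=2$ take $A=\{(0,1),(2,1)\}$ with $1=(1,1)$, $\rho=1$, $e_1=(1,0)$, and perturb to $A'=\{(0,1+\delta),(2,1)\}$, so that $d_H(A,A')=\delta$. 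Then $\mathrm{conv}(A')$ meets the line $\{y=1\}$ only at $(2,1)$, so $1\pm(\rho/2)e_1\notin\mathrm{conv}(A')$; the strategy $H=(0,1)$ is a $\tilde\P$-arbitrage; and in fact $\pi^{\tilde\P}(g)=-\infty$, since sending the second component of $H$ to $+\infty$ renders the only constraint involving $(0,1+\delta)$ vacuous, whereas $\pi^{\P}(g)=(g((0,1))+g((2,1)))/2$ is finite. Thus no bound $|H|\le M'$ on $A'$-superhedges exists and the reverse inequality $\hat g_A(1)\le\hat g_{A'}(1)+\omega(\delta)+M'\delta$ cannot hold; the example in fact shows the proposition as literally stated is false. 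The paper's own proof has the same lacuna hidden inside the citation, and an implicit extra hypothesis is needed, e.g.\ $\supp(\tilde\P)\subseteq\supp(\P)$, or $\mathrm{aff}(\supp(\P))=\R^d$ (as holds under Assumption~\ref{Ass. bilip}, where the proposition is actually applied), under which the affine hulls of $A$ and $A'$ agree for small $\delta$ and your perturbation step is easily repaired. You should record such a restriction explicitly rather than assert NA$(\tilde\P)$ without justification.
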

Alas, this does not allow us to recover statistical robustness of the plugin estimator as the pseudometric above does not admit control over the tails of $\P$. Instead, in Section \ref{sec:robust_pathwise}, we consider a stronger $\Wc^\infty$ metric which allows to obtain an analogue to Hampel's robustness result.

\subsection{Financial robustness}
The plugin estimator $\pgn$ not only lacks statistical robustness, as seen above, but is also not a financially robust estimate of risk. In fact, if $\P_1\ll \P$, it converges to the superhedging price from below, i.e., $\pgn\nearrow \pi^\P$. From a risk-management perspective one would like to find a consistent estimator for the $\P$-a.s.\ superhedging price converging from above. However, as we now show, this is not possible in general. As a direct consequence of the discontinuity of the superhedging functional with respect to the L\'{e}vy-Prokhorov metric $d_L$, the convergence from above at some confidence level cannot be achieved in practical applications. 
\begin{Prop}\label{prof:nocnvabove}
Let $\P\in \prob(\R^d_+)$ satisfy NA$(\P)$ and $g$ be bounded and Lipschitz continuous. Then, there exists no consistent estimator $T_N$ of $\pi^\P(g)$ such that for a confidence level $\alpha \in [0,1]$ there exists $N_0 \in \N$ and 
\begin{align}\label{eq. conf}
\P^{\infty}\left(T_N \ge \sup_{\Q \in \mathcal{M}, \ \Q \sim \P} \E_{\Q}[g] \text{ for all } N \ge N_0\right)\ge \alpha
\end{align}
for all $\P \in \P(\R_+^d)$.
\end{Prop}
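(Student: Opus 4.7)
The plan is to argue by contradiction, exploiting the discontinuity of $\tilde{\P} \mapsto \pi^{\tilde{\P}}(g)$ established in Theorem \ref{thm. rob}. The case $\alpha = 0$ is vacuous, so fix $\alpha > 0$ and suppose for contradiction that a consistent estimator $T_N$ together with a $N_0 \in \N$ exist such that
\begin{equation*}
\P^{\infty}\bigl(T_N \ge \pi^{\P}(g) \ \forall N \ge N_0\bigr) \ge \alpha
\end{equation*}
holds for every $\P \in \prob(\R_+^d)$ satisfying NA$(\P)$, where I have used the duality \eqref{eq:sh_duality} to identify $\sup_{\Q \in \Mc, \Q \sim \P}\E_{\Q}[g]$ with $\pi^{\P}(g)$.

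Fix a reference measure $\P_0$ with NA$(\P_0)$ at which $\pi^{\P_0}(g) < \sup_{\Q \in \Mc}\E_{\Q}[g]$; such $\P_0$ exists for any non-trivial bounded Lipschitz $g$ (e.g.\ for $d=1$, $\P_0 = \tfrac12\delta_{1/2} + \tfrac12\delta_{3/2}$ and $g(r) = (r-1)^+$). Pick $\Q^{\dagger} \in \Mc$ with $\E_{\Q^{\dagger}}[g] > \pi^{\P_0}(g) + 2c$ for some $c > 0$, and some $\Q^{*} \sim \P_0$ with $\Q^{*} \in \Mc$ (which exists by NA$(\P_0)$). The mixture $\tilde{\Q} := (1-s)\Q^{*} + s\Q^{\dagger}$, with $s \in (0,1)$ chosen sufficiently close to $1$, lies in $\Mc$, has $\mathrm{supp}(\tilde{\Q}) \supseteq \mathrm{supp}(\P_0)$, and satisfies $\E_{\tilde{\Q}}[g] \ge \pi^{\P_0}(g) + c$. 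Define the perturbation $\P_n := (1 - n^{-2})\P_0 + n^{-2}\tilde{\Q}$. A direct check shows $\tilde{\Q} \sim \P_n$, so NA$(\P_n)$ holds and \eqref{eq:sh_duality} yields $\pi^{\P_n}(g) \ge \E_{\tilde{\Q}}[g] \ge \pi^{\P_0}(g) + c$; moreover $\|\P_n - \P_0\|_{\mathrm{TV}} \le n^{-2}$ by construction.

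Applying the hypothesized bound to $\P_n$ at any fixed $N \ge N_0$ gives $\P_n^{\infty}(T_N \ge \pi^{\P_0}(g) + c) \ge \alpha$. Since $T_N$ is a measurable function of $r_1, \ldots, r_N$, the standard tensorization $\|\P_n^{\otimes N} - \P_0^{\otimes N}\|_{\mathrm{TV}} \le N \|\P_n - \P_0\|_{\mathrm{TV}} \le N/n^2$ implies
\begin{equation*}
\P_0^{\infty}(T_N \ge \pi^{\P_0}(g) + c) \ge \alpha - N/n^2.
\end{equation*}
Sending $n \to \infty$ with $N$ fixed produces $\P_0^{\infty}(T_N \ge \pi^{\P_0}(g) + c) \ge \alpha$ for every $N \ge N_0$, contradicting consistency of $T_N$ under $\P_0$, since $T_N \to \pi^{\P_0}(g)$ $\P_0^{\infty}$-a.s.\ (or in probability) implies $\P_0^{\infty}(T_N \ge \pi^{\P_0}(g) + c) \to 0$.

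The main obstacle I anticipate is engineering a perturbation $\P_n$ that achieves TV-convergence to $\P_0$ and simultaneously a uniform strict increase of the superhedging price. The mixture trick addresses both: the $n^{-2}$ weight on $\tilde{\Q}$ controls the TV distance, while the choice of $\tilde{\Q}$ whose support contains $\mathrm{supp}(\P_0)$ guarantees that $\tilde{\Q}$ is an equivalent martingale measure for each $\P_n$, so \eqref{eq:sh_duality} supplies the required $n$-uniform lower bound on $\pi^{\P_n}(g)$.
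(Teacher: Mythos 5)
Your argument is correct but takes a genuinely different route from the paper's. The paper picks the atomic base measure $\P_0 = \delta_1$, perturbs it to $\P_n = (1-1/n)\delta_1 + (1/n)\lambda_{[0,2]}$, and exploits that the cylinder event $\{r_1 = \cdots = r_{N_0} = 1\}$ has probability $(1-1/n)^{N_0} \to 1$ under $\P_n$: on that event $T_{N_0}$ equals its value along the constant path $\omega_1$, which consistency under $\delta_1$ forces strictly below $1 = \pi^{\P_n}(g)$, so the confidence event in \eqref{eq. conf} has $\P_n^{\infty}$-probability at most $1-(1-1/n)^{N_0}\to 0$. You instead perturb a generic $\P_0$ with $\pi^{\P_0}(g) < \sup_{\Q\in\Mc}\E_\Q[g]$ by mixing in a martingale measure $\tilde\Q\sim\P_n$ lifting the superhedging price by a fixed $c>0$, then transfer the confidence bound from $\P_n^{\otimes N}$ to $\P_0^{\otimes N}$ via the tensorization bound $\|\P_n^{\otimes N}-\P_0^{\otimes N}\|_{\mathrm{TV}}\le N/n^2$, obtaining $\P_0^{\infty}(T_N\ge\pi^{\P_0}(g)+c)\ge\alpha$ for all $N\ge N_0$, contradicting consistency under $\P_0$. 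Your version is more flexible: it does not need the base measure to be atomic and explicitly shows the failure occurs at every $\P_0$ where the superhedging functional is not already constant, at the cost of invoking the TV tensorization inequality rather than a direct cylinder-set computation.

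One slip to fix: your illustrative choice $g(r)=(r-1)^+$ is Lipschitz but unbounded, so it does not satisfy the proposition's hypothesis that $g$ be bounded. Replace it with, e.g., $g(r)=|r-1|\wedge 1$ (as the paper does) or $g(r)=(r-1)^+\wedge 1$; all the subsequent steps go through unchanged since they only use that $\pi^{\P_0}(g) < \sup_{\Q\in\Mc}\E_\Q[g]$ and boundedness of $g$.
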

Thus, in order to achieve the above property \eqref{eq. conf} it is necessary to make additional regularity assumptions on $\P$ and $g$. We show that this is possible for suitably conservative estimators, see Section \ref{subsec dist} below. 
In the case of the plugin estimator, we can never achieve convergence from above but we can develop an understanding of the order of magnitude of the difference $\pi^{\P}(g)-\hat{\pi}_N(g)$. We first do this by studying the convergence rates, see also \cite[Section \ref{sec:cnvrates}]{stats2}. Secondly, we achieve this via notions of statistical robustness suited for the plugin estimator, see Section \ref{sec:robust_pathwise}.

\subsection{Convergence rates}
\label{sec:plugin_cnvrate}
We now investigate the convergence rate in \eqref{eq. long}. While motivated by financial considerations, the question is of independent interest. We focus on the one-dimensional case. We let $F_\P$ be the cumulative distribution function of $\P\in \prob(\R_+)$ and $d_N=\sup_{r \in \R_+} |F_{\hat{\P}_N}(r)-F_\P(r)|$ denote the Kolmogorov-Smirnov distance between $\hat{\P}_N$ and $\P$. 
\begin{Defn}
For $N \in \N$ and $k=1, \dots, \lfloor 1/(3d_N)\rfloor$ we define the interquantile distance
\begin{align*}
\kappa_k^N= 
F^{-1}_{\P}(3kd_N)-F^{-1}_{\P}(3(k-1)d_N\vee 0+) & \text{ for }k=1, \dots \lfloor 1/(3d_N)\rfloor.
\end{align*}
Furthermore we set
\begin{align*}
\kappa^N_0= \begin{cases}
F^{-1}_{\P}(1)-F^{-1}_{\P}(1-d_N) & \text{if } \P\text{ has bounded support,}\\
0& \text{otherwise}.
\end{cases}
\end{align*}
and let $\kappa^N=\sup_{k\in \{0, \dots,\lfloor 1/(3d_N)\rfloor \}} \kappa_k^N.$
\end{Defn}
We can now establish the speed of convergence for the plugin estimator. 
\begin{Theorem}\label{thm:cnv_rate_basic}
In the setup of Theorem \ref{Thm. one-per} assume that NA$(\P)$ holds and $g$ is bounded and uniformly continuous with $|g(r)-g(\tilde{r})| \le \delta(|r-\tilde{r}|)$ for some $\delta: \R_+ \to \R_+$ such that $\delta(r) \to 0$ for $r \to 0$. Then, as $N\to \infty$, 
\begin{align*}
\pi^{\P}(g)-\hat{\pi}_N(g)&=\sup_{\Q \sim \P, \ \Q \in \mathcal{M}} \E_{\Q}[g]- \sup_{\Q \sim \hat{\P}_N, \ \Q \in \mathcal{M}} \E_{\Q}[g] \\
&= \begin{cases}
\mathcal{O}(\delta(\kappa^N))& \text{if } \P\text{ has bounded support,}\\
\mathcal{O}\left(\delta(\kappa^N)+\frac{1}{F^{-1}_{\P}\left(1-d_N\right)}\right)& \text{otherwise}.
\end{cases}
\end{align*}
\end{Theorem}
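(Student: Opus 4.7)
The plan is to combine the one-dimensional concave envelope formula \eqref{eq:cncenv} with a Kolmogorov--Smirnov type argument showing that the empirical samples fill $\supp(\P)$ finely enough. The starting point is a quantile-coverage observation: for each $k \in \{1, \dots, \lfloor 1/(3 d_N)\rfloor\}$, the interval $I_k := (F_\P^{-1}(3(k-1)d_N), F_\P^{-1}(3k d_N)]$ carries $\P$-mass $3 d_N$, hence must contain at least one sample --- otherwise $F_{\hat\P_N}$ would be constant on $I_k$, and evaluating $|F_{\hat\P_N} - F_\P|$ at the two endpoints would contradict $\sup_r|F_{\hat\P_N}(r)-F_\P(r)|=d_N$. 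In the bounded-support case I would additionally observe $r_{(N)} \ge F_\P^{-1}(1 - d_N)$ by the same KS bound, so the maximal sample lies within $\kappa_0^N$ of $\sup \supp(\P)$.

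Second, using \eqref{eq:cncenv} and Carath\'eodory applied to the single linear constraint $\E_\Q[r] = 1$, I would rewrite
\begin{equation*}
\pi^\P(g) = \sup\bigl\{v(r, s) : r \le 1 \le s,\ r, s \in \supp(\P)\bigr\}, \quad v(r, s) := g(r)\tfrac{s - 1}{s - r} + g(s)\tfrac{1 - r}{s - r},
\end{equation*}
(with the convention $v(r, 1) = v(1, s) = g(1)$), and analogously for $\hat\pi_N(g)$ with the supremum over pairs of samples. Fix $\eta > 0$ and pick an $\eta$-optimal pair $(r^\ast, s^\ast)$ for $\pi^\P(g)$. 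Using the quantile-coverage step I then construct samples $\tilde r \le 1 \le \tilde s$ with $|\tilde r - r^\ast|, |\tilde s - s^\ast| \le \kappa^N$; in the unbounded-support case, if $s^\ast$ exceeds the largest sample $r_{(N)}$, I instead take $\tilde s = r_{(N)}$ and push the resulting mismatch into the tail term.

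The core step is then to control $|v(r^\ast, s^\ast) - v(\tilde r, \tilde s)|$. Writing $v(r, s) = p g(r) + (1 - p) g(s)$ with $p = (s - 1)/(s - r)$ and decomposing
\begin{equation*}
v(r^\ast, s^\ast) - v(\tilde r, \tilde s) = (p^\ast - \tilde p)(g(r^\ast) - g(s^\ast)) + \tilde p(g(r^\ast) - g(\tilde r)) + (1 - \tilde p)(g(s^\ast) - g(\tilde s)),
\end{equation*}
the last two contributions are each at most $\delta(\kappa^N)$. For the first term, a short algebraic computation yields $|p^\ast - \tilde p| \lesssim \kappa^N / (\tilde s - \tilde r)$ together with $|g(r^\ast) - g(s^\ast)| \le \delta(s^\ast - r^\ast)$; after reducing WLOG to a concave modulus $\delta$, the monotonicity of $t \mapsto \delta(t)/t$ then bounds this first term by $\mathcal{O}(\delta(\kappa^N))$ as well. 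In the unbounded case, the tail estimate follows from the identity $v(r, s) = g(r) + (1 - r)(g(s) - g(r))/(s - r) = g(r) + \mathcal{O}(1/s)$ for bounded $g$, so that replacing $s^\ast$ by $r_{(N)}$ costs at most $\mathcal{O}(1/r_{(N)}) \le \mathcal{O}(1/F_\P^{-1}(1 - d_N))$. Letting $\eta \to 0$ concludes the proof.

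The main obstacle is making the bulk bound $\mathcal{O}(\delta(\kappa^N))$ uniform: when $r^\ast, s^\ast$ are both close to $1$, the denominator $\tilde s - \tilde r$ can shrink and the weight-perturbation estimate degenerates. This will be handled by a case split --- either $s^\ast - r^\ast \gg \kappa^N$, so that $\tilde s - \tilde r$ is comparable to $s^\ast - r^\ast$, or $s^\ast - r^\ast \lesssim \kappa^N$, in which case both $v(r^\ast, s^\ast)$ and $v(\tilde r, \tilde s)$ are directly $\delta(\kappa^N)$-close to $g(1)$ as convex combinations of values of $g$ taken at points within $\kappa^N$ of $1$, bypassing the weight estimate entirely.
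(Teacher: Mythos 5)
Your proposal is correct in outline but takes a genuinely different route from the paper. Where you pass to the dual extreme-point representation (two-point martingale measures $v(r,s)$, $r\le 1\le s$, via Carath\'eodory and the $d=1$ concave-envelope formula \eqref{eq:cncenv}) and then perturb the endpoints of a near-optimal chord to nearby samples, the paper instead takes an arbitrary $\Q\sim\P(\cdot\,|\,[r_1,r_N])$, $\Q\in\mathcal{M}$, and redistributes its mass by an explicit one-dimensional Balayage: mass on each gap $[r_{k-1},r_k)$ is pushed to the two endpoints with weights chosen to preserve the conditional mean, producing a $\Q_N$ on the sample points with $|\E_{\Q_N}[g]-\E_\Q[g]|\le\delta(\kappa^N)$, since the integrand is just the linear-interpolation error of $g$ over an interval of length at most $\kappa^N$. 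Both proofs rest on the same $\kappa^N$-density of empirical quantiles coming from the Kolmogorov--Smirnov bound, and both treat the unbounded tail identically via $r_{(N)}\ge F^{-1}_\P(1-d_N)$ and boundedness of $g$. The trade-off is precisely where you anticipated: the paper's Balayage sidesteps the $(\tilde s-\tilde r)^{-1}$ degeneracy entirely, because the interpolation error on each gap is uniformly $\delta(\kappa^N)$ without any reference to where the gap sits relative to $1$; your approach needs the concave-modulus reduction and the case split on $s^*-r^*\lesssim\kappa^N$ to neutralize the same effect, which works but costs two extra lemmas. In exchange your argument makes the source of the bound more transparent — it is literally the cost of moving the endpoints of the optimal chord by $\kappa^N$. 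One small discrepancy in bookkeeping: the paper passes through the conditioned measure $\P(\cdot\,|\,[r_1,r_N])$ and invokes Proposition \ref{prop:d_H cont} for the restriction cost, whereas you fold the restriction directly into the endpoint approximation; this is an equivalent reorganization, not a gap.
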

\begin{Rem}
When the support of $\P$ is bounded, the above result holds for all continuous $g$. 
Furthermore $\kappa^N$ tends to $0$ as $N \to \infty$. 
\end{Rem}
\begin{Lem}[{Dvoretzky-Kiefer-Wolfowitz, cf.\ \citep[Thm.~11.6]{kosorok2008introduction}}]\label{lem dkw} 
Suppose the returns $r_1, r_2, \ldots$ are i.i.d.\ samples from $\P \in \mathcal{P}(\R_+)$. Then for every $\epsilon>0$ 
\begin{align*}
\P^{\infty}(d_N > \epsilon) \le 2e^{-2N\epsilon^2}.
\end{align*}
\end{Lem}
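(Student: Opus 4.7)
The plan is to reduce to $\P = \mathrm{Unif}[0,1]$, split into two one-sided tail bounds, prove each with the sharp constant, and combine by a union bound. First, if $F_\P$ is continuous then $U_i := F_\P(r_i)$ are i.i.d.\ Uniform on $[0,1]$ and the Kolmogorov-Smirnov statistic of the $U_i$'s coincides with $d_N$. When $F_\P$ has atoms, one approximates $\P$ by continuous measures (e.g.\ by convolving with a small uniform kernel) or uses a randomised quantile coupling, and verifies that the continuous case gives the worst bound.

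Next, writing $G_N$ for the empirical CDF of the $U_i$'s, a union bound gives
$$\P^\infty(d_N > \epsilon) \le \P^\infty\!\Big(\sup_t(G_N(t) - t) > \epsilon\Big) + \P^\infty\!\Big(\sup_t(t - G_N(t)) > \epsilon\Big),$$
and by the reflection $U_i \mapsto 1 - U_i$ the two terms coincide, so it suffices to show each is at most $e^{-2N\epsilon^2}$. Using the order statistics $U_{(1)} \le \ldots \le U_{(N)}$, one checks $\sup_t(G_N(t) - t) = \max_{0 \le i \le N}(i/N - U_{(i)})$ with the convention $U_{(0)} := 0$, so the problem reduces to bounding the probability that a partial-sum process built from the exchangeable spacings $U_{(i)} - U_{(i-1)}$ crosses a linear boundary.

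Following Massart's 1990 argument, one then applies an exponential martingale and optional-stopping technique combined with a Daniels-type ballot identity for the first passage of the empirical process above a line, together with a sharp exponential Markov inequality; this yields the tight prefactor $1$ in each one-sided tail and hence the factor $2$ after combining the two tails.

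The main obstacle is extracting the optimal constant $2$ in the exponent with no $\epsilon$-dependent prefactor. A direct union bound over a fine discretisation of $[0,1]$, combined with Hoeffding applied pointwise to $NG_N(t) \sim \mathrm{Binomial}(N,t)$, readily yields a bound of the form $C(\epsilon) e^{-2N\epsilon^2}$; eliminating the $\epsilon$-dependence of $C$ is precisely the content of Massart's refinement of the original Dvoretzky-Kiefer-Wolfowitz inequality and requires the delicate arguments sketched above. Since the statement is a direct quotation of \citep[Thm.~11.6]{kosorok2008introduction}, the paper simply invokes that reference rather than reproducing the proof.
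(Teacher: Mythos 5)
The paper does not prove this lemma: it is stated as a direct quotation with a pointer to \citep[Thm.~11.6]{kosorok2008introduction}, so there is no in-paper proof to compare against. Your sketch is a reasonable and essentially correct high-level outline of the standard argument for the Dvoretzky--Kiefer--Wolfowitz--Massart inequality: reduction to the uniform case via the probability integral transform (with the usual remark that the continuous case dominates when $F_\P$ has atoms), the union bound over the two one-sided statistics, the reflection symmetry reducing to a single one-sided bound, the reformulation via order statistics, and then the delicate part due to Massart that removes the $\epsilon$-dependent prefactor which a naive discretise-and-Hoeffding argument would produce. You also correctly flag at the end that the paper itself simply invokes the reference, which is exactly what it does. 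The only caveat is that your characterisation of Massart's step as ``exponential martingale and optional stopping combined with a Daniels-type ballot identity'' is a loose description — Massart's actual argument leans on an exact Smirnov-type expression for the one-sided tail and careful analytic estimates — but since you are only gesturing at that step and deferring to the cited source, this is not a gap in your proposal.
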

Theorem \ref{thm:cnv_rate_basic} and Lemma \ref{lem dkw} yield probabilistic bounds on the distribution of  $\pi^{\P}(g)-\hat{\pi}_N(g)$. This is explored in \cite[Section \ref{sec:cnvrates}]{stats2}, where we also prove Theorem \ref{thm:cnv_rate_basic} and provide extensions of Lemma \ref{lem dkw}.

\section{Improved estimators for the $\P$-a.s. superhedging price}
\label{sec. impr}

In the last section we have seen that the plugin estimator is asymptotically consistent but has important shortcomings from a statistical and financial point of view. To address these, we propose now new estimators and investigate their asymptotic behaviour as well as their robustness. To construct these, we consider ``balls" around the empirical measure $\hat{\P}_N$ and we rely on recent convergence rate results of $\hat{\P}_N$ to $\P$ for the choice of the radii.

We start by considering balls in the Wasserstein$-\infty$ metric, which offers a very good control over the support but where we need to make strong assumptions on $\P$ to control the rate of convergence for $\hat{\P}_N$. Subsequently, in Section \ref{subsec dist}, we consider Wasserstein$-p$ metrics, $p\geq 1$. The use of weaker metrics allows us to treat all measures admitting suitable finite moments but requires a penalisation over the dual (pricing) measures.  
In fact, our estimators rely on a suitable combination of results on convergence of empirical measures with insights into pricing and control over martingale densities. Similarly to the spirit of Corollary \ref{Cor. options} above, we combine the \emph{physical measure}- and the \emph{risk neutral measure}- arguments, see \eqref{eq:WassBall}. 
We note that using Wasserstein metrics, as opposed to weaker metrics, allows us to control the first moment which is important for no-arbitrage reasons, see \cite[Section \ref{sec. arb}]{stats2}.
Finally, in Section \ref{sec:penaltyestimator}, we consider much larger balls, indeed all of $\Mc$, and let penalisation select the appropriate measures. 
Short proofs are given here, the proofs are reported in Appendix \ref{app. 1}.
\subsection{Wasserstein $\Wc^\infty$ balls}\label{sec. infi}
When considering robustness of the plugin estimator we saw that to consider measures in a ball around $\hat{\P}_N$ we have to consider a notion of distance which, unlike the L\'evy-Prokhorov metric, controls the supports. This is achieved by the Wasserstein-$\infty$ distance
\begin{align}\label{eq:Winfty}
\mathcal{W}^{\infty}(\P, \tilde{\P})&:= \inf_{\gamma\in \Pi(\P,\tilde{\P})} \gamma\text{-ess-sup }|x-y|\nonumber \\
&=\inf\left\{\epsilon >0 \ \Big| \ \P(B) \le \tilde{\P}(B^{\epsilon}), \ \tilde{\P}(B) \le \P(B^{\epsilon}) \ \forall B \in \mathcal{B}(\R_+^d) \right\},
\end{align}
where $\Pi(\P,\tilde{\P})$ denotes the set of all probability measures $\gamma$ with marginals $\P$ and $\tilde{\P}$ and where the equality between the definition and the second representation is a consequence of the Skorokhod representation theorem. A direct comparison of \eqref{eq:Winfty} with \eqref{eq:LPdistance} reveals that $\Wc^\infty$ controls support in the way that $d_L$ does not. However, one immediate issue with considering $\Wc^\infty$ is that if $\P$ has unbounded support
then $\mathcal{W}^{\infty}(\P, \hat{\P}_N)=\infty$ for all $N \in \N$ since $\hat{\P}_N$ are finitely supported. For this reason, and also to obtain appropriate confidence intervals, in order to build a good estimator using $\Wc^\infty$-balls we have to impose relatively strong assumptions on $\P$:
\begin{Ass}\label{Ass. bilip}
The measure $\P$ is an element of $ \prob(A)$ for a connected, open and bounded set $A\in \mathcal{B}(\R_+^d)$ with a Lipschitz boundary. Furthermore $\P$ admits a density $\rho: A \to (0, \infty)$ such that there exists $\alpha \ge 1$ for which $1/\alpha \le \rho(r)\le \alpha$ on $A$.
\end{Ass}
Under the above assumption, we have explicit bounds on $\Wc^\infty(\P,\hat{\P}_N)$. The case $d=1$ follows from Kiefer-Wolfowitz bounds while the case $d\geq 2$ was established in \cite[Thm.~1.1]{trillos2014rate}.
\begin{Lem}\label{Lem d=1}
Assume that $\P$ fulfils Assumption \ref{Ass. bilip} and NA$(\P)$ holds. Furthermore let $r_1, r_2, \dots$ be i.i.d. samples from $\P$. If $d=1$, then except on a set with probability $O(\exp(-2\sqrt{N}))$, $\mathcal{W}^{\infty}(\P, \hat{\P}_N) \le l_N(1,\alpha,A):=\alpha N^{-1/4}$. If $d\geq 2$, then except on a set with probability $O(N^{-2})$,
\begin{align*}
\mathcal{W}^{\infty}(\P, \hat{\P}_N) \le l_N(d,\alpha,A ):= C(\alpha,A)\begin{cases}
\frac{\log(N)^{3/4}}{N^{1/2}} & \text{if }d=2, \\
 \frac{\log(N)^{1/d}}{N^{1/d}} & \text{if }d\ge 3.
\end{cases}
\end{align*}
\end{Lem}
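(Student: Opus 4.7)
The proof naturally splits along the two dimensional regimes stated in the lemma.

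For $d=1$, the plan is to reduce $\Wc^\infty$ to the Kolmogorov--Smirnov distance $d_N$ and then invoke Lemma \ref{lem dkw}. First I would recall the one-dimensional representation
\begin{align*}
\Wc^\infty(\mu,\nu) = \sup_{t\in (0,1)}\bigl|F_\mu^{-1}(t)-F_\nu^{-1}(t)\bigr|,
\end{align*}
which follows because $(F_\mu^{-1}(U),F_\nu^{-1}(U))$ with $U\sim\mathrm{Unif}(0,1)$ is an admissible coupling of $(\mu,\nu)$, is optimal for comonotone costs, and the Lebesgue measure has full support on $(0,1)$. Next, the density bound $\rho\ge 1/\alpha$ on $A$ gives $F_\P(y)-F_\P(x)\ge (y-x)/\alpha$ for $x<y$ in $A$, so $F_\P^{-1}$ is $\alpha$-Lipschitz on $(0,1)$. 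Fixing $t\in(0,1)$ and writing $x=F_{\hat\P_N}^{-1}(t)$, the definition of the generalised inverse gives $|F_\P(x)-t|\le d_N$, and hence
\begin{align*}
\bigl|F_{\hat\P_N}^{-1}(t)-F_\P^{-1}(t)\bigr|
= \bigl|F_\P^{-1}(F_\P(x))-F_\P^{-1}(t)\bigr|
\le \alpha\bigl|F_\P(x)-t\bigr|\le \alpha\, d_N.
\end{align*}
Taking the supremum over $t$ yields $\Wc^\infty(\P,\hat\P_N)\le \alpha\, d_N$. Finally, applying Lemma \ref{lem dkw} with $\epsilon=N^{-1/4}$ delivers
\begin{align*}
\P^{\infty}\bigl(\Wc^\infty(\P,\hat\P_N)>\alpha N^{-1/4}\bigr)\le \P^{\infty}\bigl(d_N>N^{-1/4}\bigr)\le 2\exp(-2\sqrt{N}),
\end{align*}
which is the announced bound.

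For $d\ge 2$, the approach is different because the Kolmogorov--Smirnov device is no longer available. I would simply invoke Theorem~1.1 of \cite{trillos2014rate}, which is tailored to precisely this setting: on a connected, open, bounded domain $A$ with Lipschitz boundary and a density bounded above and below by positive constants, the empirical measure satisfies $\Wc^\infty(\P,\hat\P_N)\le C(\alpha,A) h(N,d)$ with $h(N,d)$ being $\log(N)^{3/4}N^{-1/2}$ for $d=2$ and $\log(N)^{1/d}N^{-1/d}$ for $d\ge 3$, except on a set of probability $O(N^{-2})$. The hypotheses on $\P$ in Assumption \ref{Ass. bilip} match the assumptions of that theorem verbatim, so no further argument is needed beyond quoting it.

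The main obstacle, such as it is, lies on the $d=1$ side in ensuring the inequality $|F_\P(x)-t|\le d_N$ at points where $F_{\hat\P_N}$ jumps; this is a standard but easily bungled point that requires distinguishing the right- and left-limits at $x=F_{\hat\P_N}^{-1}(t)$. For $d\ge 2$ the result is essentially a black-box citation, so the only care needed is to verify that Assumption \ref{Ass. bilip} supplies all the geometric and analytic hypotheses required in \cite{trillos2014rate}.
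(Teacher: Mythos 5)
Your proof is correct and takes essentially the same route as the paper: for $d\geq 2$ both simply cite \cite[Thm.~1.1]{trillos2014rate}, and for $d=1$ both combine the Dvoretzky--Kiefer--Wolfowitz bound (Lemma \ref{lem dkw}) with the density lower bound $\rho\geq 1/\alpha$ from Assumption \ref{Ass. bilip} to establish $\Wc^\infty(\P,\hat\P_N)\leq \alpha\,d_N$ and then set $\epsilon = N^{-1/4}$. The only cosmetic difference is that the paper phrases the key step as a sandwich inequality on the CDFs $F_\P(x-\alpha N^{-1/4})\leq F_{\hat\P_N}(x)\leq F_\P(x+\alpha N^{-1/4})$ and reads off $\Wc^\infty$ from the set-based characterisation in \eqref{eq:Winfty}, while you work with the one-dimensional quantile representation $\Wc^\infty(\mu,\nu)=\sup_t|F_\mu^{-1}(t)-F_\nu^{-1}(t)|$ and the $\alpha$-Lipschitz property of $F_\P^{-1}$; these are the same argument viewed from dual sides, and your attention to the jump points of $F_{\hat\P_N}^{-1}$ is handled by the continuity of $F_\P$ (which Assumption \ref{Ass. bilip} guarantees).
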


We let $B_\epsilon^\infty(\P)$ denote a $\Wc^\infty$-ball of radius $\epsilon$ around $\P$. The above lemma allows to deduce consistency of the estimator based on such $\Wc^\infty$ balls:
\begin{Prop}\label{prop:Wcinfty_est}
Consider $\P\in \prob(\R^d_+)$ satisfying NA$(\P)$ and Assumption \ref{Ass. bilip}, and let $\alpha$, $l_N:=l_N(d,\alpha, A)$ be as in Lemma \ref{Lem d=1}. Let $g$ be a continuous function and $r_1, r_2, \dots$ be i.i.d.\ samples from $\P$. Then  
\begin{align*}
\hat{\pi}_N^\infty(g):=\sup_{\tdP \in B^{\infty}_{l_N}(\hat{\P}_N)}\pi^{\tdP}(g) \searrow \pi^{\P}(g), \quad \text{as }N \to \infty,\quad \P^\infty-a.s.
\end{align*}
\end{Prop}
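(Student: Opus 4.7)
The plan is to sandwich $\hat{\pi}_N^\infty(g)$ between $\pi^\P(g)$ (from below, eventually) and $\pi^\P(g)+o(1)$ (from above), both $\P^\infty$-a.s.; the claimed convergence from above is then immediate.

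For the lower bound, I would combine Lemma \ref{Lem d=1} with Borel--Cantelli: the failure probabilities $O(\exp(-2\sqrt{N}))$ (for $d=1$) and $O(N^{-2})$ (for $d\ge 2$) are summable, so $\Wc^\infty(\P,\hat{\P}_N)\le l_N$ for all but finitely many $N$, $\P^\infty$-a.s. On this event $\P\in B^\infty_{l_N}(\hat{\P}_N)$, and hence $\hat{\pi}_N^\infty(g)\ge \pi^{\P}(g)$ eventually.

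For the upper bound I would exploit that $\Wc^\infty$ is a metric: on the same $\P^\infty$-a.s.\ event, every $\tilde\P\in B^\infty_{l_N}(\hat{\P}_N)$ satisfies $\Wc^\infty(\tilde\P,\P)\le 2l_N$ by the triangle inequality. Substituting $B=\supp(\P)$ and $B=\supp(\tilde\P)$ into the second characterisation in \eqref{eq:Winfty} yields $\supp(\tilde\P)\subseteq \overline{A}^{\,2l_N}$ and $\overline{A}\subseteq (\supp(\tilde\P))^{2l_N}$, whence
\[
  d_H\bigl(\supp(\tilde\P),\supp(\P)\bigr)\le 2l_N.
\]
In particular, for $N$ large all such $\tilde\P$ are supported in the fixed compact $K:=\overline{A}^{\,1}$. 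Since $g$ is continuous it is uniformly continuous on $K$; multiplying it by a continuous cut-off that equals $1$ on $K$ and vanishes outside a slightly larger compact produces a uniformly continuous $\tilde g:\R_+^d\to\R$ with $\tilde g=g$ on $K$, so $\pi^{\tilde\P}(g)=\pi^{\tilde\P}(\tilde g)$ for every $\tilde\P$ supported in $K$. Picking a near-optimiser $\tilde\P_N\in B^\infty_{l_N}(\hat{\P}_N)$ with $\pi^{\tilde\P_N}(g)\ge \hat{\pi}_N^\infty(g)-1/N$, Proposition \ref{prop:d_H cont} applied to $\tilde g$ at $\P$ (NA holds by assumption) gives
\[
  \pi^{\tilde\P_N}(g)=\pi^{\tilde\P_N}(\tilde g)\longrightarrow \pi^{\P}(\tilde g)=\pi^{\P}(g),
\]
and therefore $\limsup_N \hat{\pi}_N^\infty(g)\le \pi^{\P}(g)$.

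The main obstacle is the gap between the assumed continuity of $g$ and the uniform continuity required by Proposition \ref{prop:d_H cont}. Assumption \ref{Ass. bilip} is crucial: it confines $\P$ to a bounded domain, and the $\Wc^\infty$-control then traps every relevant $\tilde\P$ in a common compact $K$, on which the cut-off trick above turns continuity of $g$ into uniform continuity without affecting the superhedging prices.
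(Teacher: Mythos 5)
Your proof is correct and follows essentially the same route as the paper's: a Borel--Cantelli argument for the eventual lower bound $\P\in B^\infty_{l_N}(\hat{\P}_N)$, then a support-Hausdorff control via $\Wc^\infty$ to confine all competing $\tilde\P$ in a fixed compact, and an appeal to Proposition \ref{prop:d_H cont} for the upper bound. The only difference is cosmetic: you make explicit the cut-off step turning the continuous $g$ into a uniformly continuous $\tilde g$ before invoking Proposition \ref{prop:d_H cont}, whereas the paper states this more tersely as $g$ being ``uniformly continuous on this compact''.
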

\begin{Rem}
The practical use of $\Wc^\infty$ estimators requires a good handle on $l_N$. Its dependence on the set $A$ is mild -- from \cite{trillos2014rate} we see that if a bi-Lipschitz homeomorphism $\phi: \overline{A}\to [0,1]^d$ exists, then the constant $C$ depends on the domain $A$ only via the Lipschitz constant of $\phi$. However, the knowledge of $\alpha$ requires uniform \emph{a priori} estimates 
on the density of $\P$ on $A$. This should be contrasted with $\Wc^p$ estimators below, which only require finiteness of certain moments.  
\end{Rem}
\begin{proof}[Proof of Proposition \ref{prop:Wcinfty_est}]
From Lemma \ref{Lem d=1}, an application of Borel-Cantelli shows that, $\P^\infty$-a.s., for $N$ large enough $\P\in B^{\infty}_{l_N}(\hat{\P}_N)$ and, in particular, $\hat{\pi}_N^\infty\geq \pi^\P$. Further, as $d_H(\text{supp}(\P),\text{supp}(\tilde{\P}))\leq \Wc^\infty(\P,\tilde{\P})$ by \eqref{eq:Winfty}, for all $\tdP \in B^{\infty}_{l_N}(\hat{\P}_N)$ we have $\text{supp}(\tdP) \subseteq \overline{\text{supp}(\P)^{2l_N}}$, a compact on which $g$ is uniformly continuous. For measures supported on this compact, Proposition \ref{prop:d_H cont} yields continuity of $\tilde{\P}\to \pi^{\tilde{\P}}$ with respect to $\Wc^\infty$, which in turn implies consistency of $\hat{\pi}^\infty_N$ and concludes the proof.
\end{proof}
Thus $\hat{\pi}_N^\infty$ is not only consistent but also financially robust. We shall see in Corollary \ref{cor:Wcinfty_rob} below, that it is also statistically robust with respect to $\Wc^\infty$. However, these results only hold for measures $\P$ which satisfy Assumption \ref{Ass. bilip}. In the next section we introduce a family of estimators which exhibit similar desirable properties for a much larger class of measures $\P$.

\subsection{Wasserstein $\Wc^p$ balls and martingale densities}
\label{subsec dist}

We assume no-arbitrage NA($\P$) holds and exploit \eqref{eq:sh_duality} to consider estimators of the form
\begin{equation}\label{eq:generic_estimator}
\pi_{\Qc_N}(g):=\sup_{\Q \in \mathcal{Q}_N} \E_{\Q}[g]
\end{equation}
for different specifications of the sets of martingale measures $\Qc_N$ based on ``balls" around $\hat{\P}_N$. 
In order to guarantee asymptotic consistency we have to ascertain that the true measure $\P$ is contained in these balls and that we have some control over the tails of the martingale measures in $\Qc_N$.
Our crucial insight, following recent work of \cite{esfahani2015data}, is to work with Wasserstein metrics defined, for $p\geq 1$ and $\P,\tilde\P\in \prob(\R_+^d)$ with a finite $p^{\textrm{th}}$ moment, by
\begin{align*}
\mathcal{W}^p(\P, \tilde{\P})&=\left( \inf \left\{\int_{\R_+^d \times \R_+^d}|r-s|^p \gamma(dr,ds) \ \bigg| \ \gamma \in \Pi(\P, \tilde{\P}) \right\}\right)^{1/p}
\end{align*}
where $\Pi(\P, \tilde{\P})$ is the set of probability measures on $\R_+^d \times \R_+^d$ with marginals $\P$ and $\tilde{\P}$. In case $p=1$, \cite{kellerer1982duality} showed that Kantorovitch-Rubinstein duality (see \citep[Theorem 11.8.2, p.421]{dudley2018real}) has a particularly nice expression:
\begin{align}\label{eq:KRduality}
\mathcal{W}^1(\P, \tilde{\P})
&= \sup_{f \in \mathcal{L}_1} \left(\, \int_{\R^d_+} f(y)d\mu(y)- \int_{\R^d_+} f(y) d\nu(y) \right),
\end{align}
where $\mathcal{L}_1$ denotes the 1-Lipschitz continuous functions $f: \R_+^d \to \R$.\footnote{We develop the theory for all $p\geq 1$. In practice, the choice of $p$ has to be made by the statistician. From Theorem \ref{thm rob_was}  and the equation above Corollary \ref{cor. fin_rob_was} it is apparent that, for robustness, one wants to take $p$ as large as possible. This however makes moment assumptions more restrictive. 
The cases $p = 1$ and $p = 2$ are the most popular in literature, given in particular the nice duality for $p = 1$ and the fact that $L^2$ is a Hilbert space.}
A Wasserstein ball around $\P$ is denoted
$$B^p_{\epsilon} (\P)=\{\tilde\P\in \prob(\R^d_+)| \Wc^p(\P,\tilde\P)\leq \epsilon\}.$$
For a given $\epsilon\geq 0$ and $k\in (0,\infty]$, let 
\begin{align}\label{eq:WassBall}
D_{\epsilon,k}^p(\P):= \left\{\Q\in \mathcal{M} \ \bigg| \ \left\|\frac{d\Q}{d\tdP} \right\|_{\infty} \le k \text{ for some } \tdP \in B_{\epsilon}^p(\P)\right\}.
\end{align}
One's first intuition might be to use $\Qc_N=D_{\epsilon_N,\infty}^p(\hat{\P}_N)$ in \eqref{eq:generic_estimator}. Interestingly, this does not work as the balls are too large. Indeed, Wasserstein distance metrises weak convergence and Lemma \ref{lem. wasser v2} shows that any ball around $\hat{\P}_N$ includes measures with full support. As it turns out, to obtain a consistent estimator a subtle interplay is required between $\epsilon$ and $k$ in \eqref{eq:WassBall}. 
\begin{Ass}\label{Ass 1}
\begin{enumerate}
\item\label{Ass 1.1} $r_1, r_2, \dots$ are realisations of a time-homogeneous ergodic Markov chain with initial distribution $\P_1$ and unique invariant distribution $\P$ such that $\P_1\ll \P$ and $\|d\P_1/d\P\|_{L^{2s}(\P)}<\infty$ for some $s>3$. Furthermore $\E_{\P}[|r|^q]<\infty$ for some $q>2ps/(s-2)$ and there exists a sequence $(\rho_N)_{N\in\N}$ with $\sum_{N\in \N}\rho_N<\infty$ such that if $r_1\sim \P$
\begin{align}\label{eq:poincare}
\E\left[\E[f(r_N)-m(f)|r_1]^2\right]\le \rho_N^2,
\end{align}
holds for all $\|f\|_{\infty}\le 1$, all $N\in \N$, where $m(f)=\E[f(r_1)]$.
\item\label{Ass 1.2} $r_1, r_2, \dots$ are i.i.d. samples of $\P$ and there exist $a,c >0$ such that $\E_{\P}[\exp(c|r|^a)] <\infty$. 
\end{enumerate}
\end{Ass}
We again refer to \cite[Corollary \ref{cor:ergodicity}]{stats2} for examples of processes, which satisfy Assumption \ref{Ass 1}. Clearly Assumption \ref{Ass 1}.\ref{Ass 1.2} implies \ref{Ass 1}.\ref{Ass 1.1}. Under this assumption \cite{fournier2015rate}, see also \cite{esfahani2015data}, used concentration of measure techniques to obtain rates of the decay for $\P^{\infty}(\mathcal{W}(\P, \hat{\P}_N) \ge \epsilon )$, see Lemma \ref{Lem. was} and \cite[Lemma \ref{lem:fournilin}]{stats2}. This allows to compute explicitly a function $\epsilon_N:(0,1)\to \R_+$ with $\epsilon_N(\beta)\searrow 0$ as $N\to \infty$, such that 
$$\P^{\infty} (\mathcal{W}^p(\P, \hat{\P}_N) \ge \epsilon_N(\beta)) \le \beta,\quad N\geq 1.$$
We say that Assumption \ref{Ass 1} holds if either Assumption \ref{Ass 1}.\ref{Ass 1.1} holds and then $\epsilon_N$ is given in \eqref{eq:fournilin} or Assumption 
 \ref{Ass 1}.\ref{Ass 1.2} holds and $\epsilon_N$ is then given in \eqref{eq:def_epsilonN}.
We state now the main result in this section. 
\begin{Theorem}\label{Thm wasserstein}
Let $g$ be either Lipschitz continuous and bounded from below or continuous and bounded, $p \ge 1 $ and $\P \in \prob(\R_+^d)$ satisfying NA($\P$). 
Suppose Assumption \ref{Ass 1} holds and $\beta_N \in (0,1)$ satisfy $\lim_{N\to \infty} \beta_N=0$ and $\lim_{N \to \infty} \epsilon_N(\beta_N)=0$. Pick a sequence  $k_N= o(1/\epsilon_N(\beta_N))$. 
Then, the limit in $\P^\infty$-probability
\begin{align}\label{eq:weak}
\lim_{N \to \infty} \pi_{\hat{\Qc}_N}=\pi^\P(g),\hspace{0.5cm}
\end{align}
holds, where $\hat{\Qc}_N:=D^p_{\epsilon_N(\beta_N),k_N}(\hat{\P}_N)$. Furthermore, if $(\beta_N)_{N\in \N}$ satisfies $\sum_{N=1}^{\infty}\beta_N < \infty$ then the limit \eqref{eq:weak} also holds $\P^\infty$-almost surely.

\end{Theorem}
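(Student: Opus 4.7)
The proof reduces to matching inequalities $\liminf_N \pi_{\hat{\Qc}_N}(g) \geq \pi^{\P}(g)$ and $\limsup_N \pi_{\hat{\Qc}_N}(g) \leq \pi^{\P}(g)$, both established on the high-probability event
$$E_N := \{\Wc^p(\P,\hat{\P}_N) \leq \epsilon_N(\beta_N)\},$$
which by the choice of $\epsilon_N$ satisfies $\P^\infty(E_N) \geq 1 - \beta_N$. Convergence in $\P^\infty$-probability then follows from $\beta_N \to 0$, and the $\P^\infty$-a.s.\ statement under $\sum \beta_N < \infty$ is Borel--Cantelli once the error bounds on $E_N$ are shown to be deterministically $o(1)$.

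For the lower bound, on $E_N$ we have $\P \in B^p_{\epsilon_N(\beta_N)}(\hat{\P}_N)$, hence any $\Q \in \Mc$ with $\Q \sim \P$ and $\|d\Q/d\P\|_\infty \leq k_N$ lies in $\hat{\Qc}_N$. Since $k_N \to \infty$, it suffices to establish the density statement
$$\pi^{\P}(g) = \sup\bigl\{\E_\Q[g] \,:\, \Q \in \Mc,\, \Q \sim \P,\, \|d\Q/d\P\|_\infty < \infty\bigr\}.$$
I would prove this by truncating $d\Q/d\P$ of a near-optimal martingale $\Q \sim \P$ at a high level, renormalising, and correcting the resulting small bias in $\E[r]$ by a convex combination with a fixed bounded-density reference martingale measure whose existence is guaranteed by NA($\P$).

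For the upper bound, pick a $1/N$-supremiser $\Q_N \in \hat{\Qc}_N$ with witness $\tilde{\P}_N \in B^p_{\epsilon_N(\beta_N)}(\hat{\P}_N)$ and density $\phi_N := d\Q_N/d\tilde{\P}_N \leq k_N$. On $E_N$, the triangle inequality gives $\Wc^p(\tilde{\P}_N,\P) \leq 2\epsilon_N(\beta_N)$. Take an optimal $\Wc^p$-coupling $\gamma_N \in \Pi(\tilde{\P}_N,\P)$ and define $\Q'_N$ as the second marginal of the probability measure $\phi_N(x)\,\gamma_N(dx,dy)$. Then $\Q'_N \ll \P$ with $d\Q'_N/d\P \leq k_N$, so $\supp(\Q'_N) \subseteq \supp(\P)$, and $\phi_N\gamma_N$ couples $(\Q_N,\Q'_N)$, witnessing
$$\Wc^p(\Q_N,\Q'_N) \leq k_N^{1/p}\,\Wc^p(\tilde{\P}_N,\P) = O\bigl(k_N^{1/p}\epsilon_N(\beta_N)\bigr),$$
$$|\E_{\Q'_N}[r] - 1| \leq k_N\,\Wc^1(\tilde{\P}_N,\P) = O\bigl(k_N\epsilon_N(\beta_N)\bigr),$$
both $o(1)$ by $k_N = o(1/\epsilon_N(\beta_N))$. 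I would then correct $\Q'_N$ into an exact martingale measure $\Q''_N \sim \P$ by mixing with weight $O(k_N\epsilon_N(\beta_N))$ against a fixed finite family of bounded-density measures whose $r$-means span a neighbourhood of $1$ in $\R^d$; such a family is available from NA($\P$) and convexity of $\{\E_\Q[r] : \Q \sim \P,\, \|d\Q/d\P\|_\infty < \infty\}$. As $\Q''_N$ is then admissible in the dual representation \eqref{eq:sh_duality} of $\pi^{\P}(g)$, one has $\E_{\Q''_N}[g] \leq \pi^{\P}(g)$, and it remains to estimate $\E_{\Q_N}[g] - \E_{\Q''_N}[g] = o(1)$. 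For Lipschitz $g$ with constant $L$, this is immediate via Kantorovich--Rubinstein, $|\E_{\Q_N}[g] - \E_{\Q'_N}[g]| \leq L\,\Wc^1(\Q_N,\Q'_N) \leq L\,\Wc^p(\Q_N,\Q'_N)$, together with a boundedness estimate for the $\Q'_N \leadsto \Q''_N$ mixing step. For bounded continuous $g$, I would use $L^p(\phi_N\gamma_N)$-convergence of $|x-y| \to 0$, combined with moment-based tightness of $\{\Q_N\}$ inherited from Assumption \ref{Ass 1} and uniform continuity of $g$ on compacts.

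The main obstacle is the martingale-correction step, where one must cancel the small but random $\R^d$-valued bias $\E_{\Q'_N}[r] - 1$ while keeping the corrected measure equivalent to $\P$ and close to $\Q_N$. The quantitative tradeoff $k_N = o(1/\epsilon_N(\beta_N))$ in the theorem is precisely what makes this bias and the transport error simultaneously vanish. A secondary subtlety is the bounded-continuous case, which requires a tightness argument not needed in the Lipschitz setting.
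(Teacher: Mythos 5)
Your proof is essentially correct, but the upper-bound argument takes a genuinely different route from the paper's. For the lower bound you both reduce to the density statement that bounded-density martingale measures are optimising; the paper quotes R\'asonyi's approximation result (Lemma~\ref{lem. rasonyi}), whereas you sketch a direct truncation argument --- same idea. For the upper bound, the paper never transports the near-optimal $\Q_N$ at all. It takes a superhedging strategy $H$ for $\P$, forms $\tilde g := (g - H(r-1) - \pi^\P(g))\vee 0$, which is non-negative, Lipschitz, and vanishes $\P$-a.s., and then writes $\E_{\Q_N}[\tilde g]\leq k_N\,\E_{\nu_N}[\tilde g] = k_N\,\bigl|\E_{\nu_N}[\tilde g]-\E_{\P}[\tilde g]\bigr| \leq C k_N \mathcal W^1(\nu_N,\P)$ by Kantorovich--Rubinstein applied to the \emph{witness measure} $\nu_N$ versus $\P$. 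This is a two-line estimate that avoids constructing a corrected martingale measure altogether. You instead push $\Q_N$ through the optimal coupling to obtain $\Q'_N\ll\P$ with $d\Q'_N/d\P\leq k_N$, and then repair the small mean defect by a mixture against a fixed reference family --- this does work (the key observation, which you only gesture at, is that in the rearranged inequality $\E_{\Q'_N}[g]\leq \bigl(\pi^{\P}(g)+\sum_i\lambda_i|\E_{\tilde\Q^{(i)}}[g]|\bigr)/(1-\sum_i\lambda_i)$ the error involves only the fixed reference expectations and the boundedly many mixing weights, and $\E_{\Q'_N}[|r|]$ stays bounded because $\E_{\Q'_N}[r]\to 1$ with all components non-negative), but it is strictly heavier machinery than the paper's one-sided envelope trick. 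Conversely, your transport/tightness treatment of the bounded-continuous case is arguably more direct than the paper's: the paper must first show that the estimator can be truncated to $[0,K]^d$ (a lengthy argument bounding the dual strategies $H_N$ and modifying them), and then approximate $g$ by an inf-convolution Lipschitz minorant, whereas you split $\int|g(x)-g(y)|\phi_N\,d\gamma_N$ using a fixed threshold $\delta$, local uniform continuity on $[0,K]^d$, Markov's inequality, and $\int|x-y|^p\phi_N\,d\gamma_N\leq k_N\mathcal W^p(\tilde\P_N,\P)^p=o(1)$. One small technical point to fix: the reference family's $r$-means should span a neighbourhood of $1$ in $\mathrm{aff}(\mathrm{supp}(\P))$, not in all of $\R^d$, since $\mathrm{supp}(\Q'_N)\subseteq\mathrm{supp}(\P)$ forces the bias to live in $\mathrm{lin}(\mathrm{supp}(\P)-1)$.
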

The above result shows that $\pi_{\hat{\Qc}_N}$ is an asymptotically  consistent estimator of $\pi^\P$. Note that we assume no arbitrage NA$(\P)$ so that, using \eqref{eq:sh_duality}, the convergence above is equivalent to
\begin{align*}
\lim_{N \to \infty} \sup_{\Q \in \hat{\Qc}_N} \E_{\Q}[g]= \sup_{\Q \sim \P, \ \Q \in \mathcal{M}} \E_{\Q}[g] .
\end{align*}
We write $\hat{\Qc}_N^p=\hat{\Qc}_N$ when we want to stress the dependence on $p$. 
As mentioned above, the consistency depends crucially on the choice of $\hat{\Qc}_N$. We discuss this further and motivate the above choice in \cite[Section \ref{sec. arb}]{stats2}. For $p>1$, $D^p_{\epsilon,k}(\P)$ are weakly compact but $D^1_{\epsilon, k}(\P)$ is not even weakly closed in general, see Lemma \ref{lem. comp}. In case of $p=1$, taking weak closure of $\hat{\Qc}_N^1$ could destroy the asymptotic consistency of the estimator, e.g., taking $g(r)=(r-1)$ in the example in the proof of Lemma \ref{lem. comp}.

For the particular choice of $\beta_N=\exp(-\sqrt{N})$ under Assumption \ref{Ass 1}.\ref{Ass 1.2} an explicit computation yields that for $N$ large enough we have
\begin{align*}
\epsilon_N(\beta_N)=\left(\frac{\log(c_1\exp(\sqrt{N}))}{c_2 N}\right)^{1/\min(\max(d/p,2),a/(2p))} \sim \frac{1}{N^{{1/\min(\max(2d/p,4),a/p)}}}.
\end{align*}
However many other choices of $\beta_N$ are feasible. The essential point is that for summable $(\beta_N)$, a Borel-Cantelli argument implies that for $N$ large enough the true distribution $\P$ is within an $\epsilon_N(\beta_N)$-ball around $\hat{\P}_N$. This allows us to deduce a sufficient condition for financial robustness of our estimator:
\begin{Cor}\label{cor. fin_rob_was}
In the setup of Theorem \ref{Thm wasserstein} with $\sum_{N=1}^\infty \beta_N<\infty$, let $g$ be such that 
\begin{align}\label{eq:bounded_mart_dens}
\exists C\in \R_+\quad \textrm{s.t.}\quad \sup_{\Q \in \mathcal{M}, \ \| \frac{d\Q}{d\P}\|_{\infty} \le C} \E_{\Q}[g]=\sup_{\Q \sim \P, \ \Q \in \mathcal{M}} \E_{\Q}[g]=\pi^\P(g).
\end{align}
Then $\pi_{\hat{\Qc}_N}(g)\geq \pi^\P(g)$ for $N$ large enough so that the estimator is asymptotically consistent and converges from above.
\end{Cor}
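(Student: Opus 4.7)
The plan is to combine a Borel--Cantelli argument with the hypothesis \eqref{eq:bounded_mart_dens} in a direct inclusion argument, then invoke Theorem \ref{Thm wasserstein} to conclude. The one-line summary is: under the summability of $(\beta_N)$, the true measure $\P$ itself sits in the Wasserstein ball around $\hat{\P}_N$ for large $N$, and then the bounded-density assumption guarantees that enough martingale measures dominating $\P$ are inside $\hat{\Qc}_N$ to recover the entire supremum defining $\pi^\P(g)$.

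First, by Lemma \ref{Lem. was} (or \cite[Lemma \ref{lem:fournilin}]{stats2}) and the choice of $\epsilon_N(\beta_N)$, we have $\P^\infty(\Wc^p(\P,\hat{\P}_N) \geq \epsilon_N(\beta_N))\leq \beta_N$. Summability of $(\beta_N)$ and Borel--Cantelli then give that $\P^\infty$-almost surely, there exists a random $N_0$ such that $\P \in B^p_{\epsilon_N(\beta_N)}(\hat{\P}_N)$ for all $N \geq N_0$. Since $k_N \to \infty$, enlarging $N_0$ if necessary we may also assume $k_N \geq C$ for $N \geq N_0$, where $C$ is the constant furnished by \eqref{eq:bounded_mart_dens}.

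Second, fix any $\Q \in \Mc$ with $\|d\Q/d\P\|_\infty \leq C$. Taking $\tdP = \P$ in the definition \eqref{eq:WassBall} of $D^p_{\epsilon_N(\beta_N),k_N}(\hat{\P}_N)$, we obtain $\tdP \in B^p_{\epsilon_N(\beta_N)}(\hat{\P}_N)$ and $\|d\Q/d\tdP\|_\infty \leq C \leq k_N$ for $N \geq N_0$. Hence $\Q \in \hat{\Qc}_N$, and taking the supremum over all such $\Q$ yields, for $N\geq N_0$,
\begin{equation*}
\pi_{\hat{\Qc}_N}(g) \;\geq\; \sup_{\Q \in \Mc,\, \|d\Q/d\P\|_\infty \leq C} \E_\Q[g] \;=\; \pi^\P(g),
\end{equation*}
where the final equality is exactly \eqref{eq:bounded_mart_dens}.

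Third, Theorem \ref{Thm wasserstein} (applied under $\sum \beta_N<\infty$, which delivers $\P^\infty$-a.s.\ convergence) gives $\pi_{\hat{\Qc}_N}(g)\to \pi^\P(g)$, so together with the inequality above we obtain convergence from above. There is no real obstacle here beyond bookkeeping: the subtle ingredient is already encoded in Theorem \ref{Thm wasserstein}, namely that the calibration $k_N = o(1/\epsilon_N(\beta_N))$ makes $\hat{\Qc}_N$ small enough for consistency while \eqref{eq:bounded_mart_dens} ensures it is large enough to contain a dense-in-supremum family of measures dominating $\P$.
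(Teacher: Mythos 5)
Your proof is correct and follows exactly the approach the paper intends (as signposted in the paragraph immediately preceding the corollary, which highlights the Borel–Cantelli argument, and in the lower-bound step of the proof of Theorem \ref{Thm wasserstein}): Borel–Cantelli gives $\P\in B^p_{\epsilon_N(\beta_N)}(\hat{\P}_N)$ eventually a.s., and then \eqref{eq:bounded_mart_dens} with $k_N\to\infty$ directly places the relevant martingale measures inside $\hat{\Qc}_N$ once $k_N\geq C$, after which Theorem \ref{Thm wasserstein} supplies consistency and hence convergence from above.
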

The condition \eqref{eq:bounded_mart_dens} is motivated by an approximation result, see Lemma \ref{lem. rasonyi}. It allows us also to consider the case when we are unsure about the \emph{true} measure $\P$ and instead prefer to superhedge under all measures in its small neighbourhood. 
\begin{Cor}\label{cor:modeluncertainty}
In the setup of Theorem \ref{Thm wasserstein}, fix $C>0$ and assume there exists $\Q\in \Mc$, $\|d\Q/d\P\|_\infty< C$. 
Consider $C_N\to C$ and a fixed $\epsilon>0$. Then
\begin{align*}
\lim_{N \to \infty} \sup_{\Q \in D^p_{\epsilon+\epsilon_N,C_N}(\hat{\P}_N)} \E_{\Q}[g]
=\sup_{\Q \in D^p_{\epsilon,C}(\P)} \E_{\Q}[g]
\end{align*}
holds in $\P^{\infty}$-probability and $\P^{\infty}$\text{-a.s.} whenever $\sum_{N=1}^\infty \beta_N<\infty$.
\end{Cor}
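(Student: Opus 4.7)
The plan is a sandwich argument. First, by Lemma \ref{Lem. was} (and its Markov-chain extension under Assumption \ref{Ass 1}.\ref{Ass 1.1}), $\P^\infty(\Wc^p(\hat{\P}_N, \P) > \epsilon_N) \le \beta_N$. Under $\sum_N \beta_N < \infty$, Borel--Cantelli yields $\P^\infty$-a.s.\ that $\Wc^p(\hat{\P}_N, \P) \le \epsilon_N$ eventually; without summability I would work on the event $E_N := \{\Wc^p(\hat{\P}_N, \P) \le \epsilon_N\}$ of probability at least $1-\beta_N \to 1$, to obtain convergence in $\P^\infty$-probability. On $E_N$, the triangle inequality gives the inclusions $B_\epsilon^p(\P) \subseteq B_{\epsilon+\epsilon_N}^p(\hat{\P}_N) \subseteq B_{\epsilon+2\epsilon_N}^p(\P)$, which is the basic geometric input I would exploit throughout.

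For the upper bound $\limsup \le {}$RHS, I would choose a near-optimal sequence $\Q_N = f_N \tilde{\P}_N \in D^p_{\epsilon+\epsilon_N, C_N}(\hat{\P}_N)$ with $\|f_N\|_\infty \le C_N$ and $\tilde{\P}_N \in B^p_{\epsilon+\epsilon_N}(\hat{\P}_N)$. Tightness of $\{\tilde{\P}_N\}$ follows from uniformly bounded $p$-moments (since $\Wc^p(\tilde{\P}_N, \P) \le \epsilon + 2\epsilon_N$ and $\E_\P[|r|^p]<\infty$), yielding a weakly convergent subsequence $\tilde{\P}_{N_k} \Rightarrow \tilde{\P}$; the same argument applied to $\Q_{N_k}$ (whose $p$-moments are bounded by $C_{N_k}\E_{\tilde{\P}_{N_k}}[|r|^p]$) extracts a further limit $\Q_{N_{k_j}} \Rightarrow \Q^\dagger$. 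Weak lower semicontinuity of $\Wc^p$ gives $\tilde{\P} \in B^p_\epsilon(\P)$; testing $\Q_{N_{k_j}} \le C_{N_{k_j}} \tilde{\P}_{N_{k_j}}$ against nonnegative bounded continuous $\phi$ and passing to the limit yields $\Q^\dagger \le C\tilde{\P}$; uniform integrability of $r$ (from $p$-moment control with $p\ge 1$) preserves $\E_{\Q^\dagger}[r]=1$. Hence $\Q^\dagger \in D^p_{\epsilon, C}(\P)$, and $\E_{\Q_{N_{k_j}}}[g]\to \E_{\Q^\dagger}[g]$ follows directly (bounded $g$) or via the $\Wc^1$-convergence ensured by weak convergence with $p$-moment bounds (Lipschitz $g$ bounded below).

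For the lower bound $\liminf \ge {}$RHS, since $C_N < C$ may occur, I would perturb an arbitrary $\Q = f\tilde{\P} \in D^p_{\epsilon, C}(\P)$ towards the strictly interior martingale measure $\Q^\star$ provided by the hypothesis, with $h := d\Q^\star/d\P$ and $c^\star := \|h\|_\infty < C$. For fixed $\eta\in(0,1)$, set $C^\eta := (1-\eta)C + \eta c^\star < C$ and
\[
\Q^\eta := (1-\eta)\Q + \eta \Q^\star, \qquad \tilde{\P}^\eta := \frac{(1-\eta)C\,\tilde{\P} + \eta c^\star \P}{C^\eta}.
\]
Convexity of $(\Wc^p)^p$ in its arguments gives $\Wc^p(\tilde{\P}^\eta, \P) \le \Wc^p(\tilde{\P}, \P) \le \epsilon$, and a direct computation against a common dominating $\mu = \tilde{\P} + \P$, writing $\tilde{\P} = u\mu$ and $\P = v\mu$, yields
\[
\frac{d\Q^\eta}{d\tilde{\P}^\eta} = C^\eta \cdot \frac{(1-\eta)u f + \eta v h}{(1-\eta) C u + \eta c^\star v} \le C^\eta,
\]
using $f \le C$ and $h \le c^\star$ pointwise. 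Thus $\Q^\eta \in D^p_{\epsilon, C^\eta}(\P)$; since $C_N \to C > C^\eta$, eventually $C_N \ge C^\eta$ and, on $E_N$, $\Q^\eta \in D^p_{\epsilon + \epsilon_N, C_N}(\hat{\P}_N)$. Sending $\eta \to 0$, $\Wc^p(\Q^\eta, \Q) \le \eta^{1/p} \Wc^p(\Q^\star, \Q) \to 0$ ensures $\E_{\Q^\eta}[g] \to \E_{\Q}[g]$, and taking $\Q$ near-optimal closes the argument.

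The main technical obstacle is the choice of base measure in the lower bound. The naive mixture $(1-\eta)\tilde{\P} + \eta \P$ only bounds the density pointwise by $\max(f,h)$, which does not improve over $C$ on the singular part of $\tilde{\P}$ with respect to $\P$. Weighting the mixture by the envelope constants $C$ and $c^\star$ makes the numerator and denominator align so that the uniform strict bound $C^\eta < C$ is achieved; this is precisely where the strict inequality $c^\star < C$ in the hypothesis enters in an essential way.
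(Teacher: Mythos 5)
Your lower-bound step ($\liminf \ge$ RHS) is correct and, in one technical respect, tidier than the paper's own computation: you perturb a near-optimal $\Q\in D^p_{\epsilon,C}(\P)$ towards the strictly interior martingale measure $\Q^\star$, and you weight the base measure as $\tilde{\P}^\eta=\bigl((1-\eta)C\,\tilde{\P}+\eta c^\star\P\bigr)/C^\eta$ rather than the naive convex combination $(1-\eta)\tilde{\P}+\eta\P$. As you note, the naive mixture only controls the Radon--Nikodym density pointwise by $\max(f,h)$, which gives no room below $C$; the envelope-weighted mixture aligns numerator and denominator so the density is uniformly $\le C^\eta<C$, which is exactly what makes the perturbation useful. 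The paper uses the same device — mixing with $\Q^\star$ — but in the \emph{opposite} direction: it takes near-optimal $\Q_N\in D^p_{\epsilon+\epsilon_N,C_N}(\hat{\P}_N)$ and perturbs it down into $D^p_{\epsilon,C}(\P)$, thereby proving $\limsup\le$ RHS, while the easy inclusion direction is handled as in Theorem~\ref{Thm wasserstein}. So the decomposition into two directions is the same, but you and the paper assign the perturbation trick to opposite halves.

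The gap is in your upper bound ($\limsup\le$ RHS) at $p=1$. You extract a weak subsequential limit $\Q^\dagger$ of near-optimal $\Q_{N_k}$ and then assert that ``uniform integrability of $r$ (from $p$-moment control with $p\ge 1$) preserves $\E_{\Q^\dagger}[r]=1$.'' Bounded $p$-th moments give uniform integrability of the identity only for $p>1$; for $p=1$ a uniformly bounded first moment does not prevent mass from escaping to infinity, and the weak limit of a sequence of martingale measures need not be a martingale measure. This is precisely the content of Lemma~\ref{lem. comp}: $D^1_{\epsilon,k}(\P)$ is not weakly closed, with an explicit counterexample in which $\Q_n\Rightarrow\Q^\dagger\notin\Mc$. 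That counterexample is not neutralised here because the Wasserstein radius in your construction tends to the positive constant $\epsilon$, not to zero. Consequently you cannot conclude $\Q^\dagger\in D^p_{\epsilon,C}(\P)$ when $p=1$, and the sandwich does not close. To repair the argument, either restrict to $p>1$ (where your compactness route is sound, as Lemma~\ref{lem. comp} confirms $D^p$ is weakly compact), or — better — run the perturbation in the $\limsup$ direction as well: mix $\Q_N$ with $\Q^\star$ and use your envelope-weighted base measure so that the mixed measure lands in $D^p_{\epsilon,C}(\P)$, avoiding any appeal to weak closedness of $D^1$.
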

We close this section with two examples illustrating that the assumptions on regularity of $g$ in Theorem \ref{Thm wasserstein} can not be easily relaxed.
\begin{Ex}[$g$ unbounded, not Lipschitz]\label{Ex. 1}
Set $g(r)=(r-1)^2$ and consider $(r_N)_{N\geq 1}$ i.i.d.\ from $\P=\delta_1$. For $r_N \ge 2$ consider the measures
\begin{align*}
\nu_N=\frac{\epsilon_N(\beta_N)}{2}\ \delta_0+\left(1-\frac{r_N \ \epsilon_N(\beta_N)}{2(r_N-1)}\right) \ \delta_1+\frac{\epsilon_N(\beta_N)}{2(r_N-1)}\ \delta_{r_N}
\end{align*}
and 
\begin{align*}
\Q_N=\frac{\epsilon_N(\beta_N)}{2\sqrt{\epsilon_N(\beta_N)}} \ \delta_0+\left(1-\frac{r_N \epsilon_N(\beta_N)}{2(r_N-1)\sqrt{\epsilon_N(\beta_N)}}\right)\ \delta_1+\frac{\epsilon_N(\beta_N)}{2(r_N-1)\sqrt{\epsilon_N(\beta_N)}} \ \delta_{r_N}.
\end{align*}
Then $\mathcal{W}^1(\nu_N, \delta_1) \le \epsilon_N(\beta_N)$,
\begin{align*}
\left\|\frac{d\Q_N}{d\nu_N} \right\|_{\infty}\le \frac{1}{\sqrt{\epsilon_N(\beta_N)}} 
\end{align*}
and choosing $r_N=1/\epsilon_N(\beta_N)$ we find
\begin{align*}
\E_{\Q_N}[g] \ge \frac{\sqrt{\epsilon_N(\beta_N)}}{2(1/\epsilon_N(\beta_N)-1)}(1/\epsilon_N(\beta_N)-1)^2 \to \infty,\quad \textrm{ as } N \to \infty.
\end{align*}
\end{Ex}

\begin{Ex}[$g$ bounded, discontinuous]
Set $g(r)=\mathds{1}_{\{r \neq 1\}}$ and consider $(r_N)_{N\geq 1}$ i.i.d.\ from $\P=\delta_1$. Let 
\begin{align*}
\nu_N=\frac{1}{2}\delta_{1-\epsilon_N(\beta_N)/2}+\frac{1}{2}\delta_{1+\epsilon_N(\beta_N)/2},
\end{align*}
then $\mathcal{W}^1(\nu_N,\delta_1)=\epsilon_N(\beta_N)/2$. We conclude
\begin{align*}
\lim_{N \to \infty} \sup_{\Q \in \hat{\mathcal{Q}}_N} \E_{\Q}[g]\ge \lim_{N \to \infty} \sup_{\Q \ll \nu_N, \ \Q \in \mathcal{M}} \E_{\Q}[g]=1 \neq 0 = \sup_{\Q \sim \P, \ \Q \in \mathcal{M}} \E_{\Q}[g].
\end{align*}
\end{Ex}

Let us remark that $\pi_{\hat{\Qc}_N}(g)$ acting on infinite dimensional spaces is bounded by a more sophisticated version of the plugin estimator. To see this define the Average Value at risk of $g$ at level $1/k$, for $k\geq 1$, by 
\begin{align*}
AV@R_{1/k}^{\P}(g)= \max_{\tdP \sim \P, \ \| d\tdP / d\P \|_{\infty} \le k} \E_{\tdP}[g].
\end{align*}
In dimension one, $d=1$, it can be re-expressed, see \cite[Thm.~4.47]{follmer2011stochastic}, as
$$AV@R_{1/k}^{\P}(g):=k\int_{1-1/k}^1 F_{\P \circ g^{-1}}^{-1}(x) dx,$$
which makes the link with the classical Value-at-Risk apparent. If we now include the ability to trade and optimise the final position, by the translation-invariance of $AV@R_{1/k}^{\P}(\cdot)$ we can write
\begin{align*}
&\inf_{H \in \R^d} AV@R_{1/k}^{{\P}}(g(r)-H(r-1))\\
&\qquad =\inf \left\{ x \in \R \ | \ \exists H \in \R^d \text{ s.t. }  AV@R_{1/k}^{\P}(g(r)-H(r-1)-x)\le 0\right\},
\end{align*}
which is a superhedging price, where the acceptance cone is given by an $AV@R$ constraint. An analogous representation and bounds for $\pi_{\hat{\Qc}_N}$ follow:
\begin{Cor}\label{Lem:AVaR} In the setup of Theorem \ref{Thm wasserstein}, let $g$ be $1$-Lipschitz and $\P\in \prob(\R^d_+)$ satisfying NA$(\P)$. Then  there exists $N_0\in \N$ such that for all $N\ge N_0$
\begin{align} \label{eq. avar1}
&\inf_{H \in \R^d} AV@R_{1/k_N}^{\hat{\P}_N}(g(r)-H(r-1))\nonumber\\
&\qquad \le \inf_{H \in \R^d} \sup_{\tilde{\P} \in B_{\epsilon_N(\beta_N)}^p(\hat{\P}_N)} AV@R_{1/k_N}^{\tilde{\P}}(g(r)-H(r-1)) =\pi_{\hat{\Qc}_N}(g)\nonumber\\
&\qquad=\inf \Bigg\{ x \in \R \ | \ \exists H \in \R^d \text{ s.t. }  \\
&\qquad\qquad\qquad\sup_{\tilde{\P} \in B^p_{\epsilon_N(\beta_N)}(\hat{\P}_N)}AV@R_{1/k_N}^{\tilde{\P}}(g(r)-H(r-1)-x)\le 0\Bigg\}\nonumber\\
&\qquad\qquad\le \inf_{|H|\le 1 } AV@R_{1/k_N}^{\hat{\P}_N}(g(r)-H(r-1))+2k_N \epsilon_N(\beta_N)\nonumber
%
\end{align}
on a set of probability greater or equal than $1-\beta_N$.
\end{Cor}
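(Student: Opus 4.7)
The plan is to verify the chain of (in)equalities element by element. The opening inequality is trivial since $\hat{\P}_N\in B^p_{\epsilon_N(\beta_N)}(\hat{\P}_N)$, so taking a supremum over $\tilde{\P}$ in the Wasserstein ball only enlarges the quantity. The equality between the $\inf_H\sup_{\tilde{\P}}AV@R^{\tilde{\P}}_{1/k_N}(g(r)-H(r-1))$ expression and the $\inf\{x:\dots\}$ formulation is a direct consequence of the cash-invariance of $AV@R$: since $AV@R^{\tilde{\P}}_{1/k_N}(g(r)-H(r-1)-x)=AV@R^{\tilde{\P}}_{1/k_N}(g(r)-H(r-1))-x$, the constraint that $\sup_{\tilde{\P}}AV@R^{\tilde{\P}}_{1/k_N}(g(r)-H(r-1)-x)\le 0$ for some $H$ is equivalent to $\inf_H\sup_{\tilde{\P}}AV@R^{\tilde{\P}}_{1/k_N}(g(r)-H(r-1))\le x$.

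The substantive step is identifying the middle quantity with $\pi_{\hat{\Qc}_N}(g)$. I would use the dual representation $AV@R^{\tilde{\P}}_{1/k_N}(X)=\sup\{\E_{\Q}[X]:\Q\ll\tilde{\P},\,d\Q/d\tilde{\P}\le k_N\}$ to rewrite
\[\sup_{\tilde{\P}\in B^p_{\epsilon_N(\beta_N)}(\hat{\P}_N)}AV@R^{\tilde{\P}}_{1/k_N}(X)=\sup_{\Q\in\mathcal{K}_N}\E_{\Q}[X],\]
where $\mathcal{K}_N:=\{\Q\in\prob(\R_+^d):\|d\Q/d\tilde{\P}\|_{\infty}\le k_N\textrm{ for some }\tilde{\P}\in B^p_{\epsilon_N(\beta_N)}(\hat{\P}_N)\}$, so that $\hat{\Qc}_N=\mathcal{K}_N\cap\Mc$. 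The bilinear functional $(\Q,H)\mapsto\E_{\Q}[g(r)-H(r-1)]=\E_{\Q}[g]+H(1-\E_{\Q}[r])$ is affine in each variable, $\R^d$ is convex, and $\mathcal{K}_N$ is convex (by combining densities over convex combinations of dominating measures, noting that $p$-Wasserstein balls are convex by coupling) and weakly compact for $p>1$ by Lemma \ref{lem. comp}. Applying Sion's minimax theorem, with an approximation or weak-closure argument to handle the $p=1$ case, I would swap $\inf_H$ and $\sup_{\Q}$; since $\inf_H\bigl(\E_{\Q}[g]+H(1-\E_{\Q}[r])\bigr)$ equals $\E_{\Q}[g]$ when $\Q\in\Mc$ and is $-\infty$ otherwise, the outer supremum collapses to $\sup_{\Q\in\hat{\Qc}_N}\E_{\Q}[g]=\pi_{\hat{\Qc}_N}(g)$.

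The closing inequality is obtained via an optimal-coupling argument. Fix $H\in\R^d$ with $|H|\le 1$ and set $f(r)=g(r)-H(r-1)$, which is $L$-Lipschitz with $L\le 2$. For any $\tilde{\P}\in B^p_{\epsilon_N(\beta_N)}(\hat{\P}_N)$ and $\Q=\phi\tilde{\P}$ with $\phi\le k_N$, choose a near-optimal coupling $\gamma\in\Pi(\tilde{\P},\hat{\P}_N)$ and push forward the measure $\phi(x)\,d\gamma(x,y)$ onto the second coordinate to obtain a probability measure $\hat{\Q}$ with $d\hat{\Q}/d\hat{\P}_N\le k_N$, which is therefore admissible in the $AV@R^{\hat{\P}_N}_{1/k_N}$ supremum. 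A direct computation gives
\[\E_{\Q}[f]-\E_{\hat{\Q}}[f]=\int(f(x)-f(y))\phi(x)\,d\gamma(x,y)\le Lk_N\Wc^1(\tilde{\P},\hat{\P}_N)\le 2k_N\epsilon_N(\beta_N),\]
where $\Wc^1\le\Wc^p$ was used. Taking supremum over $\Q$ and $\tilde{\P}$, then infimum over $|H|\le 1$, yields the desired upper bound. The probability $\ge 1-\beta_N$ and the existence of $N_0$ come from restricting to the event $\{\P\in B^p_{\epsilon_N(\beta_N)}(\hat{\P}_N)\}$, which has $\P^\infty$-probability at least $1-\beta_N$ by definition of $\epsilon_N$, and on which NA$(\P)$ together with Lemma \ref{lem. rasonyi} guarantee $\hat{\Qc}_N\ne\emptyset$ for $k_N$ large enough. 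The main obstacle in the proof is the clean justification of the minimax swap, particularly in the $p=1$ case where $\mathcal{K}_N$ need not be weakly closed.
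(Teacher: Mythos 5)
Your decomposition of the chain matches the paper's at the structural level: the opening inequality is immediate since $\hat{\P}_N\in B^p_{\epsilon_N(\beta_N)}(\hat{\P}_N)$, cash-invariance of $AV@R$ links the two $\inf$-formulations, a minimax interchange identifies $\pi_{\hat{\Qc}_N}(g)$, and the closing bound comes from Lipschitz continuity of $AV@R$ with respect to $\Wc^1$. Your optimal-coupling computation at the end is a re-derivation of Lemma~\ref{Lem. Pichler}, which the paper cites directly, and it yields the same $2k_N\epsilon_N(\beta_N)$ bound.

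The substantive divergence is in how the minimax interchange is justified, and there your argument has a genuine gap at $p=1$. The paper does not invoke Sion's theorem: it follows the convex-duality argument of \cite{bartl2016exponential} and uses \cite[Thm.~1.48]{follmer2011stochastic} to conclude that NA$(\P)$, on the event $\{\P\in B^p_{\epsilon_N(\beta_N)}(\hat{\P}_N)\}$, forces $1\in\text{ri}\bigl(\{\E_{\nu}[r] : \|d\nu/d\tilde\P\|_\infty\le k_N \text{ for some } \tilde\P\in B^p_{\epsilon_N(\beta_N)}(\hat{\P}_N)\}\bigr)$ for all $N\ge N_0$. This relative-interior (Slater-type) condition plays the role of no-arbitrage in the enlarged ambiguity market and delivers strong duality for the finite-dimensional infimum over $H\in\R^d$ without any weak compactness of the set of measures. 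Your route via Sion requires exactly that compactness, and, as you correctly observe via Lemma~\ref{lem. comp}, $D^1_{\epsilon,k}(\hat{\P}_N)$ is not weakly compact. The fallback you sketch --- pass to a weak closure --- does not repair this: the paper notes right after Lemma~\ref{lem. comp} that the closure of $D^1_{\epsilon,k}$ picks up measures outside $\Mc$, so the resulting supremum is no longer $\pi_{\hat{\Qc}_N}(g)$. As written, your argument establishes the statement only for $p>1$; closing the $p=1$ case requires replacing Sion with a duality argument of the paper's type, in which the compactness used is in the finite-dimensional $H$-variable rather than in the set of measures.
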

Note that 
\begin{align*}
\pi_{\hat{\mathcal{Q}}_N}(g) = \sup_{\tilde{\P} \in B^p_{\epsilon_N(\beta_N)}(\hat{\P}_N)} \sup_{\|d\nu/ d\tilde{\P}\|_{\infty} \le k_N} \inf_{H \in \R^d} \E_{\nu}\left[g-H(r-1)\right].
\end{align*}
The proof proceeds by using a min-max argument to interchange the two suprema and the infimum above and uses continuity of $\tilde{\P} \mapsto AV@R_{1/k_N}^{\tilde{\P}}(g-H(r-1))$ w.r.t. to $\Wc^1$, see \cite{pichler2013evaluations} and Appendix \ref{app. 1} for details.

We  provide a method for the direct calculation of the Wasserstein estimator implemented in TensorFlow\footnote{Our Python implementation for all of the numerical examples in the paper can be found at \url{https://github.com/johanneswiesel/Stats-for-superhedging}.}, which is based on recent duality results obtained in \citep{eckstein2018robust}. As this approximation is computationally quite costly when a large sample size is used, we opt to compute the upper bound of $\pi_{\hat{\Qc}_N}(g)$ given in \eqref{eq. avar1} instead. This is shown in Figure \ref{fig. wass1}. 

\begin{figure}
\centering
\begin{minipage}[b]{0.49\textwidth}
  \includegraphics[width=\textwidth]{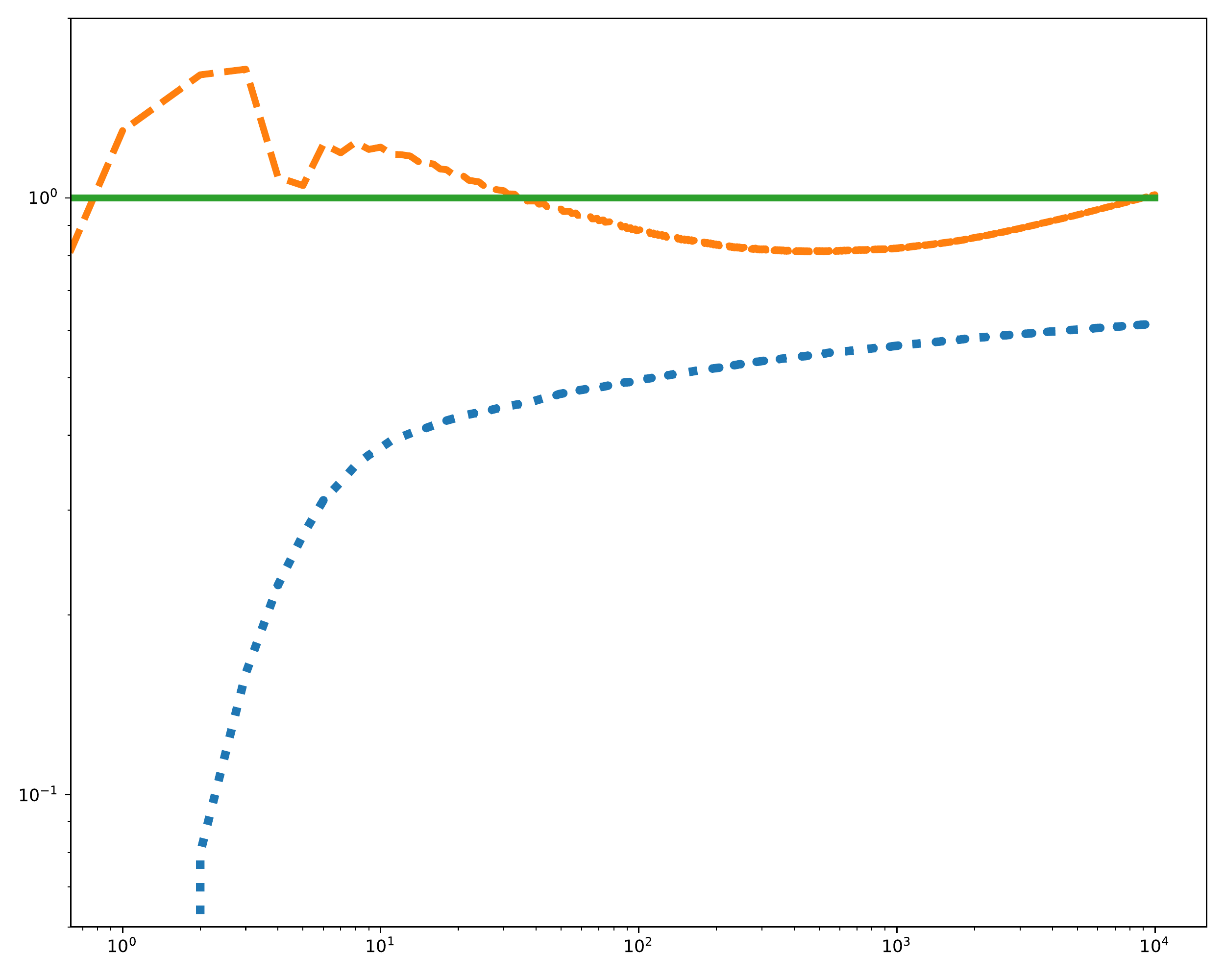}
\end{minipage}
\begin{minipage}[b]{0.49\textwidth}
  \includegraphics[width=\textwidth]{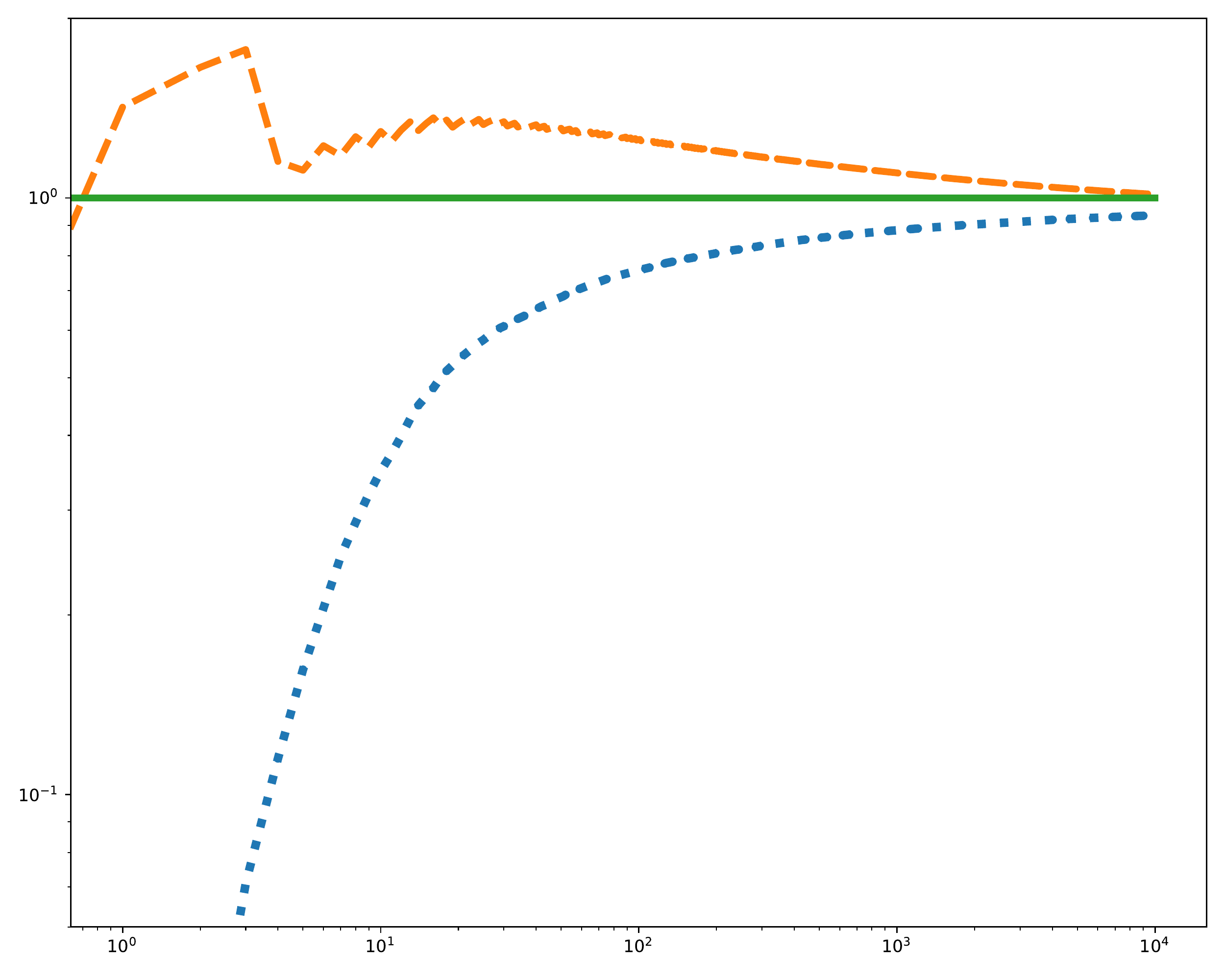}
\end{minipage}
 \captionsetup{width=\linewidth}
\caption{Convergence of the plugin estimator $\hat{\pi}_N$ (dotted) and the Wasserstein estimator $\pi_{\hat{\Qc}_N}$ (dashed) to the true value (solid) as $N\to \infty$ for $g(r)=(1-r)\mathds{1}_{\{r \le 1\}}-\sqrt{r-1}\mathds{1}_{\{r>1\}}$, $\P=\text{Exp}(1)$ (left) and $g(r)=(r-2)^+$, $\P=\exp(\mathcal{N}(0,1))$ (right). Results averaged over $10^3$ runs.}
\label{fig. wass1}
\end{figure}

\subsection{A penalty approach: estimator for discontinuous payoffs}\label{sec:penaltyestimator}
In the previous section we introduced the estimator $\pi_{\hat{\mathcal{Q}}_N}$ where $\hat{\mathcal{Q}}_N$ were based on Wasserstein balls around $\hat{\P}_N$. This estimator allowed us to address fundamental shortcomings of the plugin estimator but, as the counterexamples demonstrated, it is only asymptotically consistent under suitable regularity assumptions on $g$ and/or further assumptions on $\P$. To construct an estimator which would be consistent also for discontinuous payoffs while preserving some of the desirable robustness properties of $\pi_{\hat{\mathcal{Q}}_N}$, it is natural to turn to penalty methods used in risk measures and their representations as non-linear expectations. Namely we use the maximum norm of the Radon-Nikodym derivative, rather than the Wasserstein distance, in the penalisation term.
\begin{Theorem}\label{thm:penalty}
In the setting of Theorem \ref{Thm. one-per}, let NA$(\P)$ hold and let $g: \R^d_+ \to \R_+$ be Borel-measurable and bounded by some constant $C>0$. Then for any $C_N \stackrel{N\to \infty}{\longrightarrow} C$ we have
\begin{align}\label{eq. ball}
\lim_{N \to \infty} \sup_{\Q\in \mathcal{M}} \left(\E_{\Q}[g]-C_N\left(\inf_{\hat{\Q} \sim \hat{\P}_N, \ \hat{\Q}\in \mathcal{M}} \left\| \frac{d\hat{\Q}}{d\Q} \right\| _{\infty}-1\right)\right)=\sup_{\Q \sim \P, \ \Q \in \mathcal{M}} \E_{\Q}[g]
\end{align}
$\P^{\infty}{-a.s.}$, 
where for two probability measures $\Q, \hat{\Q}$ the expression $\left\| \frac{d\hat{\Q}}{d\Q}\right\|_{\infty} =\infty$ if $\hat{\Q}$ is not absolutely continuous w.r.t. $\Q$.
\end{Theorem}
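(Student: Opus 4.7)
The plan is to sandwich the quantity
$L_N := \sup_{\Q\in\mathcal{M}}\bigl(\E_\Q[g] - C_N(K_N(\Q) - 1)\bigr)$, where I write $K_N(\Q) := \inf_{\hat{\Q} \sim \hat{\P}_N,\,\hat{\Q}\in\mathcal{M}} \|d\hat{\Q}/d\Q\|_\infty \in [1,\infty]$, between $\pgn(g)$ and $\pgn(g) + (C - C_N)_+$, and then to invoke consistency of the plugin estimator from Theorem \ref{Thm. one-per}.

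For the lower bound I would appeal to Proposition \ref{prop. supp} to ensure NA$(\hat{\P}_N)$ holds $\P^{\infty}$-a.s.\ for $N$ large, so that the set $\{\hat{\Q} \sim \hat{\P}_N,\ \hat{\Q}\in \mathcal{M}\}$ is non-empty. For any such $\hat{\Q}$ the infimum in $K_N(\hat{\Q})$ is at most $1$ (test against $\hat{\Q}$ itself) and at least $1$ (a non-negative density with $\hat{\Q}$-integral $1$ has essential supremum at least $1$), so $K_N(\hat{\Q}) = 1$. Plugging $\Q = \hat{\Q}$ into the outer supremum contributes $\E_{\hat{\Q}}[g]$; taking the supremum over such $\hat{\Q}$ yields $L_N \ge \pgn(g)$.

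For the upper bound, fix $\Q \in \mathcal{M}$ with $K_N(\Q) < \infty$ (otherwise the integrand is $-\infty$ and can be discarded) and $\epsilon > 0$, and pick $\hat{\Q} \sim \hat{\P}_N$, $\hat{\Q} \in \mathcal{M}$ with $\|f\|_\infty \le K_N(\Q) + \epsilon$ for $f := d\hat{\Q}/d\Q$. Since $\int f\,d\Q = 1$ and $0 \le g \le C$,
\begin{align*}
\E_\Q[g] = \E_{\hat{\Q}}[g] + \E_\Q[g(1-f)] \le \pgn(g) + C\,\E_\Q[(1-f)^+].
\end{align*}
The key sharp control is $\E_\Q[(1-f)^+] = \E_\Q[(f-1)^+] \le \min\bigl(K_N(\Q) + \epsilon - 1,\,1\bigr)$: the equality follows from $\int (f - 1)\,d\Q = 0$, while the two upper bounds come from $(f-1)^+ \le \|f\|_\infty - 1$ and $(f-1)^+ \le f$, respectively. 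Substituting and splitting into the cases $K_N(\Q) \le 2$ and $K_N(\Q) > 2$, and then letting $\epsilon \downarrow 0$, one obtains $\E_\Q[g] - C_N(K_N(\Q) - 1) \le \pgn(g) + (C - C_N)_+$ uniformly in $\Q$. Taking the supremum completes the bound.

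Combining the two estimates with $\pgn(g) \to \pi^\P(g)$ $\P^{\infty}$-a.s.\ from Theorem \ref{Thm. one-per} and with $(C - C_N)_+ \to 0$ yields the desired limit. The delicate step is the upper bound, and specifically the observation that $C\,\E_\Q[(1-f)^+]$ admits simultaneously the bound $C(K_N(\Q) - 1)$, which the penalty $C_N(K_N(\Q) - 1)$ absorbs since $C_N \to C$, and the bound $C$, which prevents large values of $K_N(\Q)$ from blowing up the estimate when $C_N < C$. This twin bound is exactly what makes the conclusion valid for an arbitrary sequence $C_N \to C$, and not only for $C_N \ge C$.
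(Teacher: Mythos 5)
Your proof is correct, and it follows essentially the same skeleton as the paper's (lower bound is immediate; upper bound fixes $\Q$, picks a near-optimal $\hat{\Q}\sim\hat{\P}_N$, and compares $\E_\Q[g]$ with $\E_{\hat{\Q}}[g]$ using the boundedness $0\le g\le C$). However, the way you obtain the key estimate is genuinely cleaner. The paper introduces an auxiliary quantity $\epsilon=\Q(r\notin\text{supp}(\hat{\P}_N))+\sum_i(\Q(r_i)-\hat{\Q}(r_i))^+$, performs a Lebesgue decomposition $\Q=f\hat{\Q}+\nu$, and proves $\|d\hat{\Q}/d\Q\|_\infty\ge 1/(1-\epsilon)$ by a contradiction argument manipulating sums over atoms; the penalty then absorbs $C\epsilon-C_N\epsilon/(1-\epsilon)$, with a somewhat terse handling of the case $C_N\to C$. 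You instead set $f=d\hat{\Q}/d\Q$ and exploit the simple identity $\E_\Q[(1-f)^+]=\E_\Q[(f-1)^+]$ (from $\E_\Q[f]=1$), yielding the twin bounds $\E_\Q[(1-f)^+]\le\min(\|f\|_\infty-1,\,1)$ with no recourse to the atomic structure of $\hat{\Q}$. Note that your $\E_\Q[(1-f)^+]$ is in fact exactly the paper's $\epsilon$; your bound $\|f\|_\infty-1\ge\epsilon$ is slightly weaker than the paper's $\|f\|_\infty\ge 1/(1-\epsilon)$, but is easier to prove and suffices. The main gain of your version is the uniform explicit error term $(C-C_N)_+$, which makes the passage from $C_N=C$ to general $C_N\to C$ immediate and transparent, whereas the paper's final step (asserting the relevant quantity is ``non-negative for $\epsilon\le 1-C_N/C$'') is opaque and requires the reader to extract a vanishing bound themselves.
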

The direct implementation of \eqref{eq. ball} proves numerically expensive and unstable due to the fraction $\|d\hat{\Q}/d\Q\|_{\infty}$ appearing in the penalisation term. Thus, in Figure \ref{fig. hyb1}, we show an upper bound on the penalty estimator derived in the proof of Theorem \ref{thm:penalty} in Appendix \ref{app. 1}. We focus on more tractable properties of the plugin and Wasserstein estimator for the rest of the paper.

\begin{figure}
\centering
\begin{minipage}[b]{0.49\textwidth}
  \includegraphics[width=\textwidth]{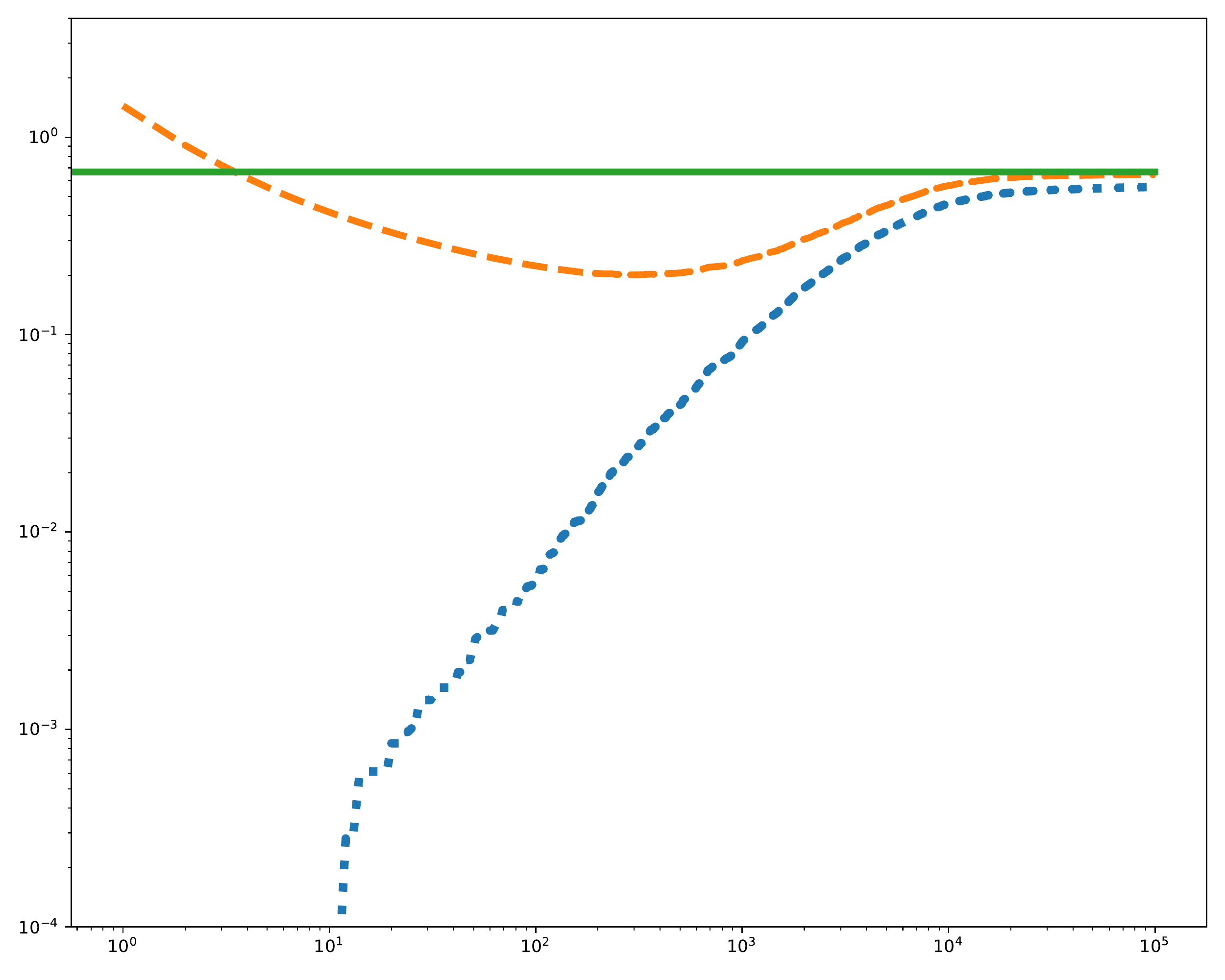}
\end{minipage}
\begin{minipage}[b]{0.49\textwidth}
  \includegraphics[width=\textwidth]{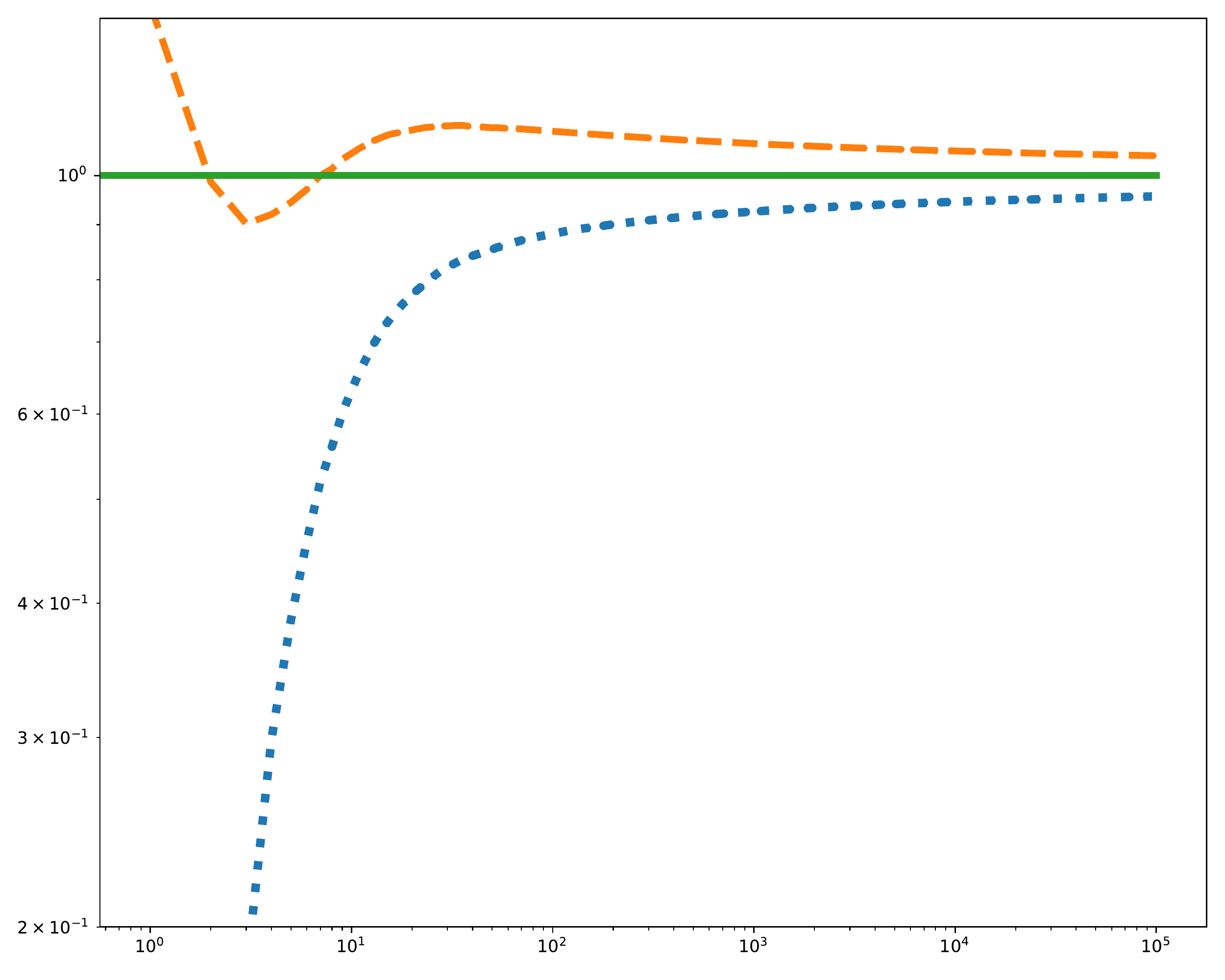}
\end{minipage}
\captionsetup{width=\linewidth}
\caption{
Convergence of the penalty estimator (dashed) in Theorem \ref{thm:penalty} and the plugin estimator $\hat{\pi}_N$ (dotted) to the true value (solid) as $N\to \infty$ for $g(r)=\mathds{1}_{\{r \le 0.5\}}$, $\P=\P^{10}$ from \cite[Example \ref{Ex. 2}]{stats2} (left) and $g(r)=\mathds{1}_{\{r \le 0.5\}}$, $\P=\text{Exp}(1)$ (right). Results averaged over $10^3$ runs.} \label{fig. hyb1}
\end{figure}

\section{Statistical robustness of superhedging price estimators}
\label{sec:robust}

Recall that in Theorem \ref{thm. rob} we showed that classical robustness in the sense of Hampel can not hold unless the superhedging price is trivial in that \eqref{eq:sh trivial} holds. This is closely related to properties of L\'evy-Prokhorov metric, see Lemma \ref{lem. wasser v2}, and we are naturally led to consider stronger metrics than $d_L$, which offer a better control on the support. Below, we investigate the use of Wasserstein distances. First we consider $\Wc^p$ for $p\geq 1$ which is sufficient to establish robustness of the estimator $\pi_{\hat{\Qc}_N}$ from Section \ref{subsec dist}. Then we turn to an even stronger metric $\Wc^\infty$ which is needed to study the plugin estimator.

\subsection{Robustness with respect to the Wasserstein-Hausdorff metric}
\label{sec:robust_wassp} 

Following \cite{li2017generalized} we consider Wasserstein-Hausdorff distance, i.e., a Hausdorff distance between subsets of $\prob(\R_+^d)$ equipped with $\Wc^p$:
\begin{Defn}
Let $\mathfrak{P}, \tilde{\mathfrak{P}} \subseteq \mathcal{P}(\R_+^d)$. The $p$-Wasserstein-Hausdorff metric between sets $\mathfrak{P}$ and $\tilde{\mathfrak{P}}$ is given by
\begin{align*}
\mathcal{W}^p(\mathfrak{P},\tilde{\mathfrak{P}}):= \max \left( \sup_{\P \in \mathfrak{P}} \inf_{\tilde{\P} \in \tilde{\mathfrak{P}}} \mathcal{W}^p(\P, \tilde{\P}), \sup_{\tilde{\P} \in \tilde{\mathfrak{P}}} \inf_{\P \in \mathfrak{P}} \mathcal{W}^p(\P, \tilde{\P}) \right).
\end{align*}
\end{Defn}
In this generality $\mathcal{W}^p(\mathfrak{P}, \tilde{\mathfrak{P}})$ can take the value infinity. Properties of this quantity are discussed in \cite{li2017generalized} assuming compactness and uniform integrability of $\mathfrak{P}$ and $\tilde{\mathfrak{P}}$. We apply this distance to the sets of the form $\hat{\Qc}_N=D^p_{\epsilon_N(\beta_N),k_N}(\hat{\P}_N)$, see \eqref{eq:WassBall}, and we note that $D^1_{\epsilon_N(\beta_N),k_N}(\hat{\P}_N)$ is neither compact nor uniformly integrable, see Lemma \ref{lem. comp}. We used these sets in Section \ref{subsec dist} to define consistent estimators $\pi_{\hat{\Qc}_N}$, see \eqref{eq:generic_estimator} and Theorem \ref{Thm wasserstein}. The following establishes their robustness:
\begin{Theorem} \label{thm rob_was}
Fix $p\geq 1$. The estimator $\pi_{\hat{\Qc}_N}$ studied in Theorem \ref{Thm wasserstein} is robust with respect to the $p$-Wasserstein-Hausdorff metric in the sense that 
\begin{align*}
\sup_{g \in \mathcal{L}_1}\left| \pi_{\hat{\Qc}_N^1}(g)-\pi_{\hat{\Qc}_N^2}(g)
\right| 
\le  \mathcal{W}^p(\hat{\Qc}^1_N,\hat{\Qc}^2_N),
\end{align*}
where $\hat{\Qc}^i_N=D^p_{\epsilon_N(\beta_N),k_N}(\hat{\P}^i_N)$ for $\P^i\in \prob(\R^d_+)$, $i=1,2$.
\end{Theorem}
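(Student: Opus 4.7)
The plan is to combine the Kantorovich--Rubinstein duality \eqref{eq:KRduality} with the definition of the Wasserstein--Hausdorff distance by a standard $\epsilon$-approximation argument. The key observation is that for any $1$-Lipschitz $g$ and any two probability measures $\Q_1,\Q_2$ on $\R_+^d$ with finite first moments, \eqref{eq:KRduality} gives
\begin{equation*}
\E_{\Q_1}[g]-\E_{\Q_2}[g]\le \Wc^1(\Q_1,\Q_2)\le \Wc^p(\Q_1,\Q_2),
\end{equation*}
where the second inequality is Jensen applied to the optimal $\Wc^p$-coupling. Moreover, every $\Q\in \hat{\Qc}_N^i=D^p_{\epsilon_N(\beta_N),k_N}(\hat{\P}_N^i)$ satisfies $\E_\Q[|r|]\le k_N\,\sup_{\tilde\P\in B^p_{\epsilon_N(\beta_N)}(\hat{\P}_N^i)}\E_{\tilde\P}[|r|]<\infty$, so expectations of any $g\in \mathcal{L}_1$ under such $\Q$ are finite and the above KR bound applies.

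Having established this pointwise inequality on expectations, I would fix $g\in \mathcal{L}_1$ and, without loss of generality, assume $\pi_{\hat{\Qc}_N^1}(g)\ge \pi_{\hat{\Qc}_N^2}(g)$. Given $\epsilon>0$, I would choose $\Q_1^\epsilon\in \hat{\Qc}_N^1$ that is $\epsilon$-optimal in $\pi_{\hat{\Qc}_N^1}(g)=\sup_{\Q\in \hat{\Qc}_N^1}\E_\Q[g]$, and then use the definition of the Wasserstein--Hausdorff distance to pick $\Q_2^\epsilon\in \hat{\Qc}_N^2$ with
\begin{equation*}
\Wc^p(\Q_1^\epsilon,\Q_2^\epsilon)\le \inf_{\Q_2\in \hat{\Qc}_N^2}\Wc^p(\Q_1^\epsilon,\Q_2)+\epsilon\le \Wc^p(\hat{\Qc}_N^1,\hat{\Qc}_N^2)+\epsilon.
\end{equation*}
Chaining these with the KR bound yields
\begin{equation*}
\pi_{\hat{\Qc}_N^1}(g)-\pi_{\hat{\Qc}_N^2}(g)\le \E_{\Q_1^\epsilon}[g]+\epsilon-\E_{\Q_2^\epsilon}[g]\le \Wc^p(\hat{\Qc}_N^1,\hat{\Qc}_N^2)+2\epsilon.
\end{equation*}
Letting $\epsilon\downarrow 0$ and repeating the argument with the roles of $\hat{\Qc}_N^1,\hat{\Qc}_N^2$ reversed gives $|\pi_{\hat{\Qc}_N^1}(g)-\pi_{\hat{\Qc}_N^2}(g)|\le \Wc^p(\hat{\Qc}_N^1,\hat{\Qc}_N^2)$, and finally taking the supremum over $g\in\mathcal{L}_1$ produces the claim.

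The proof is essentially two lines once one has the right $\epsilon$-selection, so there is no genuine obstacle. The only points requiring minor care are (i) confirming that the quantities $\pi_{\hat{\Qc}_N^i}(g)$ are finite so that the subtraction makes sense, which follows from the uniform integrability provided by the bound on $\|d\Q/d\tilde\P\|_\infty$ and finiteness of $\Wc^p$-moments on a bounded Wasserstein ball; and (ii) the fact that neither $\hat{\Qc}_N^1$ nor $\hat{\Qc}_N^2$ need be compact (cf.\ Lemma \ref{lem. comp} for $p=1$), which is exactly why the argument is carried out through the $\epsilon$-selections rather than through attainment.
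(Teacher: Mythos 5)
Your proposal is correct and takes essentially the same route as the paper: bound $\E_{\Q^1}[g]-\E_{\Q^2}[g]$ by $\Wc^p(\Q^1,\Q^2)$ for $1$-Lipschitz $g$ (the paper does this directly via an optimal coupling rather than via \eqref{eq:KRduality} plus Jensen, but this is the same inequality), and then pass to the Hausdorff distance over the two families. Your explicit $\epsilon$-selection and finiteness checks simply spell out the step the paper compresses into ``the claim follows.''
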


\begin{proof}
Note that for all $g \in \mathcal{L}_1$ and $\Q^i \in \hat{\Qc}^i_N$, $i=1,2$, we have
\begin{align*}
\E_{\Q^1}[g]  - \E_{\Q^2}[g]=\int_{\R_+^d \times \R_+^d} g(r)-g(s) d\gamma(r,s) \le \int_{\R_+^d \times \R_+^d} |r-s| d\gamma(r,s),
\end{align*}
where $\gamma\in \Pi(\Q^1,\Q^2)$ is a probability measure on $\R_+^d \times \R_+^d$ with marginals $\Q^1$ and $\Q^2$. 
Taking the infimum over all these probability measures $\gamma$ yields
\begin{align*}
\E_{\Q^1}[g]  - \E_{\Q^2}[g] \le \mathcal{W}^p(\Q^1, \Q^2)
\end{align*}
for all $p\ge1$. The claim follows.
\end{proof}

\begin{Rem}
It follows in particular that if $\P^1,\P^2$ admit no arbitrage then \\
\mbox{$\lim_{N \to \infty} \mathcal{W}^p(\hat{\Qc}^1_N,\hat{\Qc}^2_N)=0$} implies $\text{supp}(\P^1)=\text{supp}(\P^2)$. Indeed, otherwise there exists a Lipschitz continuous function $g$ such that
\begin{align*}
\sup_{\Q \sim \P^1, \ \Q \in \mathcal{M}} \E_{\Q}[g] \neq \sup_{\Q \sim \P^2, \ \Q \in \mathcal{M}} \E_{\Q}[g],
\end{align*}
so, by consistency, $\lim_{N \to \infty}\mathcal{W}^p(\hat{\Qc}^1_N,\hat{\Qc}^2_N)>0$, $\P^{\infty}$-a.s.
\end{Rem}

\subsection{Robustness with respect to $\Wc^\infty$ and perturbations of the support}
\label{sec:robust_pathwise}
We reconsider now robustness of the plugin estimator from Section \ref{sec. plugin}. In analogy to the previous section, it seems natural to simply consider the Hausdorff distance between the supports of $\hat{\P}^1_N$ and $\hat{\P}^2_N$. 
In Proposition \ref{prop:d_H cont} we established continuity of $\prob(\R^d_+)\ni \tilde{\P}\to \pi^{\tilde{\P}}(g)$ in the pseudometric $d_H(\text{supp}(\P),\text{supp}(\tilde{\P}))$ but noted that it was not sufficient for a robustness result. Recalling the L\'evy metric in \eqref{eq:LPdistance}, if $d_L(\P, \tilde{\P}) \le \epsilon$ then $\tilde{\P}$ can be obtained from $\P$ by redistributing $\epsilon$ mass to arbitrary points on $\R^d_+$, while $(1- \epsilon)$ mass can only be moved in an $\epsilon$-neighbourhood (in the Euclidean distance) of where $\P$ allocated mass. 
As we have observed before, the former operation causes problems, as it changes the null sets of the measure uncontrollably. This is no longer possible under our pseudometric. However, to obtain robustness, we have to restrict redistribution of mass to an $\epsilon$-neighbourhood for all sets and not only for the whole support. This is achieved by the $\Wc^\infty$ metric as is clear from the second representation in \eqref{eq:Winfty}. This leads to the following extended notion of robustness:
\begin{Defn}\label{Def robust}
Let $\mathfrak{P}\subseteq \mathcal{P}(\R_+^d)$ and $r_1, r_2, \dots$ be i.i.d. with distribution $\P\in \mathfrak{P}$. The sequence of estimators $T_N=T(\hat{\P}_N)$ is said to be robust at $\P \in \mathfrak{P}$ w.r.t.\ $\mathcal{W}^{\infty}$ on $\mathfrak{P}$, if for all $\epsilon >0$ there exist $\delta >0$ and $N_0 \in \N$ such that for all $N \ge N_0$ and all $\tilde\P \in \mathfrak{P}$
\begin{align*}
\mathcal{W}^{\infty}(\tilde\P, \P)\le \delta \hspace{0.5cm} \Longrightarrow \hspace{0.5cm} d_L(\mathcal{L}_{\tilde\P}(T_N), \mathcal{L}_{\P} (T_N)) \le \epsilon.
\end{align*}
\end{Defn}
The following asserts robustness of the plugin estimator in the above sense and is the main result in this section. 
\begin{Theorem}\label{Thm. infi}
Let $\P \in \mathcal{P}(\R_+^d)$ such that NA$(\P)$ holds. Then, for a uniformly continuous $g$, the plugin estimator $\hat{\pi}_N(g)$ is robust at $\P$ w.r.t.\ $\mathcal{W}^{\infty}$ on $\mathcal{P}(\R_+^d)$.
\end{Theorem}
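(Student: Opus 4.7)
The plan is to prove Theorem \ref{Thm. infi} via a coupling argument based on the second representation of $\Wc^\infty$ in \eqref{eq:Winfty}. Given $\epsilon>0$, use uniform continuity of $g$ to pick $\delta>0$ such that $|g(x)-g(y)|\le \epsilon/4$ whenever $|x-y|\le \delta$. For any $\tilde\P$ with $\Wc^\infty(\tilde\P,\P)\le \delta$, by \eqref{eq:Winfty} and the Skorokhod representation theorem there exists a coupling $\gamma$ of $\P$ and $\tilde\P$ with $|X-Y|\le \delta$ holding $\gamma$-a.s. Draw i.i.d.\ samples $(X_i,Y_i)_{i\ge 1}$ from $\gamma$: the marginal sequences are i.i.d.\ samples from $\P$ and $\tilde\P$ respectively, and $|X_i-Y_i|\le \delta$ for every $i$ a.s., so the associated empirical measures satisfy $\Wc^\infty(\hat{\P}_N,\hat{\tilde\P}_N)\le \delta$ and $d_H(\supp(\hat{\P}_N),\supp(\hat{\tilde\P}_N))\le \delta$ almost surely.

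The crucial step is to establish that on this coupling $|\pi^{\hat{\P}_N}(g)-\pi^{\hat{\tilde\P}_N}(g)|\le \epsilon$ for all $N\ge N_0$, uniformly in $\tilde\P$. Since NA$(\P)$ holds, the point $1$ lies in the relative interior of $\overline{\text{conv}}(\supp(\P))$ with some strict margin $c>0$. For $N$ large, Proposition \ref{prop. supp} gives NA$(\hat{\P}_N)$, and by $\Wc^\infty$-closeness the margin of $1$ in $\text{conv}(\supp(\hat{\P}_N))$ is bounded below by, say, $c/2$; the analogous property holds for $\hat{\tilde\P}_N$ provided $\delta<c/4$. Using \eqref{eq:sh_duality}, take an $\eta$-optimal martingale measure $\Q^{\ast}=\sum_i q_i^{\ast}\delta_{X_i}$ for $\hat{\P}_N$ (so $\sum_i q_i^{\ast}X_i=1$) and transfer weights to the candidate $\tilde\Q:=\sum_i q_i^{\ast}\delta_{Y_i}$, whose barycentre equals $1+\xi$ with $|\xi|\le \delta$. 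A convex perturbation $(1-\alpha)\tilde\Q+\alpha\nu$, with a fixed martingale measure $\nu$ supported on $\{Y_i\}$ (existing thanks to the uniform margin), restores the martingale constraint with $\alpha=O(\delta/c)$. Uniform continuity of $g$ then yields
\begin{equation*}
\E_{(1-\alpha)\tilde\Q+\alpha\nu}[g]\ge \sum_i q_i^{\ast}g(X_i)-\omega_g(\delta)-O(\delta),
\end{equation*}
and a symmetric argument gives the reverse inequality. Letting $\eta\to 0$ bounds $|\pi^{\hat{\P}_N}(g)-\pi^{\hat{\tilde\P}_N}(g)|$ by a quantity that vanishes with $\delta$, hence can be made smaller than $\epsilon$.

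Finally, the standard coupling characterisation $d_L(\mathcal{L}(U),\mathcal{L}(V))\le \eta$ whenever $|U-V|\le \eta$ a.s.\ under some coupling, applied to $U=\pi^{\hat{\P}_N}(g)$ and $V=\pi^{\hat{\tilde\P}_N}(g)$ under the joint construction above, yields $d_L(\mathcal{L}_{\P}(\hat{\pi}_N),\mathcal{L}_{\tilde\P}(\hat{\pi}_N))\le \epsilon$ for all $N\ge N_0$, proving robustness. The main obstacle is the perturbation step in the second paragraph: the adjustment $\alpha=O(\delta/c)$ must be controlled uniformly in $N$ and in $\tilde\P\in B^\infty_\delta(\P)$, which in turn requires the interior-margin of $1$ in the convex hulls of the (random) supports to be bounded below by a constant depending only on $\P$, established via stability of $\supp$ under $\Wc^\infty$-perturbations and the NA margin at $\P$.
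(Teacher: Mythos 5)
Your coupling strategy is genuinely different from the paper's proof: the paper compares each empirical measure $\hat{\P}^i_N$ to the corresponding true measure $\P^i$, restricts to a compact $K=[0,L]^d$, and then controls $d_H(\supp(\hat{\P}^i_N)\cap K,\supp(\P^i)\cap K)$ uniformly via Shorack's empirical-process theorem (Lemma \ref{Lem. shorack}), feeding this into the $d_H$-continuity of Proposition \ref{prop:d_H cont}. Your approach couples the two samples directly, so that $d_H(\supp(\hat{\P}_N),\supp(\hat{\tilde\P}_N))\le\delta$ pathwise, and then bounds $|\pi^{\hat{\P}_N}(g)-\pi^{\hat{\tilde\P}_N}(g)|$ by hand. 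That idea is attractive because it sidesteps both the compactification and the need for uniform-in-$\P$ Glivenko--Cantelli estimates (only the margin control for $\hat\P_N$ itself has to be quantified).

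However, the dual perturbation step that drives your pathwise bound does not work as written. You transport $\Q^\ast=\sum_i q_i^\ast\delta_{X_i}$ to $\tilde\Q=\sum_i q_i^\ast\delta_{Y_i}$, whose barycentre is $1+\xi$ with $|\xi|\le\delta$, and claim that a convex mix $(1-\alpha)\tilde\Q+\alpha\nu$ with a \emph{fixed martingale measure} $\nu$ on $\{Y_i\}$ ``restores the martingale constraint'' with $\alpha=O(\delta/c)$. Since $\E_\nu[r]=1$, the mix has barycentre $1+(1-\alpha)\xi$, which equals $1$ only if $\xi=0$ or $\alpha=1$; no small mixing weight with a martingale $\nu$ can cancel the bias. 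To repair this you must mix with a measure $\mu$ whose barycentre lies on the far side of $1$ in the direction $-\xi$ (for instance, barycentre $1-C\xi$ with $C=O(c/\delta)$, yielding $\alpha=1/(1+C)=O(\delta/c)$); such a $\mu$ exists precisely because of the margin, but it is neither a martingale measure nor independent of $\xi$, so it cannot be fixed in advance. Even then one must justify that $\alpha\bigl(\E_\mu[g]-\E_{\Q^\ast}[g]\bigr)=O(\delta)$, which needs a uniform bound on $\E_\mu[|r|]$ (achievable by taking $\mu$ supported on a bounded simplex of $\{Y_i\}$ around $1$, and using that $g$ has at most linear growth). An alternative and arguably cleaner repair is to argue on the primal side as in Proposition \ref{prop:d_H cont}: NA plus the uniform margin implies the near-optimal superhedging strategies for $\hat{\P}_N$ are uniformly bounded, and then the same strategy (with a loss of $\omega_g(\delta)+O(\delta)$ in the initial capital) superhedges on $\{Y_i\}$. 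Finally, the claim that the margin of $1$ in $\mathrm{conv}(\supp(\hat{\P}_N))$ is $\ge c/2$ for all $N\ge N_0$ should be phrased and proved as a high-probability statement with a deterministic $N_0$ (a.s.\ eventuality is not enough to produce the $N_0$ required in Definition \ref{Def robust}); this requires a quantitative step analogous to the paper's use of Lemma \ref{Lem. shorack}, though for the coupling route it need only be established for the single law $\P$.
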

The proof of Theorem \ref{Thm. infi} is reported in \cite[Section \ref{app:robust}]{stats2}. There are ways to weaken the continuity assumption on $g$ and obtain robustness on some $\mathfrak{P}\subseteq \prob(\R^d_+)$, see \cite[Corollary \ref{cor:Winfinity}]{stats2}. We close this section with a result on robustness of $\hat{\pi}^\infty_N(g)$ from Proposition \ref{prop:Wcinfty_est}.
\begin{Cor}\label{cor:Wcinfty_rob}
Let $g$ be a continuous function and $\P\in \prob(\R^d_+)$ satisfying NA$(\P)$ and Assumption \ref{Ass. bilip}. Then, the estimator $\hat{\pi}^\infty_N(g)$ from Proposition \ref{prop:Wcinfty_est} is robust at $\P$ w.r.t.\ $\mathcal{W}^{\infty}$ on $\mathcal{P}(\R_+^d)$.
\end{Cor}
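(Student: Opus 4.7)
The plan is to adapt the coupling argument underlying Theorem \ref{Thm. infi}, combined with a concave-envelope representation of $T_N:=\hat\pi_N^\infty(g)$. Fix $\epsilon>0$; I seek $\delta>0$ and $N_0\in\N$ such that $\Wc^\infty(\P,\tdP)\le \delta$ and $N\ge N_0$ imply $d_L(\Le_\P(T_N),\Le_{\tdP}(T_N))\le\epsilon$. The $\Wc^\infty$-distance admits an optimal coupling, so I can construct i.i.d.\ pairs $(r_i,\tilde r_i)_{i\ge 1}$ with $r_i\sim\P$, $\tilde r_i\sim\tdP$ and $|r_i-\tilde r_i|\le \delta$ almost surely; let $T_N^\P$, $T_N^{\tdP}$ denote the estimator computed from the first $N$ samples of each sequence. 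Since $|X-Y|\le \eta$ a.s.\ implies $d_L(\Le(X),\Le(Y))\le \eta$, and the marginals of $T_N^\P$, $T_N^{\tdP}$ are $\Le_\P(T_N)$ and $\Le_{\tdP}(T_N)$, it suffices to show that, under this coupling, $|T_N^\P - T_N^{\tdP}|\le \epsilon$ almost surely for all $N\ge N_0$.

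First I would establish the set-theoretic representation
\begin{equation*}
T_N^\P = \hat g_{S_N}(1),\qquad S_N := \bigcup_{i=1}^N \overline{B(r_i,l_N)}\cap \R^d_+,
\end{equation*}
and an analogous $T_N^{\tdP}=\hat g_{\tilde S_N}(1)$ for $\tilde S_N$ built from the $\tilde r_i$. Indeed, by the second form in \eqref{eq:Winfty}, every $\tdP'\in B^\infty_{l_N}(\hat\P_N)$ satisfies $\supp(\tdP')\subseteq S_N$, so \eqref{eq:cncenv} and monotonicity of $A\mapsto \hat g_A(1)$ give $\pi^{\tdP'}(g)\le \hat g_{S_N}(1)$; equality is attained by $\tdP^*:=\tfrac{1}{N}\sum_{i=1}^N \mathrm{Unif}(\overline{B(r_i,l_N)})$, which lies in $B^\infty_{l_N}(\hat\P_N)$ and has support $S_N$. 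Under the coupling, the pointwise bound $|r_i-\tilde r_i|\le \delta$ immediately yields $d_H(S_N,\tilde S_N)\le \delta$.

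The crux, and the main obstacle, is to upgrade the pointwise continuity of Proposition \ref{prop:d_H cont} to a quantitative, uniform-in-$N$ Hausdorff modulus $|\hat g_{S_N}(1)-\hat g_{\tilde S_N}(1)|\le \omega(\delta)$ with $\omega(\delta)\downarrow 0$ as $\delta\to 0$. Both sets lie in the compact $K:=\overline A^{\,l_1+1}\cap \R^d_+$ on which $g$ is uniformly continuous with modulus $\omega_g$. Assumption \ref{Ass. bilip} provides a fixed ball $B(x_0,\rho_0)\subseteq A$, and Lemma \ref{Lem d=1} together with Borel--Cantelli yield, almost surely, $\overline A\subseteq S_N\cap \tilde S_N$ for all $N$ large. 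Combined with NA$(\P)$ placing $1$ uniformly inside the relative interior of $\mathrm{co}(\overline A)$, this furnishes a deterministic slope bound $|H|\le C$ on any near-optimal affine $u=a+H\cdot x$ with $u(1)\le \|g\|_\infty$ and $u\ge g$ on $S_N$: shifting such $u$ upward by $\omega_g(\delta)+C\delta$ produces a valid competitor on $\tilde S_N$, and the symmetric argument yields the required $\omega$. The bulk of the technical work sits in this slope bound --- in the case $1\notin A$ it requires a Carath\'eodory decomposition of $1$ inside $\mathrm{co}(\overline A)$ combined with a mean-correcting mixing against an auxiliary martingale measure, mirroring the mechanism behind Proposition \ref{prop:d_H cont}. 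Choosing $\delta$ with $\omega(\delta)\le \epsilon$ and $N_0$ beyond the Borel--Cantelli threshold then closes the argument.
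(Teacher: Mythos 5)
Your proposal takes a genuinely different route from the paper. The paper's argument is short and modular: by Theorem \ref{Thm. infi}, the plugin estimator $\hat\pi_N$ is already robust at $\P$ w.r.t.\ $\Wc^\infty$; then Proposition \ref{prop:d_H cont} together with compactness of the space of closed subsets of $\overline{B_{R+1}(0)}$ in the Hausdorff metric (Bertsekas) gives a \emph{deterministic} bound $|\pi^{\tdP'}(g)-\pi^{\hat\P_N}(g)|\le\epsilon/3$ for all $\tdP'\in B^\infty_{l_N}(\hat\P_N)$ once $l_N$ is small, i.e.\ $|\hat\pi_N^\infty(g)-\hat\pi_N(g)|\le\epsilon/3$ for $N\ge N_1$; a triangle inequality on $d_L$ finishes. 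You instead bypass Theorem \ref{Thm. infi} entirely and try a direct $\Wc^\infty$-coupling argument, using the set representation $T_N=\hat g_{S_N}(1)$, $S_N=\bigcup_i\overline{B(r_i,l_N)}\cap\R^d_+$, and aim for an a.s.\ bound $|T_N^\P-T_N^{\tdP}|\le\omega(\delta)$. The coupling and the set representation steps are fine, but this replaces a two-line reuse of the hardest result in Section \ref{sec:robust} with a from-scratch re-derivation of a quantitative Hausdorff modulus.

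That is where the genuine gap lies. You assert that ``Lemma \ref{Lem d=1} together with Borel--Cantelli yield, almost surely, $\overline A\subseteq S_N\cap\tilde S_N$ for all $N$ large.'' This is false for the $\tilde S_N$ half: Lemma \ref{Lem d=1} applies to $\P$, which satisfies Assumption \ref{Ass. bilip}, and not to an arbitrary $\tdP$ in the $\Wc^\infty$-ball (whose density can fail the two-sided bounds). The coupling only gives $d_H(S_N,\tilde S_N)\le\delta$, hence $\overline A\subseteq(\tilde S_N)^\delta$ rather than $\overline A\subseteq\tilde S_N$, and once $l_N<\delta$ no inclusion of $\overline A$ in $\tilde S_N$ can be salvaged. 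The needed uniform slope bound on competing affine functions over $\tilde S_N$ can still be extracted from the weaker spanning-around-$1$ condition that $\tilde S_N$ is $\delta$-Hausdorff close to $S_N\supseteq\overline A$ (this is precisely the perturbation argument behind Proposition \ref{prop:d_H cont} and the reference therein), but your stated route does not make that argument, and the ``crux'' paragraph that is supposed to carry the uniform modulus of continuity remains a sketch. In short: the decomposition is sound, but the central quantitative step --- which the paper handles via Theorem \ref{Thm. infi} plus Hausdorff compactness --- is both incomplete and, in the specific claim about $\overline A\subseteq\tilde S_N$, incorrect.
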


 \section{Risk measurement estimation}\label{sec:riskmeasures}
The $\P$-a.s. superhedging price $\pi^{\P}(g)$ is a very conservative assessment of risk of a short position in a liability with payoff $g$. Instead, we could use a risk measure $\rho_{\P}$ for such an assessment, as proposed by \cite{cheridito2017duality}, leading to 
\begin{align*}
\pi^{\rho_{\P}}(g):=\inf\{x \in \R \ | \ \exists H \in \R^d \text{ such that }\rho_{\P}(g-x-H(r-1)) \le 0 \}.
\end{align*}
Note that we include above the ability to trade in the market in order to (optimally) reduce the risk of $g$. We  consider $\rho_{\P}$ with Kusuoka's representation
\begin{align}\label{eq:kusuoka}
\rho_{\P}(g)=\sup_{\mu\in \mathfrak{P}}\int_0^1 \text{AV@R}_{\alpha}^{\P}(g)d\mu(\alpha),
\end{align}
for a set $\mathfrak{P}$ of probability measures on $[0,1]$. This is not very restrictive since this representation, first obtained in \cite{kusuoka2001law}, holds for any law invariant coherent risk measure, see \cite{jouini2006law}. Importantly, it enables us to think of $\rho_{\P}(g)$ as a function of the underlying measure $\P$. Much like we did for $\pi^{\P}(g)$, we would like to estimate $\pi^{\rho_{\P}}(g)$ directly from the observed stock returns. To this end we introduce the following estimator
\begin{align*}
\pi^{\rho}_{B_{\epsilon_N(\beta_N)}^p(\hat{\P}_N)}(g):=\inf \Big\{x \in \R \ | \ &\exists H \in \R^d \text{ such that }\\
&\sup_{\tilde{\P} \in B_{\epsilon_N(\beta_N)}^p(\hat{\P}_N)}\rho_{\tilde{\P}}(g-x-H(r-1)) \le 0 \Big\}
\end{align*}
as the natural equivalent to $\pi_{\hat{\mathcal{Q}}_N}(g)$. In particular, if $\mathfrak{P}=\{\delta_{\alpha}\}$ where $\alpha\in[0,1]$, we simply have $\rho_{\P}(g)=$ AV@R$_{\alpha}^{\P}(g)$ and the corresponding estimator $\pi^{\rho}_{B_{\epsilon_N(\beta_N)}^p(\hat{\P}_N)}(g)$ resembles the Wasserstein $\Wc^p$ estimator for fixed level $1/k_N:=\alpha$.
We have the following consistency result:
\begin{Prop}
Assume $g$ satisfies $|g(r)-g(\tilde{r})| \le L_{\gamma}|r -\tilde{r}|^{\gamma}$ for some $\gamma \le 1$ and $L_{\gamma} \in \R$ and that $\sup_{\mu \in \mathfrak{P}}\int_0^1 \mu(d\alpha)/\alpha^{\gamma/p}<\infty$. Then for any $\P$ satisfying NA$(\P)$ and Assumption \ref{Ass 1} the limit in $\P^\infty$-probability
\begin{align*}
\lim_{n \to \infty}\pi^{\rho}_{B_{\epsilon_N(\beta_N)}^p(\hat{\P}_N)}(g)= \pi^{\rho_{\P}}(g) 
\end{align*}
holds. If Assumption \ref{Ass 1}.\ref{Ass 1.2} is satisfied, then the limit also holds $\P^\infty$-a.s.
\end{Prop}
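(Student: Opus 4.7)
The plan is to mirror the argument for Theorem \ref{Thm wasserstein} but with the risk measure $\rho$ replacing the superhedging functional. By translation invariance of each $\mathrm{AV@R}_\alpha^{\tilde{\P}}(\cdot)$, hence of $\rho_{\tilde{\P}}(\cdot)$, I would first rewrite the estimator and its limit as
\begin{align*}
\pi^{\rho}_{B^p_{\epsilon_N(\beta_N)}(\hat{\P}_N)}(g)&=\inf_{H \in \R^d} \sup_{\tilde{\P} \in B^p_{\epsilon_N(\beta_N)}(\hat{\P}_N)} \rho_{\tilde{\P}}\bigl(g-H(r-1)\bigr),\\
\pi^{\rho_{\P}}(g)&=\inf_{H \in \R^d} \rho_{\P}\bigl(g-H(r-1)\bigr).
\end{align*}
This reduces everything to quantifying the sensitivity of $\rho_{\tilde{\P}}(\,\cdot\,)$ to perturbations of $\tilde{\P}$ in $\Wc^p$.

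The key analytic ingredient I would invoke is the Pichler-type estimate from \cite{pichler2013evaluations}: for $\gamma$-Hölder $\tilde g$ with constant $\tilde L_\gamma$ one has
\[
\bigl|\mathrm{AV@R}_\alpha^{\P}(\tilde g)-\mathrm{AV@R}_\alpha^{\tilde{\P}}(\tilde g)\bigr|\le \frac{\tilde L_\gamma}{\alpha^{\gamma/p}}\,\Wc^p(\P,\tilde{\P})^{\gamma}.
\]
Integrating against $\mu\in\mathfrak{P}$ and taking suprema gives, in view of the assumption $C_\gamma:=\sup_{\mu\in\mathfrak{P}}\int_0^1\mu(d\alpha)/\alpha^{\gamma/p}<\infty$,
\[
\bigl|\rho_{\P}(\tilde g)-\rho_{\tilde{\P}}(\tilde g)\bigr|\le \tilde L_\gamma\, C_\gamma\,\Wc^p(\P,\tilde{\P})^{\gamma}.
\]
Applied to $\tilde g(r)=g(r)-H(r-1)$, whose $\gamma$-Hölder constant on the relevant support can be bounded by $L_\gamma + c\,|H|$ (with $c$ accounting for the global behaviour of the linear hedging term, using that we may restrict to a bounded neighbourhood of the support of $\P$), this yields a local modulus of continuity for $\tilde{\P}\mapsto \rho_{\tilde{\P}}(g-H(r-1))$ that depends explicitly on $|H|$.

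By the definition of $\epsilon_N(\beta_N)$, the event $A_N:=\{\P\in B^p_{\epsilon_N(\beta_N)}(\hat{\P}_N)\}$ has $\P^\infty$-probability at least $1-\beta_N$. On $A_N$, any $\tilde{\P}\in B^p_{\epsilon_N(\beta_N)}(\hat{\P}_N)$ satisfies $\Wc^p(\P,\tilde{\P})\le 2\epsilon_N(\beta_N)$, so the above estimate gives, for every $H$,
\[
\rho_{\P}\bigl(g-H(r-1)\bigr)\le \sup_{\tilde{\P} \in B^p_{\epsilon_N(\beta_N)}(\hat{\P}_N)}\rho_{\tilde{\P}}\bigl(g-H(r-1)\bigr)\le \rho_{\P}\bigl(g-H(r-1)\bigr)+C_\gamma(L_\gamma+c|H|)\bigl(2\epsilon_N(\beta_N)\bigr)^{\gamma}.
\]
Taking infimum in $H$ over a bounded set that a priori contains near-optimal hedges (obtained, as in the proof of Theorem \ref{Thm wasserstein}, from no-arbitrage NA$(\P)$ and the fact that large $|H|$ would force $\rho_{\P}(g-H(r-1))$ to be large) and letting $N\to\infty$ sandwiches $\pi^{\rho}_{B^p_{\epsilon_N(\beta_N)}(\hat{\P}_N)}(g)$ between $\pi^{\rho_{\P}}(g)$ and $\pi^{\rho_{\P}}(g)+o(1)$ on $A_N$. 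Since $\P^\infty(A_N^c)\le \beta_N\to 0$ this yields the convergence in $\P^\infty$-probability, and Borel-Cantelli upgrades it to $\P^\infty$-a.s.\ when $\sum_N\beta_N<\infty$, which holds under Assumption \ref{Ass 1}.\ref{Ass 1.2} with the usual choice of $\epsilon_N$.

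The main obstacle I expect is justifying the restriction of $H$ to a bounded set uniformly in $N$: the Pichler-type bound carries a factor $L_\gamma+c|H|$ which blows up as $|H|\to\infty$, and if $\gamma<1$ some care is needed because $H(r-1)$ is only locally $\gamma$-Hölder. I would handle this by arguing that under NA$(\P)$ there exists $R>0$ such that $\rho_{\P}(g-H(r-1))\to\infty$ as $|H|\to\infty$, so optimal hedges for $\pi^{\rho_{\P}}(g)$ lie in $B_R(0)$; then a perturbation argument based on the Wasserstein continuity shows that near-optimisers of the approximate problem also lie in $B_{R+1}(0)$ for $N$ large enough, allowing me to apply the Hölder modulus uniformly.
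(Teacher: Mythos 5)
Your plan is essentially correct and uses the same two ingredients as the paper: Pichler's weighted continuity bound for $\mathrm{AV@R}$ (and hence $\rho$) in $\Wc^p$, and the concentration event $A_N=\{\P\in B^p_{\epsilon_N(\beta_N)}(\hat{\P}_N)\}$ together with Borel--Cantelli. The lower bound $\pi^\rho_{B^p_{\epsilon_N(\beta_N)}(\hat\P_N)}\ge\pi^{\rho_\P}$ on $A_N$ is the same. Where you diverge, and where you create the ``main obstacle'' you worry about, is in the upper bound: you try to control the error $C_\gamma(L_\gamma+c|H|)\,\Wc^p(\cdot,\cdot)^\gamma$ \emph{uniformly} over $H$ and then argue that near-optimal $H$ live in a bounded set. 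This is not needed. Since $\pi^\rho_{B^p_{\epsilon_N(\beta_N)}(\hat\P_N)}(g)=\inf_H\sup_{\tilde\P\in B}\rho_{\tilde\P}(g-H(r-1))$ is an infimum, it suffices to exhibit a single good candidate: fix $\epsilon>0$ and pick $H=H_\epsilon$ with $\rho_{\P}(g-\pi^{\rho_\P}(g)-\epsilon-H_\epsilon(r-1))\le 0$. That $H_\epsilon$ is fixed (depends on $\epsilon$ and $\P$ only), so $g-H_\epsilon(r-1)$ has a fixed H\"older constant $\tilde L_\gamma$, and on $A_N$ you get directly
\begin{align*}
\pi^\rho_{B^p_{\epsilon_N(\beta_N)}(\hat\P_N)}(g)\le\sup_{\tilde\P\in B^p_{\epsilon_N(\beta_N)}(\hat\P_N)}\rho_{\tilde\P}(g-H_\epsilon(r-1))\le\pi^{\rho_\P}(g)+\epsilon+\tilde L_\gamma C_\gamma (2\epsilon_N(\beta_N))^\gamma,
\end{align*}
which vanishes as $N\to\infty$, then let $\epsilon\to 0$. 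This completely dispenses with the compactness/no-arbitrage argument for bounding the optimizer, and with the uniform-in-$H$ H\"older modulus (which, as you rightly flag, is also delicate when $\gamma<1$ because $H(r-1)$ is only locally $\gamma$-H\"older). In short: your plan works but is overbuilt; the paper's version of the same idea is shorter precisely because it only needs one candidate hedge, not a bounded family of them.
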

\begin{proof}
The ``$\ge$"-inequality follows in the proof of Theorem \ref{Thm wasserstein}. We now prove the opposite inequality using \citep{pichler2013evaluations}[Corollary 11, p.538].
Fix $\epsilon>0$. Note that there exists $H \in \R^d$ such that $\rho_{\P}(g- \pi^{\rho_{\P}}(g)-\epsilon-H(r-1)) \le 0$. Then for all $\tilde{\P} \in B^p_{\epsilon_N(\beta_N)}(\hat{\P}_N)$
\begin{align*}
\pi^{\rho_{\tilde{\P}}}(g)&\le\rho_{\tilde{\P}}(g-H(r-1))
=\rho_{\tilde{\P}}(g-\pi^{\rho_{\P}}(g)-\epsilon-H(r-1))+\pi^{\rho_{\P}}(g)+\epsilon \\
&=[\rho_{\tilde{\P}}(g-\pi^{\rho_{\P}}(g)-\epsilon-H(r-1))-\rho_{\P}(g-\pi^{\rho_{\P}}(g)-\epsilon-H(r-1))]\\
&+\rho_{\P}(g-\pi^{\rho_{\P}}(g)-\epsilon-H(r-1))
+\pi^{\rho_{\P}}(g)+\epsilon \\
&\le \tilde{L}_{\gamma}\mathcal{W}^p(\tilde{\P}, \P)^{\gamma} \sup_{\mu \in \mathfrak{P}}\int_0^1 \frac{\mu(d\alpha)}{\alpha^{\gamma/p}}  +\pi^{\rho_{\P}}(g)+\epsilon.
\end{align*}
As $\epsilon>0$ was arbitrary the claim follows.
\end{proof}
Finally, we present a simple empirical test of the performance of our estimators. 
We simulate weekly returns according to a $\mathrm{GARCH}(1,1)$ model:
\begin{align*}
r_n =  \sqrt{\frac{\mu-2}{\mu}}\eta_n\sqrt{h_n},\quad 
h_n = \omega + \beta h_{n-1}+\alpha r_{n-1}^2,
\end{align*}
where $\omega=0.02, \beta=0.1, \alpha =0.8$
and $\eta_n$ is standard student-t distributed with $\mu=5$ degrees of freedom.
 We take $1000$ samples from the above $\P\sim \mathrm{GARCH}(1,1)$ and calculate the plugin estimator $\pi^{\mathrm{AV@R}_{0.95}}_{\hat{\P}_N}((r-1)^+)$ and the Wasserstein  estimator $\pi^{\mathrm{AV@R}_{0.95}}_{B_{\epsilon_N(\beta_N)}^p(\hat{\P}_N)}((r-1)^+)$. We compare this to a parametric estimator of $\pi^{\mathrm{AV@R}_{0.95}^\P}((r-1)^+)$, where we first estimate the parameters of the GARCH(1,1) model and then compute $\pi^{\mathrm{AV@R}_{0.95}^{\tilde{\P}}}((r-1)^+)$ given the estimated model $\tilde{\P}$. For each of these estimates, we use a running window of 50 weeks, which is in line with the Basel III regulations for calculating the 10-day AV@R (see \cite[MAR33, p.89]{basel2019minimum}), which set the minimum length of the historical observation period to be one year. We also consider the case when the parameters of the model change for observations $330-670$. The behaviour of the three estimators is shown in Figure \ref{fig. avar}. Both the Wasserstein and plugin estimator approximate the true value reasonably well -- the Wasserstein estimator being the most conservative estimator. The parametric estimator exhibits the most erratic behaviour which is due to the unstable estimation of the GARCH(1,1) parameters with only $50$ data points. This shows advantages of our proposed estimators when compared with a parametric approach, even when the model is correctly selected. The last pane in Figure \ref{fig. avar} uses S\&P500 weekly returns data from 01/01/2006- 01/01/2015 with a moving window of $50$ weeks. The GARCH(1,1) estimator, for which the model is mis-specified, does not pick up any markets movements, while both the Plugin and Wasserstein estimator detect the financial crisis and its aftermath. For similar plots but with GARCH(1,1) using log-returns, we refer to \cite[Section \ref{app:riskmeasures}]{stats2}. While preliminary, we believe that this simple empirical study points to clear advantages of our approach. In particular, it is encouraging to see that in the last pane, the Wasserstein estimators clearly picks up the financial crisis and the Eurozone debt crisis periods. A further in-depth study of comparative performance of different estimators is clearly needed and is left for future research.
\begin{figure}
\centering
  \includegraphics[width=\textwidth]{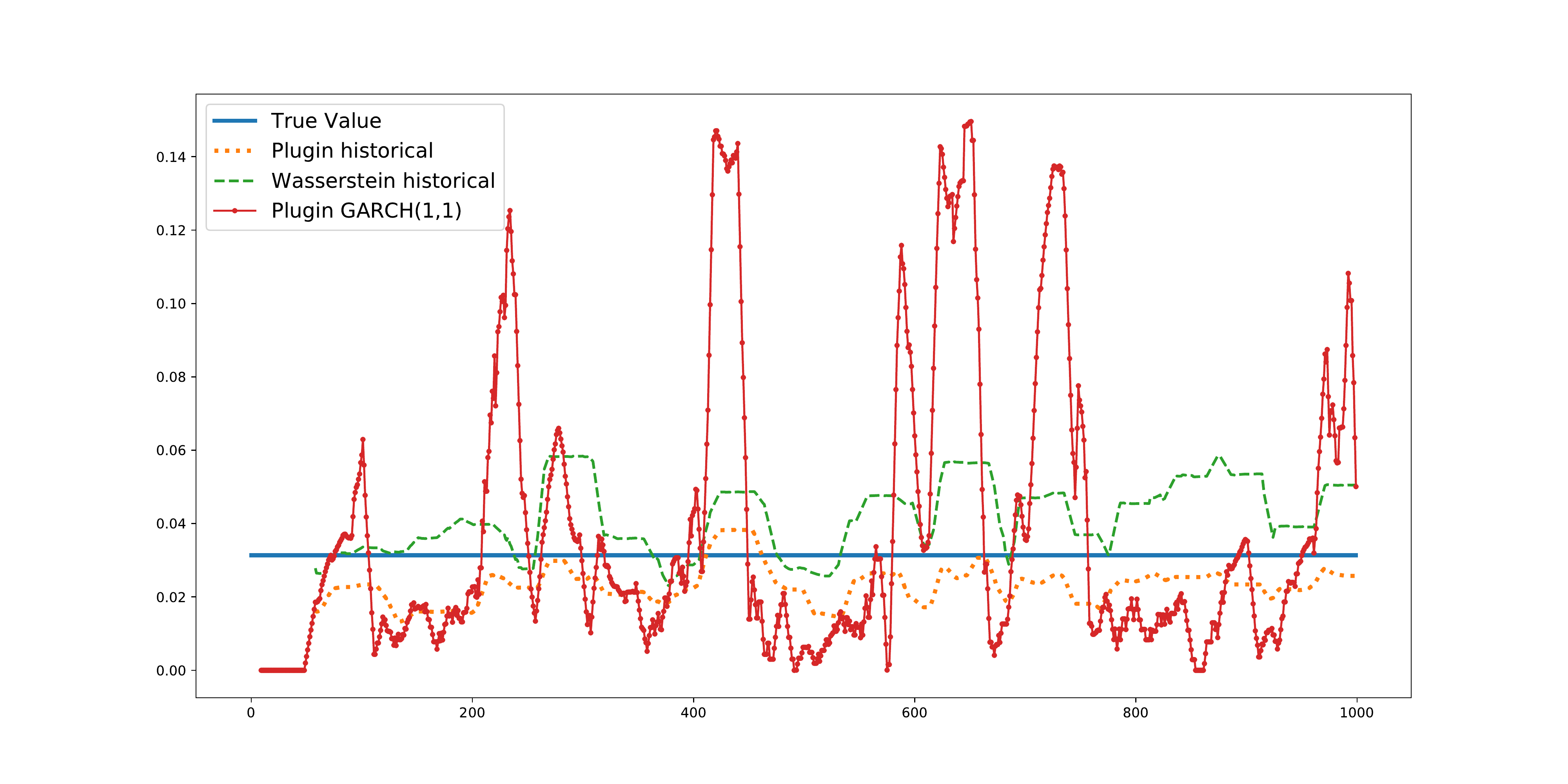}
  \includegraphics[width=\textwidth]{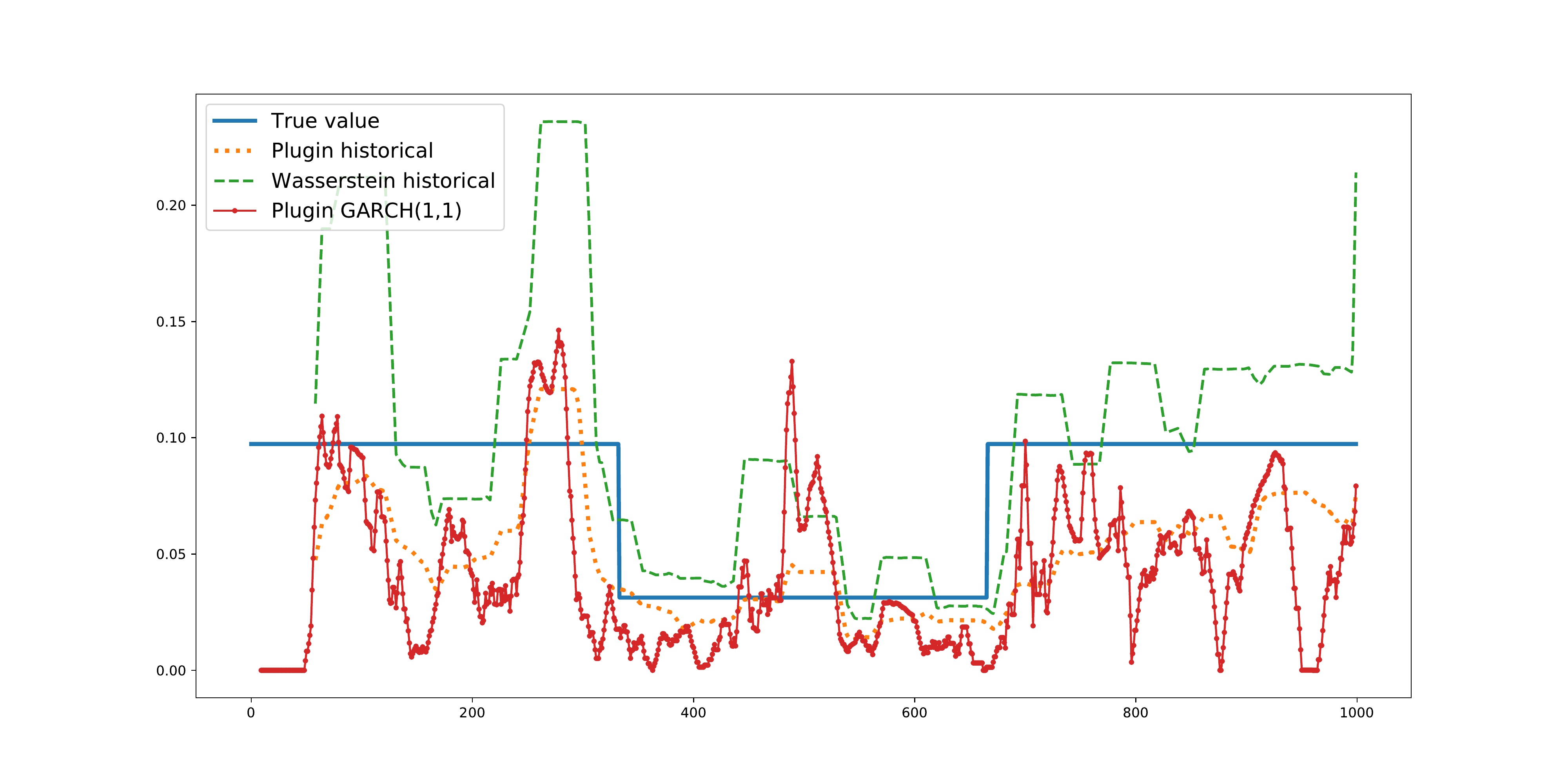}
  \includegraphics[width=\textwidth]{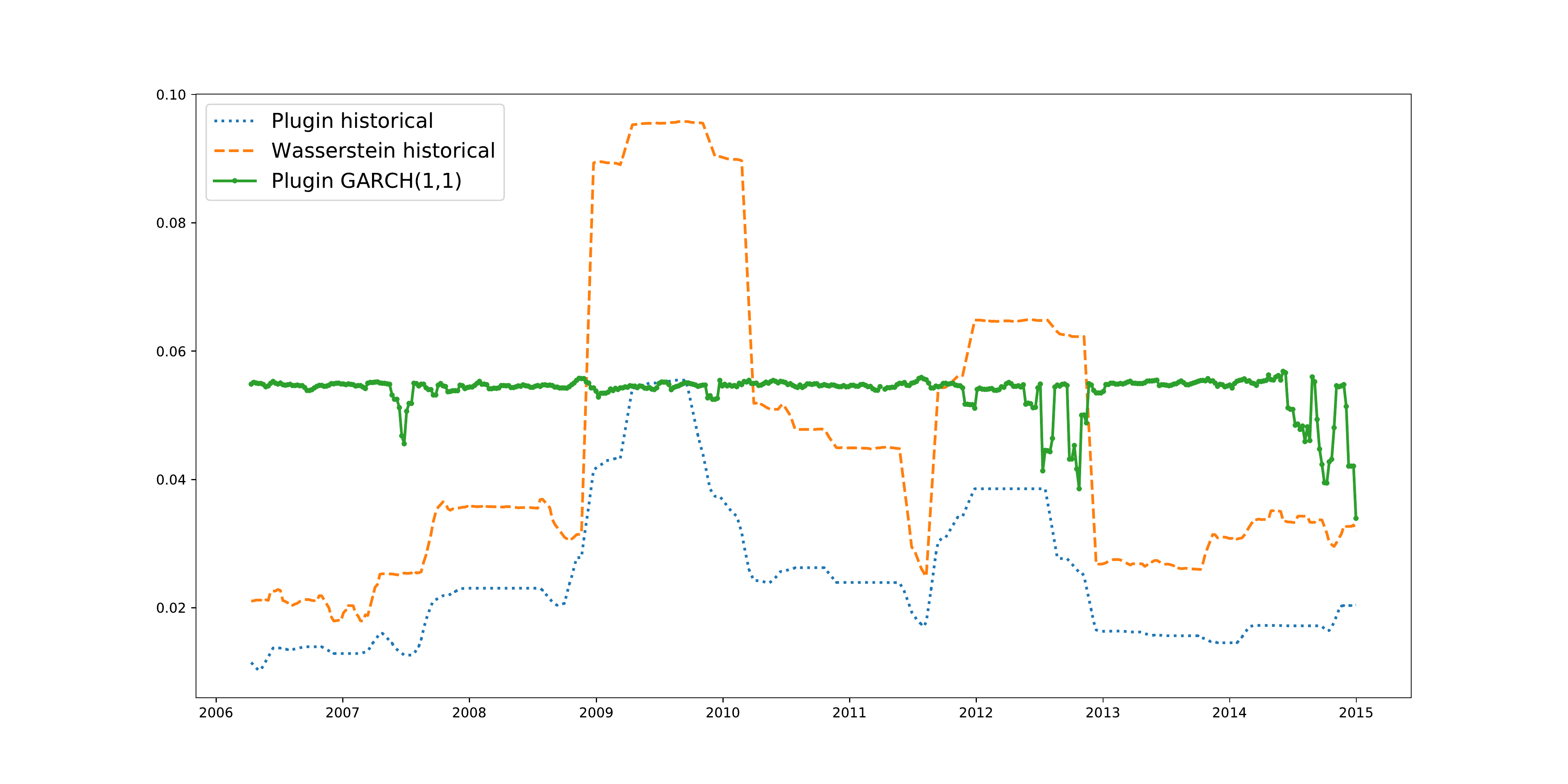}
\caption{Comparison of estimates for $\pi^{\mathrm{AV@R}_{0.95}^{\tilde{\P}}}((r-1)^+)$. Estimates use a rolling window of $50$ data points and we plot the average of the last $10$ (first two panes) or $5$ (last pane) estimates. The data is from $\P\sim \mathrm{GARCH}(1,1)$ (first pane) and its variant with a change in the parameters for the middle third of the time series. The last pane uses S\&P500 returns.} \label{fig. avar}
\end{figure}

\section{Convergence of Superhedging strategies} \label{sec. strat}
Given the consistency results of Sections \ref{sec. plugin} \& \ref{sec. impr} 
establishing convergence of superhedging prices we now turn to the convergence of the corresponding superhedging strategies. We start with the case of the plugin estimator.

\begin{Theorem}\label{Thm. strat}
Consider $\P\in \mathcal{P}(\R_+^d)$ satisfying NA$(\P)$. Suppose\\ $\overline{\bigcup_{N\geq 1}\text{supp}(\hat{\P}_N)}=\text{supp}(\P)$. Then for any sequence of trading strategies $(H_N)_{N \in \N}$ satisfying
\begin{align*}
\hat{\pi}_N(g)+H_N(r-1) \ge g(r) \quad \hat{\P}_N\text{-a.s.}
\end{align*}
there exists a subsequence $(H_{N_k})_{k \in \N}$ converging to some $H\in \R^d$ which satisfies 
\begin{align*}
\pi^{\P}(g)+H(r-1) \ge g(r) \quad \P\text{-a.s.}
\end{align*}
\end{Theorem}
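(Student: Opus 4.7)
The plan is to follow a standard compactness strategy: establish that $(H_N)$ is bounded in $\R^d$, extract a convergent subsequence by Bolzano--Weierstrass, and then pass to the limit in the superhedging inequality using the density hypothesis $\overline{\bigcup_N \text{supp}(\hat{\P}_N)} = \text{supp}(\P)$. I take $g$ continuous throughout, as is natural for strategy-level statements, and assume the market is non-redundant in the sense that the subspace $V := \{v \in \R^d : v(r-1)=0 \ \forall r \in \text{supp}(\P)\}$ is trivial; otherwise one can still carry the argument out on $V^\perp$ after noting that projecting $H_N$ onto $V^\perp$ preserves the superhedging inequality since $\text{supp}(\hat{\P}_N) \subseteq \text{supp}(\P)$.

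For boundedness, suppose towards contradiction that $|H_{N_j}| \to \infty$ along some subsequence. Set $\tilde H_j := H_{N_j}/|H_{N_j}|$; by compactness of the unit sphere I may pass to a further subsequence with $\tilde H_j \to \tilde H$ of unit norm. Dividing the hypothesised inequality by $|H_{N_j}|$ yields, on $\text{supp}(\hat{\P}_{N_j})$,
$$\frac{\hat{\pi}_{N_j}(g)}{|H_{N_j}|} + \tilde H_j(r-1) \ge \frac{g(r)}{|H_{N_j}|}.$$
Fix $r^* \in \text{supp}(\P)$ and, invoking the density hypothesis, select $r_j \in \text{supp}(\hat{\P}_{N_j})$ with $r_j \to r^*$. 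Consistency from Theorem~\ref{Thm. one-per} bounds $\hat{\pi}_{N_j}(g)$, and continuity of $g$ bounds $g(r_j)$; sending $j\to\infty$ produces $\tilde H(r^*-1) \ge 0$ for every $r^* \in \text{supp}(\P)$, hence $\P$-a.s. NA$(\P)$ then forces $\tilde H(r-1)=0$ $\P$-a.s., i.e.\ $\tilde H \in V = \{0\}$, contradicting $|\tilde H|=1$.

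Bolzano--Weierstrass now furnishes a subsequence with $H_{N_k} \to H$ for some $H \in \R^d$. For the limit inequality, fix any $r^* \in \text{supp}(\P)$ and pick $r_k \in \text{supp}(\hat{\P}_{N_k})$ with $r_k \to r^*$; the hypothesis gives
$$\hat{\pi}_{N_k}(g) + H_{N_k}(r_k-1) \ge g(r_k),$$
and sending $k\to\infty$ produces, by consistency of $\hat{\pi}_N$, joint continuity of $(H,r)\mapsto H(r-1)$, and continuity of $g$, the pointwise bound $\pi^{\P}(g) + H(r^*-1) \ge g(r^*)$ on all of $\text{supp}(\P)$, which gives the desired $\P$-a.s.\ inequality.

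The main obstacle is the boundedness step. The hypothesised inequality lives only on the finite set $\text{supp}(\hat{\P}_N)$, so the standard no-arbitrage compactness argument cannot be invoked directly under $\P$; the density hypothesis is exactly what lets me transfer the pointwise limiting inequality to all of $\text{supp}(\P)$ and then apply NA$(\P)$. A secondary nuisance is the redundant-assets subspace $V$, which is disposed of by the orthogonal projection trick mentioned above.
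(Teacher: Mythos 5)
Your proof is correct and follows essentially the same compactness strategy as the paper: reduce to the non-redundant subspace (your $V^\perp$ is exactly the paper's $\mathrm{lin}(\mathrm{supp}(\P)-1)$), show boundedness of $(H_N)$ by a normalization-and-NA contradiction, extract a convergent subsequence, and pass to the limit in the pointwise inequality. One small difference worth noting: in the boundedness step you approximate an arbitrary $r^*\in\mathrm{supp}(\P)$ by observation points $r_j\to r^*$, which is why you invoke continuity of $g$ there; the paper instead fixes a single observation point $r_j$ and sends $N\to\infty$, getting $\tilde H(r_j-1)\ge 0$ without any regularity on $g$, and only at the very end (the limit inequality) uses continuity of the linear map to extend from the dense set $\{r_1,r_2,\ldots\}$ to $\mathrm{supp}(\P)$. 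Your restriction to continuous $g$ is explicit and safe; the paper additionally claims an extension to Borel $g$ via Lusin's theorem (as in Theorem~\ref{Thm. one-per}), which your write-up does not attempt.
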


\begin{proof}
Note that we can assume without loss of generality that $H_N \in \text{lin}(\text{supp}(\P)-1)$ for all $N \in \N$. NA$(\P)$ implies that the sequence $(H_N)_{N \in \N}$ is bounded. Indeed, assume towards a contradiction that $H_N \to \infty$. 
Clearly $H_N/|H_N| \to \tilde{H} \in \R^d$ with $|\tilde{H}|=1$. 
Also since
\begin{align*}
\hat{\pi}_{N}(g)+H_N(r-1) -g(r) \ge 0 \quad\hat{\P}_N\text{-a.s.}
\end{align*}
we conclude $\tilde{H}(r-1) \ge 0$ on $\text{supp}(\P)$. By NA$(\P)$ $\tilde{H}(r-1) =0$ follows $\P$-a.s. and as $\tilde{H} \in \text{lin}(\text{supp}(\P)-1))$ we have $\tilde{H}=0$, which contradicts $|\tilde{H}|=1$. This shows that $(H_N)_{N \in \N}$ is bounded. Thus there exists a subsequence $(H_{N_k})_{k \in \N}$ of $(H_N)_{N \in \N}$ such that $H_{N_k} \to H$. Lastly we note that for each $r \in \{r_1, \dots, \}$ we have
\begin{align*}
\pi^{\P}(g)+H(r-1)=\lim_{k \to \infty} (\hat{\pi}_{N_k}(g)+H_{N_k}(r-1))\ge g(r).
\end{align*}
Thus the claim follows for continuous $g$. As in the proof of Theorem \ref{Thm. one-per} we can then extend the result to general $g$ using Lusin's theorem. This concludes the proof.
\end{proof}

\begin{Rem}
The above claim remains true if we replace $\hat{\pi}_N$ by any consistent estimator which dominates $\hat{\pi}_N$. In particular it is valid for the $\mathcal{W}^{\infty}$-estimator $\hat{\pi}_N^\infty$ introduced in Proposition \ref{prop:Wcinfty_est} and the penalty estimator introduced in Theorem \ref{thm:penalty}.
\end{Rem}
In general one cannot replace the claim $H_{N_k} \to H$ in Theorem \ref{Thm. strat} by $H_N \to H \in \R^d$ as the following example shows.
\begin{Ex}
Take $d=1$ and 
$$\hat{\P}_N= \frac{1}{1-2^N}\sum_{k=1}^N 2^{-k} \delta_{1-(-2)^{-k+1}}.$$ 
Let $g(r)=1-|r-1|$. Note that $\P^N \Rightarrow \P:=\sum_{k=1}^{\infty} 2^{-k} \delta_{1-(-2)^{-k+1}}$ and $$\pi^{\P_N}(g)=1-\frac{2}{3}2^{-N+2} \uparrow \pi^{\P}(g)=1.$$ Then the sequence of superhedging strategies $(H_N)_{N \in \N}$ is uniquely determined as the slope of the line through the points $(1-(-2)^{-N+2},g(1-(-2)^{-N+2})), \  (1-(-2)^{-N+1},g(1-(-2)^{-N+1}))$, i.e. 
\begin{align*}
H_N&= \frac{|1-(-2)^{-N+2}-1|-|1-(-2)^{-N+1}-1|}{1-(-2)^{-N+1}-1+(-2)^{-N+2}}=\frac{2^{-N+2}-2^{-N+1}}{-(-2)^{-N+1}+(-2)^{-N+2}}\\
&=\frac{1}{3(-1)^{N+2}}
=\frac{(-1)^{N}}{3},
\end{align*}
which diverges.
\end{Ex}

Next we establish a corresponding result for the Wasserstein estimator $\pi_{\hat{\Qc}_N}(g)$. Recall the definition of AV@R given in Section \ref{sec. impr}.

\begin{Theorem}
Let $g$ be Lipschitz continuous and bounded from below. Also let $\mathcal{W}^1(\hat{\P}_N,\P) \le \epsilon_N(\beta_N)$ for large $N \in \N$, $k_N= o(1/\epsilon_N(\beta_N))$ and $\lim_{N \to \infty}\pi_{\hat{\Qc}_N}(g)= \pi^{\P}(g)$. Then for every sequence of trading strategies $(H_N)_{N \in \N}$ satisfying
\begin{align}\label{eq. avar}
AV@R_{1/k_N}^{\hat{\P}_N}\left(g(r)-H_N(r-1)-\pi_{\hat{\Qc}_N}(g)-1/N \right) \le 0
\end{align}
there exists a subsequence $(H_{N_k})_{k \in \N}$ converging to some $H\in \R^d$ which satisfies 
\begin{align*}
 \pi^{\P}(g)+H(r-1) \ge g(r) \quad \P\text{-a.s.}
\end{align*}
\end{Theorem}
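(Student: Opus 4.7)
The plan is to mirror the proof of Theorem~\ref{Thm. strat}, with the pointwise $\hat{\P}_N$-a.s.\ superhedging inequality replaced by a localized consequence of~\eqref{eq. avar} obtained via the dual representation of $AV@R$. Projecting $H_N$ onto $\text{lin}(\supp(\hat{\P}_N)-1)$ leaves $g(r)-H_N(r-1)$ unchanged on $\supp(\hat{\P}_N)$, so I may assume $H_N \in \text{lin}(\supp(\P)-1)$. Fix $r_*\in\supp(\P)$ and $\delta>0$, write $B := B(r_*,\delta)$, and consider the conditional measure $\hat{\Q}_N := \hat{\P}_N(\cdot\cap B)/\hat{\P}_N(B)$. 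Since $\P(B)>0$ and $\mathcal{W}^1(\hat{\P}_N,\P)\to 0$ implies $\hat{\P}_N\Rightarrow \P$, the Portmanteau theorem gives $\liminf_N\hat{\P}_N(B)\ge \P(B)>0$; combined with $k_N\to\infty$, for $N$ large one has $\hat{\P}_N(B)^{-1}\le k_N$, so $\|d\hat{\Q}_N/d\hat{\P}_N\|_{\infty}\le k_N$ and $\hat{\Q}_N$ is admissible in the dual representation
\begin{align*}
AV@R_{1/k_N}^{\hat{\P}_N}(X) \;=\; \sup_{\|d\hat{\Q}/d\hat{\P}_N\|_{\infty}\le k_N}\E_{\hat{\Q}}[X].
\end{align*}
Inserting $\hat{\Q}_N$ into~\eqref{eq. avar} and using the Lipschitz bounds $|g(r)-g(r_*)|\le L\delta$ and $|H_N(r-r_*)|\le |H_N|\delta$ for $r\in B$ yields, for $N$ large,
\begin{align*}
g(r_*)-H_N(r_*-1) \;\le\; \pi_{\hat{\Qc}_N}(g) + \tfrac{1}{N} + L\delta + |H_N|\,\delta.
\end{align*}

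Boundedness of $(H_N)$ follows as in Theorem~\ref{Thm. strat}: if $|H_{N}|\to\infty$ along some subsequence, pass to a further subsequence with $H_N/|H_N|\to \tilde{H}$, $|\tilde{H}|=1$. Dividing the localized inequality by $|H_N|$ and sending $N\to\infty$ (using $\pi_{\hat{\Qc}_N}(g)\to\pi^\P(g)<\infty$), then $\delta\to 0$, yields $\tilde{H}(r_*-1)\ge 0$ for every $r_*\in\supp(\P)$, hence $\P$-a.s.\ by continuity of $r\mapsto \tilde{H}(r-1)$. NA($\P$) then forces $\tilde{H}(r-1)=0$ $\P$-a.s., and together with $\tilde{H}\in\text{lin}(\supp(\P)-1)$ this gives $\tilde{H}=0$, contradicting $|\tilde{H}|=1$. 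Extract a convergent subsequence $H_{N_k}\to H$. Taking $k\to\infty$ in the localized inequality applied to $H_{N_k}$ (so that $|H_{N_k}|\to |H|<\infty$), then letting $\delta\to 0$, gives $g(r_*)-H(r_*-1)\le \pi^\P(g)$ for every $r_*\in\supp(\P)$; continuity of both sides together with closedness of $\supp(\P)$ upgrades this to the claimed $\P$-a.s.\ inequality.

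The main obstacle is the admissibility of the localizing measure $\hat{\Q}_N$: the argument hinges on the interplay between $k_N\to\infty$ and the ability to ensure $\hat{\P}_N(B(r_*,\delta))\ge 1/k_N$ for large $N$. Since $k_N=o(1/\epsilon_N(\beta_N))$ typically grows much more slowly than $N$, one cannot replace $B(r_*,\delta)$ by singletons $\{r_i\}$ (whose mass under $\hat{\P}_N$ is only $1/N$); instead one must average over a small ball whose mass is dictated by $\P$, and only send $\delta\to 0$ after first passing to the limit in $N$.
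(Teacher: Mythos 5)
Your proof is correct but takes a genuinely different route from the paper's. The paper transfers the $AV@R^{\hat{\P}_M}_{1/k_N}$ constraint to $AV@R^{\P}_{1/k_N}$ via Pichler's Wasserstein continuity (Lemma~\ref{Lem. Pichler}), absorbing an error of order $k_N\epsilon_N(\beta_N)$, and then performs a two-index iterated limit ($M\to\infty$, then $N\to\infty$) to arrive at the global bound $\sup_{\|d\tilde{\P}/d\P\|_\infty<\infty}\E_{\tilde{\P}}[g-H(r-1)-\pi^\P(g)]\le 0$, which is equivalent to the claimed $\P$-a.s.\ inequality. You instead localize: for each $r_*\in\supp(\P)$ and $\delta>0$, the conditional measure $\hat{\P}_N(\cdot\cap B(r_*,\delta))/\hat{\P}_N(B(r_*,\delta))$ has $\hat{\P}_N$-density at most $1/\hat{\P}_N(B(r_*,\delta))$, which for $N$ large is $\le k_N$ because $\liminf_N\hat{\P}_N(B(r_*,\delta))\ge\P(B(r_*,\delta))>0$ by the Portmanteau theorem while $k_N\to\infty$; testing this measure in the $AV@R$ dual in~\eqref{eq. avar} and using the Lipschitz bound on $g$ gives a pointwise inequality at $r_*$ whose error term $L\delta+|H_N|\delta$ vanishes after sending $N\to\infty$ and then $\delta\to0$. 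Your argument bypasses Lemma~\ref{Lem. Pichler} entirely, and the hypothesis $k_N=o(1/\epsilon_N(\beta_N))$ plays no direct role (its only function for you is already packaged into the assumed convergence $\pi_{\hat{\Qc}_N}(g)\to\pi^\P(g)$), whereas the paper invokes $k_N\epsilon_N(\beta_N)\to0$ explicitly; the boundedness of $(H_N)$ and the subsequence extraction coincide with the paper's treatment, which in turn mirrors Theorem~\ref{Thm. strat}. One small hygiene point: the paper's dual for $AV@R$ is stated over $\tilde\P\sim\hat{\P}_N$, whereas your conditional measure is only absolutely continuous, not equivalent, with respect to $\hat{\P}_N$; this is harmless since for $k_N>1$ the supremum over the two classes agrees (mix any admissible $\tilde\P\ll\hat{\P}_N$ with $\lambda\hat{\P}_N$ and let $\lambda\downarrow0$), but it deserves a sentence.
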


\begin{proof}
As in Corollary \ref{Lem:AVaR}, we see that 
\begin{align*}
\pi_{\hat{\Qc}_N}(g) \ge 
\inf \left\{x \in \R \ | \ \exists H \in \R^d \text{ s.t. } AV@R_{1/k_N}^{\hat{\P}_N}(g-H(r-1)-x) \le 0\right\}.
\end{align*}
Recall $\lim_{N \to \infty}\pi_{\hat{\Qc}_N}(g)= \pi^\P(g)$. As $g$ is Lipschitz continuous Lemma \ref{Lem. Pichler} shows that 
\begin{align}\label{eq. pichler}
&\left|AV@R_{1/k_N}^{\hat{\P}_N}\left( g(r)-H(r-1)-\pi_{\hat{\Qc}_N}(g)\right)-AV@R_{1/k_N}^{\P}\left(g(r)-H(r-1)-\pi_{\hat{\Qc}_N}(g)\right)\right|\\
& \le \tilde{C}k_N\mathcal{W}^1(\hat{\P}_N, \P)\le \tilde{C}k_N\epsilon_N(\beta_N)\nonumber
\end{align}
for some $\tilde{C}>0$. Note that similarly to the arguments in the proof of Theorem \ref{Thm. strat} we can assume that $(H_N)_{N \in \N}$ is bounded, so a subsequence of $(H_N)_{N \in \N}$ (which we also denote by $(H_N)_{N \in \N}$ for convenience) converges to some $H \in \R^d$. Fix $N \in \N$. Then for all $M \ge N$ we have
\begin{align*}
&\sup_{\|d\tilde{\P}/d\hat{\P}_N\|_{\infty} \le k_N} \E_{\tilde{\P}}[g(r)-H_M(r-1)-\pi_{\hat{\Qc}_M}(g)-1/M] \\
\le\  &2\tilde{C}k_N\epsilon_N(\beta_N)+\sup_{\|d\tilde{\P}/d\hat{\P}_M\|_{\infty} \le k_N} \E_{\tilde{\P}}[g(r)-H_M(r-1)-\pi_{\hat{\Qc}_M}(g)-1/M]\\
\le\ &2\tilde{C}k_N\epsilon_N(\beta_N)+\sup_{\|d\tilde{\P}/d\hat{\P}_M\|_{\infty} \le k_M} \E_{\tilde{\P}}[g(r)-H_M(r-1)-\pi_{\hat{\Qc}_M}(g)-1/M]
\\\le\ &2k_N\tilde{C}\epsilon_N(\beta_N).
\end{align*}
Let $\epsilon>0$. Then
\begin{align} \label{eq. opti}
&\Big| \sup_{\|d\tilde{\P}/d\hat{\P}_N\|_{\infty} \le k_N} \E_{\tilde{\P}} \left[g(r)-H_M(r-1)-\pi_{\hat{\Qc}_M}(g)-1/M\right]\\
&\quad -\sup_{\|d\tilde{\P}/d\P\|_{\infty} \le k_N} \E_{\tilde{\P}}\left[g(r)-H(r-1)-\pi^{\P}(g)\right] \Big|\nonumber\\
&\le \tilde{C} k_N\mathcal{W}^1(\hat{\P}_N, \P)
+\Big| \sup_{\|d\tilde{\P}/d\P\|_{\infty} \le k_N} \E_{\tilde{\P}} \left[g(r)-H_M(r-1)-\pi_{\hat{\Qc}_M}(g)-1/M\right]\nonumber\\
&\quad -\sup_{\|d\tilde{\P}/d\P\|_{\infty} \le k_N} \E_{\tilde{\P}}\left[g(r)-H(r-1)-\pi^{\P}(g)\right] \Big|\nonumber
\le\epsilon\nonumber
\end{align}
by \eqref{eq. pichler} and $N \le M$ large enough since we have
\begin{align*}
&\sup_{\|d\tilde{\P}/d\P\|_{\infty} \le k_N}\left| \E_{\tilde{\P}} \left[g(r)-H_M(r-1)-\pi_{\hat{\Qc}_M}(g)-1/M\right]-\E_{\tilde{\P}} \left[g(r)-H(r-1)-\pi^{\P}(g)\right]\right| \\
\le\ &\sup_{\|d\tilde{\P}/d\P\|_{\infty} \le k_N} |H_M-H| |\E_{\tilde{\P}}[r-1]|+|\pi^{\P}(g)-\pi_{\hat{\Qc}_M}(g)-1/M| \to 0 \quad (M \to \infty).
\end{align*}
As $\epsilon >0$ was arbitrary, this implies  
\begin{align*}
\sup_{\|d\tilde{\P}/d\P\|_{\infty} < \infty} \E_{\tilde{\P}}\left[g(r)-H(r-1)-\pi^{\P}(g)\right] \le 0.
\end{align*}
This concludes the proof.
\end{proof}

\section{Multiperiod results}\label{sec mult}
In this section we partly extend results from Sections \ref{sec. plugin} and \ref{sec. impr} to the case $T>1$. As before we assume $g: (\R_+^d)^{T} \to \R$ is Borel. Let $\mathbb{F}$ be generated by the coordinate mappings $\textbf{r}_t(x)=x_t$, $x\in (\R_+^d)^{T}$.
We write $\rr_{i:j}$ for the vector $(\rr_{i},\ldots,\rr_j)$, $1\leq i<j\leq T$ and $\rr=\rr_{1:T}$. The martingale measures $\mathcal{M}^T$ are now defined via $$\mathcal{M}^T= \{\Q \in \mathcal{P}((\R_+^d)^T) \ | \ \E_{\Q}[\textbf{r}_t|\mathcal{F}_{t-1}]=1 \ \text{ for all }t=1, \dots, T\}.$$ We only consider $\R_+^d$-valued i.i.d. observations $r_1, \dots,r_N$ here\footnote{ The more general Markovian case is likely to require a genuinely novel approach, including using a modified definition of the empirical measure, see \cite{BBBW:20}. We leave it for future research.}. We now concatenate concave envelopes via the following procedure:
\begin{Defn} We define
\begin{align*}
g(\rr_{1:t}, \cdot): \R_+^d \to \R, \ \ \tilde{\rr} \mapsto g(\rr_{1:t}, \tilde{\rr}).
\end{align*} 
Then we recursively set for $\Omega \subseteq (\R_+^d)$
\begin{align*}
g^{\Omega}_{T,T}(\rr) &:= g(\rr) \\
g^{\Omega}_{t,T}(\rr_{1:t}) &:= (\widehat{g_{t+1}^{\Omega}}(\rr_{1:t}, \cdot))_{\Omega} (1), \qquad t=1, \dots, T-1,\\
g^{\Omega}_{0,T}&:= (\widehat{g_{1}^{\Omega}}(\cdot))_{\Omega}(1).
\end{align*}
In analogy with the one-period case we set $ \pi^{\hat{\P}_N}_T(g):=g_{0,T}^{\{r_1, \dots, r_N\}}$ for $r_i\in \R_+^d$, $i=1, \dots, N$.
\end{Defn}

\begin{Defn}For $\P \in \mathcal{P}(\R_{+}^d)$ we define
\begin{align*}
g^{\P}_{T,T}(\rr) &:= g(\rr) \\
g^{\P}_{t,T}(\rr_{1:t}) &:= (\widehat{g_{t+1}^{\P}}(\rr_{1:t}, \cdot))_{\P} (1), \hspace{0.5cm} t=1, \dots, T-1,\\
g^{\P}_{0,T}&:= (\widehat{g_{1}^{\P}}(\cdot))_{\P}(1)
\end{align*}
and define $\P^{\otimes T}= \underbrace{\P \otimes \cdots \otimes \P}_{\text{T times}}$. Furthermore we set $ \pi^{\P}_T(g):=g_{0,T}^{\P}$.
\end{Defn}
We quote the following result:

\begin{Lem}[ref. \cite{follmer2011stochastic}[Theorem 1.31, p.19]
Let $g: (\R_+^d)^T \to \R$ be Borel-measurable. Then under NA$(\P)$
\begin{align*}
g_{0,T}^{\P}&=\sup_{\Q \sim \P^{\otimes T}} \E_{\Q}[g(\rr)]\\
&=\inf\left\{x \in \R\ \bigg| \ \exists H \in \mathcal{H}(\mathbb{F})\text{ s.t. } x+\sum_{s=1}^TH_s(\rr_s-1)\ge g(\rr) \ \P^{\otimes T}\text{-a.s.} \right\}.
\end{align*}

\end{Lem}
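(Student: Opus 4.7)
The plan is to prove the equality by backward induction on $T$, reducing each time step to a one-period superhedging problem where the characterisation via concave envelopes and the pricing-hedging duality \eqref{eq:sh_duality} are already established. The key observation is that under the product measure $\P^{\otimes T}$, the conditional distribution of $\rr_{t+1}$ given $\mathcal{F}_t$ equals $\P$, so that each induction step reduces to the one-period case already treated in \eqref{eq:sh_def}--\eqref{eq:sh_duality} and \eqref{eq:cncenv}.

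First, I would define the conditional superhedging value
\begin{equation*}
V_t(\rr_{1:t}) := \inf\Bigl\{x \in \R \,\Big|\, \exists\, H_{t+1},\dots,H_T \text{ predictable s.t. } x + \sum_{s=t+1}^T H_s(\rr_s - 1) \ge g(\rr)\ \P^{\otimes T}\text{-a.s.}\Bigr\},
\end{equation*}
with $V_T := g$. The target is to show $V_t = g_{t,T}^{\P}$ for every $t$, and in particular $V_0 = g_{0,T}^{\P}$. The natural induction hypothesis is that $V_{t+1}$ is Borel measurable and coincides with $g_{t+1,T}^{\P}$ pointwise outside a $\P^{\otimes T}$-null set.

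Second, for the induction step I would fix $\rr_{1:t}$ and apply the one-period result \eqref{eq:sh_def}--\eqref{eq:cncenv} to the payoff $\tilde\rr \mapsto V_{t+1}(\rr_{1:t},\tilde\rr)$ under the measure $\P$, which is exactly the conditional law of $\rr_{t+1}$ given $\mathcal{F}_t$ under $\P^{\otimes T}$. This yields $V_t(\rr_{1:t}) = \bigl(\widehat{V_{t+1}}(\rr_{1:t},\cdot)\bigr)_{\P}(1)$, matching the definition of $g_{t,T}^{\P}$. A measurable selection argument (e.g.\ based on the Jankov--von Neumann theorem) is needed to glue the pointwise superhedging strategies $H_{t+1}(\rr_{1:t})$ into a predictable process; here NA$(\P)$ is crucial to ensure the one-period infima are attained on a compact set of strategies and vary measurably with $\rr_{1:t}$.

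Third, to obtain the dual representation I would iterate the one-period duality \eqref{eq:sh_duality}. At each step $V_t(\rr_{1:t}) = \sup_{\Q_{t+1}} \E_{\Q_{t+1}}[V_{t+1}(\rr_{1:t},\cdot)]$ where the supremum runs over one-period martingale kernels equivalent to $\P$. By measurable selection I can choose $\varepsilon$-optimal kernels $\Q_{t+1}(\cdot\,|\,\rr_{1:t})$, concatenate them into a measure $\Q \sim \P^{\otimes T}$ via Ionescu--Tulcea, and pass to the limit $\varepsilon \downarrow 0$ to conclude $g_{0,T}^{\P} \le \sup_{\Q\sim\P^{\otimes T}}\E_{\Q}[g]$. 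The reverse inequality is immediate from the pathwise superhedging inequality combined with the martingale property of $\rr_s-1$ under each such $\Q$.

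The main obstacle will be the measurable selection step together with verifying that the $\varepsilon$-optimal kernels can be chosen so that the resulting product measure is genuinely equivalent (not merely absolutely continuous) to $\P^{\otimes T}$; this typically requires convex-combining the selected kernel with $\P$ using a small weight, exploiting NA$(\P)$ to ensure the perturbed kernel is still a martingale kernel. A secondary technical point is handling possibly unbounded $g$: one would first truncate $g$ at level $\pm n$, apply the result, and then pass to the limit using monotone convergence together with the fact that concave envelopes commute with monotone pointwise limits.
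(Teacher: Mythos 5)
The paper does not prove this lemma: it explicitly introduces it with ``We quote the following result'' and cites it to F\"ollmer and Schied, so there is no in-paper proof to compare against. Your sketch is a correct reconstruction of the standard textbook argument behind multiperiod superhedging duality, and it matches the strategy used in the cited reference. The backward recursion $V_t(\rr_{1:t}) = \bigl(\widehat{V_{t+1}}(\rr_{1:t},\cdot)\bigr)_{\P}(1)$ is exactly how the paper's $g_{t,T}^{\P}$ is defined, and the identification $\pi^{\P}(\cdot)=\hat{(\cdot)}_{\P}(1)$ from \eqref{eq:cncenv} is the right one-period input. You correctly flag the two genuine technical hurdles -- a measurable selection step to glue pointwise-optimal one-period strategies and kernels into a predictable process and a transition kernel (handled in the literature via Jankov--von~Neumann or the Bertsekas--Shreve selection theorems applied to lower semianalytic value functions), and the need to convex-combine an $\varepsilon$-optimal kernel with $\P$ so that the Ionescu--Tulcea concatenation is actually \emph{equivalent} to $\P^{\otimes T}$, which relies on NA$(\P)$ at each step to keep the perturbed kernel a martingale kernel. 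One point worth making more explicit if you were to write this out in full: the induction must keep track that $V_{t+1}$ is only well-defined up to a $\P^{\otimes(t+1)}$-null set and is merely universally (not Borel) measurable after selection, so the one-period duality has to be invoked for each fixed $\rr_{1:t}$ outside a null set and the outer concave envelope taken of a universally measurable integrand; this is harmless because the superhedging price and the martingale-measure supremum depend on the payoff only through its $\P$-equivalence class, but it is precisely the kind of regularity bookkeeping that makes the full proof longer than the sketch suggests. Your truncation argument for unbounded $g$ is also standard and compatible with the monotone-limit property of concave envelopes already used in the paper's proof of Theorem~\ref{Thm. one-per}.
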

We can now formulate the following multiperiod analogue of Theorem \ref{Thm. one-per}:

\begin{Theorem}
Let  $\P_1, \P$ be probability measures on $\R_+^d$ and $g: (\R_+^d)^T \to \R$ be Borel-measurable. Assume that the observations $r_1, r_2, \dots$ are  i.i.d. samples of $\P$. Then $\P^\infty$-a.s.
\begin{align*}
\lim_{N \to \infty} \pi_T^{\hat{\P}_N} (g)&=\lim_{N \to \infty} g_{0,T}^{\{r_1, \dots, r_N\}}= g_{0,T}^{\P}= \pi^{\P}_T(g).
\end{align*}
\end{Theorem}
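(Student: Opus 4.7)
My plan is to induct on $T$, with Theorem~\ref{Thm. one-per} serving as the base case. As in that proof, I would first invoke Lusin's theorem (now applied to $\P^{\otimes T}$) to reduce to a continuous $g$, checking stability of both $g_{0,T}^\P$ and $g_{0,T}^{\{r_1,\ldots,r_N\}}$ under perturbations of the payoff off sets of arbitrarily small $\P^{\otimes T}$-measure.

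For the inductive step, I would fix a continuous $g$ and exploit the i.i.d.\ structure via the recursive identity
\begin{align*}
g_{0,T}^\P = \pi^\P(g_{1,T}^\P), \qquad g_{0,T}^{\{r_1,\ldots,r_N\}} = \pi^{\hat\P_N}(g_{1,T}^{\{r_1,\ldots,r_N\}}),
\end{align*}
where the one-period $\pi^\cdot$ acts on the $\R$-valued maps $g_{1,T}^\cdot$ of the single variable $\rr_1$. The inductive hypothesis, applied to the payoff $g(\rr_1,\cdot):(\R_+^d)^{T-1}\to\R$ for each fixed $\rr_1$, delivers pointwise convergence $g_{1,T}^{\{r_1,\ldots,r_N\}}(\rr_1)\to g_{1,T}^\P(\rr_1)$ $\P^\infty$-a.s. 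Iterating through the recursion the monotonicity $\Omega\subseteq\Omega'\Rightarrow \hat f_\Omega\le \hat f_{\Omega'}$, and using $\hat f_{\supp(\P)}=\hat f_\P$ for continuous $f$, this convergence is monotone from below: $g_{1,T}^{\{r_1,\ldots,r_N\}}\nearrow g_{1,T}^\P$.

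I would then split
\begin{align*}
\bigl|g_{0,T}^{\{r_1,\ldots,r_N\}}-g_{0,T}^\P\bigr|\le
\bigl|\pi^{\hat\P_N}(g_{1,T}^{\{r_1,\ldots,r_N\}})-\pi^{\hat\P_N}(g_{1,T}^\P)\bigr|
+ \bigl|\pi^{\hat\P_N}(g_{1,T}^\P)-\pi^\P(g_{1,T}^\P)\bigr|.
\end{align*}
The second term tends to $0$ $\P^\infty$-a.s.\ by direct application of Theorem~\ref{Thm. one-per} (the base case) to the Borel payoff $g_{1,T}^\P$. For the first term, I would invoke the standard Lipschitz property $|\hat f_\Omega(x)-\hat h_\Omega(x)|\le\sup_\Omega|f-h|$ for $x$ in the convex hull of $\Omega$ to bound it by $\sup_{i\le N}\bigl(g_{1,T}^\P(r_i)-g_{1,T}^{\{r_1,\ldots,r_N\}}(r_i)\bigr)$.

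The main obstacle is upgrading the pointwise monotone convergence $g_{1,T}^{\{r_1,\ldots,r_N\}}\nearrow g_{1,T}^\P$ to uniform convergence along the random sample points. My plan is first to reduce via tightness and truncation to the case of compactly supported $\P$; then to argue that $g_{1,T}^\P$ is upper semicontinuous, as an infimum of continuous concave majorants of a continuous function; and finally to invoke a Dini-type argument to promote pointwise monotone convergence on the compact $\supp(\P)$ to uniform convergence. This will control the supremum along $\{r_1,\ldots,r_N\}\subseteq\supp(\P)$, forcing the first term to vanish and completing the inductive step. Lusin's theorem then extends the conclusion from continuous to general Borel $g$.
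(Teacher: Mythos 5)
The central gap is your reduction from Borel $g$ to continuous $g$: you propose to apply Lusin's theorem to $\P^{\otimes T}$ and then argue that $g_{0,T}^\P$ and $g_{0,T}^{\{r_1,\ldots,r_N\}}$ are stable ``under perturbations of the payoff off sets of arbitrarily small $\P^{\otimes T}$-measure.'' This is false already for $T=1$: the superhedging price depends on the payoff through an essential supremum, not an integral, so replacing $g$ by a continuous $\tilde g$ agreeing with $g$ off a set of $\P$-measure $\epsilon>0$ can change $\pi^\P(g)$ by an arbitrary amount (add a spike to $g$ on a thin set of positive mass). The one-period Lusin step in the paper is not a perturbation of the payoff; it produces compacta $K_n\subset\supp(\P)$ on which $g$ is continuous and then squeezes $\hat g_\P$ between $\hat g_{K_n}$ and $\hat g_{\cup_n K_n}$ via monotone limits of concave envelopes, never constructing a continuous approximant of $g$. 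Moreover, a Lusin compactum for $\P^{\otimes T}$ lives in $(\R_+^d)^T$ and has no product structure, while the recursion defining $g_{t,T}^\Omega$ slices one coordinate at a time with $\Omega\subset\R_+^d$; it is unclear how such a compactum would enter the recursion at all. The paper resolves this differently: it applies the inductive hypothesis to the Borel payoff $g(\rr_1,\cdot)$ to get pointwise monotone convergence $g_{1,T}^{\{r_1,\ldots,r_N\}}(\rr_1)\nearrow g_{1,T}^\P(\rr_1)$, then applies Lusin to the resulting \emph{one-variable} function $g_{1,T}^\P$, and finishes by exchanging the monotone limits in $N$, $n$ (Lusin index) and $M$ (empirical support size) without ever needing uniform convergence.

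For continuous $g$ your split $g_{0,T}^{\{r_1,\ldots,r_N\}}=\pi^{\hat\P_N}(g_{1,T}^{\{r_1,\ldots,r_N\}})$, the $1$-Lipschitz bound on $\pi^{\hat\P_N}$ in the sup norm, and Dini's theorem do assemble into a different and reasonable route from the paper's: you pay in analysis (you need $g_{1,T}^{\{r_1,\ldots,r_N\}}$ and $g_{1,T}^\P$ continuous on a compact, which in turn forces the truncation to $[0,L]^d$) where the paper pays in bookkeeping of iterated suprema. The a.s.\ quantification over $\rr_1$ does work for continuous $g$ because the convergence is driven by the single event that $\{r_1,r_2,\ldots\}$ is dense in $\supp(\P)$. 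But the truncation step is only asserted, not carried out: one must show $g_{0,T}^{\P|_{[0,L]^d}}\nearrow g_{0,T}^\P$, run the compactly-supported argument on $\{r_1,\ldots,r_N\}\cap[0,L]^d$, and interchange $L$ with $N$ via the monotonicity of $g_{0,T}^\Omega$ in $\Omega$. That is repairable, but as written the Lusin reduction is the step that does not go through, and it is the one the whole plan leans on.
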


\begin{proof}
We prove the claim by induction over $T\in \N$. The case $T=1$ follows from Theorem \ref{Thm. one-per}. Thus we assume that we have shown that for each $\rr_1 \in \R_+^d$ $$\lim_{N \to \infty}  g_{1,2}^{\{r_1, \dots, r_N\}}(\rr_1)=g_{1,2}^{\P}(\rr_1), \quad \P^{\infty}\text{-a.s.}$$ 
By Lusin's theorem there exists a sequence of  compact sets $K_n \subseteq \text{supp}(\P)$ such that $\P(\R_+^d\setminus K_n) \le 1/n$ and $g_{1,2}^{\P}|_{K_n}$ is continuous. As in the proof of Theorem \ref{Thm. one-per} we have 
$$g_{0,2}^{\P}= \widehat{(g_{1,2}^{\P}(\cdot))}_{\P}(1)= \lim_{n \to \infty} \widehat{(g^{\P}_{1,2}(\cdot))}_{K_n}(1).$$
As the concave envelope of pointwise increasing functions is increasing, we conclude that $ g_{1,2}^{\{r_1, \dots, r_N\}}(1)$ is increasing in $N \in \N$. Thus $\P^{\infty}$-a.s.
\begin{align*}
\widehat{\left(g_{1,2}^{\P}(\cdot)\right)}_{K_n}(1)=\widehat{\left(\lim_{N \to \infty} g_{1,2}^{\{r_1, \dots, r_N\}}(\cdot)\right)}_{K_n}(1)
&= \lim_{N \to \infty} \widehat{\left( g_{1,2}^{\{r_1, \dots, r_N\}}(\cdot)\right)}_{K_n}(1).
\end{align*}
Interchanging limits (as they are suprema) yields
\begin{align*}
g_{0,2}^{\P}&=\lim_{N \to \infty} \lim_{n \to \infty} \widehat{\left( g_{1,2}^{\{r_1, \dots, r_N\}}(\cdot)\right)}_{K_n}(1)\\
&=\lim_{N \to \infty} \lim_{n \to \infty} \lim_{M \to \infty}
\widehat{\left( g_{1,2}^{\{r_1, \dots, r_N\}}(\cdot)\right)}_{K_n \cap \{r_1, \dots, r_M\}}(1) \\
&= \lim_{N \to \infty} \lim_{M \to \infty}
\widehat{\left( g_{1,2}^{\{r_1, \dots, r_N\}}(\cdot)\right)}_{ \{r_1, \dots, r_M\}}(1) 
=\lim_{N \to \infty} g_{0,2}^{\{r_1, \dots, r_N\}}.
\end{align*}
This concludes the proof for $T=2$. The general induction step follows analogously.
\end{proof}
A similar reasoning can be applied for the estimator  $\pi_{\hat{\Qc}_N}(g)$ from Theorem \ref{Thm wasserstein} with $\sum_{N=1}^\infty \beta_N<\infty$.
\begin{Cor}
Let  $\P$ be a probability measure on $\R_+^d$ and $g: (\R_+^d)^T \to \R$ be $1$-Lipschitz and bounded from below. Assume that the observations $r_1, r_2, \dots$ are i.i.d.\ with respect to $\P$. Then under NA$(\P)$
\begin{align*}
&\lim_{N \to \infty} \sup_{q_1, \dots, q_T }\int_{(\R_+^d)^T}g(\rr)q_T(\rr_{1:T-1}; d\rr_T)\ldots q_2(\rr_1;d\rr_2)q_1(d\rr_1)\\
=\ &\sup_{\Q\in \Mc^T,\ \Q \sim \P ^{\otimes T}} \E_{\Q}[g(\rr)],\qquad \P^\infty\text{-a.s.,}
\end{align*}
where $(\rr_1, \dots, \rr_t) \mapsto q_{t+1}(\rr_1, \dots, \rr_t;\cdot)$ are Borel measurable mappings from $(\R_+^d)^t$ to $\hat{\Qc}_N$, $t=0,\ldots,T-1$.
\end{Cor}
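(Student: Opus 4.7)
The plan is to prove the Corollary by backward induction on $T$, in the spirit of the preceding multiperiod theorem for the plugin estimator, but with Theorem \ref{Thm wasserstein} replacing Theorem \ref{Thm. one-per} at the one-period step. Define, for $N\in\N$, the backward value functions
\begin{align*}
v^N_T(\rr) &:= g(\rr), \qquad v^N_t(\rr_{1:t}) := \sup_{\mu\in\hat{\Qc}_N}\int v^N_{t+1}(\rr_{1:t},r)\,\mu(dr),\\
V_T(\rr) &:= g(\rr), \qquad V_t(\rr_{1:t}) := \sup_{\Q\sim\P,\,\Q\in\Mc}\int V_{t+1}(\rr_{1:t},r)\,\Q(dr).
\end{align*}
By a measurable-selection / dynamic-programming argument (using that $\hat{\Qc}_N$ does not depend on the conditioning variable), the left-hand side of the claim equals $v^N_0$, while the right-hand side equals $V_0$ by \cite[Thm.~1.31]{follmer2011stochastic}. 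The goal is thus to show $v^N_0 \to V_0$ as $N\to\infty$, $\P^\infty$-a.s.

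The first key step is to propagate Lipschitz continuity and the lower bound along the recursion. Since $g$ is $1$-Lipschitz and bounded from below, I would show by induction that each $V_t$ and each $v^N_t$ is $1$-Lipschitz in $\rr_{1:t}$ and bounded below by $\inf g$. The Lipschitz bound is preserved by the superhedging operator because $\sup_\Q \E_\Q[f(\cdot+h,\cdot)]-\sup_\Q \E_\Q[f(\cdot,\cdot)]\le L|h|$ whenever $f$ is $L$-Lipschitz in the first argument and this does not depend on the particular class of $\Q$ we optimise over (same bound applies to $\hat{\Qc}_N$ and to $\{\Q\sim\P, \Q\in\Mc\}$). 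The lower bound is preserved because both operators are monotone and map constants to themselves (as all measures in $\hat{\Qc}_N$ and in $\Mc$ are probability measures).

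The second key step is the induction. For $T=1$ the statement is exactly Theorem \ref{Thm wasserstein} applied to the $1$-Lipschitz $g$. Assume convergence has been established through step $t+1$, i.e.\ $v^N_{t+1}(\rr_{1:t+1})\to V_{t+1}(\rr_{1:t+1})$ pointwise $\P^\infty$-a.s. For fixed $\rr_{1:t}$ the function $r\mapsto V_{t+1}(\rr_{1:t},r)$ is $1$-Lipschitz and bounded below, so Theorem \ref{Thm wasserstein} gives
\begin{equation*}
\sup_{\mu\in\hat{\Qc}_N}\int V_{t+1}(\rr_{1:t},r)\,\mu(dr)\;\longrightarrow\; V_t(\rr_{1:t}),\qquad \P^\infty\text{-a.s.}
\end{equation*}
Using that both operators $\sup_{\mu\in\hat{\Qc}_N}\E_\mu[\cdot]$ and $\sup_{\Q}\E_\Q[\cdot]$ are $1$-Lipschitz in the integrand w.r.t.\ the uniform or $\Wc^1$-norm (by Kantorovich--Rubinstein duality \eqref{eq:KRduality}), the error $|v^N_t(\rr_{1:t})-V_t(\rr_{1:t})|$ is bounded by $\sup_r |v^N_{t+1}(\rr_{1:t},r)-V_{t+1}(\rr_{1:t},r)|$ plus the one-period error from applying Theorem \ref{Thm wasserstein}; both vanish as $N\to\infty$. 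Iterating down to $t=0$ gives $v^N_0\to V_0$.

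The main obstacle is the measurability and the dynamic-programming reduction: one must verify that the iterated supremum over kernels $(q_1,\ldots,q_T)$ with $q_t(\rr_{1:t-1};\cdot)\in\hat{\Qc}_N$ does equal $v^N_0$, which requires Borel-measurable selectors of $\varepsilon$-optimisers from the parameter-independent set $\hat{\Qc}_N$; this follows from standard measurable selection since the conditioning enters only through the integrand and $v^N_{t+1}$ is Borel. A secondary delicate point is that the null set in the a.s.\ convergence depends, a priori, on the conditioning path; this is handled by fixing a common probability-one event on which $\Wc^p(\hat{\P}_N,\P)\le \epsilon_N(\beta_N)$ holds for all $N$ large (Borel--Cantelli with $\sum\beta_N<\infty$) and on which the one-period result of Theorem \ref{Thm wasserstein} applies simultaneously for all Lipschitz test functions via the $\Wc^p$-robustness estimate of Theorem \ref{thm rob_was}.
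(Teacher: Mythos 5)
Your proof is correct and follows essentially the same route as the paper's: backward induction on $T$, propagation of the $1$-Lipschitz property through the one-step value operators, Theorem \ref{Thm wasserstein} at each step, and a measurable-selection argument (the paper cites \cite[Prop.~7.34]{bertsekas1978stochastic}) to identify the iterated supremum over kernels with the backward recursion. Your explicit treatment of the uniformity of the a.s.\ null set across conditioning paths, via the common Borel--Cantelli event and the uniform one-period bound of order $k_N\epsilon_N(\beta_N)$, is exactly the point the paper handles by noting that the one-period error $2k_N\epsilon_N(\beta_N)$ is uniform in $\rr_{1:T-1}$ thanks to Lipschitz continuity.
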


\begin{proof}
We show the claim by backwards induction. Fix $\rr_{1:T-1} \in (\R_+^d)^{T-1}$. 
As in the proof of Theorem \ref{Thm wasserstein}, we have $\P\in B^p_{\epsilon_N}(\hat{\P}_N)$ for $N$ large enough $\P^\infty$-a.s. 
The ``$\ge"$-inequality follows as in the proof of Theorem $\ref{Thm wasserstein}$. Indeed,
\begin{align*}
&\sup_{q_T \in \hat{\mathcal{Q}}_N}\int_{\R_+^d}g(\rr)q_T(\rr_{1:T-1}; d\rr_T)\ldots q_2(\rr_1;d\rr_2)q_1(d\rr_1)\\
 &\qquad \qquad \ge \sup_{q_T\in \Mc, \ \|dq_T/d\P\|_{\infty} \le k_N}\int_{\R_+^d}g(\rr) q_T(\rr_{1:T-1}; d\rr_T)\ldots q_2(\rr_1;d\rr_2)q_1(d\rr_1).
\end{align*} 
As $g$ is bounded from below, passing to the limit with $N\to\infty$ gives the result. 
Now we show the``$\le$"-inequality. This follows directly from the fact that $\rr_T \mapsto g(\rr_{1:T-1}, \rr_T)$ is $1$-Lipschitz, so by the proof of Theorem \ref{Thm wasserstein}
\begin{align*}
&\sup_{q_T\in \hat{\mathcal{Q}}_N} \int_{\R_+^d}g(\rr)q^N_T(\rr_{1:T-1};d\rr_T) -\sup_{q_T\in \mathcal{M},\ q_T\sim \P } \int_{\R_+^d} g(\rr)dq_{T}(\rr_{1:T-1}; d\rr_T)\\
\le\ & 2k_N \epsilon_N(\beta_N).
\end{align*}
For the induction step we note that for some set $C \subseteq \mathcal{P}(\R_+^d)$, $t \in {1, \dots T}$ and some $1$-Lipschitz function $f: (\R_+^d)^{t} \to \R$ we have
\begin{align*}
&\left|\sup_{q_t \in C}\int_{\R_+^d} f(\rr) dq_t(\rr_{1:t-1}, \rr_t)- \sup_{q_t \in C} \int_{\R_+^d} f(\tilde{\rr}_{1:t-1}, \rr_t)dq_t(\tilde{\rr}_{1:t-1}, \rr_t)\right|\\
\le\ &\left|\rr_{1:t-1}-\tilde{\rr}_{1:t-1}\right|.
\end{align*}
In particular the functions 
\begin{align*}
&\sup_{q_{T}\in \mathcal{M},\ q_T \sim \P } \int_{\R_+^d}g(\rr)q_T(\rr_{1:T-1}; d\rr_T)\quad \text{ and } \sup_{q_T \in \hat{\Qc}_N} \int_{\R_+^d}g(\rr)q_T(\rr_{1:T-1}; d\rr_T)
\end{align*}
are $1$-Lipschitz continuous. The claim now follows using \cite[Prop. 7.34, p.154]{bertsekas1978stochastic}. 
\end{proof}
\begin{Rem}
Similar statements are valid for the penalty estimator and bounded functions $g$ as well as for the $\mathcal{W}^{\infty}$-estimator with continuous $g$.
\end{Rem}

\section{Asymptotic Consistency, Arbitrage and Contiguity} \label{sec. arb}
 
We discuss now the consistency of $\pi_{\Qc_N}(g)$ in \eqref{eq:generic_estimator} for general sets of martingale measures $\Qc_N$. This, in particular, provides a detailed motivation for the construction of our improved estimator in Theorem \ref{Thm wasserstein}.
Throughout, we assume NA($\P$) to be able to use the dual formulation \eqref{eq:sh_duality} and recall that NA($\P)$, by the First Fundamental Theorem of Asset pricing, is equivalent to existence of $\Q\in \Mc$, $\Q\sim\P$. 

Clearly, a first necessary condition for asymptotic consistency of $\pi_{\Qc_N}(g)$ is that $\Qc_N \neq \emptyset$ for $N$ large enough. For the plugin estimator, when $\Qc_N= \{\Q \in \mathcal{M}: \Q \sim \hat{\P}_N \}$, this was established in Proposition \ref{prop. supp} which outlined the relationship between NA$(\P)$ and NA$(\hat{\P}_N)$. The proof crucially relied on the fact that $\text{supp}(\hat{\P}_N) \subseteq \text{supp}(\P)$. Indeed, for general measures $\nu_N \Rightarrow \P$ the affine hull $\text{aff}(\text{supp}(\nu_N))$ could be of higher dimension than $\text{aff}(\text{supp}(\P))$ and thus there is no relationship between NA$(\P)$ and NA$(\nu_N)$ as the following example shows:
\begin{Ex}\label{Ex coun}
Take $d=1$ and $\P= \delta_{1}$. Obviously $\nu_N := \delta_{2}/N+(1-1/N) \delta_{1}$ converges weakly to $\P$. While NA$(\P)$ holds, NA$(\nu_N)$ is never fulfilled. Conversely $\nu_N := \delta_0/(2N)+\delta_1/(2N)+(1-1/N)\delta_2 \Rightarrow \delta_2$, so there exists a $\P$-Arbitrage while there is no $\nu_N$-Arbitrage.
\end{Ex}

It is thus both natural and necessary to maintain a relationship between $\Qc_N$ and $\{\Q \in \mathcal{M}: \Q \sim \hat{\P}_N \}$. A minimal property seems to be one of an asymptotic inclusion in the sense that for some sequence $(k_N)_{N \in \N}$ with $\lim_{N \to \infty}k_N=\infty$ we have 
\begin{equation}\label{eq:nec_QN}
\P^{\infty}(\{\Q \in \mathcal{M}: \Q \sim \hat{\P}_N\  \&\  \|d\Q/ d\hat{\P}_N \|_{\infty}\le k_N \} \subseteq \Qc_N \text{ for large } N) = 1.
\end{equation}
Then the consistency of the plugin estimator in Theorem \ref{Thm. one-per} implies that 
\begin{align*}
\liminf_{N \to \infty} \pi_{\Qc_N}(g) \ge \lim_{N \to \infty} \hat{\pi}_N(g) \ge \pi^{\P}(g) \quad \P^{\infty}\text{-a.s.}
\end{align*} 
The main task now is to identify sequences of sets $\Qc_N$ which satisfy \eqref{eq:nec_QN} and for which the reverse inequality
\begin{align}\label{eq. goal}
\limsup_{N \to \infty} \pi_{\Qc_N}(g) \le \pi^{\P}(g)=\sup_{\Q \in \mathcal{M}: \Q\sim \P} \E_\Q[g] \quad \P^{\infty}\text{-a.s.}
\end{align}
holds. For this, we need the $\Qc_N$ to asymptotically decrease to the set $\{\Q \in \mathcal{M}: \Q \sim \P\}$. More formally, denoting the $\epsilon$-neighbourhood of a set $A$ by $A^{\epsilon}$, the following can be seen to be a necessary condition for \eqref{eq. goal} when $g$ is continuous and bounded:
\begin{align}\label{eq. as}
\liminf_{N \to \infty }\nu_N(A^{\epsilon})&=0 \text{ for some sequence } (\nu_N)_{N \in \N} \text{ such that } \nu_N \Rightarrow\P \text{ implies } \nonumber \\
 \lim_{N \to \infty}\Q_N(A) &= 0 \text{ for all }(\Q_N)_{N \in \N}\text{, such that }\Q_N \in \mathcal{Q}_N \text{ for all }N \in \N,
\end{align}
for all $A \in \mathcal{B}(\R^d_+)$ and all $\epsilon>0$.
This condition, which is trivially satisfied if $\mathcal{Q}_N=\{ \Q \in \mathcal{M} \ | \ \Q \sim \hat{\P}_N \}$, can be construed as a contiguity condition on $(\nu_N)_{N \in \N}$ and $(\Q_N)_{N \in \N}$. The notion of contiguity goes back to Le Cam (\cite[Chp.~5]{le1990locally}) and was used by Kabanov and Kramkow (\cite{kabanov1995large}) in a continuous-time setting to describe large financial markets. We introduce a weaker version here, which is sufficient for our setting:
\begin{Defn}
A sequence of probability measures $(\Q_N)_{N \in \N}$ is o-contiguous wrt.\ a sequence $(\nu_N)_{N \in \N}$, if for all $\epsilon>0$ and for all open sets $O\in \mathcal{B}(\R_+^d)$ $\lim_{N \to \infty}\nu_N(O^{\epsilon}) = 0$ implies $\lim_{N \to \infty}\Q_N(O) = 0$. Sequences $(\nu_N)_{N \in \N}$ and $(\Q_N)_{N \in \N}$ are mutually o-contiguous if $(\nu_N)_{N \in \N}$ is o-contiguous wrt. $(\Q_N)_{N \in \N}$ and $(\Q_N)_{N \in \N}$ is o-contiguous wrt. $(\nu_N)_{N \in \N}$.\\
Given two sequences of sets of probability measures $(\mathfrak{P}_N)_{N \in \N}$ and $(\mathcal{Q}_N)_{N \in \N}$ we say that $(\mathcal{Q}_N)_{N \in \N}$ is o-contiguous wrt. $(\mathfrak{P}_N)_{N \in \N}$ if for every sequence $\Q_N \in \mathcal{Q}_N$ there exists a sequence $\nu_N \in \mathfrak{P}_N$ such that $(\Q_N)_{N \in \N}$  is o-contiguous wrt. $(\nu_N)_{N \in \N}$.
\end{Defn}
Though \eqref{eq. as} is a necessary condition for asymptotic consistency of $\pi_{\Qc_N}(g)$ it is not sufficient as the following example shows:
\begin{Ex}
Let $\P=1/3(\delta_0+\delta_1+\delta_2)$, then $\sup_{\Q \sim \P, \ \Q \in \mathcal{M}} \E_{\Q}((1-r)^+)=1/2$. Note that for $\nu_N =1/3(1-1/N)(\delta_0+\delta_1+\delta_2)+\delta_N/N$ we have $\nu_N \Rightarrow \P$ and $\Q_N=(1-3/N) \delta_0+1/N(\delta_1+\delta_2+\delta_N) \sim \nu_N$. Furthermore $\Q_N$ is o-contiguous with respect to $\nu_N$. Nevertheless $\Q_N \Rightarrow \delta_0$ and thus $E_{\Q_N}((1-r)^+)>3/4$ for $N >12$, so asymptotic consistency is not satisfied.
\end{Ex}

The above example shows that even though $(\Q_N)_{N \in \N}$ is o-contiguous wrt. $(\nu_N)_{N \in \N}$, $\Q_N$ converges weakly to a measure $\Q$, which is not a martingale measure. A way to resolve this problem, is to demand uniform integrability of $(\Q_N)_{N \in \N}$ as was done in \cite{hubalek1998does} in a continuous-time setting. This allows to establish the following:
\begin{Lem}\label{Cor. boundsupp}
Assume NA$(\P)$ holds. Let $\mathfrak{P}_N$ be such that for every sequence $(\nu_N)_{N \in \N}$ with $\nu_N \in \mathfrak{P}_N$ for all $N \in \N$ we have $\nu_N \Rightarrow \P$. If $k_N \to \infty$ and
\begin{enumerate}
\item $\{\Q \sim \hat{\P}_N,\ \Q \in \mathcal{M}, \ \|d\Q/ d\P \|_{\infty}\le k_N \} \subseteq \Qc_N$ for large $N$,
\item  $\Qc_N$ is o-contiguous wrt. $\mathfrak{P}_N$,
\item $\mathcal{Q}_N$ is such that every sequence $(\Q_N)_{N \in \N}$ with $\Q_N \in \mathcal{Q}_N$ for all $N \in \N$ is uniformly integrable,
\end{enumerate}
then $\pi_{\Qc_N}(g)$ is asymptotically consistent for all bounded and continuous $g$.
\end{Lem}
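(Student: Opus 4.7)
The plan is to split the claim into $\liminf_N \pi_{\Qc_N}(g)\ge \pi^{\P}(g)$ and $\limsup_N \pi_{\Qc_N}(g)\le \pi^{\P}(g)$, both $\P^{\infty}$-almost surely. For the lower bound, hypothesis (1) gives, for $N$ large,
\[\pi_{\Qc_N}(g)\,\ge\,\sup\{\E_\Q[g]:\ \Q\sim\hat{\P}_N,\ \Q\in \Mc,\ \|d\Q/d\P\|_\infty\le k_N\}.\]
Since $k_N\to\infty$, a density-truncation argument in the spirit of Lemma \ref{lem. rasonyi} shows that this inner supremum differs from $\hat{\pi}_N(g)=\sup_{\Q\sim\hat{\P}_N,\Q\in\Mc}\E_\Q[g]$ by at most an $o(1)$ error. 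Combined with Theorem \ref{Thm. one-per}, this yields $\liminf_N \pi_{\Qc_N}(g)\ge \lim_N \hat{\pi}_N(g)=\pi^{\P}(g)$.

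For the upper bound I argue by contradiction: suppose $\limsup_N \pi_{\Qc_N}(g)>\pi^{\P}(g)$ on a set of positive $\P^{\infty}$-measure, pass to a subsequence (still indexed by $N$) and pick $\Q_N\in \Qc_N$ with $\E_{\Q_N}[g]\to c>\pi^{\P}(g)$. Hypothesis (3) ensures $(\Q_N)_{N\in \N}$ is uniformly integrable and hence tight, so, passing to a further subsequence, $\Q_N \Rightarrow \Q^*$ for some $\Q^*\in \prob(\R_+^d)$. Uniform integrability upgrades the convergence of first moments to $\E_{\Q^*}[r]=\lim_N \E_{\Q_N}[r]=1$, whence $\Q^*\in \Mc$; boundedness and continuity of $g$ further give $\E_{\Q^*}[g]=\lim_N \E_{\Q_N}[g]=c$.

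It remains to prove $\Q^*\ll \P$, as then the pricing--hedging duality $\pi^{\P}(g)=\inf\{x\in \R:\ \exists H\in \R^d,\ x+H(r-1)\ge g(r)\ \P\text{-a.s.}\}$, together with the openness of the exceptional set $\{x+H(r-1)<g(r)\}$ for continuous $g$, forces $\E_{\Q^*}[g]\le \pi^{\P}(g)$, contradicting $c>\pi^{\P}(g)$. By hypothesis (2) pick $\nu_N\in \mathfrak{P}_N$ with $(\Q_N)$ o-contiguous to $(\nu_N)$; by assumption on $\mathfrak{P}_N$ we have $\nu_N\Rightarrow \P$. If some open $U\subseteq \R_+^d$ satisfied $\P(U)=0$ and $\Q^*(U)>0$, inner regularity would yield a compact $K\subseteq U$ with $\Q^*(K)>0$; choose $\epsilon>0$ so small that $\overline{K^\epsilon}\subseteq U$ and set $O:=K^{\epsilon/2}$. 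Then $\overline{O^{\epsilon/2}}\subseteq \overline{K^\epsilon}\subseteq U$, so $\P(\overline{O^{\epsilon/2}})=0$; the upper Portmanteau inequality applied to $\nu_N\Rightarrow \P$ gives $\nu_N(O^{\epsilon/2})\to 0$; o-contiguity then yields $\Q_N(O)\to 0$; yet the lower Portmanteau inequality for $\Q_N\Rightarrow \Q^*$ and the open set $O$ forces $\liminf_N \Q_N(O)\ge \Q^*(O)\ge \Q^*(K)>0$, a contradiction.

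I expect the main obstacle to be this absolute continuity step, where the open-set, $\epsilon$-neighbourhood formulation of o-contiguity must be aligned with both the upper and lower Portmanteau inequalities so that $\P$-null open sets transfer to $\Q^*$-null sets through $(\nu_N)$ and $(\Q_N)$. A secondary technical wrinkle is the lower-bound approximation: the density cap in condition (1) is stated against $\P$ rather than against $\hat{\P}_N$, and one has to reconcile this with the density-truncation argument for the plugin dual, leveraging $k_N\to\infty$ together with the asymptotic inclusion ``for $N$ large''.
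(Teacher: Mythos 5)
The paper deliberately \emph{omits} a proof of this lemma (immediately after the statement it writes ``We omit the proof since relying on uniform integrability is not possible in general\dots''), so I can only assess your argument on its own merits. Your upper-bound argument is correct and careful: uniform integrability gives tightness and convergence of first moments, so the weak limit $\Q^*$ is a martingale measure; building $O=K^{\epsilon/2}$ with $\overline{O^{\epsilon/2}}\subseteq U$ so that the upper Portmanteau inequality forces $\nu_N(O^{\epsilon/2})\to 0$, o-contiguity forces $\Q_N(O)\to 0$, and the lower Portmanteau inequality produces the contradiction, correctly shows that open $\P$-null sets are $\Q^*$-null; and since for continuous $g$ the exceptional sets $\{x+H(r-1)<g(r)\}$ are open, this is exactly what is needed to conclude $\E_{\Q^*}[g]\le\pi^\P(g)$.

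The lower bound, however, has a genuine gap that you flag but do not close. As you note, condition (1) literally reads $\|d\Q/d\P\|_\infty\le k_N$ while also requiring $\Q\sim\hat{\P}_N$; these two requirements are simultaneously satisfiable only when $\P$ charges every $r_i$, so for atomless $\P$ the set on the left of (1) is empty and the inclusion is vacuous. Even under the intended reading $\|d\Q/d\hat{\P}_N\|_\infty\le k_N$ (as in \eqref{eq:nec_QN}), the assertion that a ``density-truncation argument in the spirit of Lemma \ref{lem. rasonyi}'' gives an $o(1)$ error does not follow: Lemma \ref{lem. rasonyi} is a statement about a \emph{fixed} measure, asserting only that the capped sup tends to $\hat\pi_N(g)$ as the cap $k\to\infty$ \emph{for each $N$ separately}, with no uniformity. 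Near-optimizers of the plugin problem are supported on at most $d+1$ of the $N$ atoms and thus have density of order $N$ against $\hat{\P}_N$, so whenever $k_N=o(N)$ they are excluded by the cap, and one must instead construct explicit competitors (e.g.\ take $\Q^*\sim\P$, $\Q^*\in\Mc$ with $\|d\Q^*/d\P\|_\infty\le C$ and $\E_{\Q^*}[g]\ge\pi^\P(g)-\epsilon$ via Lemma \ref{lem. rasonyi}, set $\tilde\Q_N(\{r_i\})\propto (d\Q^*/d\P)(r_i)$, use the SLLN to bound the density and pass $\E_{\tilde\Q_N}[g]\to\E_{\Q^*}[g]$, and then correct $\tilde\Q_N$ to be an exact martingale measure while controlling the perturbation). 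This construction-and-correction step is the real content of the lower bound and is missing; it is not a ``secondary technical wrinkle''.
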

We omit the proof since relying on uniform integrability is not possible in general: if $\text{supp}(\P)$ is unbounded (2) and (3) of Lemma \ref{Cor. boundsupp} cannot be fulfilled at the same time. The following example illustrates this:
\begin{Ex}
Consider $\P$ with $\text{supp}(\P)=\R_+$ and $g(r)=(1-r)^+$. Then $\lim_{N \to \infty}\Q_N(\{0\})=1$ holds for every sequence $(\Q_N)_{N \in \N}$ such that 
\begin{align*}
\lim_{N \to \infty} \E_{\Q_N}[(1-r)^+]= \sup_{\Q \in \mathcal{M}_{\R_+}} \E_{\Q}[(1-r)^+]=1.
\end{align*}
Obviously $\delta_0$ is not a martingale measure, so $(\Q_N)_{N \in \N}$ cannot be uniformly integrable.
\end{Ex}

Thus uniform integrability of $(\Q_N)_{N \in \N}$ is in general too strict a requirement. In order to resolve this problem we strengthen the assumption, that $\nu_N$ converge weakly to the true measure $\P$. Instead we look at convergence in Wasserstein distance, which is known to metrize the weak convergence (cf. \cite[Theorem 6.9, p. 96]{villani2008optimal}). 

\section{Convergence rates for the plugin estimator}
\label{sec:cnvrates}
We study now in more detail the convergence rates for the plugin estimator $\hat{\pi}_N$ in the one-dimensional case. In particular, we prove Theorem \ref{thm:cnv_rate_basic} and also state some extensions. 

Following the approach of Hampel, it would be natural to consider the \emph{influence curve} of the superhedging functional, which is simply its Gateaux derivative at $\P$ in the direction of $\delta_{r}$ and represents the marginal influence of an additional observation with value $r$ when the sample size goes to infinity. This however produces trivial results: the function $\epsilon \mapsto \pi^{(1-\epsilon)\P+\epsilon\delta_r}(g)$ is constant on $(0,1)$ so the influence curve at $(r,\P)$ will be equal to zero or infinity. This happens because $(1-\epsilon)\P+\epsilon\delta_r$ have the same support for all $\epsilon\in (0,1)$. Clearly, to assess sensitivity of $\pi^\P$ to changes in $\P$ we have to vary the support. To this end we consider
\begin{align*}
\pi^\P(g)-\inf_{A\subseteq\R^d_+: \P(A)\geq 1-\epsilon} \pi^{\P|_A}(g)
\end{align*}
as $\epsilon\to 0$ and study its natural normalisation. The following examples shows that $\epsilon$ is not the correct normalisation.

\begin{Ex}\label{Ex. 2}
Take again $g(r)=|r-1|\wedge 1$ and note that instead of considering $\P=\lambda_{[0,2]}/2$ we can thin out the tails of $\P$ by setting 
\begin{align*}
\frac{d\P^n}{d\P}\bigg|_{[0,1]}=\frac{r^n}{n+1} \hspace{0.5cm}\text{and}\hspace{0.5cm} \frac{d\P^n}{d\P}\bigg|_{[1,2]}=\frac{(2-r)^n}{n+1}
\end{align*}
Naturally, $\pi^{\P}=\pi^{\P_n}$ but with increasing $n$ the probability mass is less well spread over the support of $\P^n$. We calculate for $n\ge 2$ that $F_{\P^n}(r)=r^{n+1}/2$ for $r \le 1$ and thus $F_{\P^n}^{-1}(p)=\sqrt[n+1]{2p}$ for $p<1/2$. This readily implies
\begin{align*}
\pi^\P(g)-\inf_{A\subseteq\R^d_+: \P(A)\geq 1-\epsilon} \pi^{\P|_A}(g) =1-g(\sqrt[n+1]{\epsilon})=\sqrt[n+1]{\epsilon}.
\end{align*}
\end{Ex}
The above examples motivates using quantile functions for normalisation as stated in Theorem \ref{thm:cnv_rate_basic} which we now prove.
\begin{proof}[Proof of Theorem \ref{thm:cnv_rate_basic}]
As we have noted before $\hat{\pi}_N(g)\leq \pi^{\P}(g)$ and $\hat{\pi}_N(g)$ is non-decreasing in $N$. Let us first consider $\P$ with bounded support. It suffices to show that
\begin{align*}
\sup_{\Q \sim \P, \ \Q \in \mathcal{M}}\E_{\Q}[g] \le \sup_{\Q \sim \hat{\P}_N, \ \Q \in \mathcal{M}}\E_{\Q_N}[g]+\mathcal{O}(\delta(\kappa^N)).
\end{align*}
Without loss of generality we assume $r_1 \le r_2 \le \dots \le r_N$ for the rest of the proof. By the definition of $d_N$ 
\begin{align*}
F^{-1}_{\P}((p-d_N) \vee 0+)\le F^{-1}_{\hat{\P}_N}(p)\le F^{-1}_{\P}(p+d_N)
\end{align*} 
holds for all $p\in [0,1]$. Thus we note that for all $i=2, \dots ,N$
\begin{align*}
r_{i}-r_{i-1}&=F^{-1}_{\hat{\P}_N}\left(\frac{i}{N}\right)-F^{-1}_{\hat{\P}_N}\left(\frac{i-1}{N}\right) \\
&\le \sup_{k=1, \dots \lfloor 1/(3d_N)\rfloor} F^{-1}_{\P}(3kd_N)-F^{-1}_{\P}(3(k-1)d_N \vee 0+)\le\kappa_N.
\end{align*}
Next we remark that by definition of $\pi^{\P}(g)$ and Proposition \ref{prop:d_H cont} there exists $C>0$ such that $$ \pi^{\P}(g)-\pi^{\P(\cdot|[r_1, r_N])}(g) \le C\delta(F^{-1}_{\P}(d_N)- F^{-1}_{\P}(0+))+C\delta(F^{-1}_{\P}(1)-F^{-1}_{\P}(1-d_N)\rfloor)).$$ 
Take $\Q \sim \P(\cdot|[r_1, r_N])$, $\Q \in \mathcal{M}$. We want to apply a simple Balayage construction to redistribute the mass of $\Q$ on the support of $\hat{\P}_N$. Thus we set
\begin{align*}
\Q_N(\{r_1\})&=\int_{[r_1,r_2)} \frac{r-r_2}{r_1-r_2}\Q(dr),\\
\Q_N(\{r_i\})&= \int_{[r_{i-1},r_i)} \frac{r-r_{i-1}}{r_i-r_{i-1}} \Q(dr)+\int_{[r_i,r_{i+1})} \left(1-\frac{r-r_i}{r_{i+1}-r_i} \right) \Q(dr)
\\&\hspace{7cm}\text{for }i=2, \dots, N-1.\\
\Q_N(\{r_N\})&= \int_{[r_{N-1},r_N]} \frac{r-r_N}{r_{N-1}-r_N}.
\end{align*}
A straightforward calculation shows $\Q_N(\{r_1, \dots, r_N\})=1$ and $E_{\Q_N}[r]=1$. We have
\begin{align}\label{eq. conv}
\nonumber &\left|\int_{r_1}^{r_N}g d\Q_N-\int_{r_1}^{r_N}gd\Q\right| \\
& = \sum_{k=2}^{N} \int_{r_{k-1}}^{r_k} \left|\frac{g(r_k)(r-r_{k-1})-g(r_{k-1})(r-r_k)}{r_{k}-r_{k-1}}-g(r)\right|\Q(dr)\nonumber\\
& \le \delta(\kappa^N).
\end{align}
For $\P \in \mathcal{P}(\R_+)$ with unbounded support and $g$ bounded by $D>0$ we note that
\begin{align*}
\pi^{\P}(g)-\pi^{\P(\cdot|[0,r_N])}(g) \le \frac{2D}{r_N} \le \frac{2D}{F^{-1}_{\P}(1-d_N)}.
\end{align*}
This concludes the proof.
\end{proof}
In order to improve upon, and further specify the results in Theorem \ref{thm:cnv_rate_basic}, we recall Lemma \ref{lem. rasonyi} and make the following easy observation:
\begin{Lem}
Let $C >0$ and $\P \in \mathcal{P}(\R_+^d)$. Then the set $\{\Q \in \mathcal{M} \ | \ \|d\Q / d\P \| \le C \}$ is weakly compact.
\end{Lem}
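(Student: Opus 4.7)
The plan is to invoke Prokhorov's theorem by establishing two things: (a) the family is uniformly tight, and (b) the family is weakly closed. Both follow fairly directly from the bound $\|d\Q/d\P\|_\infty \le C$, together with the standard machinery of weak convergence of measures.

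For tightness, the argument is immediate. Since $\P$ is a Borel probability measure on $\R_+^d$ it is tight: given $\epsilon>0$ there exists a compact $K\subseteq \R_+^d$ with $\P(\R_+^d\setminus K)<\epsilon/C$. For any $\Q$ in the set we have $\Q(A)\le C\,\P(A)$ for every Borel set $A$, so $\Q(\R_+^d\setminus K)\le C\,\P(\R_+^d\setminus K)<\epsilon$ uniformly. By Prokhorov's theorem the set is relatively weakly compact.

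For closedness, take a sequence $\Q_n$ in the set with $\Q_n\Rightarrow\Q$. First, $\Q\in\prob(\R_+^d)$ because tightness prevents mass escape. Second, to check $\|d\Q/d\P\|_\infty\le C$, I apply Portmanteau: for any open $U\subseteq\R_+^d$,
\begin{align*}
\Q(U)\le \liminf_{n\to\infty}\Q_n(U)\le C\,\P(U),
\end{align*}
and outer regularity of Borel measures on $\R_+^d$ then extends $\Q(A)\le C\,\P(A)$ to all Borel sets $A$; in particular $\Q\ll \P$ with density bounded $\P$-a.s.\ by $C$. Third, and this is the delicate step, I need $\E_\Q[r]=1$, i.e.\ the martingale condition is preserved in the limit. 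For each coordinate $r^i$ the map is continuous and non-negative, so Fatou in the form for weakly convergent measures gives $\E_\Q[r^i]\le \liminf_n \E_{\Q_n}[r^i]=1$. For the reverse inequality I use uniform integrability of $r$ under $(\Q_n)$: since the densities $d\Q_n/d\P$ are uniformly bounded by $C$, for any $M>0$,
\begin{align*}
\E_{\Q_n}[\,r^i\mathds{1}_{\{r^i>M\}}]\le C\,\E_\P[\,r^i\mathds{1}_{\{r^i>M\}}],
\end{align*}
which tends to $0$ as $M\to\infty$ uniformly in $n$ provided $r\in L^1(\P)$ (which is the standing integrability assumption here — indeed the existence of a single $\Q\in\mathcal{M}$ with bounded density with respect to $\P$ already forces this on $\mathrm{supp}(\P)$). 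Uniform integrability together with $\Q_n\Rightarrow \Q$ yields $\E_\Q[r^i]=\lim_n \E_{\Q_n}[r^i]=1$, so $\Q\in\mathcal{M}$.

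The main obstacle is the third bullet: the martingale condition $\E_\Q[r]=1$ is not a weakly closed property in general because $r$ is unbounded, so one genuinely needs the $L^\infty$-bound on the densities to transfer uniform integrability from $\P$ to $(\Q_n)$. Once this is in hand, Prokhorov's theorem closes the argument.
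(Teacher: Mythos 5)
Your proof is correct and follows essentially the same route as the paper: the crux is the observation that the uniform bound $\Q\le C\P$ transfers uniform integrability of $|r|$ from $\P$ to the whole family, which gives both tightness and preservation of the martingale constraint along weak limits; you simply spell out the Prokhorov, Portmanteau, and outer-regularity steps that the paper leaves implicit. One small slip in your aside: the existence of a single $\Q\in\mathcal{M}$ with $\|d\Q/d\P\|_\infty\le C$ does \emph{not} force $r\in L^1(\P)$ --- the domination $\Q\le C\P$ only yields $\E_\Q[|r|]\le C\,\E_\P[|r|]$, which is the wrong direction and says nothing about $\E_\P[|r|]$ being finite. Rather, $\E_\P[|r|]<\infty$ is a standing hypothesis in the paper's framework (it is needed, e.g., for the Wasserstein constructions throughout), and the paper's own one-line proof tacitly relies on it in the same way yours does.
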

\begin{proof}
As
\begin{align*}
\limsup_{K \to \infty}\sup_{n \in \N}\E_{\Q_n}[|r| \mathds{1}_{\{|r|\ge K\}}] \le \limsup_{K \to \infty} C \E_{\P}[|r| \mathds{1}_{\{|r|\ge K\}}]=0,
\end{align*}
the claim follows.
\end{proof}
\begin{Cor}\label{Thm. conv. speed ext}
Let $\P \in \mathcal{P}(\R_+)$ have bounded support and let $g:\text{supp}(\P) \to \R$ be continuous such that $|g(r)-g(\tilde{r})| \le \delta(|r-\tilde{r}|)$ for some monotone $\delta: \R_+ \to \R_+$ with $\delta(r) \to 0$ for $r \to 0$. If \eqref{eq:bounded_mart_dens} holds then 
\begin{align*}
\sup_{\Q \sim \P, \ \Q \in \mathcal{M}} \E_{\Q}[g]- \sup_{\Q \sim \hat{\P}_N, \ \Q \in \mathcal{M}} \E_{\Q}[g] = \mathcal{O}(d_N^{1/2}+\delta(d_N^{1/2})) \hspace{0.5cm}\P^{\infty}\text{-a.s.}
\end{align*}
\end{Cor}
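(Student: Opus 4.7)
The plan is to refine the Balayage construction from the proof of Theorem \ref{thm:cnv_rate_basic}, exploiting the bounded-density condition \eqref{eq:bounded_mart_dens}. By hypothesis, there exists $C>0$ such that $\sup_{\Q\sim\P,\Q\in\mathcal{M}}\E_\Q[g]$ is attained in the limit by martingale measures $\Q$ with $\|d\Q/d\P\|_\infty\le C$. For such a $\Q$, I would Balayage onto the empirical support $\{r_1,\dots,r_N\}$ (sorted in increasing order) using the piecewise linear construction from the proof of Theorem \ref{thm:cnv_rate_basic}, which preserves both mass and mean on each interval $[r_{k-1},r_k)$. The tail mass $\Q([a,r_1)\cup(r_N,b])$ is bounded by $2Cd_N$ via the Kolmogorov--Smirnov estimate $\|F_\P-F_{\hat\P_N}\|_\infty\le d_N$ together with $\|d\Q/d\P\|_\infty\le C$; this tail mass is sent onto $r_1$ and $r_N$, followed by a small mean-correcting transfer between $r_1$ and $r_N$, which is feasible for large $N$ because NA$(\P)$ and the KS bound guarantee $r_1<1<r_N$ with a uniform separation eventually. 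The resulting $\Q_N\in\mathcal{M}$ is supported on the samples.

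The key new estimate is on the Balayage error
\begin{align*}
\left|\E_\Q[g]-\E_{\Q_N}[g]\right|\le \sum_{k=2}^N \Q([r_{k-1},r_k))\,\delta(r_k-r_{k-1}) + \mathcal{O}(d_N),
\end{align*}
the last term covering the tail/mean-correction. I would split the sum by gap size: let $S=\{k : r_k-r_{k-1}\le \sqrt{d_N}\}$ and $L$ its complement. The contribution from $S$ is at most $\delta(\sqrt{d_N})\sum_k \Q([r_{k-1},r_k))\le \delta(\sqrt{d_N})$, using monotonicity of $\delta$. For $k\in L$, the density bound combined with the KS estimate yields $\Q([r_{k-1},r_k))\le C\,\P([r_{k-1},r_k))\le 2Cd_N$, and since the large gaps are disjoint subintervals of $\mathrm{supp}(\P)\subseteq [a,b]$, we have $|L|\le (b-a)/\sqrt{d_N}$. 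Hence the $L$-contribution is bounded by $|L|\cdot 2Cd_N\cdot \delta(b-a) = \mathcal{O}(\sqrt{d_N})$, where $\delta(b-a)<\infty$ because $g$ is continuous on the compact set $\mathrm{supp}(\P)$ and hence uniformly continuous with finite oscillation.

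Combining both contributions gives $\pi^\P(g)-\hat\pi_N(g)=\mathcal{O}(\sqrt{d_N}+\delta(\sqrt{d_N}))$, while the reverse inequality $\hat\pi_N(g)\le \pi^\P(g)$ is immediate from Theorem \ref{Thm. one-per}. The $\P^\infty$-almost sure statement then follows from Lemma \ref{lem dkw}, which ensures $d_N\to 0$ $\P^\infty$-a.s.

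The main obstacle will be the joint mean-and-mass preserving construction at the boundary. The piecewise linear Balayage within gaps preserves both exactly, but the tail redistribution onto $r_1$ and $r_N$ shifts the mean by $\mathcal{O}(d_N)$; the compensating transfer between these two atoms requires enough ``room'' (i.e.\ sufficient pre-existing mass at the atom from which we subtract) as well as the strict inequality $r_1<1<r_N$. Both are guaranteed for large $N$ by NA$(\P)$ and the density bound — the latter giving uniform lower bounds on $\Q_N^{(0)}(\{r_1\})$ and $\Q_N^{(0)}(\{r_N\})$ since the density bound forces $\Q$ to place nontrivial mass on sets of positive $\P$-measure near $1$. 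The resulting correction is of order $\mathcal{O}(d_N)$, absorbed into the $\sqrt{d_N}$ term.
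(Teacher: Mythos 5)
Your proposal follows essentially the same route as the paper's proof: restrict to martingale measures with $\|d\Q/d\P\|_\infty\le C$ using \eqref{eq:bounded_mart_dens}, apply the Balayage construction with the tail mass redirected onto $r_1$ and $r_N$, and then split the gap sum into gaps of size below and above $\sqrt{d_N}$, bounding the large-gap contribution by counting (at most $\mathcal{O}(1/\sqrt{d_N})$ of them, each with $\Q$-mass $\mathcal{O}(d_N)$) and the small-gap contribution by $\delta(\sqrt{d_N})$. The one place the two differ is how the mean defect $|\E_{\Q_N}[r]-1|\le Kd_N$ is absorbed: you perform an explicit mean-correcting transfer between the endpoint atoms, whereas the paper simply compares $\hat\pi_N(g)$ against the supremum over approximate martingale measures and adds a $2\delta(Kd_N)$ correction. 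Be careful that your stated justification for the transfer being feasible is off: $\|d\Q/d\P\|_\infty\le C$ is an \emph{upper} bound, so it does not force $\Q_N$ to put any mass at $r_1$ or $r_N$; what you actually need (and what does hold) is that since $\E_{\Q_N}[r]\approx 1$, some sample points on each side of $1$ carry $\Q_N$-mass, and you can shift between those.
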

\begin{proof}
Note that by assumption it is sufficient to consider martingale measures $\Q \sim \P$ such that $\|d\Q/d\P\|_{\infty}\le C$ for some $C>0$. Similarly to the proof of Theorem \ref{thm:cnv_rate_basic} we set
\begin{align*}
\Q_N(\{r_1\})&=\int_{[r_1,r_2)} \frac{r-r_2}{r_1-r_2}\Q(dr)+\Q([0,r_1)),\\
\Q_N(\{r_i\})&= \int_{[r_{i-1},r_i)} \frac{r-r_{i-1}}{r_i-r_{i-1}} \Q(dr)+\int_{[r_i,r_{i+1})} \left(1-\frac{r-r_i}{r_{i+1}-r_i} \right) \Q(dr)
\\ &\hspace{7cm}\text{for }i=2, \dots, N-1.\\
\Q_N(\{r_N\})&= \int_{[r_{N-1},r_N]} \frac{r-r_N}{r_{N-1}-r_N}+\Q((r_N, \infty)).
\end{align*}
and note that $\Q_N(\{r_1, \dots, r_N\})=1$ and $$|E_{\Q_N}[r]-1|=|E_{\Q_N}[r]-\E_{\Q}[r]|=\int_0^{r_1}\left|r_1-r\right| d\Q(r)+\int_{r_n}^{\infty}|r_n-r|d\Q(r)\le Kd_N$$ for some $K>0$. Also
\begin{align*}
\pi^{\P}(g)- \hat{\pi}_N(g) \le \sup_{\Q \sim \P, \ \Q \in \mathcal{M}} \E_{\Q}[g]-\sup_{\Q \sim \hat{\P}_N, \ |\E_{\Q}(r-1)|\le Kd_N}\E_{\Q}[g]+2\delta(Kd_N)
\end{align*}
We further assume that $g$ is bounded by $D$. Then \eqref{eq. conv} becomes
\begin{align*}
\nonumber\left|\int_{r_1}^{r_N}g d\Q_N-\int_{r_1}^{r_N}gd\Q\right| &=  
\sum_{k=2}^{N} \int_{r_{k-1}}^{r_k} \left|\frac{g(r_k)(r-r_{k-1})-g(r_{k-1})(r-r_k)}{r_{k}-r_{k-1}}-g(r)\right|\Q(dr)\nonumber\\
&\le \delta(d_N^{1/2})+2CD\left|\left\{k\in \{1, \dots, \lfloor 1/(3d_N)\rfloor\} \ \big| \ |\kappa_k^N| \ge d_N^{1/2}\right\}\right|3d_N \\
&\le \delta(d_N^{1/2})+\frac{6CDF^{-1}_{\P}(1)d_N}{d_N^{1/2}}=\mathcal{O}(d_N^{1/2}+\delta(d_N^{1/2})).
\end{align*}
This concludes the proof.
\end{proof}

\begin{Rem} 
We note that the above asymptotic result can be used to set up a utility based hedging problem to approximate $\pi^\P$. If we let 
\begin{align*}
\alpha_N:= U\left(C(d_N^{1/2}+\delta(d_N^{1/2}))\right)
\end{align*}
for a concave and strictly increasing $U$ and some $C \in \R_+$ then 
$$\pi^\P(g)\approx \sup_{\Q \sim \hat{\P}_N, \ \Q \in \mathcal{M}} \E_{\Q}[g] + U^{-1}(\alpha_N)$$
is the value of the utility based hedging problem under $\hat{\P}_N$
\begin{align*}
\inf \{x \in \R \ | \ \exists H \in \R \text{ s.t. } U(x+H(r_1)-g(r)) \ge \alpha_N \text{ for } r=r_1, \dots, r_N \}.
\end{align*}
\end{Rem}
Let us now state a generalisation of Lemma \ref{lem dkw} to Markov chains, where we recall Definition \ref{def:markov} and notation from Section \ref{sec:plugin_cnvrate}:
\begin{Lem}[cf. {\cite[Theorem 11.24, p.228]{kosorok2008introduction}}]
Assume that $r_1, r_2, \dots$ are realisations of a stationary $\beta$-mixing Markov chain with exponential decay and invariant measure $\P$ as its initial distribution. Then, as $N \to \infty$
\begin{align*}
\sqrt{N} \sup_{x\in \R_+} \left|F_{\hat{\P}_N}(x)-F_{\P}(x)\right| \Rightarrow \sup_{x\in \R_+} |G(F_{\P}(x))|,
\end{align*}
where $G$ is a standard Brownian bridge on $[0,1]$.
\end{Lem}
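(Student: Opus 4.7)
The plan is to deduce this uniform central limit theorem for the empirical distribution function from a Donsker-type theorem for $\beta$-mixing sequences, followed by an application of the continuous mapping theorem. Since the paper already cites Theorem 11.24 in \cite{kosorok2008introduction}, the work is essentially identification of the right abstract result and its specialisation to the class of half-line indicators.

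First, I would set up the empirical process framework. Let $\mathcal{F}=\{f_x : x\in \R_+\}$, with $f_x(r)=\mathds{1}_{\{r\le x\}}$, so that $F_{\hat{\P}_N}(x)-F_{\P}(x)=(\hat{\P}_N-\P)(f_x)$. One then views the rescaled empirical process
\begin{equation*}
\mathbb{G}_N:=\sqrt{N}\bigl(\hat{\P}_N-\P\bigr)
\end{equation*}
as a random element of $\ell^{\infty}(\mathcal{F})$. The task is to show that $\mathbb{G}_N\Rightarrow \mathbb{G}$ in $\ell^{\infty}(\mathcal{F})$, where $\mathbb{G}$ is a tight mean-zero Gaussian process whose covariance matches that of $G\circ F_{\P}$ on the class $\mathcal{F}$, namely $\mathrm{Cov}(\mathbb{G}(f_x),\mathbb{G}(f_y))=F_{\P}(x\wedge y)-F_{\P}(x)F_{\P}(y)$.

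Second, I would verify the hypotheses of the abstract Donsker theorem for $\beta$-mixing stationary sequences: the class $\mathcal{F}$ is uniformly bounded ($0\le f_x\le 1$), totally bounded under the intrinsic $L^2(\P)$ pseudo-metric, and satisfies a bracketing or uniform entropy condition of the kind required by \cite[Thm.~11.24]{kosorok2008introduction}. For the half-line class, $L^2(\P)$-brackets are built in the classical way by choosing quantiles $0=q_0<q_1<\dots<q_k=\infty$ of $F_{\P}$ at spacing $\varepsilon^2$ and using the pairs $(\mathds{1}_{(-\infty,q_{i-1}]},\mathds{1}_{(-\infty,q_i]})$; this yields $O(1/\varepsilon^2)$ brackets, whose logarithm is integrable and thus compatible with any polynomial mixing condition, in particular with the exponential $\beta$-mixing assumed here. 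The exponential $\beta$-mixing of the chain is what replaces independence in bounding $\mathrm{Var}(\mathbb{G}_N(f))$ and higher-order chaining moments; see the covariance inequalities at the basis of \cite[Ch.~11]{kosorok2008introduction}.

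Third, I would identify the limit. Under stationarity with invariant law $\P$, the covariance functional of $\mathbb{G}$ is obtained as the summed one-step covariances $\sum_{k\in\Z}\mathrm{Cov}_{\P}(f_x(r_0),f_y(r_k))$; for the half-line indicators one checks directly that, after reparameterisation by $u=F_{\P}(x)$, the resulting process has the same finite-dimensional distributions as the standard Brownian bridge $G$ composed with $F_{\P}$. (The cross-terms $\sum_{k\neq 0}$ vanish for this class because they collapse to martingale-difference sums under the exponential mixing; this is precisely the structure that Kosorok's statement records.)

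Finally, I would close by applying the continuous mapping theorem: the map $h\mapsto \sup_{x\in\R_+}|h(x)|$ is continuous from $\ell^{\infty}(\mathcal{F})$ to $\R$, so
\begin{equation*}
\sqrt{N}\sup_{x\in\R_+}|F_{\hat{\P}_N}(x)-F_{\P}(x)|=\|\mathbb{G}_N\|_{\mathcal{F}}\Rightarrow \|\mathbb{G}\|_{\mathcal{F}}=\sup_{x\in\R_+}|G(F_{\P}(x))|,
\end{equation*}
which is the desired conclusion. The only substantive obstacle is the Donsker step itself, i.e.\ verifying asymptotic equicontinuity of $\mathbb{G}_N$ over $\mathcal{F}$ under exponential $\beta$-mixing; everything else is routine. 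Since \cite[Thm.~11.24]{kosorok2008introduction} packages exactly this equicontinuity together with finite-dimensional convergence, the present lemma follows as its direct specialisation to half-line indicators plus a one-line continuous mapping argument.
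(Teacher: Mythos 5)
Your approach is the same as the paper's: specialise the $\beta$-mixing Donsker theorem \cite[Thm.~11.24]{kosorok2008introduction} to the class $\mathcal{F}=\{\mathds{1}_{(-\infty,x]}: x\in\R_+\}$ and finish by continuous mapping. Two details differ, one minor and one substantive. The minor one: the cited theorem requires an $L_p(\P)$-bracketing integral for some $p>2$ paired with the mixing summability $\sum_k k^{2/(p-2)}\beta_k<\infty$; the paper takes $p=4$, so that exponential decay gives $\sum_k k\rho^k<\infty$, whereas your $L^2(\P)$-brackets do not directly match that hypothesis (the same quantile construction with spacing $\varepsilon^p$ works in $L_p$ for any $p$, so this is a one-line fix). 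The substantive one: your parenthetical claim that the cross-covariance terms $\sum_{k\neq 0}\mathrm{Cov}(f_x(r_0),f_y(r_k))$ ``vanish\ldots because they collapse to martingale-difference sums under exponential mixing'' is not correct. For a general $\beta$-mixing stationary chain the limiting Gaussian process carries the long-run covariance $\sum_{k\in\Z}\mathrm{Cov}_\P(f_x(r_0),f_y(r_k))$, which does not reduce to the marginal covariance $F_\P(x\wedge y)-F_\P(x)F_\P(y)$; there is no martingale-difference cancellation here. The paper does not attempt this identification argument at all---it reads the conclusion straight off the cited theorem---so your added step is both unnecessary for the proof and unjustified as written.
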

\begin{proof}
Setting $\mathcal{F}=\{\mathds{1}_{(-\infty, x]}\  :\  x\in \R_+\}$, we can choose $p=4$  in \cite[Theorem 11.24, p.228]{kosorok2008introduction} and immediately check that for $\P$ with geometric mixing rate $\rho$ we have
\begin{align*}
\sum_{k=1}^\infty k \rho^k <\infty 
\end{align*}
as well as $J_{[]}(\infty, \mathcal{F},L_4(\P))<\infty$, where $J_{[]}(\infty, \mathcal{F},L_4(\P))$ denotes the bracketing integral for $\mathcal{F}$ and $p=4$ (see e.g. \cite[p. 17]{kosorok2008introduction}). This concludes the proof.
\end{proof}
For examples of stationary $\beta$-mixing Markov chains with exponential decay we refer to Lemma \ref{lem:ergodicity} and Corollary \ref{cor:ergodicity}.
The above lemma can be used to obtain asymptotic confidence bounds for Corollary \ref{Thm. conv. speed ext}. More precisely, for $N\to \infty$, we obtain
\begin{align*}
\P^{\infty}(|\pi^\P(g)-\hat{\pi}_N(g)|\ge \epsilon)&\le \P^\infty (d_N^{1/2}+\delta(d_N^{1/2} )\ge \epsilon/C)\\
&\le \P^\infty (d_N \ge f(\epsilon/C))\\
&\lesssim \P^\infty\left(\sup_{x\in \R_+} |G(F_{\P}(x))|\ge f(\epsilon/C)\sqrt{N}\right)
\end{align*}
for some constant $C>0$ and $f:\R_+\to \R_+$ being the square of the inverse of $x\to x+\delta(x)$. Similarly, under some regularity  assumptions on $F_{\P}^{-1}$, an analogous but non-asymptotic estimate can be given for Theorem \ref{thm:cnv_rate_basic} using Lemma \ref{lem dkw}. 

To close this section we consider the convergence rate in some cases when Theorem \ref{thm:cnv_rate_basic} does not apply. Note that Theorem \ref{thm:cnv_rate_basic} applies for a continuous $g$ whenever $\P$ has bounded support or if there exists $K>0$ such that 
\begin{align}\label{eq:cnvrate_restrict}
\sup_{\Q \sim \P|_{[0,K]}, \ \Q \in \mathcal{M}} \E_{\Q}[g]= \sup_{\Q \sim \P, \ \Q \in \mathcal{M}} \E_{\Q}[g].
\end{align}
Suppose now that no such $K$ exists. Clearly if $g(r)/r\to \infty$ as $r\to \infty$ then $\pi^\P(g)=\infty$ so consider $g$ with linear growth: $g(r)/r\to c\in \R$ as $r\to \infty$. As $\P$ necessarily has unbounded support we can take a sequence $K_n\to \infty$ and, by the above condition, some $(H_n)_{n \in \N}$ such that 
$$\pi^{\P(\cdot|[0,K_n])}(g)+H_n(K_n-1)=g(K_n).$$ 
As $\pi^{\P(\cdot|[0,K_n])}(g) \to \pi^{\P}(g)$ we conclude $H_n \uparrow c$. 
Thus $\pi^{\P}(g)=\max_{\lambda \in [0,1]\cap \text{supp}(\P)}(g(\lambda)-c(\lambda-1))=g(\tilde{r})-c(\tilde{r}-1)$ for some $\tilde{r}\in [0,1]\cap \text{supp}(\P)$. In particular
$$ \pi^{\P(\cdot|[0,K_n])}(g) \ge \frac{K_n-1}{K_n-\tilde{r}}g(\tilde{r})+\frac{1-\tilde{r}}{K_n-\tilde{r}}g(K_n).$$ 
Clearly $\hat{\pi}_N(g)\leq \pi^{\P(\cdot|[0,r_n])}(g)$ so the above, using that $g$ is bounded on $[0,1]$, implies that the convergence rate is at most of the order of 
\begin{equation}\label{eq:cnv_rate_bespoke}
\frac{1}{\max_{i=1, \dots, N} r_i}+\left(c-\frac{g(\max_{i=1, \dots, N} r_i)}{\max_{i=1, \dots, N} r_i}\right)
\end{equation}
but could be slower. Typically, e.g., if $\P$ has a density bounded from below in the neighbourhood of $\tilde{r}$, the tails of the distribution $\P$ are the decisive feature for the convergence rate for $\hat{\pi}_N$ and \eqref{eq:cnv_rate_bespoke} holds. We discuss this in more detail in the examples below. 
\begin{figure}[h!]
  \centering
  \captionsetup{width=.6\linewidth}
  \begin{minipage}[b]{0.45\textwidth}
    \includegraphics[width=\textwidth]{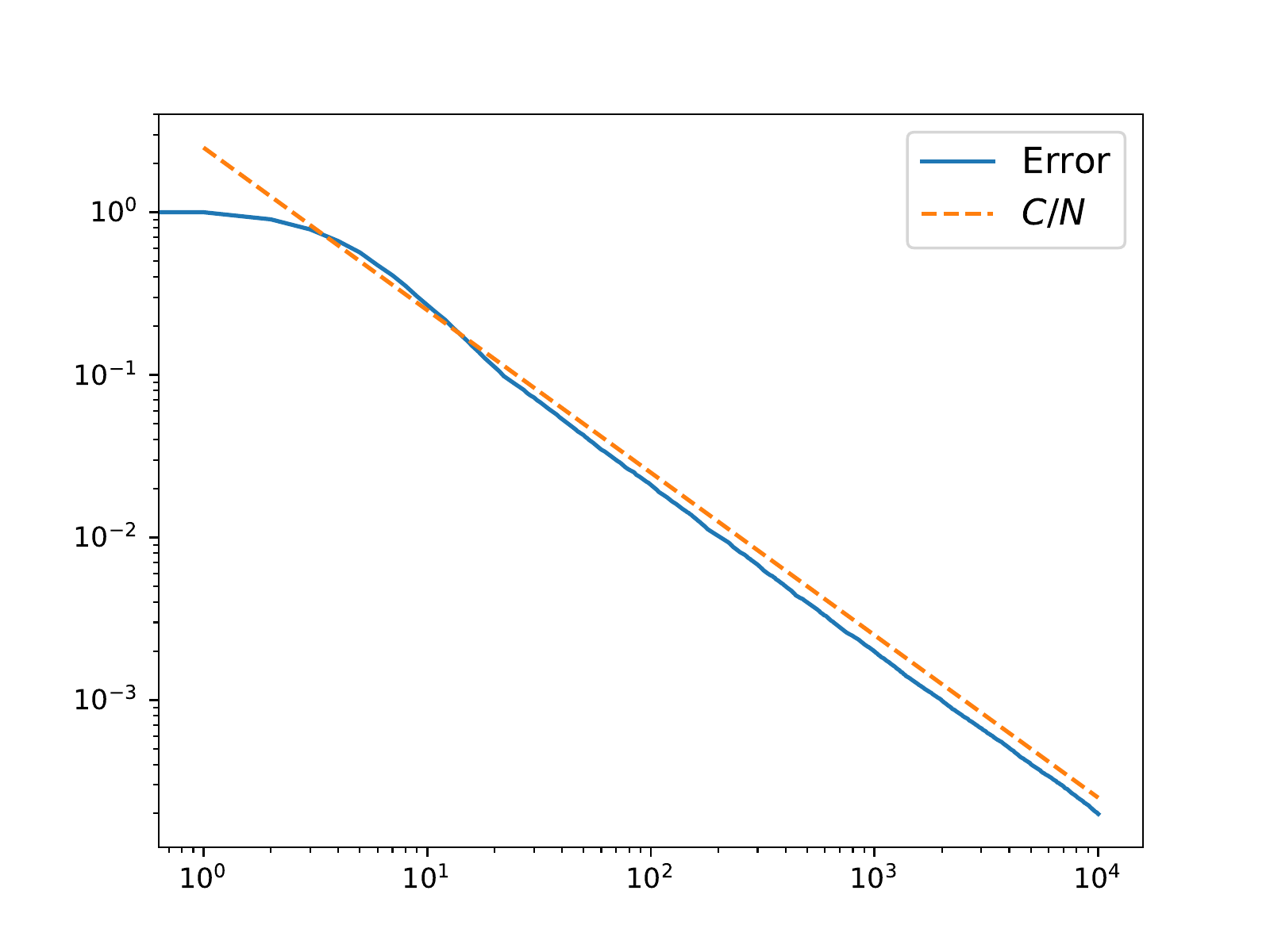}
    \caption{$g(r)=|r-1|$, $\P=\lambda|_{[0,2]}/2$}
    \label{fig. min1}
  \end{minipage}
  \hfill
  \begin{minipage}[b]{0.45\textwidth}
    \includegraphics[width=\textwidth]{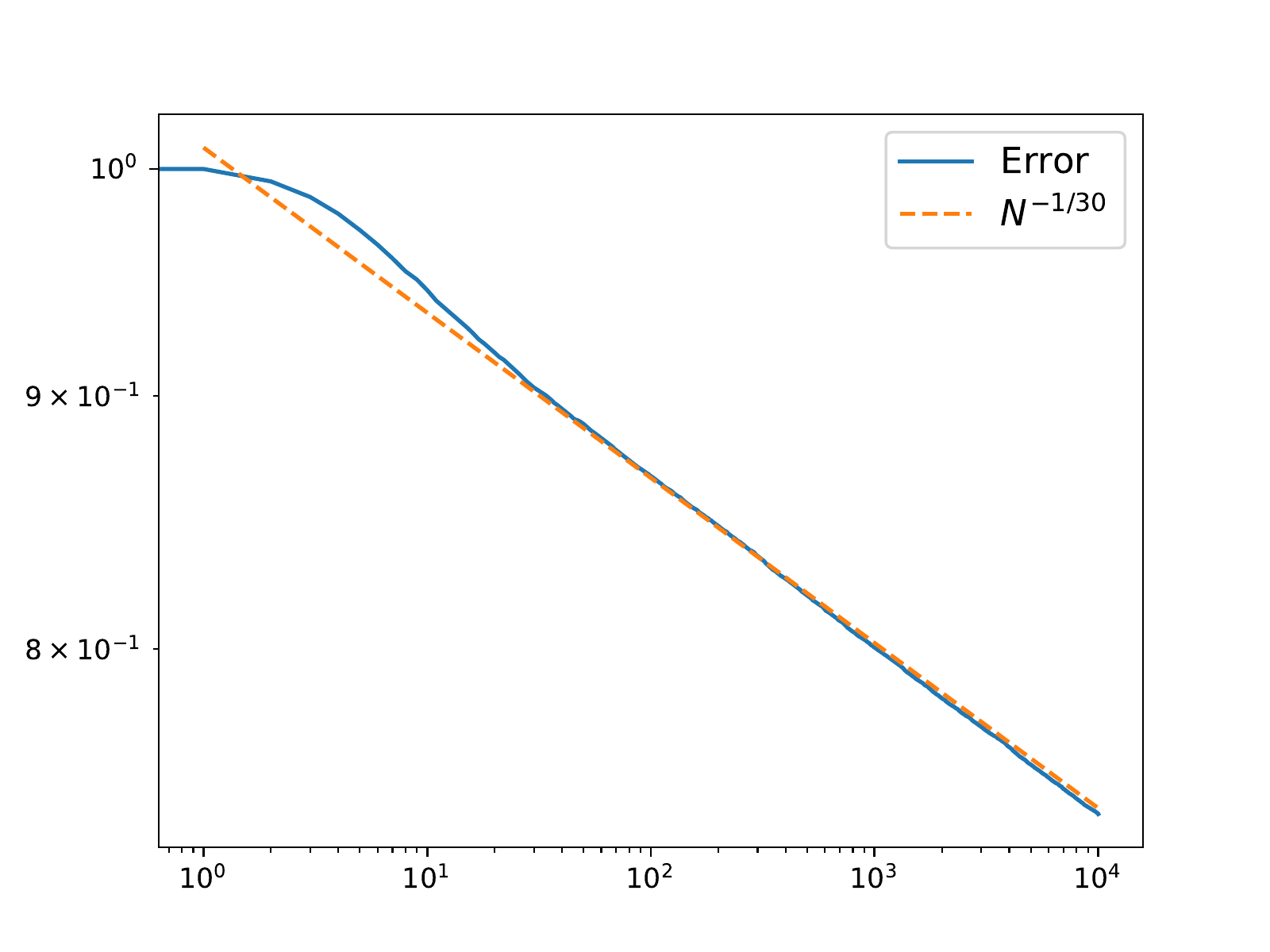}
    \caption{$g(r)=|r-1|$, $\P=\P^{29}$}
    \label{fig. min2}
  \end{minipage}
  \hfill
  \begin{minipage}[b]{0.45\textwidth}
    \includegraphics[width=\textwidth]{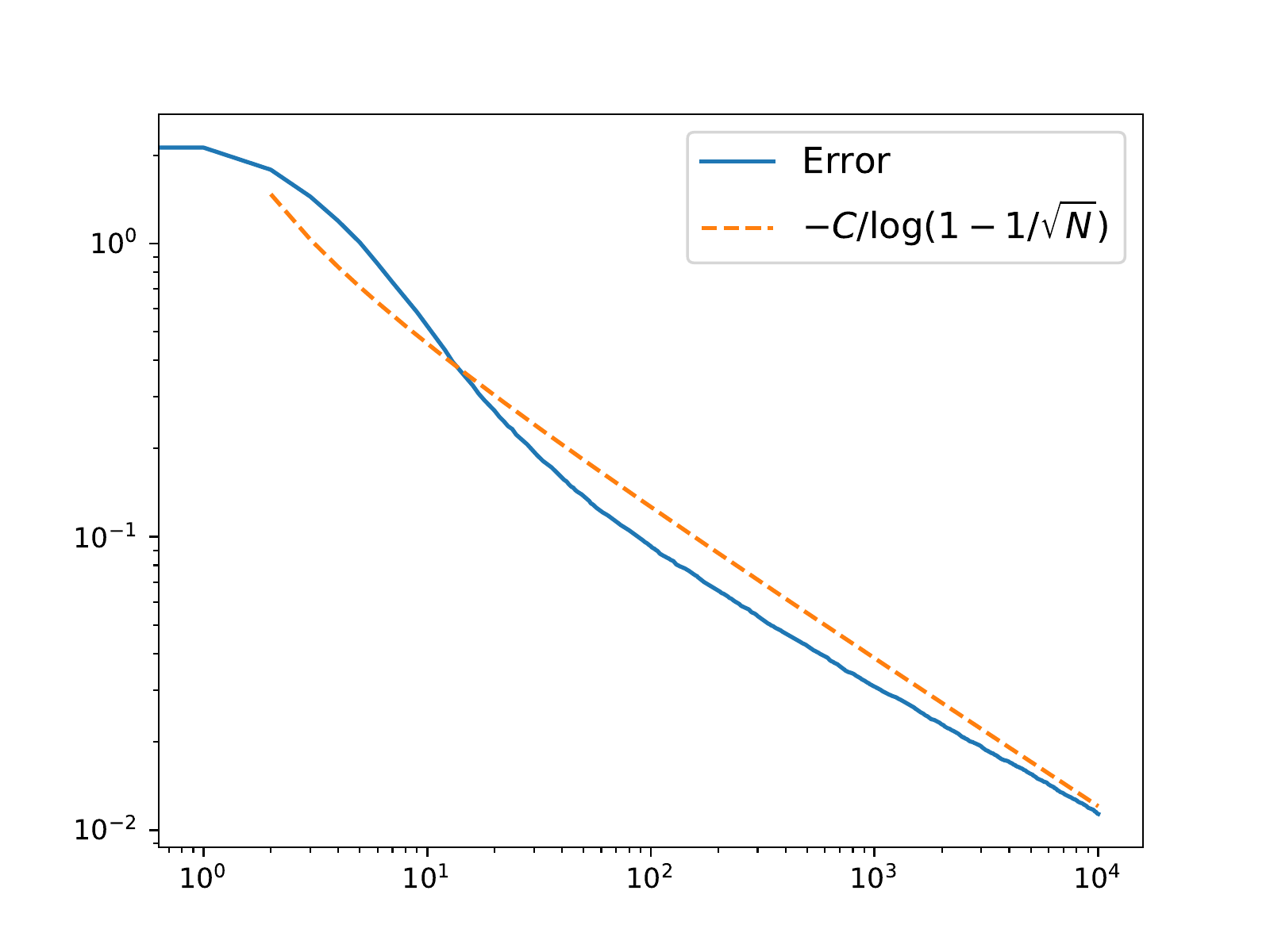}
    \caption{$g(r)=(2-r)\mathds{1}_{\{r\le1\}}$\\$+\sqrt{r}\mathds{1}_{\{r \ge 1\}}$, $\P=\text{Exp}(1)$}
    \label{fig. min3}
  \end{minipage}
  \hfill
  \begin{minipage}[b]{0.45\textwidth}
    \includegraphics[width=\textwidth]{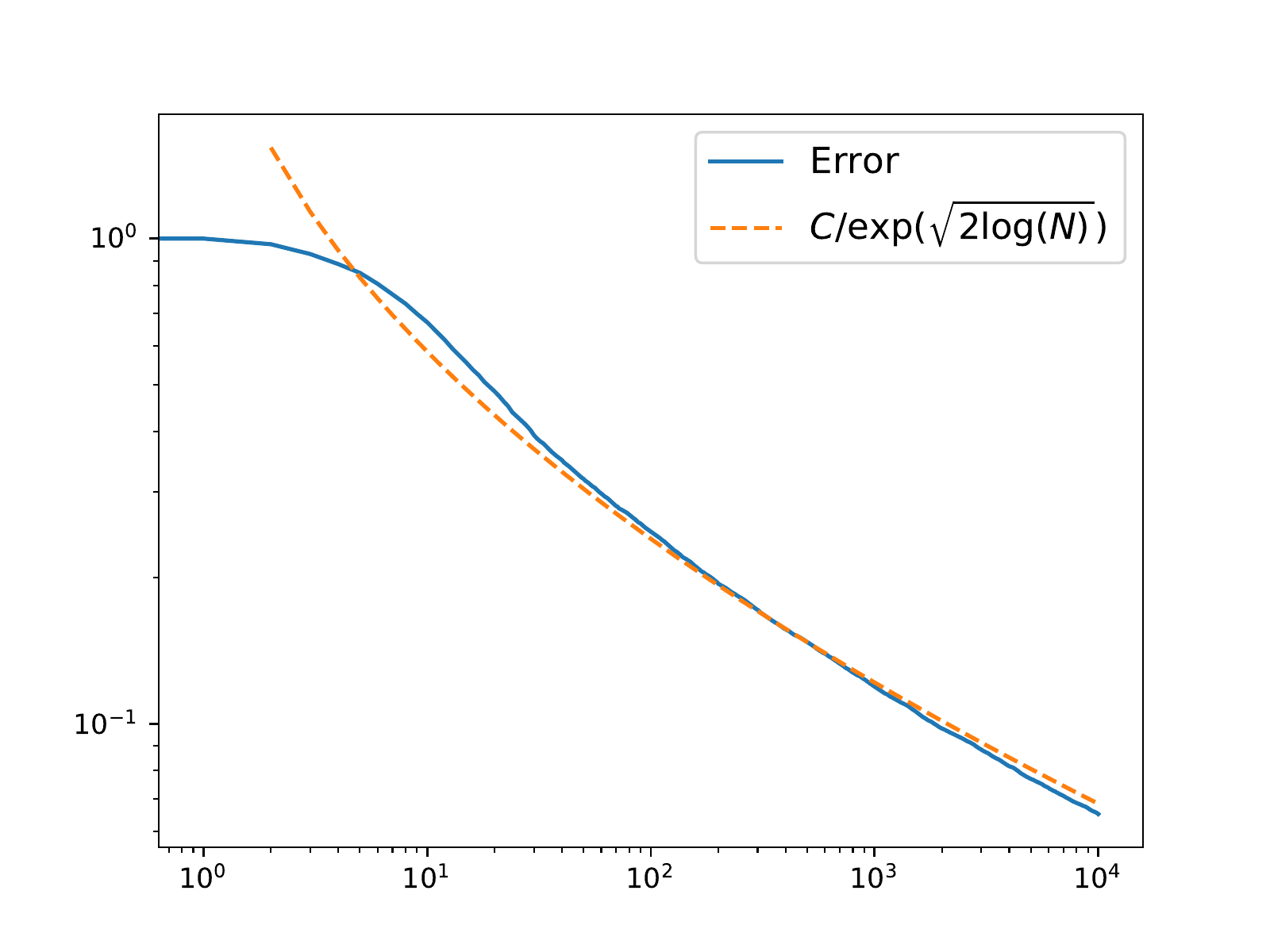}
    \caption{$g(r)=(r-2)^+$, \\$\P=\exp(\mathcal{N}(0,1))$}
    \label{fig. min6}
  \end{minipage}
  \hfill
  \begin{minipage}[b]{0.45\textwidth}
    \includegraphics[width=\textwidth]{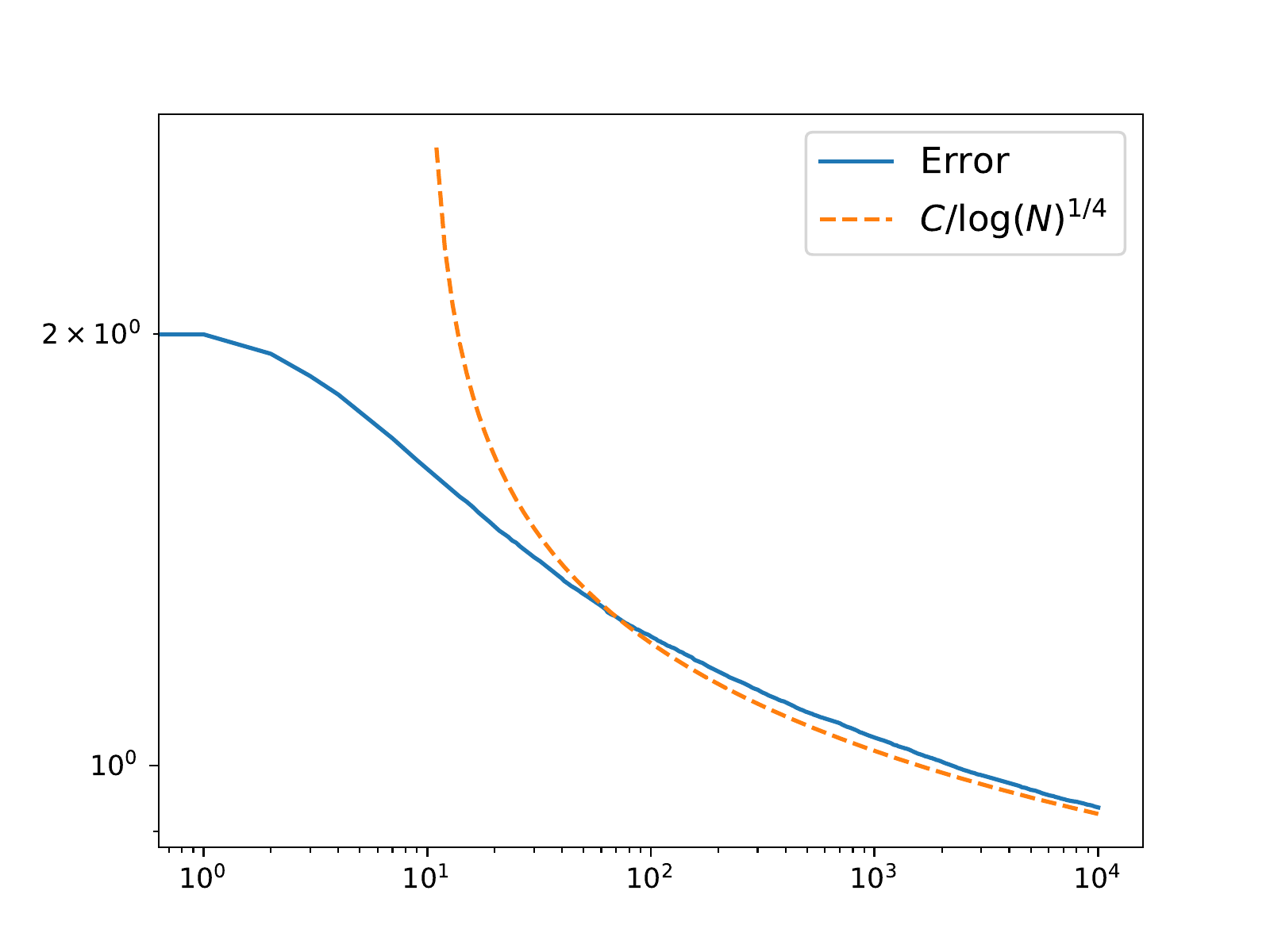}
    \caption{$g(r)=|r-1|+\sqrt{r-1}\mathds{1}_{\{r \ge 1\}}$,\\ $\P=|\mathcal{N}(0,1)|$}
    \label{fig. min4}
  \end{minipage}
  \hfill
  \begin{minipage}[b]{0.45\textwidth}
    \includegraphics[width=\textwidth]{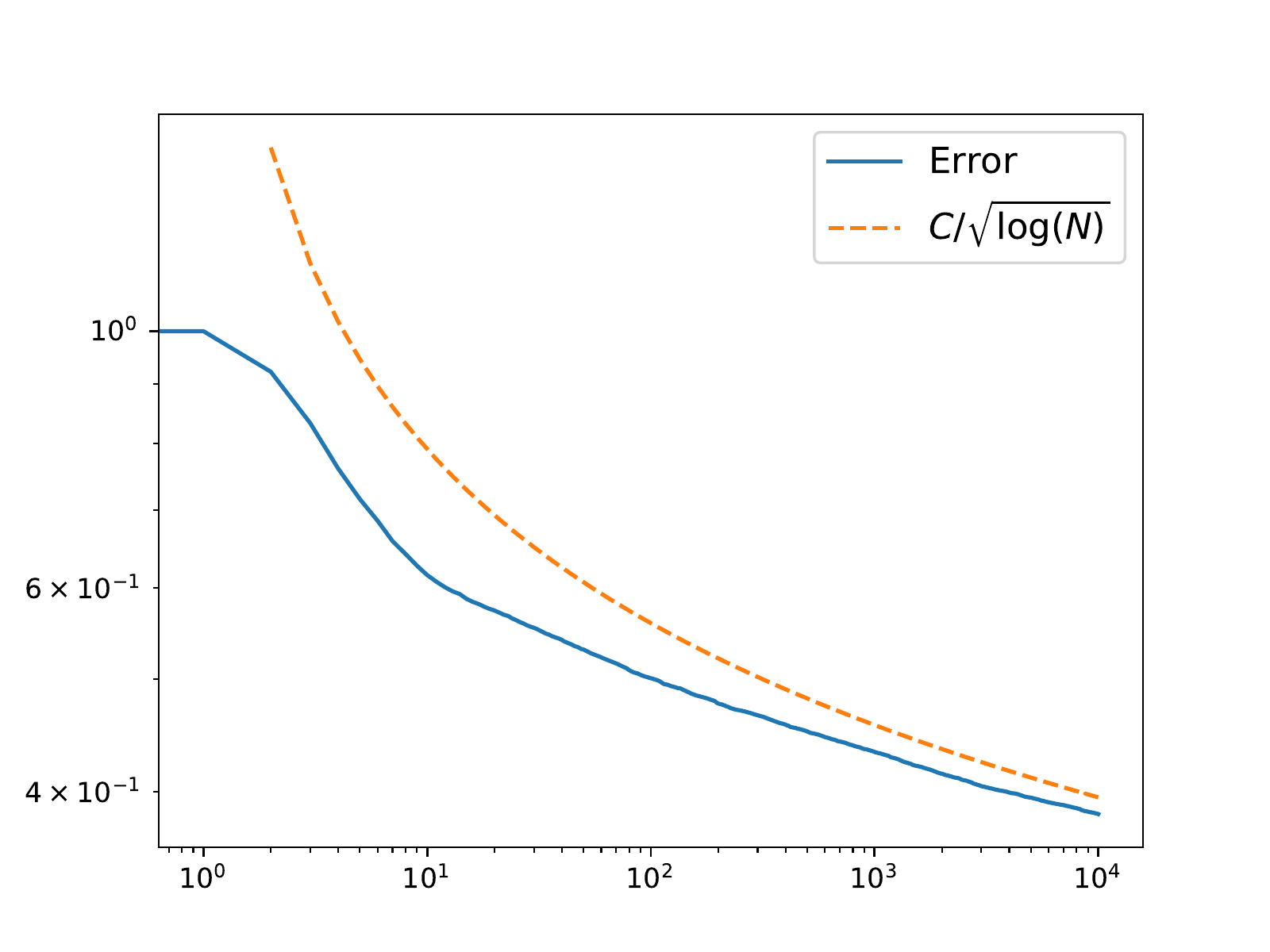}
    \caption{$g(r)=(1-r)\mathds{1}_{\{r\le 1\}}-\sqrt{r-1}\mathds{1}_{\{r>1\}}$, $\P=\text{Exp}(1)$}
    \label{fig. min5}
  \end{minipage}
\end{figure}

\begin{Ex}
We provide now some examples to illustrate different convergence rates which might be observed in the context of Theorem \ref{thm:cnv_rate_basic} or \eqref{eq:cnv_rate_bespoke} above. 
We use $N=10^5$ realisations of $1000$ runs for our numerical illustrations. 
We start with some examples when Theorem \ref{thm:cnv_rate_basic} applies either directly or because \eqref{eq:cnvrate_restrict} holds. 
\begin{itemize}
\item $g(r)=|r-1|$, $\P=\lambda|_{[0,2]}/2$. Note that $\P(\max_{i=1, \dots, N}r_i\le x)=(r/2)^N$. Thus $$\E\left[\max_{i=1, \dots, N}r_i-1 \right]=\int_{0}^2 \frac{Nx^{N-1}(x-1)}{2^N}dx=\frac{2N}{N+1}-1=1-\frac{1}{N+1}.$$ Convergence rate $\mathcal{O}(1/N)$ (Figure \ref{fig. min1}).
\item $g(r)=|r-1|$, $\P=\P^{29}$ from Example \ref{Ex. 2}. Convergence rate $\mathcal{O}(1/N^{1/30})$ (Figure \ref{fig. min2}).
\item $g(r)=(2-r)\mathds{1}_{\{r\le1\}}+\sqrt{r}\mathds{1}_{\{r \ge 1\}}$, $\P=\text{Exp}(1)$. Note that $2+x/8$ is tangential to $\sqrt{x}$ in $x=16$. In particular $\pi^{\P}(g)=2.125=\pi^{\P(\cdot|[0,16])}(g)$. Convergence rate $\mathcal{O}(F^{-1}_{\P}(d_N))=\mathcal{O}(-\log(1-1/\sqrt{N}))$ (Figure \ref{fig. min3}).
\end{itemize}
We move now to examples where we can not rely on Theorem \ref{thm:cnv_rate_basic} but use \eqref{eq:cnv_rate_bespoke} instead and show the bound may be sharp.
The asymptotic distribution of $\max_{i=1, \dots, N} r_i$ can be determined using classical results from extreme value theory. In particular, the scaled maximum of exponential/normal/lognormal random variables converges weakly to a Gumbel distributed random variable $Y$ (see \cite[Examples 2,3, p.199]{takahashi1987normalizing}). 
\begin{itemize}
\item $g(r)=(r-2)^+$, $\P=\exp(\mathcal{N}(0,1))$.
Here $$\max_{i=1, \dots, N} r_i \overset{d}{=}\frac{Y\exp(\sqrt{2\log N})}{\sqrt{2\log N}}+\exp(\sqrt{2 \log N}) \sim \exp(\sqrt{2 \log N}).$$ Convergence rate $\mathcal{O}(1/\max_{i=1, \dots,N}r_i)=\mathcal{O}(1/\exp{(\sqrt{2\log{N}})})$ (Figure \ref{fig. min6}).
\item $g(r)=|r-1|-\sqrt{r-1}\mathds{1}_{\{r \ge 1\}}$, $\P=|\mathcal{N}(0,1)|$. Here $$\max_{i=1, \dots, N}r_i\overset{d}{\sim}\frac{Y}{\sqrt{2 \log N}}+ \left(\sqrt{2 \log N}- \frac{\log (4\pi \log N)}{2 \sqrt{2\log{N}}} \right)\sim \sqrt{ \log N}.$$ Convergence rate $\mathcal{O}(1/(\max_{i=1, \dots, N}r_i)^{1/2})=\mathcal{O}(1/\sqrt[4]{\log{N}})$ (Figure \ref{fig. min4}).
\item $g(r)=(1-r)\mathds{1}_{\{r\le 1\}}-\sqrt{r-1}\mathds{1}_{\{r>1\}}$, $\P=\text{Exp}(1)$. Here $$\max_{i=1, \dots, N}r_i\overset{d}{=}Y+\log N\sim \log N.$$
Convergence rate $\mathcal{O}(1/(\max_{i=1, \dots, N}r_i)^{1/2})=\mathcal{O}(1/\sqrt{\log{N}})$ (Figure \ref{fig. min5}).
\end{itemize}
\end{Ex}

\appendix
\section{Additional results and proofs}\label{sec:appendix}

\subsection{Additional results and proofs for Section \ref{sec. plugin}}

\begin{proof}[Proof of Theorem \ref{Thm. one-per}]
First note that if $A_n$ is a non-decreasing sequence of sets with $A=\lim_n A_n=\cup_n A_n$ then $\hat{g}_A=\lim_n \hat{g}_{A_n}$. The ``$\geq$" inequality is obvious and the reverse follows since $\hat{g}_{A_n}$ is a non-decreasing sequence of concave functions thus its limit is a concave function dominating $g$ on $A$.

Using Lusin's theorem (see \cite[Theorem 7.4.3, page 227]{cohn2013measure}) we can find an increasing sequence $K_n$ of compact subsets of $\supp(\P)$ such that $\P(\R_+^d \setminus K_n) \le 1/n$ and $g|_{K_n}$ is continuous. Continuity of $g$ on $K_n$ implies that $\hat{g}_{K_n}=\hat{g}_{\P_{|K_n}}\leq \hat{g}_{\P}$. 
On the other hand, by the argument above, $\lim_n \hat{g}_{K_n}=\hat{g}_{\cup_n K_n}\geq \hat{g}_{\P}$ since $\P(\cup_n K_n)=1$. We conclude that $\lim_n \hat{g}_{K_n}= \hat{g}_{\P}$.
Further, by Birkhoff's ergodic theorem (see \citep[Theorem 9.6, p.159]{kallenberg2002foundations}) and $\P_1\ll \P$ we have
$$\bigcup_N \supp(\hat{\P}^N)=\{r_1,r_2,\ldots\}\quad \text{ is a.s. dense in }\supp(\P)$$
and hence $\hat{g}_{K_n\cap \{r_1,r_2,\ldots\}}=\hat{g}_{K_n}$ a.s. By the argument above, we thus have
$$\lim_{N\to \infty} \hat{g}_{\hat{\P}_N}=\hat{g}_{\{r_1,\ldots \}} = \hat{g}_{\cup_n K_n \cap \{r_1,\ldots\}}
=\lim_{n\to \infty} \hat{g}_{K_n}=\hat{g}_{\P},\quad \P^{\infty}\text{-a.s.}$$
where the second equality follows since the inclusion
$\{r_1,r_2,\ldots\}\subset \cup_n K_n$ holds $\P^{\infty}$-a.s.
We conclude using \eqref{eq:cncenv}.
\end{proof}

\begin{proof}[Proof of Proposition \ref{prop. supp}]
Assume first that NA($\P$) holds. Denote the relative interior of the convex hull of a set $A \subseteq \R^d$ by $\text{ri}(A)$. Furthermore write $\text{lin}(A)$ for the linear hull of $A$ and $\text{aff}(A)$ for the affine hull. We recall that $1 \in \text{ri}(\text{supp}(\P))$ if and only if NA$(\P)$ holds. Consequently there exist $\epsilon >0$, $0 \le k \le d$ and  $r_i^{\pm} \in \text{conv}(\text{supp}(\P))$ with $r_i^{\pm}=1\pm \epsilon e_i$ for $i=1, \dots, k$, where $e_1, \dots e_k$ are an orthonormal basis of the space $\text{lin}(\text{supp}(\P)-1)$. Fix some $r_i^{\pm}$. Then there exists $n \in \N$ and $\tilde{r}_1, \dots, \tilde{r}_n \in \text{supp}(\P)$ such that $r_i^{\pm}$ can be written as a convex combination of $\tilde{r}_1, \dots, \tilde{r}_n$. Denote by $A$ the finite collection of $\tilde{r}$ for all $r_i^{\pm}$, $i=1, \dots, k$. Choosing $0< \delta < \epsilon$ sufficiently small we have $1 \in \text{ri}(\{\tilde{r}_i^{\pm} \ | \ i=1, \dots, k \})$ for each choice of $\tilde{r}_i^{\pm}\in B_{\delta}(r_i^{\pm})\cap \text{aff}(\text{supp}(\P))$. As $\text{supp}(\P^N) \subseteq \text{supp}(\P)$ and $\P^N \Rightarrow \P$ there exists $N_0 \in \N$ such that 
\begin{align*}
\P^N(B_{\delta}(\tilde{r})\cap \text{aff}(\text{supp}(\P)))&\ge \P^N(B_{\delta}(\tilde{r})\cap \text{aff}(\text{supp}(\P^N)))\\
&=\P^N(B_{\delta}(\tilde{r}))>0
\end{align*}
for all $\tilde{r} \in A$ and $N \ge N_0$. Thus $1 \in \text{ri}(\text{supp}(\P^N))$ and NA($\P^N)$ holds.\\
Conversely if there exists $H \in \R^d$ such that $\P( H(r-1) >0)>0$ and $\P(H(r-1) \ge 0)=1$ then again by continuity $H(r-1) \ge 0$ on $\text{supp}(\P) \supseteq \text{supp}(\P^N)$ and the set $\{r \ | \ H(r-1)>0 \}$ is open. Thus there exists $N_0 \in \N$ such that $\P^N(H(r-1)>0)>0$ for all $N \ge N_0$. 
\end{proof}

\begin{proof}[Proof of Corollary \ref{Cor. options}]
By NA$(\tilde{\P})$ we have
\begin{align*}
&\inf\{x \in \R \ | \ \exists H \in \R^{d+\tilde{d}} \text{ s.t. } x+H(e(r)-1)\ge g(r) \ \P\text{-a.s.}\}\\
& =\ \sup_{\Q \sim \P, \ \Q \in \mathcal{M}, \ \E_{\Q}(f_1)=f_0} \E_{\Q}[g].
\end{align*}
As $e$ is continuous $\hat{\P}_N \Rightarrow \P$ implies $\widehat{(\tilde{\P})}_N \Rightarrow \tilde{\P}$. Next assume $\text{supp}(\hat{\P}_N) \subseteq \text{supp}(\P)$. Again by continuity of $f_1$ we conclude that $$e(\text{supp}(\P))=\text{graph}(f_1) \cap (\text{supp}(\P) \times \R^{\tilde{d}})$$ is closed. Thus we clearly have $\text{supp}(\tilde{\P}) \subseteq e(\text{supp}(\P))$. We show $e(\text{supp}(\P)) \subseteq \text{supp}(\tilde{\P})$. Assume towards a contradiction there exists $e(r) \in e(\text{supp}(\P))\setminus \text{supp}(\tilde{\P})$. Note that for any sequence $(r_n)_{n \in \N}$ such that $\lim_{n \to \infty} r_n=r$ we have $\lim_{n \to \infty} e(r_n)=e(r)$. Thus there exists $\epsilon>0$ such that $e(B_{\epsilon}(r)) \cap \text{supp}(\tilde{\P})=\emptyset.$ But $\tilde{\P}(e(B_{\epsilon}(r)))=\P (B_{\epsilon}(r))>0$, a contradiction. Thus
$\text{supp}(\widehat{(\tilde{\P}})_N) \subseteq \text{supp}(\tilde{\P})$ so that $\{e(r_1),e(r_2) \dots\}$ are dense in $\text{supp}(\tilde{\P})$ and Theorem \ref{Thm. one-per} is still applicable for the enlarged market. The martingale constraint $\E_{\Q}[\tilde{r}^{d+i}-1]=0$ for $i=1, \dots, \tilde{d}$ is then equivalent to $\E_{\Q}[f^i_1(S)]=f^i_0(S_0)$.
\end{proof}

\begin{Lem} \label{lem. wasser v2}
Let $\P\in\prob(\R^d_+)$ with finite first moment and $\epsilon>0$. Then for all $x \in \R_+^d$ the ball $B^1_{\varepsilon}(\P)$ in the 1-Wasserstein metric contains $\lambda \delta_x+(1-\lambda) \P$ for some $\lambda \in (0,1)$. In particular, any $\P\in\prob(\R^d_+)$ can be written as a weak limit of probability measures $\P^N$ with supp$(\P^N)=\R_+^d$.
\end{Lem}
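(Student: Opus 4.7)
The main claim follows from exhibiting an explicit transport plan between $\P$ and the perturbed measure. Given $x \in \R^d_+$ and $\lambda \in (0,1)$, define the coupling $\gamma$ on $\R^d_+ \times \R^d_+$ by sampling $Y \sim \P$ and, independently of $Y$, setting $Z = x$ with probability $\lambda$ and $Z = Y$ otherwise. Its first marginal is $\P$ and its second marginal is $\lambda \delta_x + (1-\lambda)\P$. The transport cost is
$$\int |y - z|\, \gamma(dy,dz) = \lambda \int |y - x|\, \P(dy) \le \lambda \bigl(|x| + \E_\P[|Y|]\bigr),$$
which is finite by the first-moment assumption on $\P$. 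Setting $C := |x| + \E_\P[|Y|]$ and choosing $\lambda := \min\bigl(\tfrac{1}{2}, \tfrac{\epsilon}{C+1}\bigr) \in (0,1)$ yields $\mathcal{W}^1\bigl(\P,\, \lambda\delta_x + (1-\lambda)\P\bigr) \le \epsilon$, which is the first assertion.

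For the ``in particular'' statement, the plan is to pick a reference probability measure $\mu \in \prob(\R^d_+)$ whose support is all of $\R^d_+$ (for example, any measure with a strictly positive Lebesgue density, such as a product of exponential distributions) and set
$$\P^N := \bigl(1 - \tfrac{1}{N}\bigr)\P + \tfrac{1}{N}\mu, \qquad N \ge 1.$$
Then $\supp(\P^N) \supseteq \supp(\mu) = \R^d_+$, so $\supp(\P^N) = \R^d_+$. For any $f \in C_b(\R^d_+)$,
$$\int f\, d\P^N = \bigl(1 - \tfrac{1}{N}\bigr)\int f\, d\P + \tfrac{1}{N}\int f\, d\mu \;\longrightarrow\; \int f\, d\P,$$
giving $\P^N \Rightarrow \P$. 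Note that this second statement does not require $\P$ to have a finite first moment; working in total variation in this construction is more direct than invoking the $\mathcal{W}^1$-bound of the first part.

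\textbf{Main obstacle.} There is no genuinely hard step; the proof hinges on realising the obvious coupling that ``freezes'' a $(1-\lambda)$-fraction of mass in place and redirects the remaining $\lambda$-fraction to $x$. The only place the hypotheses enter is in bounding $\int |y-x|\,\P(dy)$, which is precisely where the finite first moment is used. The $\Wc^1$-metric (as opposed to $\Wc^\infty$) is essential for this bound to be finite, highlighting why the weaker Wasserstein balls of Section~\ref{subsec dist} allow arbitrarily large support perturbations and hence why the martingale-density penalisation in \eqref{eq:WassBall} is indispensable.
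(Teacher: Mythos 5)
Your proof is correct, and it takes a different route from the paper on the first claim: where you construct an explicit coupling (keep mass in place with probability $1-\lambda$, send it to $x$ with probability $\lambda$) and bound the primal transport cost, the paper instead applies the Kantorovich--Rubinstein duality \eqref{eq:KRduality} directly, writing $\Wc^1(\P, \lambda\delta_x+(1-\lambda)\P)=\lambda\sup_{f\in \Le_1}\left|\int (f(y)-f(x))\,d\P(y)\right|=\lambda\int|x-y|\,d\P(y)$. The two are of course dual expressions of the same calculation and both land on the same bound $\lambda\bigl(|x|+\E_\P[|r|]\bigr)$; neither is harder, so this is a cosmetic rather than substantive difference. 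Where your write-up adds value is the explicit treatment of the ``in particular'' clause: the paper's proof stops after the Wasserstein bound and leaves the full-support approximation implicit, whereas you construct $\P^N=(1-1/N)\P+\mu/N$ with $\supp(\mu)=\R_+^d$ and verify weak convergence by dominated convergence of bounded continuous test functions. You also correctly observe that this second argument does not need the first-moment hypothesis (which the phrasing of the lemma silently drops for the ``in particular'' claim), so working in total variation rather than $\Wc^1$ is the right choice there. No gaps.
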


\begin{proof}[Proof of Lemma \ref{lem. wasser v2}]
Let $\epsilon >0$ be given. We define $\nu=\lambda \delta_x+(1-\lambda) \mu$ and recall that $\mathcal{L}_1$ denotes the 1-Lipschitz functions $f: \R_+^d \to \R$. Then
\begin{align*}
\mathcal{W}^1(\mu, \nu)&= \sup_{f \in \mathcal{L}_1} \left| \int_{\R_+^d} f(y)d\mu(y)- \int_{\R_+^d} f(y) d\nu(y) \right|=\lambda \sup_{f \in \mathcal{L}_1} \left| \int_{\R_+^d} f(y)-f(x)d\mu(y) \right| \\
&=\lambda \int_{\R_+^d} |x-y|d\mu(y)<\epsilon
\end{align*}
for $\lambda>0$ sufficiently small. 
\end{proof}

\begin{proof}[Proof of Theorem \ref{thm. rob}] Fix $\P\in \prob(\R^d_+)$ and a sequence $\P^N$ converging to $\P$. Let $\{r_1, r_2, \dots \}$ be dense in $\supp(\P)$. Fix $n\geq 1$ and note that, for any $i\geq 1$, weak convergence implies that $\P^{N}(B_{1/n}(r_i))>0$ for all $N$ large enough. In particular there exists $r_i^n \in B_{1/n}(r_i)$ such that $\hat{g}_{\P^N}(r_i^n)\ge g(r_i^n)$. Thus, by the same reasoning as in the proof of Theorem \ref{Thm. one-per} above, 
\begin{align*}
\liminf_{\P^N \Rightarrow \P} \pi^{\P^N}(g)=\liminf_{N \to \infty} \hat{g}_{\P^N}(1) \ge \lim_{n \to \infty} \lim_{k \to \infty}\hat{g}_{\{r_1^n, r_2^n, \dots, r_k^n\}}(1)=\pi^\P(g).
\end{align*}
We conclude using Lemma \ref{lem. wasser v2} since for a sequence with $\supp(\P^N)=\R^d_+$, by continuity of $g$, we have, for all $N\geq 1$, 
$$\pi^{\P^N}(g) = \inf \{x \in \R \ | \ \exists H \in \R^d \text{ s.t. }x+H(r-1) \ge g \text{ on } \R_+^d\}= \sup_{\Q \in \mathcal{M}_{\R_+^d}} \E_{\Q}[g].$$ \\
For the second part of the theorem, assume $\P \in \mathcal{P}(\R_+^d)$ is such that $$\pi^\P(g)< \sup_{\Q \in \mathcal{M}_{\R_+^d}} \E_{\Q}[g]=\inf \{x \in \R \ | \ \exists H \in \R^d \text{ s.t. }x+H(r-1) \ge g \text{ on } \R_+^d\}.$$ 
Take a sequence $(\P^N)_{N \in \N}$, as above, with $\supp(\P^N)=\R^d_+$ and $\P^N \Rightarrow \P$. Fix $\epsilon>0$ such that $$2\epsilon<\pi^{\P^N}(g)-\pi^\P(g).$$ For every $\delta >0$ there exists $N_0 \in \N$ such that for all $N \ge N_0$ we have $d_L(\P^N, \P) \le \delta$. Let $T_N$ be an asymptotically  consistent estimator of $\pi^\P(g)$. Then, for all $N$ large enough
\begin{align*}
d_L(\mathcal{L}_{\P^{N_0}}(T_N), \mathcal{L}_{\P}(T_N)) &\ge d_L(\delta_{\pi^\P(g)}, \delta_{\pi^{\P^{N_0}}(g)})
-d_L(\mathcal{L}_{\P^{N_0}}(T_N), \delta_{\pi^{\P^{N_0}}(g)})\\&-d_L(\delta_{\pi^{\P}(g)}, \mathcal{L}_{\P}(T_N)) 
\ge \epsilon.
\end{align*}
Thus $T_N$ is not robust at $\P$, which shows the claim.
\end{proof}

\begin{proof}[Proof of Proposition \ref{prop:d_H cont}]
Let us fix $\epsilon>0$. As $g$ is uniformly continuous, there exists $\delta>0$, such that for $|r-\tilde{r}|\le \delta$ we have $|g(r)-g(\tilde{r})| \le \epsilon/3$. Let us now consider $\tilde{\P}\in \prob(\R^d_+)$ such that $d_H(\text{supp}(\P), \text{supp}(\tilde{\P}))\le \delta$. Then $\pi^{\tilde{\P}}(g)<\infty$ and there exists $H_{\tilde{\P}} \in \R^d$ such that $$\pi^{\tilde{\P}}(g)+\epsilon/3+H_{\tilde{\P}}(\tilde{r}-1)\ge g(\tilde{r}) \quad \text{for all } \tilde{r}\in\text{supp}(\tilde{\P}).$$ Next we note that by no-arbitrage arguments detailed in \citep[Proof of Prop. 3.5]{cow} there exists $\delta_0>0$ such that for all $\tilde{\P}\in \prob(\R^d_+)$ with $d_H(\text{supp}(\P), \text{supp}(\tilde{\P}))\le \delta_0$ the strategy $|H_{\tilde{\P}}|$ can be chosen to be bounded by a constant $C>0$ which depends on $g$ but does not depend on $\tilde{\P}$. \\
In conclusion consider $\tilde{\P}\in \prob(\R^d_+)$ such that $$d_H(\text{supp}(\P), \text{supp}(\tilde{\P}))< \min\{\epsilon/(3C),\delta, \delta_0\}.$$ Then for every $\tilde{r} \in \text{supp}(\tilde{\P})$ there exists $r \in \text{supp}(\P)$ such that $|H_{\tilde{\P}}||r-\tilde{r}|\le \epsilon/3$ and $|g(r)-g(\tilde{r}) |\le \epsilon/3$. Thus
\begin{align*}
\pi^{\tilde{P}}(g)+\epsilon+H_{\tilde{\P}}(r-1) \ge \pi^{\tilde{\P}}(g)+2\epsilon/3+H_{\tilde{\P}}(\tilde{r}-1)\ge g(\tilde{r})+\epsilon/3\ge  g(r).
\end{align*} 
Thus $\pi^{\P}(g) \le \pi^{\tilde{\P}}(g)+\epsilon.$ Exchanging the roles of $\P$ and $\tilde{\P}$ yields the claim.
\end{proof}

\begin{proof}[Proof of Proposition \ref{prof:nocnvabove}]
We give the following counterexample: Define $\omega_1=(1,1,...)$,  $\P_n = (1-1/n)\delta_{1}+(1/n) \lambda_{[0,2]}$ and $g(r)=|1-r| \wedge 1$. Obviously $\P_n \Rightarrow \delta_{1}$ and $\sup_{\Q \sim \delta_{1}} \E_{\Q}[g]=0$ as well as $\sup_{\Q \sim \P_n, \ \Q \in \mathcal{M}} \E_{\Q}[g]=1$ for all $n \in \N$. By consistency we must have $T_N(\omega_1) \to 0$ as $N \to \infty$, in particular we can assume $T_{N_0}(\omega_1)<1$. Thus 
\begin{align*}
\P_n^{\infty} (T_N \ge 1 \text{ for all }N \ge N_0)&=
1- \P_n^{\infty} (T_N < 1 \text{ for some }N \ge N_0) \\
&\le 
1- \P_n^{N_0}(\{\omega_1\})=1-(1-1/n)^{N_0} \to 0 \\
&\hspace{5.5cm} (n \to \infty).
\end{align*}
\end{proof}

\subsubsection{Proof of Remark \ref{rem:applicable}}
With regards to Remark \ref{rem:applicable} let us now define the following concepts:
\begin{Defn}\label{def:markov}
Suppose that $\{X_n\ | \ n\in \N\}$ is a time-homogeneous Markov chain with initial distribution $\P$ as its invariant measure and transition kernel $K$. Then $\{X_n\}$ is called stationary $\beta$-mixing with exponential decay if there exist $0<\rho<1$ and $c>0$ such that
\begin{align*}
\int \left\|K^n(x, \cdot)-\P(\cdot)\right\|_{\mathrm{TV}}\P(dx) \le c\rho^n \quad \forall n\in \N,
\end{align*}
where $\|\cdot\|_{TV}$ defines the total variation norm.\\
The Markov chain $\{X_n\ | \ n\in \N\}$ is called geometrically ergodic  if there exists a probability measure $\P\in \prob(\R_+^d)$, a constant $0<\rho<1$ and a $\P$-integrable non-negative measurable function $g$ such that 
\begin{align*}
\left\|K^n(x,\cdot)-\P(\cdot)\right\|_{\mathrm{TV}}\le \rho^n g(x) \quad \forall n\in \N, \ \forall x\in \R^d_+.
\end{align*} 
\end{Defn}

The following lemma lists stationarity and ergodicity properties for some common time series models following \citep{mokkadem1990proprietes,boussama1998ergodicite, he1999fourth, carrasco2002mixing, basrak2002regular, lindner2009stationarity, francq2019garch}, where for simplicity we only consider the simplest cases and refer to the references above for a more comprehensive picture:
\begin{Lem}\label{lem:ergodicity}
The following conditions are sufficient for the $\beta$-mixing property with exponential decay (if started from the invariant distribution) for the processes $\{r_n\ | \ n\in \N\}$ and $\{h_n\ | \  n\in \N\}$ defined below, as well as for geometric ergodicity of the process $\{h_n\ | \ n\in \N\}$:
\begin{itemize}
\item Augmented GARCH(1,1) models, where 
\begin{align*}
r_n&=\sqrt{h_n}\eta_n, \quad n=0,1,\dots\\
\Gamma(h_{n+1})&= c(e_n)\Gamma(h_n)+g(e_{n+1}),
\end{align*}
and $\{\eta_n\}$ is an i.i.d. sequence independent of $h_0$ with $\E[\eta_n]=0$, $\E[\eta^2_n]=1$ and continuous positive density with respect to the Lebesgue measure, $\Gamma$ is an increasing function and $e_{n+1}$ is a measurable function of $\eta_n$.
\begin{itemize}
\item LGARCH: $\beta+\alpha<1$, where 
\begin{align*}
h_n=\omega+\beta h_{n-1}+\alpha \eta_{n-1}^2 h_{n-1}
\end{align*}
and $\omega>0$, $\beta\ge 0$, $\alpha\ge 0$. Furthermore, if there exists an integer $s\ge 1$ such that $(\beta+\alpha)^s<1$ or $\E[|\eta_n|^{2s}]<\infty$ and $\beta+\alpha< 1/(\E[|\eta_n|^{2s}])^{1/s}$, then $\E[|r_n|^{2s}]<\infty$.
\item MGARCH: $|\beta|<1$, where 
\begin{align*}
\log(h_n)=\omega+\beta \log(h_{n-1})+\alpha\log(h_{n-1}^2)
\end{align*}
and $\omega>0$, $\beta\ge 0$, $\alpha\ge 0$.
\item EGARCH: $|\beta|<1$, where 
\begin{align*}
\log(h_n)=\omega+\beta\log(h_{n-1})+\alpha(|\eta_{n-1}|+\gamma \eta_{n-1})
\end{align*}
and $\gamma \neq 0$.
\item NGARCH: $\beta +\alpha(1+c^2)<1$, where 
\begin{align*}
h_n=\omega+\beta h_{n-1}+\alpha(\eta_{n-1}-c)^2 h_{n-1}
\end{align*}
and $\omega>0$, $\beta\ge 0$, $\alpha\ge 0$. Furthermore, if there exists an integer $s\ge 1$ such that $\E[(\beta+\alpha(\eta_n-c)^2)^s]<1$ or $\E[|\eta_n|^{2s}]<\infty$ and $\beta+\alpha< 1/(\E[|\eta_n-c|^{2s}])^{1/s}$, then $\E[|r_n|^{2s}]<\infty$.
\item VGARCH: $\beta <1$, where
\begin{align*}
h_n=\omega+\beta h_{n-1}+\alpha(\eta_{n-1}-c)^2
\end{align*}
and $\omega>0$, $\beta, \alpha\ge 0$. Furthermore, if there exists an integer $s\ge 1$ such that $\beta<1$, $\E[|\eta_n|^{2s}]<\infty$, then $\E[|r_n|^{2s}]<\infty$.
\item TSGARCH: $\beta+\alpha \E|\eta_t|<1$, where 
\begin{align*}
\sqrt{h_n}=\omega+\beta \sqrt{h_{n-1}}+\alpha_1 |\eta_{n-1}| \sqrt{h_{n-1}}
\end{align*}
and $\omega >0$, $\beta\ge 0$, $\alpha_1>0$ and $\alpha_1+\alpha_2\ge 0$. Furthermore, if there exists an integer $s\ge 1$ such that $\E[(\beta+\alpha_1|\eta_n|)^s]<\infty$, then $\E[|r_n|^s]<\infty$.
\item GJR-GARCH: $\beta+\alpha_1+\alpha_2\E\max(0,-\eta_n)^2<1$, where 
\begin{align*}
h_n=\omega+\beta h_{n-1}+\alpha_1 \eta_{n-1}^2 h_{n-1}+\alpha_2\max(0, -\eta)^2h_{n-1}
\end{align*}
and $\omega >0$, $\beta\ge 0$, $\alpha_1>0$ and $\alpha_1+\alpha_2\ge 0$. Furthermore, if there exists an integer $s\ge 1$ such that $\E[(\beta+\alpha_1\eta_n^2+\alpha_2\max(0,-\eta_n)^2)^s]<\infty$, then $\E[|r_n|^{2s}]<\infty$.
\item TGARCH: $\beta+\alpha_1|\eta_n|+\alpha_2\E\max(0,-\eta_n)<1,$ where 
\begin{align*}
\sqrt{h_n}=\omega+\beta \sqrt{h_{n-1}}+\alpha_1 |\eta_{n-1}| \sqrt{h_{n-1}}+\alpha_2\max(0, -\eta)\sqrt{h_{n-1}}
\end{align*}
and $\omega >0$, $\beta\ge 0$, $\alpha_1>0$ and $\alpha_1+\alpha_2\ge 0$. Furthermore, if there exists an integer $s\ge 1$ such that $\E[(\beta+\alpha_1|\eta_n|+\alpha_2\max(0,-\eta_n))^s]<\infty$, then $\E[|r_n|^s]<\infty$.
\end{itemize}
\item PGARCH$(a,b)$ models: $\sum_{i=1}^b \alpha_i+\sum_{j=1}^a  \beta_j<1$ and $\E(|\eta_n|^{2\delta})<\infty$, where
\begin{align*}
r_n&=\sqrt{h_n} \eta_n, \quad n=0, 1,\dots\\
h_{n+1}^\delta& = \omega +\sum_{i=1}^b \alpha_i h_{n+1-i}^\delta |\eta_{n+1-i}|^{2\delta}+\sum_{j=1}^a \beta_j h_{n+1-j}^{\delta}
\end{align*}
and $a,b\ge 1$, $\delta>0$, $\omega>0$, $\alpha_i\ge 0$, $i=1, \dots, b$, $\beta_j\ge 0$, $j=1,\dots, a$. This includes GARCH$(a,b)$ for $\delta=1$ and TSGARCH$(a,b)$ for $\delta=1/2$. Furthermore $\E[|r_n|^{2\delta}]<\infty$.
\item Stochastic autoregressive volatility: $\E|u_n|<\infty$, $\E|\beta+\alpha u_n|<\infty$, where 
\begin{align*}
r_n&=\sigma_n z_n,\\
\log \sigma_n &= \omega+\beta \log \sigma_{n-1}+(\gamma+\alpha \log \sigma_{n-1})u_n
\end{align*}
and $\{z_n\}_{n\in \N}$, $\{u_n\}_{n\in \N}$ are mutually independent i.i.d. variables with zero means and unit variances, $\alpha+\beta>0$, $\alpha+\gamma>0$.
\end{itemize}
\end{Lem}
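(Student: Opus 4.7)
The plan is to verify each model class separately by reducing it to results already available in the cited literature; the overall strategy has two threads running in parallel: (i) establish $\beta$-mixing / geometric ergodicity of the volatility chain under the stated parameter restrictions, and (ii) derive the moment bounds for the stationary distribution under the additional moment hypotheses on the innovations.

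For the Augmented GARCH(1,1) family, the first step is to display each special case as an instance of the unified recursion $\Gamma(h_{n+1})=c(e_n)\Gamma(h_n)+g(e_{n+1})$ by choosing the transformation $\Gamma$ appropriately: $\Gamma(x)=x$ for LGARCH, NGARCH, VGARCH, and GJR-GARCH; $\Gamma(x)=\log x$ for MGARCH and EGARCH; and $\Gamma(x)=\sqrt{x}$ for TSGARCH and TGARCH. Once in this form, existence of a unique stationary solution $(h_n)_{n\in\Z}$ and its $\beta$-mixing with exponential decay (together with the $\beta$-mixing of $r_n=\sqrt{h_n}\eta_n$, which inherits mixing from $h_n$ and the i.i.d.\ sequence $\{\eta_n\}$) follow from the general theorems in \citep{carrasco2002mixing,lindner2009stationarity,francq2019garch}. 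The parameter conditions stated in each sub-case (for instance $\beta+\alpha<1$ for LGARCH, $\beta+\alpha(1+c^2)<1$ for NGARCH, etc.) are precisely the negative top Lyapunov exponent conditions $\E[\log c(e_n)]<0$ specialised to the chosen $\Gamma$. Geometric ergodicity of $(h_n)$ follows from these Lyapunov conditions combined with the continuity and positivity of the density of $\eta_n$, which provides the small-set and irreducibility hypotheses in the Meyn--Tweedie framework.

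For the moment bounds, iterate the recursion to obtain
\[
\Gamma(h_n)=\sum_{k=0}^{n-1}\Bigl(\prod_{j=1}^{k}c(e_{n-j})\Bigr)g(e_{n-k})+\Bigl(\prod_{j=0}^{n-1}c(e_{n-j})\Bigr)\Gamma(h_0).
\]
Taking $L^s$-norms and using independence and Minkowski's inequality reduces finiteness of the stationary moment to the geometric-series condition $\E[c(e_n)^s]<1$, which is exactly the hypothesis stated in each sub-case; the bound on $\E[|r_n|^{2s}]$ then follows from $r_n=\sqrt{h_n}\eta_n$ and independence of $h_n$ and $\eta_n$.

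For PGARCH$(a,b)$, the vectorised state $X_n=(h_n^\delta,\ldots,h_{n-b+1}^\delta,\eta_n^2,\ldots,\eta_{n-b+1}^2)$ is a Markov chain with multiplicative recursion, and the condition $\sum_i\alpha_i+\sum_j\beta_j<1$ together with $\E[|\eta_n|^{2\delta}]<\infty$ is exactly the contraction condition used in \citep{he1999fourth,carrasco2002mixing,francq2019garch} to obtain geometric ergodicity and $\beta$-mixing with exponential decay, while the moment bound follows from the same iteration argument as above. For the stochastic autoregressive volatility model, the log-volatility $\log\sigma_n$ is an affine autoregression driven by i.i.d.\ innovations, so geometric ergodicity is classical (see \citep{mokkadem1990proprietes}) under $\E|\beta+\alpha u_n|<1$, and $\beta$-mixing of $r_n=\sigma_n z_n$ is inherited because $z_n$ is independent of the $\sigma$-algebra generated by $(\sigma_k)_{k\le n}$. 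The main obstacle is not analytic but organisational: one must check the drift and small-set conditions of Meyn--Tweedie uniformly across eight sub-models with slightly different coefficient structures; this bookkeeping is lengthy but each piece is already available in the literature, which is why the statement is phrased as a synthesis rather than proved from scratch.
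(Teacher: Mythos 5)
Your proposal is correct and follows the same approach as the paper, which does not give a self-contained proof of this lemma but rather states it as a synthesis of results drawn from the cited references \citep{mokkadem1990proprietes,boussama1998ergodicite, he1999fourth, carrasco2002mixing, basrak2002regular, lindner2009stationarity, francq2019garch}. Your sketch usefully makes explicit the unifying Augmented GARCH structure, the Lyapunov/drift conditions, and the moment-iteration argument that the cited works supply.
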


As a direct consequence we obtain the following:

\begin{Cor}\label{cor:ergodicity}
Let the distribution of $\{r_n\}_{n\in \N}$ be given by one of the models satisfying the corresponding conditions in Lemma \ref{lem:ergodicity} for stationarity and let $\P_1=\P$
\footnote{We note that GARCH is traditionally used for logarithmic returns $\log(S_n/S_{n-1})$ and not raw returns $r_n=S_n/S_{n-1}$ in the econometric literature. As $r \mapsto \log(r)$ is continuous this is no restriction for Theorem \ref{Thm. one-per}, however imposing Assumption \ref{Ass 1} for $r_n=\log(S_n/S_{n-1})$ would mean that we had to demand exponential moments of the GARCH models. Even in a parametric setting, statistical inference in a heavy-tailed GARCH environment is notoriously difficult, see e.g. \citep{hall2003inference}. On the other hand, for short time scales we find $r_n\sim 1$ and the approximation $\log(r_n)\sim r_n-1$ is quite accurate.}.
Then the conditions of Theorem \ref{Thm. one-per} are satisfied. Furthermore Assumption \ref{Ass 1}.\ref{Ass 1.1} is satisfied in the following cases:
\begin{itemize}
\item LGARCH: There exists an integer $q> 3p$ such that $(\beta+\alpha)^q<1$ or $\E[|\eta_n|^{2q}]<\infty$ and $\beta+\alpha< 1/(\E[|\eta_n|^{2q}])^{1/q}$.
\item NGARCH: There exists an integer $q> 3p$ such that $\E[(\beta+\alpha(\eta_n-c)^2)^q]<1$ or $\E[|\eta_n|^{2q}]<\infty$ and $\beta+\alpha< 1/(\E[|\eta_n-c|^{2q}])^{1/q}$.
\item VGARCH: There exists an integer $q>3p$ such that $\beta<1$, $\E[|\eta_n|^{2q}]<\infty$.
\item TS-GARCH: There exists an integer $q> 6p$ such that $\E[(\beta+\alpha_1|\eta_n|)^q]<\infty$.
\item GJR-GARCH: There exists an integer $q>3p$ such that $\E[(\beta+\alpha_1\eta_n^2+\alpha_2\max(0,-\eta_n)^2)^q]<\infty$.
\item TGARCH:  There exists an integer $q>6p$ such that $\E[(\beta+\alpha_1|\eta_n|+\alpha_2\max(0,-\eta_n))^q]<\infty$.
\item PGARCH: $\delta>3p$.
\end{itemize}
\end{Cor}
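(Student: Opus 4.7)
The proof splits naturally into two parts, corresponding to the two claims of the corollary, and proceeds model-by-model. In each of the listed models the state $(r_n,h_n)$ (or $h_n$ alone, augmented with $\eta_n$) is a time-homogeneous Markov chain by construction, and the parameter conditions imported from Lemma~\ref{lem:ergodicity} guarantee stationarity and stationary $\beta$-mixing with exponential decay when the chain is initialised from its unique invariant measure $\P$. Setting $\P_1=\P$ therefore makes $\P_1\ll\P$ trivially and puts us in the stationary regime throughout.

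For the first claim, the hypotheses of Theorem~\ref{Thm. one-per} require a time-homogeneous ergodic Markov chain with unique invariant $\P$ and $\P_1\ll\P$. Time-homogeneity is built into each specification; uniqueness of the invariant distribution and ergodicity follow from stationary $\beta$-mixing (indeed, $\beta$-mixing implies mixing, hence ergodicity, for stationary processes), which Lemma~\ref{lem:ergodicity} supplies under the listed stationarity conditions. This finishes the first claim.

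For the second claim I verify the three ingredients of Assumption~\ref{Ass 1}.\ref{Ass 1.1} separately. (i) The density condition $\|d\P_1/d\P\|_{L^{2s}(\P)}<\infty$ is immediate from $\P_1=\P$, so $s>3$ can be chosen freely. (ii) The moment condition asks $\E_\P[|r|^q]<\infty$ for some $q>2ps/(s-2)$. Noting $2ps/(s-2)=2p+4p/(s-2)$ is strictly decreasing in $s$ with limit $2p$, any moment threshold strictly above $2p$ can be matched by choosing $s$ large. For each model, the parameter hypothesis listed in the corollary is precisely the input required in the corresponding moment statement of Lemma~\ref{lem:ergodicity} to conclude $\E_\P[|r|^t]<\infty$ for the requisite $t$; for instance, for LGARCH the hypothesis gives $\E[|r|^{2q}]<\infty$ with $q>3p$, whence $t=2q>6p$ comfortably exceeds $2ps/(s-2)$ for, say, $s=4$, and similarly for the remaining cases.

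The main substantive step, and the only real obstacle, is (iii): establishing the Poincaré-type bound \eqref{eq:poincare} with a summable sequence $(\rho_N)$. The plan is to read it off directly from exponential $\beta$-mixing. With $K$ denoting the transition kernel and $r_1\sim\P$, for any measurable $f$ with $\|f\|_\infty\le 1$ one has
\begin{equation*}
\bigl|\E[f(r_N)\mid r_1]-m(f)\bigr|
=\Bigl|\int f\,d K^{N-1}(r_1,\cdot)-\int f\,d\P\Bigr|
\le \bigl\|K^{N-1}(r_1,\cdot)-\P\bigr\|_{\mathrm{TV}}.
\end{equation*}
Squaring, using that the total variation distance is bounded by $2$, and taking expectations yields
\begin{equation*}
\E\!\left[\bigl(\E[f(r_N)\mid r_1]-m(f)\bigr)^{2}\right]
\le 2\,\E\bigl\|K^{N-1}(r_1,\cdot)-\P\bigr\|_{\mathrm{TV}}
\le 4c\rho^{N-1},
\end{equation*}
by the exponential $\beta$-mixing bound from Lemma~\ref{lem:ergodicity}. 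Setting $\rho_N:=2\sqrt{c}\,\rho^{(N-1)/2}$ gives a geometrically decaying and hence summable sequence, independent of $f$, which is exactly what \eqref{eq:poincare} demands. Combining (i)--(iii) verifies Assumption~\ref{Ass 1}.\ref{Ass 1.1} in each of the listed cases and completes the proof.
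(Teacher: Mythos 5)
Your proof is correct and follows essentially the same route as the paper's: reduce the first claim to Lemma~\ref{lem:ergodicity}, observe that $\P_1=\P$ trivially handles the density condition so that the only substantive checks are the moment bound and the Poincar\'e-type inequality \eqref{eq:poincare}, read the moment bound off the per-model moment statements in Lemma~\ref{lem:ergodicity}, and derive \eqref{eq:poincare} from exponential $\beta$-mixing by bounding the conditional-expectation gap via the total-variation distance and squaring. The only cosmetic differences are bookkeeping: the paper works with $K^n$ and the convention where $\|\mu-\nu\|_{\mathrm{TV}}=\tfrac12\sup_{\|f\|_\infty\le 1}|\int f\,d\mu-\int f\,d\nu|$, landing on $\rho_n\asymp\rho^{n/2}$, whereas you use $K^{N-1}$ and the unnormalised convention, landing on the equivalent $\rho_N\asymp\rho^{(N-1)/2}$; and you fix $s=4$ rather than implicitly letting $s\downarrow 3$, which makes your moment threshold $4p$ slightly slacker than the paper's $6p$ but is equally sufficient under the corollary's hypotheses $q>3p$ (resp.\ $q>6p$). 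None of this constitutes a different argument.
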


\begin{proof}
The first claim follows directly from Lemma \ref{lem:ergodicity}. To show the second claim it is sufficient to check that there exists $q>6p$ with $\E[|r_n|^q]<\infty$ and \eqref{eq:poincare} holds. Using again Lemma \ref{lem:ergodicity}, existence of $q>6p$ with $\E[|r_n|^q]<\infty$ is satisfied under the conditions listed in the statement of the corollary.
Lastly we note that $$\|K^n(r_1, \cdot)-\P(\cdot)\|_{\mathrm{TV}}= \frac{1}{2} \sup_{\|f\|_{\infty} \le 1} \left|\E[f(r_n)|r_1]-\E[f(r_1)]\right|$$ and thus conclude that for $0<\rho<1$ in the $\beta$-mixing with exponential decay property of $(r_n)$, as given by Lemma \ref{lem:ergodicity}, $r_1\sim \P$ and all $\|f\|_{\infty}\le 1$ we have
\begin{align*}
(\E[(\E[f(r_n)|r_1]-\E[f(r_1)])^2])^{1/2}
&\le (\E[2|\E[f(r_n)|r_1]-\E[f(r_1)]|])^{1/2}\\
&\le \left(\E\left[2\sup_{\|f\|_{\infty}\le 1}|\E[f(r_n)|r_1]-\E[f(r_1)]|\right]\right)^{1/2}\\
&\le (4 \E\left[\|K^n(r_1, \cdot)-\P(\cdot)\|_{\mathrm{TV}}\right])^{1/2}\\
&\le 2c \rho^{n/2},
\end{align*} 
which is summable.
\end{proof}

A test for (strict) stationarity of GARCH models was developed in \citep{francq2012strict}. Applying their test to daily returns of CAC, DAX, DJA, DJI, DJT, DJU, FTSE, Nasdaq, Nikkei, SMI, and SP500, from January 2, 1990 to January 22, 2009 the authors conclude that non-stationarity of the time series is not plausible.

\subsection{Additional results and proofs for Section \ref{sec. impr}} \label{app. 1}

\begin{proof}[Proof of Lemma \ref{Lem d=1}]
It suffices to argue the case $d=1$ since the for $d\geq 2$ the result is shown in \cite[Thm.~1.1]{trillos2014rate}. The Kiefer-Wolfowitz bounds (see Lemma \ref{lem dkw}) yield $$\P^{\infty}\left(\sup_{r \in \R_+}|F_{\P}(r)- F_{\hat{\P}_N}(r)| \ge N^{-1/4}\right) \le \exp(-2\sqrt{N}).$$
Recall also that $A$ is connected, open and bounded and $\P$ satisfies Assumption \ref{Ass. bilip}. It follows that $F_{\P}(x)+\epsilon \leq F_{\P}(x+\alpha \epsilon)$ for $x\in \R$. We thus conclude that with $\P^{\infty}$-probability greater than $(1-\exp(-2\sqrt{N}))$ we have, for $x\in \R$,
\begin{align*}
F_{\P}(x-\alpha N^{-1/4}) \le F_{\P}(x)- N^{-1/4} &\leq F_{\hat{\P}_N}(x)\leq F_{\P}(x)+N^{-1/4}\\
&\leq F_{\P}(x+\alpha N^{-1/4})
\end{align*}
and in particular $\mathcal{W}^{\infty}(\P, \hat{\P}_N)\leq \alpha N^{-1/4}$.
\end{proof}

\begin{proof}[Proof of Corollary \ref{cor:modeluncertainty}]
The ``$\le$"-inequality follows as in the proof of Theorem \ref{Thm wasserstein}. For the ``$\ge"$-inequality take $\Q_N \in \mathcal{M}$ and $\nu_N \in B^p_{\epsilon+\epsilon_N}(\hat{\P}_N)$ such that $\|d\Q_N/ d\nu_N\|_{\infty} < C_N$. 
By assumption there exists a martingale measure $\Q \sim \P$ such that $\|d\Q / d\P \|_{\infty}\le C-\delta$ for some $\delta>0$.  Define $$\delta_N:=\frac{2|C_N-C|(C-\delta)}{C_N-C+\delta}\vee\left( \frac{2\epsilon_NC_N}{\epsilon+2\epsilon_N}-|C_N-C|\right)$$ for all $N\in \N$. For $N\in \N$ large enough choose $$\lambda_N\in \left(\frac{|C_N-C|+\delta_N}{C_N} ,\frac{\delta_N}{C-\delta}\right).$$
Then $\lambda_N \in (0,1)$,  $\lim_{N \to \infty} \lambda_N=0$  and $(1-\lambda_N)\nu_N+\lambda_N\P \in B^p_{\epsilon}(\P)$ on a set of probability at least  $1-\beta_N$. Furthermore $(1-\lambda_N)C_N\le C-\delta_N$ and thus
\begin{align*}
(1-\lambda_N)C_N+\lambda_N (C-\delta)\le C-\delta_N+\lambda_N(C-\delta)< C-\delta_N+\delta_N=C
\end{align*}
for all $N\in \N$.
Then for $\tilde{\Q}_N:=  (1-\lambda_N)\Q_N+\lambda_N \Q$ we have
\begin{align*}
\frac{d\tilde{\Q}_N}{d((1-\lambda_N)\nu_N+\lambda_N \P)}=\frac{d((1-\lambda_N)\Q_N+\lambda_N \Q)}{d((1-\lambda_N)\nu_N+\lambda_N \P)} <C
\end{align*}
and, as $\Q_N, \Q \in \mathcal{M}$ and for any $\tilde{\Q}\in \Mc$
\begin{align*}
\E_{\tilde{\Q}}[|r|] \le \sqrt{d} \E_{\tilde{\Q}}[|\max_{1\leq i\leq d}r_i|] \le \sqrt{d} \sum_{i=1}^d \E_{\tilde{\Q}}[|r_i|]=\sqrt{d} \sum_{i=1}^d \E_{\tilde{\Q}}[r_i]= d^{3/2},
\end{align*}
there exists $M>0$ such that
\begin{align*}
\E_{\Q_N}[g] &\le (1-\lambda_N) \E_{\Q_N}[g]+ \lambda_N \E_{\Q}[g]+ \lambda_N |\E_{\Q_N}[g]- \E_{\Q}[g]| \\
&\le \sup_{\nu \in B_{\epsilon}^p(\P), \ \|d\Q/ d\nu\|_{\infty}<C} \E_{\Q}[g]+\lambda_N M,
\end{align*}
where the last inequality holds with $\P^\infty$-probability at least $1-\beta_N$. This concludes the proof.
\end{proof}

We recall here the continuity property of $AV@R^{\P}_{1/k_N}(g)$:
\begin{Lem}[\cite{pichler2013evaluations}, Cor. 11, p.538]
\label{Lem. Pichler}
For $g \in \mathcal{L}^1$ and $\P, \tilde\P \in \prob(\R_+^d)$ we have
\begin{align*}
\left|AV@R_{1/k_N}^{\P}(g)-AV@R_{1/k_N}^{\tilde\P}(g)\right| \le k_N\mathcal{W}^1(\P, \tilde{\P}).
\end{align*}
\end{Lem}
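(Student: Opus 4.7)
The plan is to use the variational representation $AV@R^{\P}_{1/k}(g) = \sup_{0\le h \le k,\ \E_\P[h]=1}\E_\P[hg]$ (writing $k:=k_N$) and to transport an (almost) optimal density $h$ across the two base measures via an optimal $\Wc^1$-coupling. By symmetry, it suffices to bound $AV@R^{\P}_{1/k}(g) - AV@R^{\tilde\P}_{1/k}(g)$ from above.

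First I would fix an $\epsilon$-optimal density $h$ on the $\P$-side, modified by $h \mapsto (1-\epsilon)h + \epsilon$ so that $h>0$ (hence $h\cdot\P \sim \P$), as required in the AV@R representation. Next I would take a coupling $\gamma \in \Pi(\P,\tilde\P)$ attaining $\Wc^1(\P,\tilde\P)$, and push the density $h$ through $\gamma$ by setting
\[
\tilde\mu(A) := \int h(x)\mathbf{1}_A(y)\, \gamma(dx, dy),\qquad A \in \mathcal{B}(\R^d_+).
\]
Disintegrating $\gamma$ against its second marginal $\tilde\P$ shows $d\tilde\mu/d\tilde\P(y) = \E_\gamma[h(X)\mid Y=y] \le k$, and $\tilde\mu(\R^d_+) = \E_\P[h] = 1$, so $\tilde\mu$ is admissible on the $\tilde\P$-side (again with a small convex combination with $\tilde\P$, if needed, to restore strict equivalence).

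The whole argument then collapses to a one-line estimate: with $\mu := h\cdot \P$,
\[
\E_\mu[g] - \E_{\tilde\mu}[g] \,=\, \int \bigl(g(x) - g(y)\bigr) h(x)\, \gamma(dx, dy) \,\le\, k \int |x-y|\, \gamma(dx, dy) \,=\, k\, \Wc^1(\P,\tilde\P),
\]
where the inequality uses $g \in \mathcal{L}^1$ (i.e.\ $1$-Lipschitz) and $\|h\|_\infty \le k$. Letting $\epsilon \to 0$ and then swapping the roles of $\P$ and $\tilde\P$ yields both bounds. The only delicate points are handling the $\sim$-condition and the possibly non-attained suprema in the AV@R representation, both dispatched by the convex $\epsilon$-perturbations above; the substantive idea is the density-transport construction, which converts a transport distance on the base measures into a transport distance on the corresponding AV@R-admissible measures.
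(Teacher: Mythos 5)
The paper does not prove this lemma; it is quoted directly from the cited reference (Pichler, 2013, Cor.\ 11), so there is no in-paper proof to compare against. Your argument is correct and self-contained. Starting from the dual representation $AV@R^{\P}_{1/k}(g)=\sup\{\E_\P[hg]: 0\le h\le k,\ \E_\P[h]=1\}$, the density-transport construction $\tilde\mu(A)=\int h(x)\mathbf{1}_A(y)\,\gamma(dx,dy)$ is exactly right: $\E_\P[h]=1$ gives $\tilde\mu(\R^d_+)=1$, disintegration gives $d\tilde\mu/d\tilde\P(y)=\E_\gamma[h(X)\mid Y=y]\le k$, and the chain
\[
\E_{h\cdot\P}[g]-\E_{\tilde\mu}[g]=\int (g(x)-g(y))\,h(x)\,\gamma(dx,dy)\le k\int|x-y|\,\gamma(dx,dy)=k\,\Wc^1(\P,\tilde\P)
\]
cleanly converts the Lipschitz constant and density bound into the claimed modulus. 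The $\epsilon$-convexification $h\mapsto(1-\epsilon)h+\epsilon$ correctly dispatches the $\sim$ (versus $\ll$) constraint in the paper's definition of $AV@R$, and since $\Wc^1$ is attained for measures with a finite first moment, choosing an optimal $\gamma$ is unproblematic. This is essentially the standard coupling argument behind the cited result, and you would be justified in replacing the citation by this short proof.
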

This is used for the subsequent proof.
\begin{proof}[Proof of Corollary \ref{Lem:AVaR}]
Recall that 
\begin{align*}
\pi_{\hat{\mathcal{Q}}_N}(g) = \sup_{\tilde{\P} \in B^p_{\epsilon_N(\beta_N)}(\hat{\P}_N)} \sup_{\|d\nu/ d\tilde{\P}\|_{\infty} \le k_N} \inf_{H \in \R^d} \E_{\nu}\left[g-H(r-1)\right].
\end{align*}
We now want to interchange the two suprema and the infimum above, which can be done by the same arguments as in \cite[Proofs of Theorem 3.1 and Lemma 3.2]{bartl2016exponential}. Indeed, as NA$(\P)$ holds and as $\P\in B_{\epsilon_N(\beta_N)}^p(\hat{\P}_N)$ with high probability we conclude by \cite[Theorem 1.48, p.29]{follmer2011stochastic} that  there exists $N_0\in \N$ such that 
$$1 \in \text{ri}(\{\E_{\nu}=[r] \ | \ \exists \tilde{\P} \in B^p_{\epsilon_N(\beta_N)}(\hat{\P}_N) \text{ s.t. } \|d\nu/ d\tilde{\P}\|_{\infty} \le k_N\})$$ 
for all $N\ge N_0$, where $\text{ri}(A)$ denotes the relative interior of the convex hull of a set $A\in \mathcal{B}(\R_+^d)$. This implies
\begin{align*}
\pi_{\hat{\mathcal{Q}}_N}(g) &=  \inf_{H \in \R^d} \sup_{\tilde{\P} \in B^p_{\epsilon_N(\beta_N)}(\hat{\P}_N)} \sup_{\|d\nu/ d\tilde{\P}\|_{\infty} \le k_N}\E_{\nu}\left[g-H(r-1)\right]\\
&= \inf_{H \in \R^d} \sup_{\tilde{\P} \in B^p_{\epsilon_N(\beta_N)}(\hat{\P}_N)} AV@R^{\tilde{\P}}_{1/k_N}\left[g-H(r-1)\right].
\end{align*}
The first inequality in \eqref{eq. avar1} is trivial, while the second inequality follows from 2-Lipschitz-continuity of $g(r)-H(r-1)$ for $|H|\le 1$ and continuity of $\tilde{\P} \mapsto AV@R_{1/k_N}^{\tilde{\P}}(g-H(r-1))$ w.r.t. to $\Wc^1$, see \cite{pichler2013evaluations} or Lemma \ref{Lem. Pichler}.
\end{proof}

\begin{proof}[Proof of Theorem \ref{thm:penalty}]
Clearly 
\begin{align*}
&\lim_{N \to \infty} \sup_{\Q \in \mathcal{M}} \left(\E_{\Q}[g]-C_N \left(\inf_{\hat{\Q} \sim \hat{\P}_N, \ \hat{\Q}\in \mathcal{M}} \left\| \frac{d\hat{\Q}}{d\Q} \right\| _{\infty}-1\right) \right)\\
&\ge\ \lim_{N \to \infty}\sup_{\Q \sim \hat{\P}_N, \ \Q \in \mathcal{M}} \E_{\Q}[g]= \sup_{\Q \sim \P, \ \Q \in \mathcal{M}} \E_{\Q}[g] \hspace{0.5cm} \P^{\infty}\text{-a.s.}
\end{align*}
It remains to establish the reverse inequality. Let us first assume $C_N=C$. Fix $\Q \in \mathcal{M}$ and consider $\hat{\Q} \sim \hat{\P}_N$, $\hat{\Q}\in \Mc$ such that $\hat{\Q} \ll \Q$. Define $\epsilon \geq 0$ through
\begin{align*}
 \Q(r \notin \text{supp}(\hat{\P}_N))+\sum_{i=1}^N (\Q(r_i)-\hat{\Q}(r_i))^+=\epsilon.
\end{align*}
Note that we can write $\Q=f(r)\hat{\Q}+\nu$ for some Radon-Nykodym derivative $f$ and some measure $\nu$ singular to $\hat{\Q}$, where $\nu(\R_+^d)\le \epsilon$. Let $J := \{i \in \{0, \dots, N\} \ | \ \hat{\Q}(r_i) \ge \Q(r_i) \}$. Then we must have 
\begin{align*}
\sum_{i \in J} \hat{\Q}(r_i)- \Q(r_i)=\epsilon.
\end{align*}
In particular 
\begin{align*}
\left\|\frac{1}{f} \right\|_{\infty}=\left\|\frac{d\hat{\Q}}{d\Q} \right\|_{\infty}\ge \frac{1}{1-\epsilon},
\end{align*}
because otherwise
\begin{align*}
\sum_{i \in J} &\hat{\Q}(r_i)- \Q(r_i)<  \frac{\epsilon}{1-\epsilon} \sum_{i \in J} \Q(r_i)\\
&=\frac{\epsilon}{1-\epsilon}\left(1-\Q(r\notin \text{supp}(\hat{\P}_N))-\sum_{i=1}^N (\Q(r_i)-\hat{\Q}(r_i))^+-\sum_{i \in \{1, \dots, N\}\setminus J} \hat{\Q}(r_i)\right)\\
&\le \epsilon,
\end{align*}
a contradiction. Furthermore, by the definition of $\epsilon$, $\E_{\Q}[g]- \E_{\hat{\Q}}[g]\le C \epsilon$. Thus
\begin{align*}
\left(\E_{\Q}[g]-C \left(\left\|\frac{d\hat{\Q}}{d\Q} \right\|_{\infty} -1\right)\right) -\E_{\hat{\Q}}[g] \le C\left(\epsilon-\frac{1}{1-\epsilon}+1\right)=\frac{-C\epsilon^2}{1-\epsilon}\leq 0,
\end{align*}
so there is no gain from shifting mass and in particular
\begin{align*}
\sup_{\hat{\Q} \sim \hat{\P}_N, \ \hat{\Q} \in \mathcal{M}} \left(\E_{\Q}[g]-C \left(\left\|\frac{d\hat{\Q}}{d\Q} \right\|_{\infty} -1\right)\right) \le  \sup_{\hat{\Q} \sim \hat{\P}_N, \ \hat{\Q} \in \mathcal{M}} \E_{\hat{\Q}}[g].
\end{align*}
For $C_N\to C$ we have
\begin{align*}
 \left(\E_{\Q}[g]-C_N \left(\left\|\frac{d\hat{\Q}}{d\Q} \right\|_{\infty} -1\right)\right) -\E_{\hat{\Q}}[g] &\le C\epsilon-C_N\left(\frac{1}{1-\epsilon}-1\right)\\
 &=C\epsilon-C_N\left(\frac{\epsilon}{1-\epsilon}\right),
\end{align*}
which is non-negative for $\epsilon\le 1-C_N/C$. Finally, the case
when $\hat{\Q}$ is not absolutely continuous with respect to $\Q$ is trivial since then $\| d\hat{\Q}/d\Q\|_{\infty}= \infty$.
\end{proof}

\subsubsection{Proof of Theorem \ref{Thm wasserstein}}
For simplicity of exposition we first prove Theorem \ref{Thm wasserstein} under Assumption \ref{Ass 1}.\ref{Ass 1.2}. We adopt the notation of Section \ref{sec. impr}.

\begin{Lem}[\cite{fournier2015rate} Theorem 2]\label{Lem. was}
Under Assumption \ref{Ass 1}.\ref{Ass 1.2} we have
\begin{align*}
\P^{\infty}(\mathcal{W}^p(\P, \hat{\P}_N) \ge \epsilon ) \le 
\left\{\begin{array}{ll}
					c_1 \exp(-c_2 N \epsilon^{\min(\max(d/p,2),a/(2p))}) &\text{ if }\epsilon \le 1, \\
					c_1\exp(-c_2N\epsilon^{a/(2p)}) &\text{ if } \epsilon >1
			\end{array}
				\right. 
\end{align*}
for $N \ge 1$, $d \neq 2p$ and $\epsilon>0$, where $c_1, c_2$ are positive constants that only depend on $p$, $d$, $a$, $c$ and $\E_{\P}[\exp(c|r|^a)]$. Thus for some confidence level $\beta \in (0,1)$ we can choose
\begin{align}\label{eq:def_epsilonN}
\epsilon_N(\beta) := \left\{\begin{array}{ll}
					\left(\frac{\log(c_1\beta^{-1})}{c_2 N}\right)^{1/\min(\max(d/p,2),a/(2p))} &\text{ if }N \ge \frac{\log(c_1\beta^{-1})}{c_2}, \\
					\left(\frac{\log(c_1\beta^{-1})}{c_2 N}\right)^{(2p)/a} &\text{ if } N<\frac{\log(c_1\beta^{-1})}{c_2}
			\end{array}
				\right.
\end{align}
which yields $\P^{\infty} (\mathcal{W}^p(\P, \hat{\P}_N) \ge \epsilon_N(\beta)) \le \beta$.
\end{Lem}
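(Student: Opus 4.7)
The plan is to follow the approach of Fournier and Guillin (2015) for proving such concentration bounds, which couples a dyadic partitioning argument for controlling $\mathcal{W}^p(\hat{\P}_N,\P)$ in terms of local mass discrepancies with a Bernstein-type concentration inequality. First I would fix a truncation radius $R>0$ (to be optimized later) and split
\[
\mathcal{W}^p(\hat{\P}_N,\P)^p \;\lesssim\; \mathcal{W}^p(\hat{\P}_N|_{B_R},\P|_{B_R})^p \;+\; T_R,
\]
where $B_R=\{x\in\R^d_+: |x|\le R\}$ and $T_R$ is a tail error controlled by $\int_{|x|>R}|x|^p\,d\P(x)+\int_{|x|>R}|x|^p\,d\hat{\P}_N(x)$. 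The exponential-moment assumption $\E_\P[\exp(c|r|^a)]<\infty$ together with Markov's inequality immediately yields $\int_{|x|>R}|x|^p\,d\P(x)\lesssim \exp(-c' R^a)$. The corresponding empirical tail is a sum of i.i.d.\ bounded-probability events for which Bernstein's inequality gives an exponential deviation bound on the order of $\exp(-c_2 N R^p \exp(-c R^a))$, and tuning leads to a tail exponent of type $\exp(-c_2 N\epsilon^{a/(2p)})$.

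Second, for the bulk term I would use the dyadic-tree representation of optimal transport on $B_R$: on a nested sequence $(\mathcal{P}_n)_{n\ge 0}$ of partitions of $B_R$ into cubes of diameter $\sim R\,2^{-n}$, there is a universal constant $C$ with
\[
\mathcal{W}^p(\mu,\nu)^p \;\le\; C\,R^p \sum_{n\ge 0} 2^{-np} \sum_{Q\in\mathcal{P}_n} \bigl|\mu(Q)-\nu(Q)\bigr|
\]
for probability measures $\mu,\nu$ supported in $B_R$. Applied to $\mu=\hat{\P}_N|_{B_R}$ and $\nu=\P|_{B_R}$, each $|\hat{\P}_N(Q)-\P(Q)|$ is a centred average of i.i.d.\ Bernoulli variables, so Bernstein's inequality together with a union bound over the $\#\mathcal{P}_n\lesssim 2^{nd}$ cubes at level $n$ and over the levels $n=0,\dots,n_0$ (with $2^{-n_0}$ comparable to $\epsilon^{1/p}$) produces a bound of the form $c_1\exp(-c_2 N\epsilon^{\max(d/p,2)})$. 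The exponent $\max(d/p,2)$ arises because small cubes contribute a Bernstein variance term scaling like $\P(Q)\sim 2^{-nd}$ (source of $d/p$), while large cubes are controlled by the sub-Gaussian regime (source of the ``$2$''); the case $d=2p$ is excluded precisely because the two regimes meet and a logarithmic correction would be needed.

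Finally, to obtain the full statement one optimizes $R$ as a function of $\epsilon$. Balancing the tail exponent $\exp(-c_2 N\epsilon^{a/(2p)})$ (with truncation at $R\sim\epsilon^{-\kappa}$ for an appropriate $\kappa$) against the bulk exponent $\exp(-c_2 N\epsilon^{\max(d/p,2)})$ yields an overall exponent of $\min(\max(d/p,2),a/(2p))$ in the regime $\epsilon\le 1$. For $\epsilon>1$ the bulk discrepancy is automatically small and only the heavy-tail contribution matters, giving the second-case exponent $a/(2p)$. The inversion formula \eqref{eq:def_epsilonN} for $\epsilon_N(\beta)$ then follows by algebraically solving the inequality $c_1\exp(-c_2 N\epsilon^\star)=\beta$ in each regime. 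The main obstacle in this plan is the careful bookkeeping in the dyadic/Bernstein step: one must track how the variance bound $\P(Q)\le C 2^{-nd}$ interacts with the $2^{-np}$ prefactors from the tree inequality and with the number $2^{nd}$ of cubes in the union bound, in order to arrive at the correct sub-Gaussian versus sub-exponential transition encoded in $\max(d/p,2)$.
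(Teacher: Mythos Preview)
The paper does not give its own proof of this lemma: it is stated purely as a citation of Theorem~2 in \cite{fournier2015rate}, and the only new content is the elementary inversion \eqref{eq:def_epsilonN}, obtained by solving $c_1\exp(-c_2 N\epsilon^{\star})=\beta$ for $\epsilon$ in each of the two regimes. Your sketch is therefore not to be compared against a proof in the paper but against the original Fournier--Guillin argument, and at that level it is essentially faithful: truncation at a radius $R$, control of the tail via the exponential-moment assumption, control of the bulk via the dyadic/tree bound on $\mathcal{W}^p$ combined with Bernstein-type concentration and a union bound over cubes, and a final optimisation over $R$. One small caution: the splitting $\mathcal{W}^p(\hat{\P}_N,\P)^p \lesssim \mathcal{W}^p(\hat{\P}_N|_{B_R},\P|_{B_R})^p + T_R$ is not literally correct as written because the restrictions $\hat{\P}_N|_{B_R}$ and $\P|_{B_R}$ need not have the same total mass; Fournier--Guillin handle this by comparing to suitably renormalised or coupled measures, and you would need to do the same to make the argument rigorous.
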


\begin{Lem}\label{lem. comp}
Fix $N \in \N$ and $p\ge 1$. Let $\Q_n \in D^p_{\epsilon_N(\beta_N),k_N}(\P)$ such that $\Q_n \Rightarrow \Q \in \mathcal{P}(\R_+^d)$ for $n \to \infty$. Then $|E_{\Q}[r-1]| \le K k_N \epsilon_N(\beta_N)$ for some $K>0$. $D^p_{\epsilon_N(\beta_N),k_N}(\P)$ is weakly compact for $p>1$. In general $D^1_{\epsilon_N(\beta_N),k_N}(\P)$ is not weakly closed. 
\end{Lem}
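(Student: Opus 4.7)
The plan is to handle the three assertions in turn. For the moment bound, I would fix a coordinate $r^i\ge 0$ and write $\Q_n=f_n\tdP_n$ with $f_n\le k_N$ $\tdP_n$-a.s.\ and $\tdP_n\in B^p_{\epsilon_N(\beta_N)}(\P)$. The Portmanteau theorem applied to the nonnegative continuous function $r^i$ gives the easy direction $\E_\Q[r^i]\le\liminf_n\E_{\Q_n}[r^i]=1$. For the reverse inequality I would truncate $r^i=(r^i\wedge M)+(r^i-M)^+$: the first summand is bounded continuous, so $\E_{\Q_n}[r^i\wedge M]\to \E_\Q[r^i\wedge M]$, while the second is $1$-Lipschitz, so Kantorovich--Rubinstein \eqref{eq:KRduality} together with $\Wc^1\le \Wc^p$ gives $\E_{\tdP_n}[(r^i-M)^+]\le \E_\P[(r^i-M)^+]+\epsilon_N(\beta_N)$, and the density bound upgrades this to $\E_{\Q_n}[(r^i-M)^+]\le k_N\bigl(\E_\P[(r^i-M)^+]+\epsilon_N(\beta_N)\bigr)$. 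Combining with $\E_{\Q_n}[r^i\wedge M]=1-\E_{\Q_n}[(r^i-M)^+]$ and sending $n\to\infty$, then $M\to\infty$ (using $\E_\P[|r|]<\infty$), yields $\E_\Q[r^i]\ge 1-k_N\epsilon_N(\beta_N)$. Summing over the $d$ coordinates gives the bound with $K=\sqrt{d}$.

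For weak compactness when $p>1$, I would first observe that every $\Q=f\tdP\in D^p_{\epsilon_N(\beta_N),k_N}(\P)$ satisfies, by Minkowski's inequality applied to an optimal $\Wc^p$-coupling, the uniform bound $(\E_\Q[|r|^p])^{1/p}\le k_N^{1/p}\bigl((\E_\P[|r|^p])^{1/p}+\epsilon_N(\beta_N)\bigr)<\infty$, which by Markov gives tightness. For closedness take $\Q_n\Rightarrow \Q$ with $\Q_n=f_n\tdP_n$, $f_n\le k_N$, $\tdP_n\in B^p_{\epsilon_N(\beta_N)}(\P)$. The same moment control shows $B^p_{\epsilon_N(\beta_N)}(\P)$ is tight, so along a subsequence $\tdP_{n_j}\Rightarrow \tdP$; lower semicontinuity of $\Wc^p$ under weak convergence (see \cite[Rem.~6.12]{villani2008optimal}) then gives $\tdP\in B^p_{\epsilon_N(\beta_N)}(\P)$. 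The finite nonnegative measures $k_N\tdP_{n_j}-\Q_{n_j}$ converge weakly to $k_N\tdP-\Q$, which therefore remains nonnegative (by testing against nonnegative bounded continuous functions), whence $\Q\ll\tdP$ with $d\Q/d\tdP\le k_N$ $\tdP$-a.e. Finally the uniform $p$-moment bound yields uniform integrability of $r$ under $(\Q_n)$, so $\E_\Q[r]=\lim_n \E_{\Q_n}[r]=\mathbf{1}$, i.e.\ $\Q\in\Mc$, and hence $\Q\in D^p_{\epsilon_N(\beta_N),k_N}(\P)$.

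For the $p=1$ non-closedness, I would exhibit a concrete counterexample with $\P=\delta_1$ (taking $N$ large enough so that $k_N\epsilon_N(\beta_N)<2$, which is the relevant regime). Set $b_n:=nk_N\epsilon_N(\beta_N)/2$ and, for $n>2/(k_N\epsilon_N(\beta_N))$,
\begin{align*}
\Q_n&:=\tfrac{b_n-1}{n}\delta_0+\bigl(1-\tfrac{b_n}{n}\bigr)\delta_1+\tfrac{1}{n}\delta_{b_n},\\
\tdP_n&:=\tfrac{1}{k_N}\Q_n+\bigl(1-\tfrac{1}{k_N}\bigr)\delta_1.
\end{align*}
A direct verification shows $\E_{\Q_n}[r]=1$, $\|d\Q_n/d\tdP_n\|_\infty\le k_N$ and $\Wc^1(\tdP_n,\delta_1)=2(b_n-1)/(nk_N)<\epsilon_N(\beta_N)$, so $\Q_n\in D^1_{\epsilon_N(\beta_N),k_N}(\delta_1)$. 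However $\Q_n\Rightarrow \Q:=\tfrac{k_N\epsilon_N(\beta_N)}{2}\delta_0+\bigl(1-\tfrac{k_N\epsilon_N(\beta_N)}{2}\bigr)\delta_1$ (the escaping mass $1/n$ at $b_n$ vanishes against any bounded continuous test function), and $\E_\Q[r]=1-k_N\epsilon_N(\beta_N)/2\ne 1$, so $\Q\notin\Mc$ and thus $\Q\notin D^1_{\epsilon_N(\beta_N),k_N}(\delta_1)$. The main obstacle is the closedness half of the second assertion: one has to pass to a joint weak limit of the pairs $(\tdP_n,\Q_n)$ and transfer the density bound through, which hinges on tightness of $B^p_{\epsilon_N(\beta_N)}(\P)$ and preservation of nonnegativity of finite signed measures under weak limits; this is precisely what breaks in the $p=1$ setting, where mass can leak to infinity exactly as in the counterexample.
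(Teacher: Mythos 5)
Your argument is correct and follows the paper's route on all three claims: truncation plus the $k_N$-density bound and Kantorovich--Rubinstein for the moment estimate, tightness of $B^p_{\epsilon_N(\beta_N)}(\P)$ together with preservation of the density bound under weak limits for $p>1$ compactness, and a sequence of martingale measures whose mass leaks to infinity for the $p=1$ non-closedness. Your counterexample is constructed differently from the paper's (which fixes the density bound at $1/\sqrt{\epsilon_N(\beta_N)}$ and sends a free atom $r_n\to\infty$, whereas you place atoms at $\{0,1,b_n\}$ with $b_n=nk_N\epsilon_N(\beta_N)/2$ and take $\tdP_n=\tfrac{1}{k_N}\Q_n+(1-\tfrac{1}{k_N})\delta_1$), but the mechanism is identical; one small imprecision is the phrase ``taking $N$ large enough so that $k_N\epsilon_N(\beta_N)<2$'' --- since $N$ is fixed in the lemma this should be stated as a hypothesis on the given constants (together with $k_N\ge 1$, which your density check on $\{1\}$ implicitly uses), not as a choice.
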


\begin{proof}
Take a sequence $\Q_n \in D^p_{\epsilon_N(\beta_N),k_N}(\P)$ such that $\Q_n \Rightarrow \Q$. For every $n \in \N$ there exists $\nu_n \in B^p_{\epsilon_N(\beta_N)}(\P)$ such that $\|d\Q_n / d\nu_n \|_{\infty} \le k_N$. First observe that for any $n$ we have
\begin{equation}\label{eq:wassballbound}
\E_{\Q_n}[r\land K]\leq k_N\left(\E_{\P}[r\land K]+\epsilon_N(\beta_N)\right)\leq k_N\left(\E_{\P}[r]+\epsilon_N(\beta_N)\right)<\infty.
\end{equation}
It follows, by weak convergence and monotone convergence theorem, that $\E_{\Q}[r]\leq k_N\left(\E_{\P}[r]+\epsilon_N(\beta_N)\right)<\infty$. Next, we show $|\E_{\Q}[r-1]| \le \text{const}\cdot k_N \epsilon_N(\beta_N)$. As 
\begin{align*}
|\E_{\Q}[r-1]| \le \sqrt{d} \max_{1\leq i\leq d}|\E_{\Q}[r_i-1]| 
\leq \sqrt{d} \sum_{i=1}^d |\E_{\Q}[r_i-1]|
\end{align*}
it is enough to consider the case $d=1$. Then
\begin{align}\label{eq. est}
|\E_{\Q}[r-1]|\leq &  \Big|\E_{\Q}[(r-1) \wedge K]-\E_{\Q_n}[(r-1)\wedge K]\Big|\nonumber\\
&+\Big| \E_{\Q}\left[(r-1-K)\mathds{1}_{\{r\ge K+1\}}\right]-\E_{\Q_n}\left[(r-1-K)\mathds{1}_{\{ r\ge K+1\}}\right]\Big|
\end{align}
Consider now the terms on the RHS. 
The third term can be made arbitrarily small by taking large $K$ since $\Q$ admits first moment. The fourth term can be bounded, in analogy to \eqref{eq:wassballbound}, as follows:
\begin{align*}
\E_{\Q_n}[(r-K)\mathds{1}_{\{r\ge K\}}] &\le k_N  \E_{\P}[(r-K)\mathds{1}_{\{r\ge K\}}]+k_N\epsilon_N(\beta_N)\leq 2 k_N \epsilon_N(\beta_N),
\end{align*}
where we took $K$ large enough. Finally, for a fixed $K$, the difference between the first two terms can be made small by taking $n$ large due to weak convergence of measures.
The bound  $|\E_{\Q}[r-1]| \le \text{const}\cdot k_N \epsilon_N(\beta_N)$ follows. 

If $p>1$, the ball $B^p_{\epsilon_N}(\P)$ is weakly compact (see \cite[Def. 6.8., p.96]{villani2008optimal}), in particular uniformly integrable, and it follows that also the fourth term on the RHS of \eqref{eq. est} converges to zero, uniformly in $n$, as $K\to \infty$
\begin{align*}
\E_{\Q_n}[(r-1-K)\mathds{1}_{\{ r\ge K+1\}}]&\le k_N \limsup_{K \to \infty} \sup_{n}\E_{\nu_n}[r\mathds{1}_{r\geq K}]=0.
\end{align*}
Further, possibly on a subsequence, $(\nu_n)_{n \in \N}$ converges weakly to a limit $\nu\in B^p_{\epsilon_N}(\P)$. By regularity of probability measures it is sufficient to test $d\Q/ d\nu$ against bounded continuous functions and we easily conclude that $\|d\Q/ d\nu\|_{\infty} \le k_N$. This shows compactness of $D^p_{\epsilon_N(\beta_N),k_N}(\P)$.

Consider now $p=1$. We give the following counterexample: take $\P=\delta_1$ and set for $r_n \ge 2$
\begin{align*}
\nu_n=\frac{\epsilon_N(\beta_N)}{2}\ \delta_0+\left(1-\frac{r_n \ \epsilon_N(\beta_N)}{2(r_n-1)}\right) \ \delta_1+\frac{\epsilon_N(\beta_N)}{2(r_n-1)}\ \delta_{r_n}.
\end{align*}
Furthermore let
\begin{align*}
\Q_n=\frac{\epsilon_N(\beta_N)}{2\sqrt{\epsilon_N(\beta_N)}} \ \delta_0+\left(1-\frac{r_n \epsilon_N(\beta_N)}{2(r_n-1)\sqrt{\epsilon_N(\beta_N)}}\right)\ \delta_1+\frac{\epsilon_N(\beta_N)}{2(r_n-1)\sqrt{\epsilon_N(\beta_N)}} \ \delta_{r_n}.
\end{align*}
Then $\mathcal{W}^1(\nu_n, \delta_1) \le \epsilon_N(\beta_N)$
and
\begin{align*}
\left\|\frac{d\Q_N}{d\nu_N} \right\|_{\infty}\le \frac{1}{\sqrt{\epsilon_N(\beta_N)}}.
\end{align*}
Assume $\lim_{n \to \infty} r_n = \infty$. Then
\begin{align*}
\Q_n \Rightarrow\frac{\sqrt{\epsilon_N(\beta_N)}}{2} \ \delta_0+\left(1-\frac{ \sqrt{\epsilon_N(\beta_N)}}{2}\right)\ \delta_1 \notin \mathcal{M}.
\end{align*}
\end{proof}

Taking the closure of $D^1_{\epsilon_N(\beta_N),k_N}(\P)$ would ensure compactness, but consistency of the estimator $\sup_{\Q \in \hat{\Qc}_N} \E_{\Q}[g]$ in Theorem \ref{Thm wasserstein} would be lost in general since the closure might include non-martingale measures. To see this, take for instance $g(r)=(r-1)$ in the example in the proof of Lemma \ref{lem. comp}.

\begin{proof}[Proof of Theorem \ref{Thm wasserstein} under Assumption \ref{Ass 1}.\ref{Ass 1.2}.]
Let us assume that Assumption \ref{Ass 1}.\ref{Ass 1.2} is satisfied.
Note that the ``$\ge$"-inequality follows from Lemma \ref{lem. rasonyi}. Indeed, Lemma \ref{lem. rasonyi} implies that for all $N \in \N$ there exists a martingale measure $\Q_N \sim \P$ with $\| d\Q_N /d\P \|_{\infty} \le k_N$ such that
\begin{align*}
\sup_{\Q \sim \P, \ \Q \in \mathcal{M}} \E_{\Q}[g] \le \E_{\Q_N}[g]+a_N, 
\end{align*}
with $a_N \to 0$ as $N \to \infty$. With $\P^{\infty}$-probability $(1-\beta_N)$ we have $\P \in B^p_{\epsilon_N(\beta_N)}(\hat{\P}_N)$ and hence $\Q_N\in \hat{\Qc}_N$. This gives
\begin{align*}
\P^{\infty}\left(\sup_{\Q \sim \P, \ \Q \in \mathcal{M}} \E_{\Q}[g]-a_N \le \sup_{\Q \in \hat{\Qc}_N} \E_{\Q}[g]\right)\ge 1-\beta_N,\quad N\geq 1.
\end{align*}
We recall that NA$(\P)$ gives $\pi^\P(g)=\sup_{\Q \sim \P, \ \Q \in \mathcal{M}} \E_{\Q}[g]$ and hence, for any $\epsilon>0$ and $N$ large enough, $\P^{\infty}(\pi_{\hat{\Qc}_N}(g)-\pi^\P(g)\leq -\epsilon)\leq \beta_N\to 0$ as $N\to \infty$. 
\\
For the ``$\le$"-inequality, we assume $g$ is Lipschitz continuous bounded from below and we take $H \in \R^d$ such that
\begin{align*}
\pi^\P(g)+H(r-1) \ge g \quad \P\text{-a.s.}
\end{align*}
Take a sequence $\Q_N\in \hat{\Qc}_N$ with $\E_{\Q_N}[g]\geq \pi_{\hat{\Qc}_N}(g)-a_N$. By definition, there exist $\nu_N\in B^p_{\epsilon_N(\beta_N)}(\hat{\P}_N)$ such that $\|d\Q_N /d\nu_N\|_{\infty}\le k_N$. In particular, with $\P^{\infty}$-probability $(1-\beta_N)$ we have $\nu_N \in B^p_{2\epsilon_N(\beta_N)}(\P)$. Let us define
\begin{align*}
A= \{ \pi^\P(g)+H(r-1)-g \ge 0\}.
\end{align*}
Then
\begin{align}\label{eq. lip1}
\nonumber\E_{\Q_N}[g] &\le \E_{\Q_N}[(\pi^\P(g)+H(r-1))\mathds{1}_A+g\mathds{1}_{A^c}] \\
&=\pi^\P(g)+\E_{\Q_N}[H(r-1)]+ \E_{\Q_N}\left[(g-H(r-1)-\pi^\P(g))\mathds{1}_{A^c}\right]
\end{align}
and the second term on the RHS vanishes since $\Q_N$ is a martingale measure. To treat the last term on the RHS consider the function
\begin{align*}
\tilde{g}:=(g-H(r-1)-\pi^\P(g))\vee 0
\end{align*}
which is non-negative, $C$-Lipschitz for some $C>0$ and $\{\tilde{g}>0\}=A^c$. Since $\P(A^c)=0$ we have in particular
\begin{align*}
\left| \int_{A^c}\tilde{g}d\nu_N \right|=\left|\int_{A^c} \tilde{g}d\nu_N-\int_{A^c}\tilde{g}d\P\right|=\left|\int \tilde{g}d\nu_N-\int\tilde{g}d\P\right|
\end{align*}
which by the Kantorovitch-Rubinstein duality \eqref{eq:KRduality} is dominated by 
$C\mathcal{W}^1(\nu_N, \P)\le C\mathcal{W}^p(\nu_N, \P)$. We conclude that, for any $\epsilon>0$, 
$$\P^\infty\left(\pi_{\hat{\Qc}_N}(g)-\pi^\P(g)\geq \epsilon\right) \leq \P^\infty\left(a_N+Ck_N\mathcal{W}^p(\nu_N, \P)\geq \epsilon\right)\leq \beta_N
$$
for $N$ large enough since $\epsilon_Nk_N\to 0$. This establishes the convergence of $\pi_{\hat{\Qc}_N}(g)$ to $\pi^\P(g)$ in $\P^\infty$-probability. Further, whenever $\sum_{N=1}^\infty\beta_N<\infty$, a simple application of Borel-Cantelli lemma, similarly as in \cite[Lemma 3.7]{esfahani2015data}, shows that the convergence holds $\P^\infty$-a.s.
This concludes the proof in the case of Lipschitz continuous $g$ bounded from below and under Assumption \ref{Ass 1}.\ref{Ass 1.2}. \\

It remains to argue the ``$\le$"-inequality when $g$ is bounded and continuous. We fix a small $\delta>0$ and define
\begin{align*}
\tilde{\mathcal{Q}}_N:=\left\{\Q\in \mathcal{P}(\R^d_+)\ | \ \exists \tilde\P \in B^p_{\epsilon_N(\beta_N)}(\hat{\P}_N) \text{ such that } \|d\Q/d\tilde\P\|_{\infty}\le k_N/(1-\delta)\right\}.
\end{align*}
and set 
\begin{align*}
\pi_{\hat{\mathcal{Q}}_N,[0,K]^d}(g):= \sup_{\Q \in \tilde{\mathcal{Q}}_N, \Q\in \mathcal{M}, \ \text{supp}(\Q)\subseteq[0,K]^d} \E_{\Q}(g).
\end{align*}
Similarly to Corollary \ref{Lem:AVaR} we see that for $K>1$ large enough
\begin{align*}
&\pi_{\hat{\mathcal{Q}}_N,[0,K]^d}(g)\\
&\qquad =
\inf \Bigg\{x \in \R \ | \ \exists H \in \R^d \text{ s.t. }\sup_{\Q\in \tilde{\mathcal{Q}}_N, \ \text{supp}(\Q) \subseteq [0,K]^d}\E_{\Q}[g(r)-H(r-1)-x] \le 0\Bigg\}.\nonumber
\end{align*}
Thus there exists a sequence $H_N\in \R^d$ such that
\begin{align}\label{eq:avar_j}
\sup_{\Q\in \tilde{\mathcal{Q}}_N, \ \text{supp}(\Q) \subseteq [0,K]^d}\E_{\Q}[g(r)-H_N(r-1)-\pi_{\hat{\mathcal{Q}}_N,[0,K]^d}(g)] \le 1/N
\end{align}
for all $N \in \N$. Take $K$ large enough so that $d/K<\delta$. For notational simplicity we assume that $\P$ has full support. Recall that $g$ is bounded, $|g|\leq C$. We now show that $H_N^i$ is bounded. Let us first look at the lower bound: for this, we take $i\in \{1,\dots,d\}$ and suppose that $H_N^i\le 0$ (otherwise we trivially bound $H_N^i$ from below by $0$). We work on the set $\{\P\in B^p_{\epsilon_N(\beta_N)}(\hat{\P}_N)\}$, which has $\P^\infty$-probability at least $1-\beta_N$. Now we take $N$ large enough such that 
$$k_N\cdot \P\left(s_j r_j\ge 1 \text{ for all }j\neq i,\  K/2<r_i< K\right)\ge 1$$
for all $(s_1, \dots,s_{i-1},s_{i+1},\dots, s_d) \in \{-1,1\}^{d-1}$.
Then defining $s_j=-\text{sign}(H^j_N)$ for all $j\neq i$ and
$$\frac{d\Q^{s_1,\dots,s_{i-1}, s_{i+1},s_{d}}}{d\P}:=\frac{ \mathds{1}_{\left\{s_j r_j\ge 1\text{ for all }j\neq i,\  K/2<r_i< K\right\}}}{\P(s_j r_j\ge 1 \text{ for all }j\neq i,\  K/2<r_i< K)}$$ we have by \eqref{eq:avar_j}
\begin{align*}
-H_N^i (K/2-1)\le -H_N \E_{\Q}[r-1]\le \frac{1}{N}-\E_{\Q}[g(r)]+\pi_{\hat{\mathcal{Q}}_N}(g)\le 2 C+1
\end{align*}
and thus $H_N^i \ge -(2(C+1))/(K/2-1)$. 
On the other hand, assuming $H_N^i\ge 0$ and setting
$$\frac{d\Q^{s_1,\dots,s_{i-1}, s_{i+1},s_{d}}}{d\P}:=\frac{ \mathds{1}_{\left\{s_j r_j\ge 1\text{ for all }j\neq i,\  0<r_i< 1/2\right\}}}{\P(s_j r_j\ge 1 \text{ for all }j\neq i,\  0<r_i< 1/2)}$$ we got by the same arguments as above
\begin{align*}
H_N^i/2 \le -H_N \E_{\Q}[r-1]\le \frac{1}{N}-\E_{\Q}[g(r)]+\pi_{\hat{\mathcal{Q}}_N}(g)\le 2 C+1
\end{align*}
for a possibly larger $N$. In conclusion thus $ -(2(C+1))/(K/2-1)\le H_N^i \le 4C+2$. Now we set $$\tilde{H}^i_N:=(C-\pi_{\hat{\mathcal{Q}}_N,[0,K]^d}(g))/(K-1)\vee H^i_N, \qquad i\in \{1, \dots, d\}.$$ Obviously $\tilde{H}^i_N\ge 0$ and 
\begin{align}\label{eq:late}
\pi_{\hat{\mathcal{Q}}_N,[0,K]^d}(g)+\tilde{H}^{i}_N(r^{i}-1)\ge C
\end{align}
for $r^i \ge K$. Furthermore as $(C-\pi_{\hat{\mathcal{Q}}_N,[0,K]^d}(g))/(K-1) \le 2C/(K-1)$ and $H^i_N \ge -2(C+1)/(K/2-1)$ we have for $r^i\le 1$ 
\begin{align}\label{eq:late2}
0\ge (\tilde{H}^i_N-H^i_N)(r^i-1)\ge -(\tilde{H}^i_N-H^i_N)\ge -4(C+1)/(K/2-1).
\end{align}
Consider now $\Q_N\in \hat{\mathcal{Q}}_N$. Note that by Markov's inequality we have 
$$\Q_N([0,K]^d)\geq 1-\frac{d}{K}\ge 1-\frac{\delta}{4C+2}\ge 1-\delta.$$ 
Then 
\begin{align*}
& \E_{\Q_N}\left[g(r)-\tilde{H}_N(r-1)-\pi_{\hat{\mathcal{Q}}_N,[0,K]^d}(g)-4d(C+1)/(K/2-1)-d^2(4C+2)/K \right]\\
&\le  \E_{\Q_N}\left[\left(g(r)-\tilde{H}_N(r-1)-\pi_{\hat{\mathcal{Q}}_N,[0,K]^d}(g)-4d(C+1)/(K/2-1)\right)\mathds{1}_{[0,K]^d}\right]\\
&\qquad+\E_{\Q_N}\left[\left(g(r)-\tilde{H}_N (r-1)-\pi_{\hat{\mathcal{Q}}_N,[0,K]^d}(g)-d^2(4C+2)/K \right)\mathds{1}_{([0,K]^d)^c}\right]\\
&\le \E_{\Q_N}\left[\left(g(r)-H_N(r-1)-\pi_{\hat{\mathcal{Q}}_N,[0,K]^d}(g)\right)\mathds{1}_{[0,K]^d}\right]\\
&\le \sup_{\Q\in \tilde{\mathcal{Q}}_{N}, \ \text{supp}(\Q) \subseteq [0,K]^d}\E_{\Q}\left[g(r)-H_N(r-1)-\pi_{\hat{\mathcal{Q}}_N,[0,K]^d}(g)\right]\leq \frac{1}{N}.
\end{align*}
where we used  \eqref{eq:late} and \eqref{eq:late2} as well as $|H_N^j|\Q_N(([0,K]^d)^c)\le d(4C+2)/K$ for $j\neq i$ in the second inequality. In the last inequality we used \eqref{eq:avar_j}.
By definition of $\pi_{\hat{\mathcal{Q}}_N}(g)$, and since $\E_{\Q_N}[\tilde H_N(r-1)]=0$, this shows 
\begin{align*}
 \pi_{\hat{\mathcal{Q}}_N}(g)\le \frac{4d(C+1)}{K/2-1}+\frac{d^2(4C+2)}{K}+\frac{1}{ N}+\pi_{\hat{\mathcal{Q}}_{N}, [0,K]^d}(g).
\end{align*}
We obtain the same result in the case of $\P$ without full support with a possible addition of the term constant times $\epsilon_N k_N$ on the RHS. Note that we either have $r_i=1$ $\P$-a.s., or else by NA$(P)$, $\P$ puts mass on $r_i$ on either side of $1$. If $r_i$ is bounded under $\P$, we can take $\tilde H_N^i=H_N^i$ and the arguments remain the same but we have the additional error term as $\Q_N$ can put small mass on unbounded $r_1$.

Lastly we fix $\epsilon>0$, $K>1$ such that $$\frac{4d(C+1)}{K/2-1}+\frac{d^2(4C+2)}{K} \le \epsilon$$ and take a sequence $(\Q_N)_{N \in \N}$ which satisfies $\pi_{\hat{\mathcal{Q}}_N,[0,K]^d}(g) \le \epsilon +\E_{\Q_N}[g]$ for all $N \in \N$. Note that all $(\Q_N)_{N \in \N}$ are martingale measures supported on $[0,K]^d$ .
As $g$ is uniformly continuous on $[0,K+1]^d$, there exists a constant $L>0$ such that $|g(r)-g(\tilde{r})|\le \epsilon+L|r-\tilde{r}|$ for all $r, \tilde{r}\in [0,K+1]^d$. Define
$$\hat{g}(r):=\sup_{u\in [0,K]^d}\left( g(u)-(L\vee 2C)|u-r|-\epsilon\right)$$
and note that $g(r)-\epsilon\le \hat{g}(r)$ on $[0,K]^d$ as well as $\hat{g}(r)\le g(r)$ on $[0,K+1]^d$ and $\hat{g}(r)\le -C$ on $([0,K+1]^d)^c$. In conclusion  $\hat{g}(r)\le g(r)$. For fixed $u\in [0,K]^d$ we write 
\begin{align*}
g(u)-(L\vee 2C)|u-r|-\epsilon\le g(u)-(L\vee 2C)|u-r'|-\epsilon+(L\vee 2C)|r-r'|
\end{align*}
Taking suprema over $u\in [0,K]^d$ on both sides we conclude
\begin{align*}
\hat{g}(r)\le \hat{g}(r')+(L\vee 2C)|r-r'|.
\end{align*}
Exchanging the roles of $r$ and $r'$ finally yields $|\hat{g}(r)-\hat{g}(r')|\le (L\vee 2C)|r-r'|$ for $r,r'\in \R_+^d$. Thus we can argue as in the proof of Theorem \ref{Thm wasserstein} for Lipschitz continuous claims (and using the same notation) to obtain
\begin{align*}
\pi_{\hat{\mathcal{Q}}_N}(g)-\pi^{\P}(g)&\le \pi_{\hat{\mathcal{Q}}_N,[0,K]^d}(g)-\pi^{\P}(g)+\epsilon+\frac{1}{N}\\
&\le \E_{\Q_N}[g]-\pi^{\P}(g)+2\epsilon+\frac{1}{N}\\
&\le \E_{\Q_N}[\hat{g}]-\pi^{\P}(g)+3\epsilon+\frac{1}{N}\\
&\le  \pi^{\P}(\hat{g})-\pi^{\P}(g)+3\epsilon+\hat{C}k_N\mathcal{W}^p(\hat{\P}_N,\P)+\frac{1}{N}\\
&\le \pi^{\P}(g)-\pi^{\P}(g)+3\epsilon+\hat{C}k_N\mathcal{W}^p(\hat{\P}_N,\P)+\frac{1}{N}.
\end{align*}
for some $\hat{C}>0$. In particular
\begin{align*}
\P^{\infty}\left(\pi_{\hat{\mathcal{Q}}_N}(g)-\pi^{\P}(g)\ge \epsilon\right)\le \P^{\infty}\left(3\epsilon+\hat{C}k_N\mathcal{W}^p(\hat{\P}_N,\P)+\frac{1}{N} \ge 4\epsilon\right)\le \beta_N.
\end{align*}
As $\epsilon$ was arbitrary and $\epsilon_N k_N\to 0$ the claim follows.
\end{proof}

We have used the following lemma:
\begin{Lem}[\cite{rasonyi2002note}, Cor. 3.3] \label{lem. rasonyi}
For a measurable function $g$ bounded from below
\begin{align*}
\sup_{\| \frac{d\Q}{d\P}\|_{\infty}< \infty, \ \Q \in \mathcal{M}} \E_{\Q}[g]=\sup_{\Q \sim\P, \ \Q \in \mathcal{M}} \E_{\Q}[g].
\end{align*}
\end{Lem}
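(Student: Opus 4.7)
We first note that for the statement to be non-trivial we implicitly require NA$(\P)$, otherwise both suprema equal $-\infty$. Throughout we fix $\Q_0\sim\P$ in $\Mc$ provided by NA$(\P)$. The plan splits into the two inequalities.

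The inequality ``$\le$'' is the easy direction. Given $\Q\in\Mc$ with $\|d\Q/d\P\|_\infty<\infty$, we would mix with $\Q_0$ to restore equivalence: set $\Q_\epsilon:=(1-\epsilon)\Q+\epsilon\Q_0$ for $\epsilon\in(0,1)$. Then $\Q_\epsilon\sim\P$, $\Q_\epsilon\in\Mc$ (both $\Mc$ and the condition $\sim\P$ are preserved by convex combinations with $\Q_0$), and by linearity
\[
\E_{\Q_\epsilon}[g]=(1-\epsilon)\E_\Q[g]+\epsilon\E_{\Q_0}[g]\longrightarrow \E_\Q[g]\qquad(\epsilon\downarrow 0),
\]
where the expectations are well defined in $(-\infty,\infty]$ since $g$ is bounded below. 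Taking $\epsilon\downarrow 0$ shows that any candidate on the left is approximated by candidates on the right, giving ``$\le$''.

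The inequality ``$\ge$'' is the main content. Fix $\Q\sim\P$ in $\Mc$ and write $f=d\Q/d\P>0$ $\P$-a.s. The idea is to truncate $f$ and then correct the resulting measure back into $\Mc$ using $\Q_0$ as a reservoir. Set $f_n:=f\wedge n$, $c_n:=\E_\P[f_n]\uparrow 1$, and define the sub-probability ``core'' $\mu_n$ with density $f_n$ w.r.t.\ $\P$. Let $m_n:=\E_\P[r\,f_n]\in\R^d$; since $f_n\uparrow f$, monotone convergence gives $m_n\to\E_\Q[r]=\mathbf 1$ and $c_n\to 1$. The candidate approximants are
\[
\Q_n\;:=\;\mu_n\;+\;(1-c_n)\,\Q_0\;+\;\alpha_n\cdot(\Q_0-\Q_0'),
\]
where $\Q_0'\sim\P$ in $\Mc$ with bounded density is chosen so that the pair $(\Q_0,\Q_0')$ together with $\Q_0$ provides enough directional freedom to absorb the small residual drift $\mathbf 1-m_n-(1-c_n)\E_{\Q_0}[r]=\mathbf 1-m_n$ in $\R^d$; concretely, one picks finitely many bounded-density martingale measures whose ``drift perturbations'' span $\R^d$ (this richness follows from $1\in\mathrm{ri}(\mathrm{supp}(\P))$, which is equivalent to NA$(\P)$, exactly as in the proof of Proposition \ref{prop. supp}). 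With coefficients $\alpha_n=O(|\mathbf 1-m_n|)\to 0$, each $\Q_n$ is a probability, lies in $\Mc$ (both mass $1$ and mean $r=\mathbf 1$ by construction), and has density
\[
\Bigl\|\tfrac{d\Q_n}{d\P}\Bigr\|_\infty\le n+(1-c_n+|\alpha_n|)\cdot\max_i\Bigl\|\tfrac{d\Q_0^i}{d\P}\Bigr\|_\infty<\infty.
\]

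To conclude, we would show $\liminf_n\E_{\Q_n}[g]\ge\E_\Q[g]$. Writing $g=g^+-g^-$ with $g^-$ bounded, the $g^-$-part converges by dominated convergence since the correction terms have total mass $O(1-c_n+|\alpha_n|)\to 0$, while for the $g^+$-part we have $\E_{\Q_n}[g^+]\ge\int g^+f_n\,d\P\to\E_\Q[g^+]$ by monotone convergence. Combined with ``$\le$'' this yields the claimed equality.

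The main obstacle is the correction step: truncating $f$ destroys the martingale property, and one must restore it using only bounded-density martingale measures without disturbing the limit $\Q_n\Rightarrow\Q$. The key observation that makes this possible is that NA$(\P)$ guarantees not merely one equivalent martingale measure but enough of them (with bounded density) so that their drifts span all of $\R^d$ relative to $\Q_0$; once this richness is extracted, the correction coefficients $\alpha_n$ are automatically of the order of $|m_n-\mathbf 1|$ and vanish in the limit.
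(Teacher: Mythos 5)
The paper does not actually prove this lemma; it is imported verbatim from R\'asonyi (Cor.\ 3.3), so there is no in-paper argument to compare against. Assessing your proposal on its own terms: the ``$\le$'' direction (mixing with an equivalent martingale measure $\Q_0$) is fine. The ``$\ge$'' direction, however, has a genuine gap precisely at the correction step you yourself flag as the main obstacle.

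You propose to restore the martingale property after truncation by adding $\alpha_n(\Q_0-\Q_0')$ with $\Q_0,\Q_0'\in\Mc$. But every element of $\Mc$ satisfies $\E[r]=1$ by definition, so $\int r\,d(\Q_0-\Q_0')=0$; more generally, no signed combination of martingale measures with zero net mass can shift the barycenter at all. There are no ``drift perturbations'' to extract from $\Mc$, and the appeal to $1\in\text{ri}(\text{supp}(\P))$ supplies none, since $\text{ri}$ controls the spread of the support of $\P$, not the means of martingale measures. Consequently $\E_{\Q_n}[r]=m_n+(1-c_n)1\ne 1$ in general, so $\Q_n\notin\Mc$ and the chain of inequalities collapses. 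Two secondary problems: the residual you actually need to absorb is $c_n\,1-m_n$, not $1-m_n$; and the signed term $-\alpha_n\Q_0'$ can push $d\Q_n/d\P$ negative where $f_n$ is small, so positivity of $\Q_n$ is not automatic either.

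Any working correction must therefore be built from \emph{non}-martingale ingredients with barycenters other than $1$, e.g.\ a finite family $\nu_i=\P(\cdot\mid B_i)$ over small balls $B_i\subseteq\text{supp}(\P)$ whose means $\bar r_i$ span a simplex with $1$ in its relative interior -- this is what NA$(\P)$ actually delivers. Even then the step is delicate: placing the missing mass $1-c_n$ with the right mean requires the normalized residual barycenter $(1-m_n)/(1-c_n)=\E_\P[r(f-n)^+]/\E_\P[(f-n)^+]$ to eventually lie inside that simplex. This is the barycenter of $r$ under a tail measure of $\Q$ and need not converge to $1$, so it can escape any fixed small simplex. The part of the argument you describe as ``automatic'' is in fact the crux of the lemma, and the proposal as written does not close it.
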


Before we finish the proof of Theorem \ref{Thm wasserstein} we first give the convergence rates for $\hat{\P}_N$ under Assumption \ref{Ass 1}.\ref{Ass 1.1}.\footnote{According to \cite[Comments after Theorems 14 \& 15]{fournier2015rate} the $L^2$-$L^2$-decay property stated in \citep[Theorem 15]{fournier2015rate} is actually too strong as only functions bounded by one are considered, as given in our assumptions.} This is a slight modification of the result in \citep{fournier2015rate} but the proof is essentially the same as that of Theorem 15 therein and is hence omitted. 
\begin{Lem}[\citep{fournier2015rate}, Proof of Theorem 15]\label{lem:fournilin}
Under Assumption \ref{Ass 1}.\ref{Ass 1.1} there exists a constant $C>0$ such that
\begin{align*}
\E_{\P}&[ \mathcal{W}^p(\P, \hat{\P}_N)^p] \le \kappa_N\\
&:=C \begin{cases}
N^{-1/2}+N^{-(q_s-p)/q_s} &\text{if }p>d_s/(2s) \text{ and }q_s\neq 2p,\\
N^{-1/2}\log(1+N)+N^{-(q_s-p)/q_s}  &\text{if }p=d_s/(2s) \text{ and }q_s\neq 2p,\\
N^{-p/d}+N^{-(q_s-p)/q_r} &\text{ if }p\in (0,d_s/2) \text{ and }q_s\neq d_s/(d_s-p)
\end{cases}
\end{align*}
and $q_s:= q(s-2)/(2s)$ and $d_s:= d(3s+2)/(2s)$.
\end{Lem}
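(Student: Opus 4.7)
The plan is to follow the dyadic partition argument of Fournier--Guillin and adapt it to the Markov setting using the Poincar\'e-type decay condition \eqref{eq:poincare} together with the $L^{2s}(\P)$-bound on $d\P_1/d\P$. First I would partition $\R^d_+$ into annuli $A_0=[0,1]^d$, $A_n=[0,2^n]^d\setminus[0,2^{n-1}]^d$ and, within each annulus, into nested dyadic cubes $F$ of side $2^{-\ell}$. A well-known deterministic estimate then yields
\begin{equation*}
\Wc^p(\mu,\nu)^p \;\lesssim\; \sum_{n\geq 0} 2^{np}\!\sum_{\ell\geq 0} 2^{-\ell p}\!\sum_{F\in\mathcal{D}_{n,\ell}} |\mu(F)-\nu(F)|,
\end{equation*}
so the problem reduces to controlling $\E_{\P}[|\hat{\P}_N(F)-\P(F)|]$ for each cube $F$ and summing with appropriate weights. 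The outer sum over $n$ is controlled by the moment assumption $\E_{\P}[|r|^q]<\infty$ (which localises the mass to $|r|\leq 2^n$ with error $2^{-nq}$), and the inner sum will be split according to whether $\P(F)$ is above or below a threshold $\sim 1/N$, giving the two competing rates $N^{-1/2}$ (or $N^{-p/d}$) and $N^{-(q_s-p)/q_s}$ seen in the statement.

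The key estimate is the single-cube variance bound. In the i.i.d.\ case this is simply $\E_{\P}[(\hat{\P}_N(F)-\P(F))^2]\leq \P(F)/N$. In our Markov chain setting, expanding the square gives a covariance sum, and I would control the cross-terms via
\begin{equation*}
\mathrm{Cov}_{\P}(\mathds{1}_F(r_i),\mathds{1}_F(r_j))\;\le\; \E_\P\!\left[\big(\E[\mathds{1}_F(r_j)-\P(F)\mid r_i]\big)^2\right]^{1/2}\!\sqrt{\P(F)},
\end{equation*}
which by \eqref{eq:poincare} is dominated by $\rho_{|j-i|}\sqrt{\P(F)}$; summability of $(\rho_N)$ then gives $\E_\P[(\hat\P_N(F)-\P(F))^2]\lesssim \P(F)/N$ in the stationary case, matching the i.i.d.\ bound up to a constant. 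To transfer to the chain started from $\P_1\neq\P$, I apply H\"older's inequality with conjugate exponents $(2s,2s/(2s-1))$: for any non-negative functional $\Phi$ of the sample path,
\begin{equation*}
\E_{\P_1^\infty}[\Phi]\;\le\;\|d\P_1/d\P\|_{L^{2s}(\P)}\cdot\E_{\P^\infty}[\Phi^{2s/(2s-1)}]^{(2s-1)/(2s)}.
\end{equation*}
Applying this to $\Phi=|\hat{\P}_N(F)-\P(F)|$ raised to appropriate powers is precisely what induces the shifted exponents $q_s=q(s-2)/(2s)$ and $d_s=d(3s+2)/(2s)$ in the final rate.

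The main obstacle is to carry out step three uniformly across the full dyadic tree while preserving the $\P(F)/N$ scaling for the variance: a naive sub-Gaussian bound for bounded functionals would give only a uniform $O(1/N)$, which, summed over the fine cubes whose number grows like $2^{\ell d}$, would lose the sharpness of the Fournier--Guillin rate. The Poincar\'e-type form of \eqref{eq:poincare}, applied to the centred indicator $f=\mathds{1}_F-\P(F)$, is exactly what preserves the square-root dependence on $\P(F)$ and thereby yields the same three regimes as in the i.i.d.\ Theorem 2 of \cite{fournier2015rate}. Once these ingredients are assembled, the combinatorics of summing over $n$ and $\ell$ is identical to that in the original proof of Fournier--Guillin's Theorem~15, which justifies omitting the remaining routine computation.
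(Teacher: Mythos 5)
The paper omits the proof of this lemma, simply deferring to the proof of Theorem~15 in \cite{fournier2015rate}, so there is no text to compare against directly; but your reconstruction contains a concrete error that would lead to the wrong exponents.

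Your covariance bound is right as far as it goes: by Cauchy--Schwarz and \eqref{eq:poincare},
\begin{align*}
|\mathrm{Cov}_{\P}(\mathds{1}_F(r_i),\mathds{1}_F(r_j))| = \Big|\int_F \E[\mathds{1}_F(r_j)-\P(F)\mid r_1=\cdot]\,d\P\Big| \le \sqrt{\P(F)}\,\rho_{|j-i|}.
\end{align*}
But summing this over off-diagonal pairs gives $N\sqrt{\P(F)}\sum_k\rho_k$, hence
\begin{align*}
\E_{\P^\infty}\big[(\hat\P_N(F)-\P(F))^2\big]\;\lesssim\;\frac{\P(F)}{N}+\frac{\sqrt{\P(F)}}{N}\;\lesssim\;\frac{\sqrt{\P(F)}}{N},
\end{align*}
and the $\sqrt{\P(F)}/N$ term \emph{dominates} the diagonal $\P(F)/N$ term for small cubes, which is exactly the regime that matters. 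Your claim that the variance is $\lesssim \P(F)/N$, ``matching the i.i.d.\ bound up to a constant'' and yielding ``the same three regimes as in the i.i.d.\ Theorem~2'', is therefore false. The Poincar\'e-type hypothesis \eqref{eq:poincare} only controls conditional expectations in $L^2$, not in $L^\infty$, and this is precisely why one cannot recover the sharp $\P(F)/N$ scaling (an $L^\infty$--$L^2$ decay in the style of geometric ergodicity would be needed for that). The degraded scaling $\sqrt{\P(F)}/N$ propagates through the dyadic sum as an effective dimension: $\sum_{F\in\mathcal D_{n,\ell}}\P(F)^{1/4}\le 2^{3\ell d/4}(\P(A_n))^{1/4}$ by concavity, so $2^{\ell d/2}$ is replaced by $2^{3\ell d/4}$, i.e.\ $d$ is replaced by something of order $3d/2$ --- which is exactly the $s\to\infty$ limit of $d_s=d(3s+2)/(2s)$. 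The remaining $1/s$-corrections come from the H\"older transfer: the ``trivial'' bound $\E_{\P_1^\infty}[\hat\P_N(F)]$ is only $\lesssim \P(F)^{(2s-1)/(2s)}$, not $\P(F)$, which shifts the balance point between the two competing bounds and produces $q_s$ in place of $q$.

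So the $s$-dependence in $q_s$ and $d_s$ is \emph{not} solely induced by the H\"older step, as you assert; it is the joint effect of (a) the $L^\infty$--$L^2$-only decay degrading the per-cube variance from $\P(F)/N$ to $\sqrt{\P(F)}/N$, and (b) H\"older degrading the trivial bound. If you carry out the dyadic combinatorics with the correct per-cube estimate $\E_{\P_1^\infty}[|\hat\P_N(F)-\P(F)|]\lesssim \min\big(\P(F)^{(2s-1)/(2s)},\,\P(F)^{1/4}/\sqrt{N}\big)$, you will obtain the stated exponents rather than the i.i.d.\ ones. As written, your proposal would not prove the lemma --- it would prove a strictly stronger (and, under Assumption~\ref{Ass 1}.\ref{Ass 1.1}, unprovable) statement.
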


\begin{proof}[Proof of Theorem \ref{Thm wasserstein} under Assumption \ref{Ass 1}.\ref{Ass 1.1}]
By Assumption \ref{Ass 1}.\ref{Ass 1.1} we have for $\epsilon\ge 0$
\begin{align*}
\P (\mathcal{W}^p(\P, \hat{\P}_N)\ge \epsilon) \le \kappa_N/\epsilon^p
\end{align*}
using Markov's inequality, so in particular we can choose
\begin{align}\label{eq:fournilin}
\epsilon_N(\beta):=\left(\frac{\kappa_N}{\beta}\right)^{1/p}.
\end{align} 
The rest of the proof now follows as under Assumption \ref{Ass 1}.\ref{Ass 1.2}.
\end{proof}

\subsection{Additional results and proofs for Section \ref{sec:robust}}\label{app:robust}

Before we prove Theorem \ref{Thm. infi}, we recall the following result:
\begin{Lem}[\cite{shorack2009empirical} Theorem 26.1, p. 828 \& Ex. 26.2, p.833]\label{Lem. shorack}
Let $\mathcal{C}$ denote all closed balls in $\R_+^d$. Then 
\begin{align*}
\P^{\infty}\left(\lim_{N \to \infty} \sup_{\overline{B} \in \mathcal{C}} |\hat{\P}_N(\overline{B})-\P(\overline{B})|=0\right)=1 \quad \text{ uniformly for all }\P \in \mathcal{P}(\R_+^d)
\end{align*}
\end{Lem}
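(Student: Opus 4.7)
The plan is to deduce the lemma from Vapnik--Chervonenkis theory applied to the class $\mathcal{C}$ of closed balls in $\R_+^d$. The essential observation is that $\mathcal{C}$ is a Vapnik--Chervonenkis (VC) class with finite VC dimension (in fact, no more than $d+2$ points in $\R^d$ can be shattered by closed balls, so $V(\mathcal{C}) \le d+1$). This qualitative combinatorial fact is purely geometric and has nothing to do with $\P$, so any quantitative bound derived from it will automatically be uniform in $\P \in \mathcal{P}(\R_+^d)$.

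The first step would be to recall the classical Vapnik--Chervonenkis inequality: for any VC class $\mathcal{F}$ with VC dimension $V$ there exist constants $c_1, c_2 > 0$, depending only on $V$, such that for every $\P \in \mathcal{P}(\R_+^d)$, every $\epsilon > 0$, and every $N$,
\begin{align*}
\P^{\infty}\!\left( \sup_{C \in \mathcal{F}} \left| \hat{\P}_N(C) - \P(C) \right| > \epsilon \right) \;\le\; c_1 N^{V} \, e^{-c_2 N \epsilon^2}.
\end{align*}
Applying this to $\mathcal{F} = \mathcal{C}$ gives a tail bound which is polynomial in $N$, exponential in $N\epsilon^2$, and, crucially, has constants that do not depend on $\P$.

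The second step is a routine Borel--Cantelli argument. Fix $\epsilon > 0$; summability of $N^{V} e^{-c_2 N \epsilon^2}$ in $N$ implies
\begin{align*}
\P^{\infty}\!\left( \sup_{\overline{B} \in \mathcal{C}} \left| \hat{\P}_N(\overline{B}) - \P(\overline{B}) \right| > \epsilon \text{ i.o.} \right) = 0,
\end{align*}
and intersecting over a countable sequence $\epsilon = 1/k \downarrow 0$ yields the almost sure uniform convergence to zero. Since the tail bound does not depend on $\P$, this convergence holds uniformly in $\P \in \mathcal{P}(\R_+^d)$ in the sense stated.

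The only real work is the VC-dimension bound for closed balls, which is standard: a closed ball $\overline{B}(c,r) = \{x : |x-c|^2 \le r^2\}$ can be written as $\{x : 2\langle c, x\rangle - |x|^2 \ge |c|^2 - r^2\}$, so $\mathcal{C}$ is contained in the collection of positivity sets of affine functions on the $(d+1)$-dimensional feature space $(x, |x|^2)$; this embedding lifts the VC-dimension question to half-spaces in $\R^{d+1}$, whose VC dimension is $d+2$, giving $V(\mathcal{C}) \le d+2$. The main potential obstacle is keeping the VC inequality's constants explicitly independent of $\P$, but this is precisely the content of the standard VC inequality (e.g.\ as stated in Shorack's book), so no further work is required beyond citing it.
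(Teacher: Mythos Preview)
Your proof sketch is correct. Note, however, that the paper does not prove this lemma at all: it is stated as a citation of Shorack \& Wellner, Theorem~26.1 and Exercise~26.2, and used as a black box in the proof of Theorem~\ref{Thm. infi}. Your argument via the VC dimension of closed balls and the distribution-free VC inequality is precisely the standard route to this uniform Glivenko--Cantelli statement, and is almost certainly what underlies the cited reference. The only minor point worth tightening is the meaning of ``uniformly for all $\P$'': as used downstream in the paper (choosing $N_*$ independent of $\P^1$), what is needed is that for every $\alpha,\epsilon>0$ there exists $N_*$, not depending on $\P$, with $\P^\infty(\sup_{\overline{B}\in\mathcal{C}}|\hat{\P}_N(\overline{B})-\P(\overline{B})|\ge\alpha)<\epsilon$ for all $N\ge N_*$ and all $\P$. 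Your tail bound $c_1 N^V e^{-c_2 N\epsilon^2}$ delivers exactly this, so the argument is complete.
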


\begin{proof}[Proof of Theorem \ref{Thm. infi}] 
Let $\epsilon>0$ and fix $\P^0 \in \mathcal{P}(\R_+^d)$ such that NA$(\P^0)$ holds. 
As $g$ is uniformly continuous, its $\P$-concave envelope operator is continuous at $\P_0$, see Proposition \ref{prop:d_H cont}, and we can take $\delta$ small enough so that for all $\bar \P,\tilde \P\in \prob(\R^d_+)$ with $d_H(\supp(\bar{\P}),\supp(\tilde \P))\leq 2\delta$ we have $|\pi^{\bar\P}(g) -\pi^{\tilde{\P}}(g)|\leq \epsilon/9$.

We first argue that we can restrict to a compact set. Indeed we have
\begin{align}\label{eq. start}
|\pi^{\hat{\P}^0_N}(g)-\pi^{\hat{\P}^1_N}(g)| &\le |\pi^{\hat{\P}^0_N}(g)-\pi^{\P^0}(g)|+|\pi^{\P^0}(g)-\pi^{\P^1}(g)|\\
&+|\pi^{\P^1}(g)-\pi^{\hat{\P}^1_N}(g)|\nonumber.
\end{align}
We choose $\P^1$ such that $\mathcal{W}^{\infty}(\P^1, \P^0)\le \delta/4$, i.e. $\P^1(B) \le \P^0(B^{\delta/4})$ and $\P^0(B) \le \P^1(B^{\delta/4})$ for all $B \in \mathcal{B}(\R_+^d)$. Thus 
\begin{align*}
\text{supp}(\P^0) \subseteq \text{supp}(\P^1)^{\delta/2}  \quad \text{and} \quad \text{supp}(\P^1)  \subseteq \text{supp}(\P^0)^{\delta/2}.
\end{align*}
Note also that, by the choice of $\delta$, we have
\begin{align}\label{eq:boundondiffofpi}
|&\pi^{\P^0(\cdot|[0,L]^d)}(g)-\pi^{\P^1(\cdot|[0,L]^d)}(g) | \le \epsilon/9 \quad \text{and} \quad |\pi^{\P^0}(g)-\pi^{\P^1}(g) | \le \epsilon/9
\end{align} 
By a monotone limit argument, see the proof of Theorem \ref{Thm. one-per}, we have
\begin{align*}
\sup_{L \in \N} \pi^{\P^0(\cdot|[0,L]^d)}(g)
 =\pi^{\P^0}(g),
\end{align*}
so there for all $L$ large enough 
\begin{align*}
&| \pi^{\P^0(\cdot|[0,L]^d)}(g)-\pi^{\P^0}(g)| \le \epsilon/9.
\end{align*}
Fix such $L$ and set $K:=[0,L]^d$. Note that by \eqref{eq:boundondiffofpi} we now have
\begin{align*}
\pi^{\P^1}(g)-\pi^{\P^1(\cdot|K)}(g) &\le \pi^{\P^1}(g)-\pi^{\P^0}(g)+ \pi^{\P^0}(g)-\pi^{\P^0(\cdot|K)}(g)\\
&+ \pi^{\P^0(\cdot|K)}(g)-\pi^{\P^1(\cdot|K)}(g)\le \epsilon/9+\epsilon/9+\epsilon/9 = \epsilon/3.
\end{align*}
In particular for $i=0,1$
\begin{align}\label{eq. boundab}
0 \le \pi^{\P^i}(g)-\pi^{\hat{\P}^i_N}(g) &= \pi^{\P^i}(g)-\pi^{\P^i(\cdot|K)}(g)+\pi^{\P^i(\cdot|K)}(g)- \pi^{\hat{\P}^i_N}(g)\\&\le \epsilon/3+\pi^{\P^i(\cdot|K)}(g)- \pi^{\hat{\P}^i_N}(g)\nonumber\\
&\le \epsilon/3+\pi^{\P^i(\cdot|K)}(g)- \pi^{\hat{\P}^i_N(\cdot|K)}(g)\nonumber.
\end{align}
We proceed to bound the difference of the last two terms on the RHS. Let us write $\text{supp}(\hat{\P}^i_N)=\{r_1^i, \dots, r_N^i\}$, $i=0,1$. We now argue that 
\begin{align}\label{eq. aim}
\P^{\infty}(d_H(\{r_1^i, \dots, r_N^i\}\cap K, \text{supp}(\P^i)\cap K)> 2\delta) \le \epsilon/2,\quad i=0,1,
\end{align}
for all $N\geq N_*$ for some $N_*$ independent of $\P^1$. To this end, take $M \in \N$ and deterministic points $\tilde{r}_1, \dots, \tilde{r}_M \in \text{supp}(\P^0)\cap K$ such that 
$$\cup_{k=1}^M B_{\delta/2}(\tilde{r}_k) \supseteq  \text{supp}(\P^0)^{\delta/2} \cap K\supseteq \text{supp}(\P^1) \cap K, $$
where $B_\delta(\tilde r)=\{r\in \R^d: |r-\tilde r|<\delta\}$. 
Then as $M$ is finite, there exists $N_0 \in \N$ such that for all $N \ge N_0$
\begin{align*}
\P^{\infty}(\{\forall k\in \{1, \dots, M\}\ \exists j \in \{1, \dots, N\} \text{ s.t. } |\tilde{r}_k-r^0_j|\le \delta\}) \ge 1-\epsilon/2.
\end{align*}
Set 
\begin{align*}
\alpha&:= \min_{k = 1, \dots, M} \P^{0}(B_{\delta/4}(\tilde{r}_k))>0.
\end{align*}
Then $\P^1(B_{\delta/2}(\tilde{r}_k)) \ge \P^0(B_{\delta/4}(\tilde{r}_k)) \ge \alpha$ for all $k=1, \dots, M$. By Lemma \ref{Lem. shorack}  there exists $N_* \ge N_0$ such that for all $N \ge N_*$ and all $\P \in \mathcal{P}(\R_+^d)$
\begin{align*}
\P^{\infty}\left( \sup_{\overline{B} \in \mathcal{C}} |\hat{\P}_N(\overline{B})-\P(\overline{B})| \ge \alpha \right) < \epsilon/2,
\end{align*}
If there exists $k \in \{1, \dots, M\}$ such that for all $j\in \{1, \dots, N\}$ $|\tilde{r}_k- r_j^1| \ge \delta$ then $$|\hat{\P}_N^1(\overline{B}_{\delta/2}(\tilde{r}_k))-\P^1(\overline{B}_{\delta/2}(\tilde{r}_k))|=\P^1(\overline{B}_{\delta/2}(\tilde{r}_k)\ge \alpha,$$ in particular $\sup_{\overline{B} \in \mathcal{C}} |\hat{\P}^1_N(\overline{B})-\P^1(\overline{B})| \ge\alpha$. Thus
\begin{align*}
\P^{\infty} \left( \forall k \in \{1, \dots, M\} \ \exists j \in \{1, \dots, N\} \ \text{s.t. } |\tilde{r}_k - r^1_j| \le \delta \}\right)\ge 1- \epsilon/2.
\end{align*}
On the other hand, by the choice of $\{\tilde{r}_1, \dots \tilde{r}_M\}$ for any $i\in \{0,1\}$ and any $j \in \{1, \dots, N\}$ with $r_j^i \in K$ there exists $k \in \{1, \dots, M\}$ such that $|r^i_j-\tilde{r}_k | \le \delta.$ Note that $\{\tilde{r}_1, \dots \tilde{r}_M \} \subseteq \text{supp}(\P^0) \cap K$ and we conclude that 
\eqref{eq. aim} holds. It then follows from our choice of $\delta$ that for all $N\geq N_*$ we have
\begin{align*}
\P^{\infty}\left(\pi^{\P^0(\cdot|K)}(g)- \pi^{\hat{\P}^0_N(\cdot|K)}(g) \le \epsilon/9,\ \pi^{\P^1(\cdot|K)}(g)- \pi^{\hat{\P}^1_N(\cdot|K)}(g) \le \epsilon/9 \right) > 1- \epsilon.
\end{align*}
Hence, by \eqref{eq. boundab}, we deduce that
\begin{align*}
\P^{\infty}\left( \pi^{\P^0}(g)-\pi^{\hat{\P}^0_N}(g) \le 4\epsilon/9, \ \pi^{\P^1}(g)-\pi^{\hat{\P}^1_N}(g) \le 4\epsilon/9 \right)>1-\epsilon.
\end{align*} 
Combining the above with \eqref{eq. start}  and \eqref{eq:boundondiffofpi} we have 
\begin{align*}
\P^{\infty}\left(\left|\pi^{\hat{\P}^0_N}(g)-\pi^{\hat{\P}^1_N}(g)\right|> \epsilon\right) < \epsilon.
\end{align*}
Using Strassen's theorem (\cite[Theorem 2.13, p. 30]{huber1996robust}) we deduce that $$d_L(\mathcal{L}_{\P^1}(\hat{\pi}_N), \mathcal{L}_{\P^0} (\hat{\pi}_N)) \le \epsilon.$$
This concludes the proof.
\end{proof}

\begin{Cor}\label{cor:Winfinity}
 Let $\P \in \mathcal{P}(\R_+^d)$ such that NA$(\P)$ holds.
\begin{enumerate}
\item[(i)] Let $g$ be continuous and $\mathfrak{P}\subseteq \mathcal{P}(\R_+^d)$. If $\P \in \mathfrak{P}$ and for all $\delta>0$, there exists a compact set $K\subseteq \R_+^d$ such that 
\begin{align*}
\sup_{\tilde \P \in \mathfrak{P}} \left(\pi^{\tilde\P}(g)-\pi^{\tilde\P(\cdot|K)}(g)\right) \le \delta,
\end{align*}
then $\hat{\pi}_N(g)$ is robust at $\P$ wrt. $\mathcal{W}^{\infty}$ on $\mathfrak{P}$.
\item[(ii)] Let $g$ be a continuous function of linear growth and $\mathfrak{P} \subseteq \mathcal{P}(\R_+)$. If $\P \in \mathfrak{P}$, $\mathfrak{P}$ is uniformly integrable and for all $\delta>0$  there exists $C>0$ such that
\begin{align}\label{eq. easy}
\sup_{\tilde \P \in \mathfrak{P}}\left(\pi^{\tilde \P}(g)-\sup_{\|d\Q/ d\tilde \P \|_{\infty}\le C, \ \Q \in \mathcal{M}} \E_{\Q}[g] \right)\le \delta,
\end{align}
then $\hat{\pi}_N$ is robust at $\P$ wrt. $\mathcal{W}^{\infty}$ on $\mathfrak{P}$.
\end{enumerate}
\end{Cor}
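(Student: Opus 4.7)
For part (i), I would retrace the proof of Theorem~\ref{Thm. infi}, substituting the uniform truncation hypothesis of (i) for the global uniform continuity of $g$. That proof decomposes $|\pi^{\hat\P^0_N}(g)-\pi^{\hat\P^1_N}(g)|$ via the triangle inequality into three pieces: a truncation error replacing $\pi^{\tilde\P}(g)$ by $\pi^{\tilde\P(\cdot|K)}(g)$, a continuity error comparing the restricted prices under $\P^0$ and $\P^1$, and an empirical approximation error. Under the hypothesis of (i), the truncation error is bounded uniformly over $\mathfrak{P}$ by the choice of compact $K$; the continuity error is handled by Proposition~\ref{prop:d_H cont}, since $g$ is continuous and hence uniformly continuous on the compact enlargement $\overline{K^{2\delta}}$ containing all the relevant supports; and the empirical error is supplied by the Glivenko--Cantelli estimate of Lemma~\ref{Lem. shorack}, which is uniform over $\mathcal{P}(\R_+^d)$. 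Assembling these as in \eqref{eq. start}--\eqref{eq. boundab} and invoking Strassen's theorem yields the robustness claim.

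For part (ii), my plan is to reduce to (i) by verifying its truncation hypothesis on $\mathfrak{P}$. Fix $\delta>0$ and choose $C$ via \eqref{eq. easy} so that
\begin{equation*}
\pi^{\tilde\P}(g) - \sup_{\Q\sim\tilde\P,\,\Q\in\mathcal{M},\,\|d\Q/d\tilde\P\|_\infty\le C}\E_\Q[g] \le \delta/3 \quad \text{for every } \tilde\P\in\mathfrak{P},
\end{equation*}
and pick a near-optimal $\Q=\Q_{\tilde\P}$ in the supremum. Linear growth $|g(r)|\le C_1(1+r)$ combined with the density bound yields $|\E_\Q[g\mathbf{1}_{(L,\infty)}]| \le CC_1\,\E_{\tilde\P}[(1+r)\mathbf{1}_{(L,\infty)}]$, which can be made smaller than $\delta/3$ for $L$ large, uniformly in $\tilde\P\in\mathfrak{P}$ by uniform integrability. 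Then I would build a martingale measure $\tilde\Q\sim\tilde\P(\cdot|[0,L])$ from $\Q$ by restricting to $[0,L]$, renormalising, and transferring a small amount of mass between atoms on either side of $1$ inside $[0,L]$ so as to restore $\E_{\tilde\Q}[r]=1$. Since $d=1$ this is a single scalar constraint; the transferred mass is of order $\Q((L,\infty))+|\E_\Q[(r-1)\mathbf{1}_{(L,\infty)}]|$, and $L$ times this is bounded by $CC_1\E_{\tilde\P}[r\mathbf{1}_{(L,\infty)}]$, which is uniformly $o(1)$, so by linear growth the change in $\E[g]$ is at most $\delta/3$. Summing the three gaps yields the uniform truncation bound of (i), and (i) delivers the conclusion.

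The main obstacle is the construction in part (ii) of the perturbed measure $\tilde\Q$: it must simultaneously remain equivalent to $\tilde\P(\cdot|[0,L])$, satisfy the martingale constraint, and keep $|\E_{\tilde\Q}[g]-\E_\Q[g]|$ small, all with explicit quantitative bounds that are uniform in $\tilde\P\in\mathfrak{P}$. Feasibility relies on NA$(\tilde\P)$ (implicit in \eqref{eq. easy}, since the dual supremum is non-empty) providing mass on both sides of $1$ inside $[0,L]$ for $L$ large; turning this into a uniform quantitative estimate via uniform integrability is where the bulk of the work lies.
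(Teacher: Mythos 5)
Your plan for part (i) matches the paper's proof: the truncation hypothesis replaces uniform continuity of $g$ in the $\epsilon/3$-decomposition of Theorem~\ref{Thm. infi}, and Proposition~\ref{prop:d_H cont} plus Lemma~\ref{Lem. shorack} close the argument exactly as you describe.

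For part (ii) you pursue a genuinely different, dual route (perturb the martingale measure) from the paper's primal one (perturb the payoff), and I think there is a real quantitative gap in your mass-transfer step. You assert that the mass transferred to restore the mean is of order $\Q((L,\infty))+|\E_\Q[(r-1)\mathbf{1}_{(L,\infty)}]|$ and then multiply by $L$ to bound the change in $\E[g]$ using linear growth. But the mass needed to correct a mean deficit $\eta$ by transporting across a spread of width $\alpha$ around $1$ is of order $\eta/\alpha$, and the hypotheses put no uniform lower bound on $\alpha$. Concretely, take $\tilde\P_n = \tfrac12\delta_{1-\alpha_n}+(\tfrac12-c_n)\delta_{1+\alpha_n}+c_n\delta_{M_n}$ with $c_nM_n\to 0$ (so $\{\tilde\P_n\}$ is uniformly integrable) and $\alpha_n \asymp c_nM_n$; for the $C$-density-bounded martingale measure putting moderate mass on $M_n$, the mean deficit after truncating to a fixed $[0,L]$ with $1<L<M_n$ is $\eta_n\asymp c_nM_n$, the only atom above $1$ in $[0,L]$ is at $1+\alpha_n$, and the mixing weight you need is $\lambda_n=\eta_n/(\alpha_n+\eta_n)$, which is of constant order, not $O(\Q((L,\infty))+|\E_\Q[(r-1)\mathbf{1}_{(L,\infty)}]|)$. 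Your "$L$ times transferred mass" estimate therefore fails; in this example the conclusion survives only because of the \emph{local continuity} of $g$ near $1$, which your argument does not invoke.

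The paper avoids the measure construction entirely. It first shows by a compactness/contradiction argument, using $\mathrm{NA}$ and uniform integrability of $\mathfrak{P}$, that the superhedging strategies for the truncated problems $\pi^{\tilde\P(\cdot|[0,L]^d)}(g)$ may be chosen with $|H|\le A$ uniformly; such a strategy then superhedges, on all of $\mathrm{supp}(\tilde\P)$, the dominated payoff $g\,\mathbf{1}_{\{|r|\le L\}}-A|r|\,\mathbf{1}_{\{|r|\ge L\}}$, giving
\begin{align*}
\pi^{\tilde\P}\bigl(g\,\mathbf{1}_{\{|r|\le L\}}-A|r|\,\mathbf{1}_{\{|r|\ge L\}}\bigr)\ \le\ \pi^{\tilde\P(\cdot|[0,L]^d)}(g)\ \le\ \pi^{\tilde\P}(g).
\end{align*}
Combining this with \eqref{eq. easy} and uniform integrability to control the cost of the payoff modification under the $C$-density-bounded dual measures yields $\pi^{\tilde\P}(g)-\epsilon\le\pi^{\tilde\P(\cdot|[0,L]^d)}(g)$ uniformly, reducing (ii) to (i) without ever constructing a new martingale measure. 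If you want to keep the dual route, you would need to refine the transfer estimate to use the modulus of continuity of $g$ at the scale of the local spread $\alpha$, not linear growth over distance $L$.
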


\begin{proof}
\textit{(i)}: Let $\epsilon>0$. By assumption we can find a compact set $K\subseteq \R_+^d$ such that 
\begin{align*}
 \sup_{\tilde \P \in \mathfrak{P}} \left(\pi^{\tilde\P}(g) -\pi^{\tilde{\P}(\cdot|K)}(g)\right)\le \epsilon/3.
\end{align*}
The rest of the proof follows as in the proof of Theorem \ref{Thm. infi} above using uniform continuity of $g$ on $K$.\\
\textit{(ii)}: Let $\epsilon>0$ and choose $C >0$ such that 
\begin{align*}
\sup_{\tilde{\P} \in \mathfrak{P}}\left(\pi^{\tilde{\P}}(g)-\sup_{\|d\Q/ d\tilde{\P} \|_{\infty}\le C, \ \Q \in \mathcal{M}} \E_{\Q}[g] \right)\le \epsilon/3,
\end{align*}
By assumption there exists a constant $D >0$ such that $g(r) \le D(1+|r|)$ for $|r|$ large enough. As $\mathfrak{P}$ is uniformly integrable, there exists $L>0$ such that
\begin{align*}
\sup_{\tilde\P \in \mathfrak{P}} \sup_{\Q \in \mathcal{M}, \ \|d\Q/ d\tilde\P \|_{\infty}\le C} \E_{\Q}[g\mathds{1}_{\{|r|\ge L\}}] &\le \sup_{\tilde\P \in \mathfrak{P}} C\E_{\tilde\P}[D(1+|r|)\mathds{1}_{\{|r|\ge L\}}] \le \epsilon/3.
\end{align*}
Note that there exists $A>0$ such that the hedging strategies 
\begin{align*}
\{ H \in \R^d \ | \ \pi^{\tilde\P(\cdot|[0,L]^d)}(g)+H(r-1) \ge g(r) \quad \tilde\P(\cdot|[0,L]^d)\text{-a.s.} \}
\end{align*}
contain an element bounded by some constant $A>0$ for all $L>0$ large enough: Otherwise there exist sequences $(L_n)_{n \in \N}$, $(H_n)_{n \in \N}$ and $(\tilde{\P}_n)_{n \in \N}$ such that $H_n \to \infty$. Note that by uniform integrability of $\tilde{\P}_n$ we have $\tilde{\P}_n \Rightarrow \tilde{\P}$ and also $\tilde{\P}_n( \cdot |[0,L_n]^d) \Rightarrow \tilde{\P}$, where NA$(\tilde{\P})$ holds. Take $\tilde{H}_n := H_n /|H_n|$, then after possibly taking a subsequence $\tilde{H}_n \to \tilde{H}$ with $|\tilde{H}|=1$ and $\tilde{H}(r-1) \ge 0$ $\tilde{\P}$-a.s., which leads to $\tilde{H}=0$ by NA$(\tilde{\P})$, a contradiction. Take $L>0$ such that 
\begin{align*}
\sup_{\tilde\P \in \mathfrak{P}} \sup_{\Q \in \mathcal{M}, \ \|d\Q/ d\tilde\P \|_{\infty}\le C} \E_{\Q}[A|r|\mathds{1}_{\{|r|\ge L\}}] &\le \sup_{\tilde\P \in \mathfrak{P}} C\E_{\tilde\P}[A|r|\mathds{1}_{\{|r|\ge L\}}] \le \epsilon/3.
\end{align*}
Then for all $\tilde{\P} \in \mathfrak{P}$
\begin{align*}
\pi^{\tilde{\P}}(g)-\epsilon 
&\le \sup_{\|d\Q/ d\tilde{\P} \|_{\infty}\le C, \ \Q \in \mathcal{M}} \E_{\Q}[g] -2\epsilon/3 
\\
&\le \sup_{\|d\Q/ d\tilde{\P} \|_{\infty}\le C, \ \Q \in \mathcal{M}} \E_{\Q}[g \mathds{1}_{\{|r| \le L \}}]- \epsilon/3\\
&\le
\sup_{\|d\Q/ d\tilde{\P} \|_{\infty}\le C, \ \Q \in \mathcal{M}} \E_{\Q}[g \mathds{1}_{\{|r| \le L \}}-A|r|\mathds{1}_{\{|r|\ge L\}}] \\
&\le
\sup_{\Q \sim \tilde{\P}, \ \Q \in \mathcal{M}} \E_{\Q}[g \mathds{1}_{\{|r| \le L \}}-A|r|\mathds{1}_{\{|r|\ge L\}}] \\
&=  \pi^{\tilde{\P}}(g \mathds{1}_{\{|r| \le L \}}-A|r|\mathds{1}_{\{|r|\ge L\}}) \le \pi^{\tilde{\P}(\cdot|[0,L]^d)}(g) \le \pi^{\tilde{\P}}(g). 
\end{align*}
Thus again we can restrict to $K=[0,L]^d$ as before and proceed as in the proof of Theorem \ref{Thm. infi}.
\end{proof}

\begin{proof}[Proof of Corollary \ref{cor:Wcinfty_rob}]
Recall that $\P$ is compactly supported, say $\text{supp}(\P)\subseteq \overline{B_R(0)}$ for some $R>0$, so that we can assume that $g$ is uniformly continuous. Note that we also have  that $\text{supp}(\hat{\P}_N)\subseteq \overline{B_R(0)}$, $\P^{\infty}$-a.s.
Theorem \ref{Thm. infi} then shows robustness of $\hat{\pi}_N$. Thus, for any $\epsilon>0$, there exists $\delta>0$ and $N_0 \in \N$ such that for all $N \ge N_0$ and all $\tilde{\P} \in \mathcal{P}(\R_+^d)$ we have:
\begin{align*}
\mathcal{W}^{\infty}(\tilde\P, \P)\le \delta \hspace{0.5cm} \Rightarrow \hspace{0.5cm} d_L(\mathcal{L}_{\tilde\P}(\hat{\pi}_N), \mathcal{L}_{\P} (\hat{\pi}_N)) \le \epsilon/3.
\end{align*}
Observe that $\mathcal{W}^{\infty}(\tilde\P, \P)\le \delta$ implies $d_H(\supp(\P),\supp(\tilde{\P}))\le \delta$. Proposition \ref{prop:d_H cont} states that the map $\tilde{\P} \mapsto \pi^{\tilde{\P}}(g)$ is continuous in the pseudo-metric $d_H(\text{supp}(\P), \text{supp}(\tilde{\P})$. By \cite[Prop. C.2]{bertsekas1978stochastic} the collection of closed subsets of $\overline{B_{R+1}(0)}$ equipped with the Hausdorff-metric is compact, so in particular there exists $N_1 \ge N_0$ such that for all $N \ge N_1$ and for all $\tdP \in B^{\infty}_{l_N}(\hat{\P}_N)$ we have $|\pi^{\tdP}(g)-\pi^{\hat{\P}_N}(g)|\le \epsilon/3$. This concludes the proof.
\end{proof}

\subsection{Additional results for Section \ref{sec:riskmeasures}}
\label{app:riskmeasures}

In this Section we present some simulations complementing Figure \ref{fig. avar} in the main article. More specifically we compare estimates for the quantity  $\pi^{\mathrm{AV@R}_{0.95}^{\P}}(g)$, where we set $g(r)=(r-1)^+$ (Figure \ref{fig. avar2}) for historical gold price (WGC/GOLD DAILY USD) returns and Apple (AAPL) returns, and $g(r)=|r-1|$ (Figure \ref{fig. avar3}) for historical S\&P500 and DAX30 returns. Finally, Figure \ref{fig. avar4} reproduces the case of S\&P500 and DAX30 returns and $g(r)=(r-1)^+$ from Figure \ref{fig. avar} but this time the GARCH(1,1)-estimator uses log-returns instead of simple returns. While before it ignored market crisis now it overreacts to it. We use $50$ or $100$ data points to build the estimates and plot the average of the last $5$ or $10$ running estimates. 

\begin{figure}[h!]
\centering
  \includegraphics[width=\textwidth]{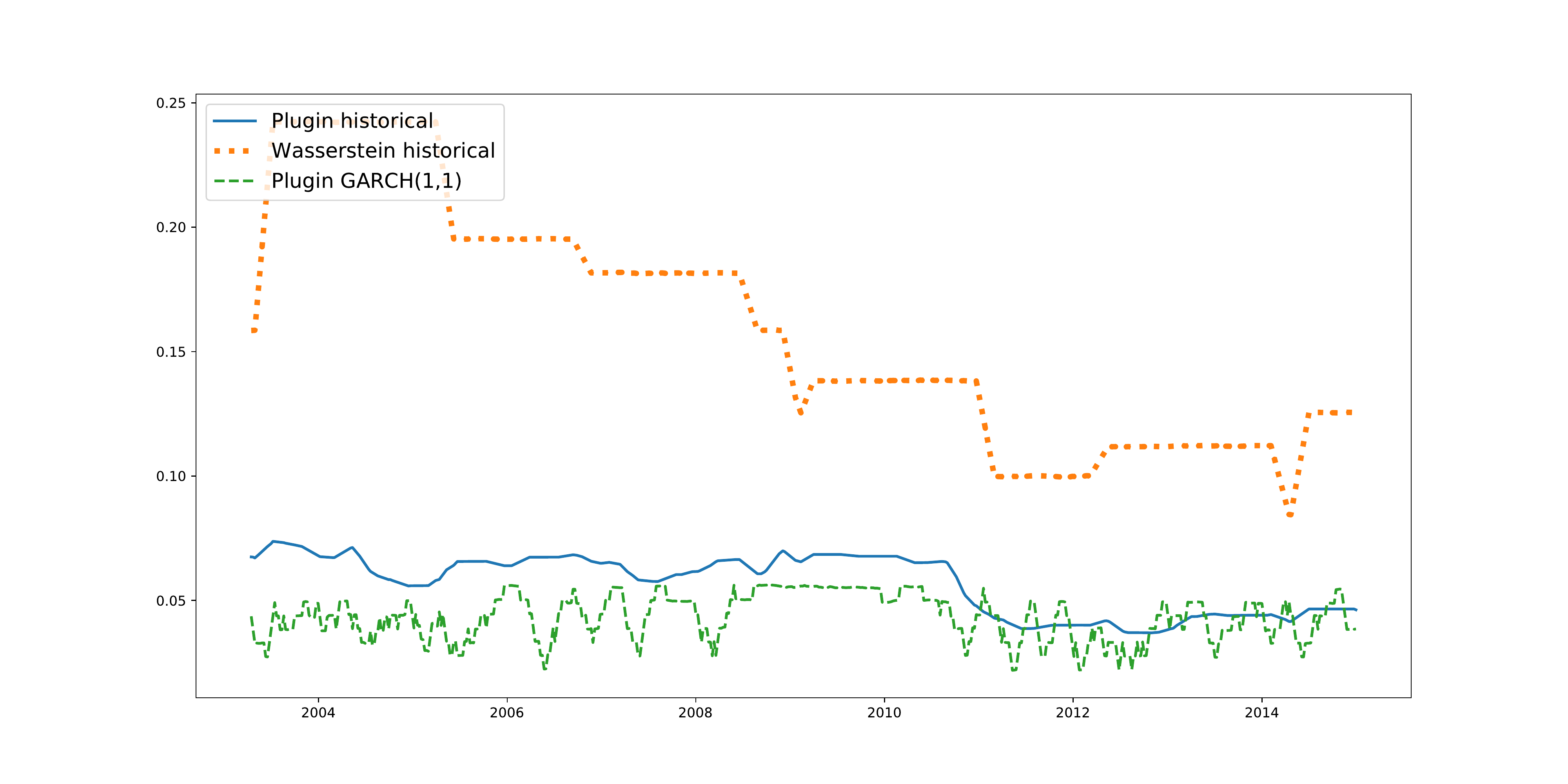}
  \includegraphics[width=\textwidth]{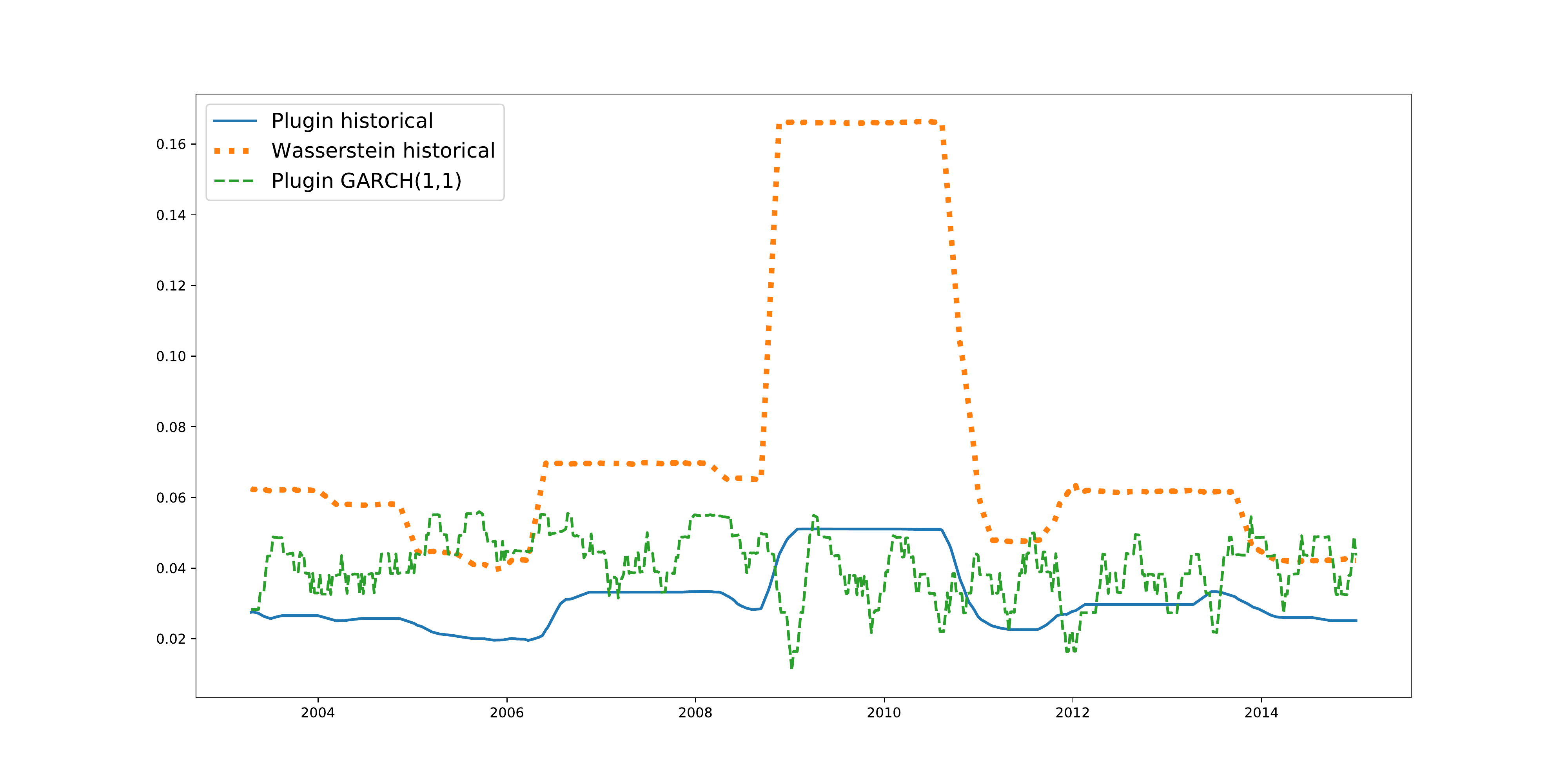}
   \includegraphics[width=\textwidth]{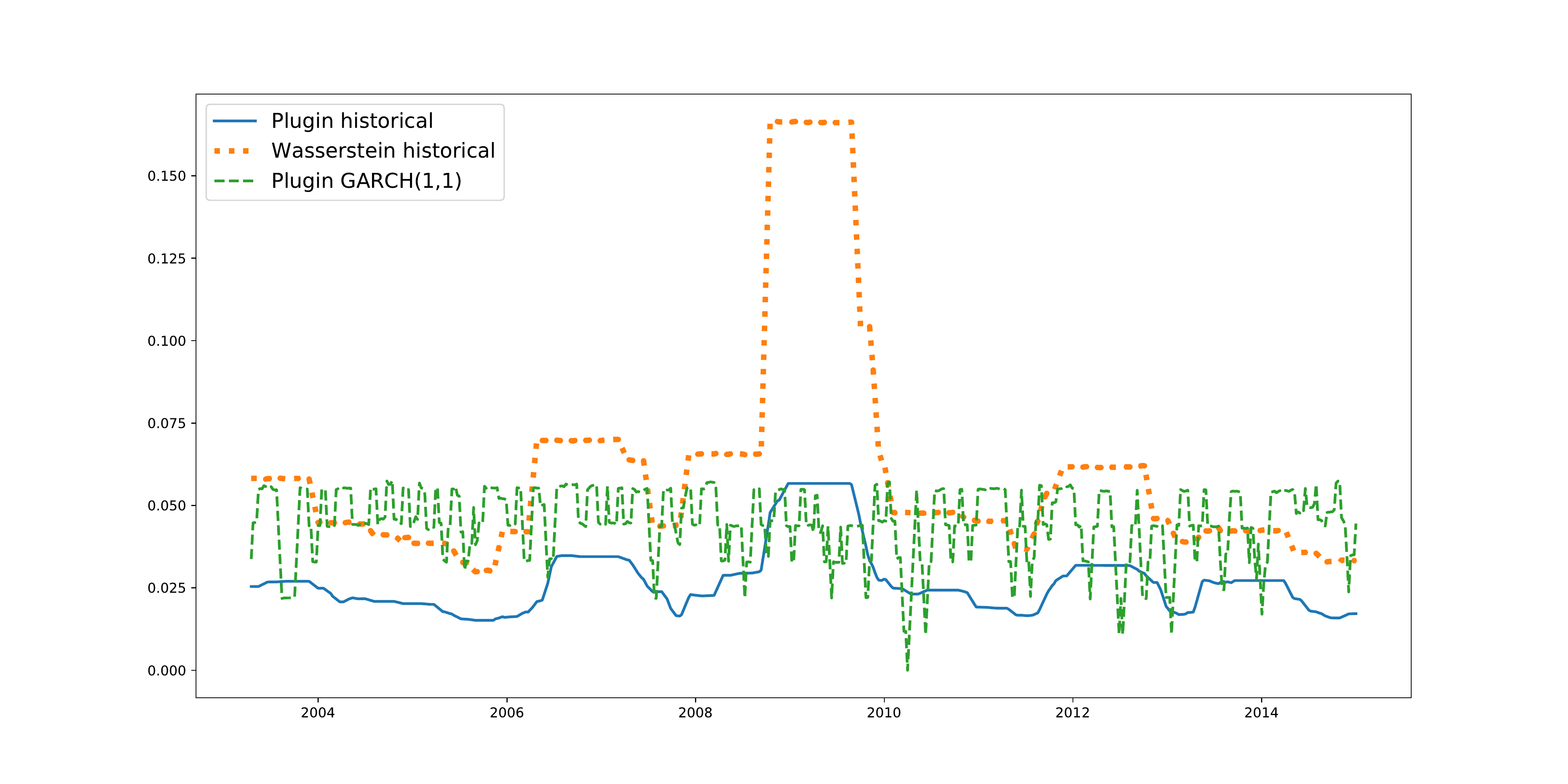}
\caption{Comparison of estimates for $\pi^{\mathrm{AV@R}_{0.95}^{\P}}((r-1)^+)$. The first two panes show estimates with a rolling window of $100$ data points and we plot the average of the last $10$ estimates. The first pane uses Apple returns while the second one uses gold returns. The last pane uses gold returns for estimates with a rolling window of $50$ data points where we plot the average of the last $5$ estimates.} \label{fig. avar2}
\end{figure}

\begin{figure}[h!]
\centering
  \includegraphics[width=\textwidth]{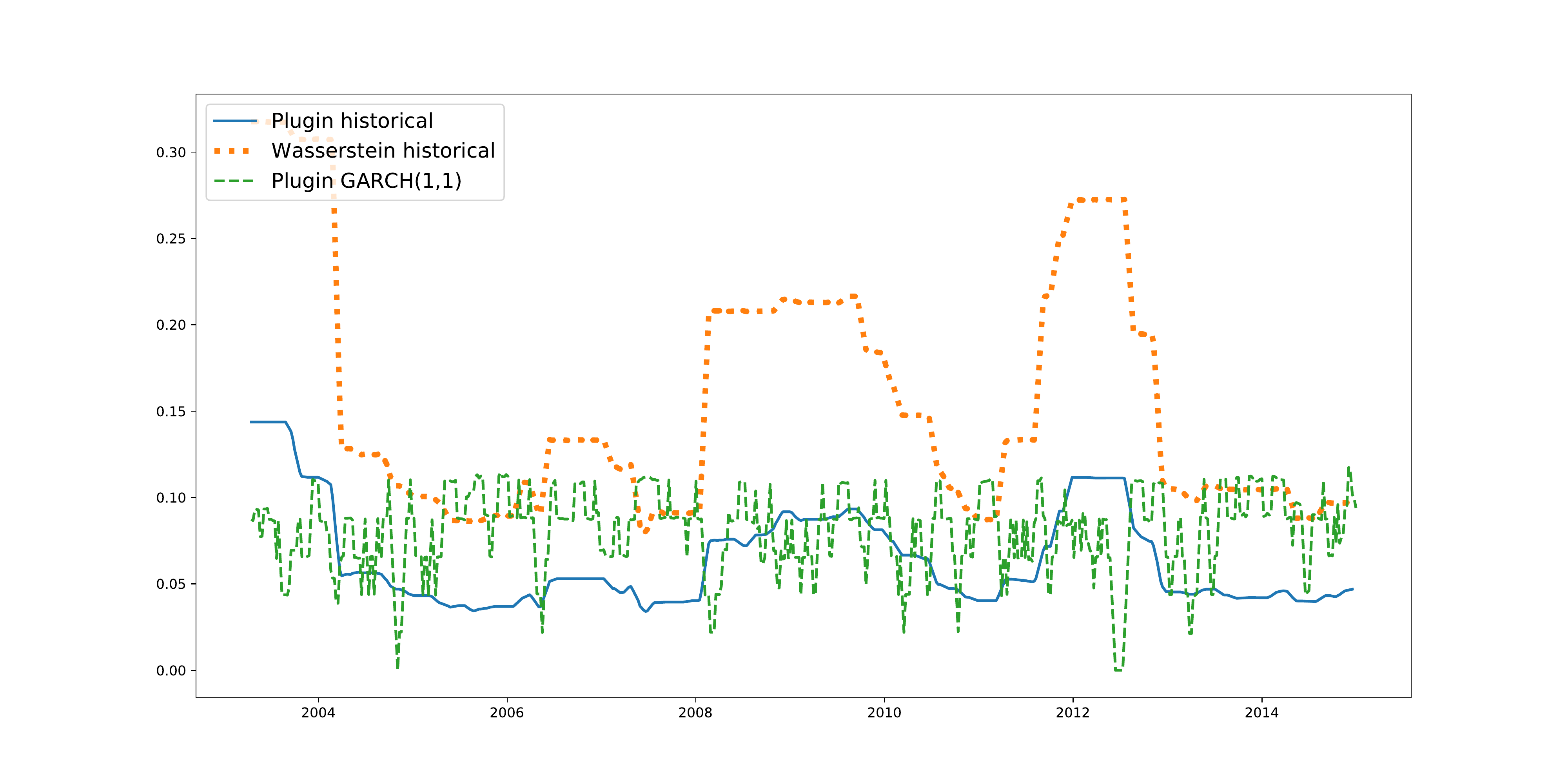}
  \includegraphics[width=\textwidth]{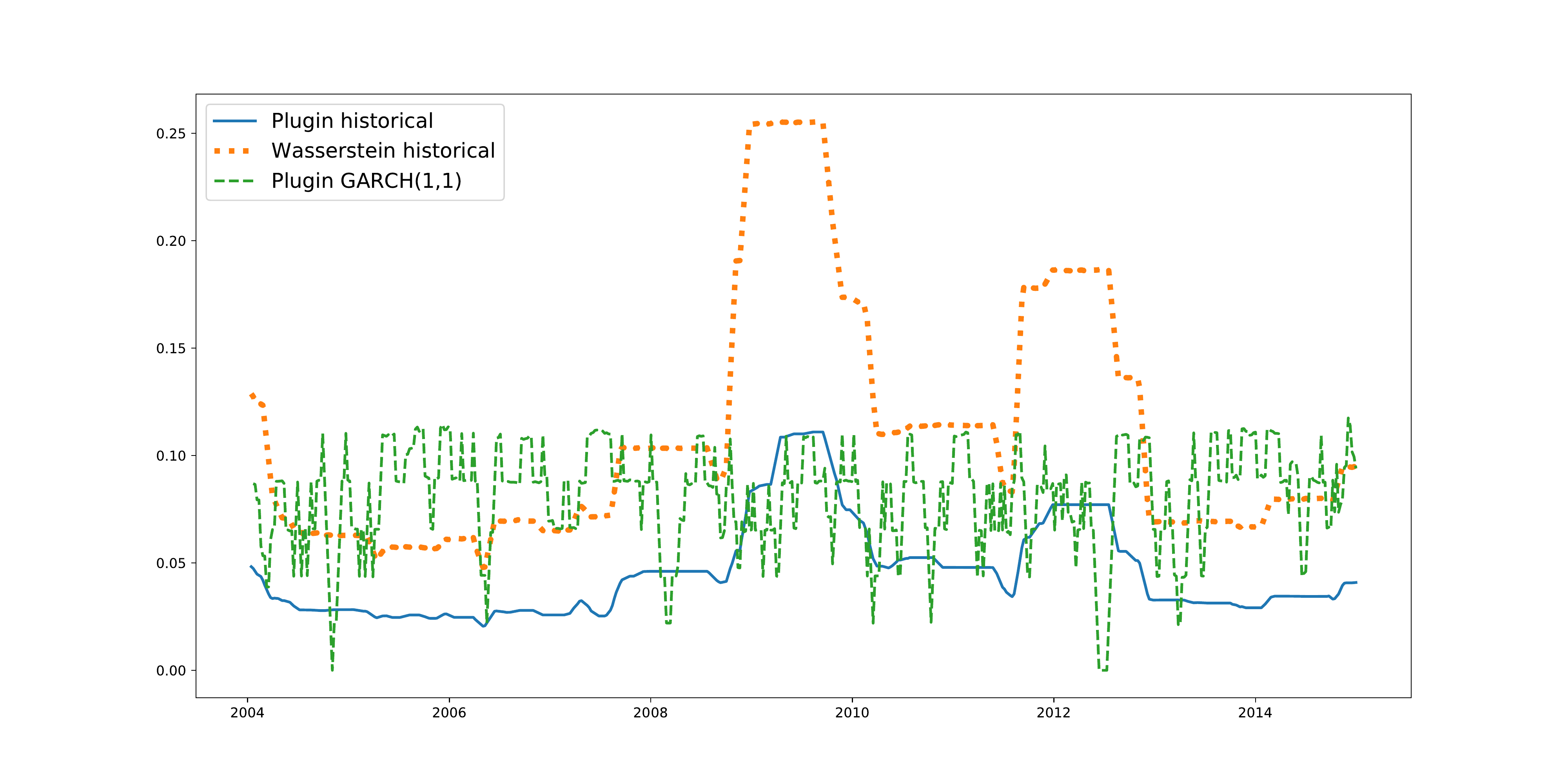}
\caption{Comparison of estimates for $\pi^{\mathrm{AV@R}_{0.95}^{\P}}(|r-1|)$. Estimates use a rolling window of $50$ data points and we plot the average of the last $5$ estimates. The first pane uses DAX30  returns while the second one uses S\&P500 returns.} \label{fig. avar3}
\end{figure}

\begin{figure}[h!]
\centering
  \includegraphics[width=\textwidth]{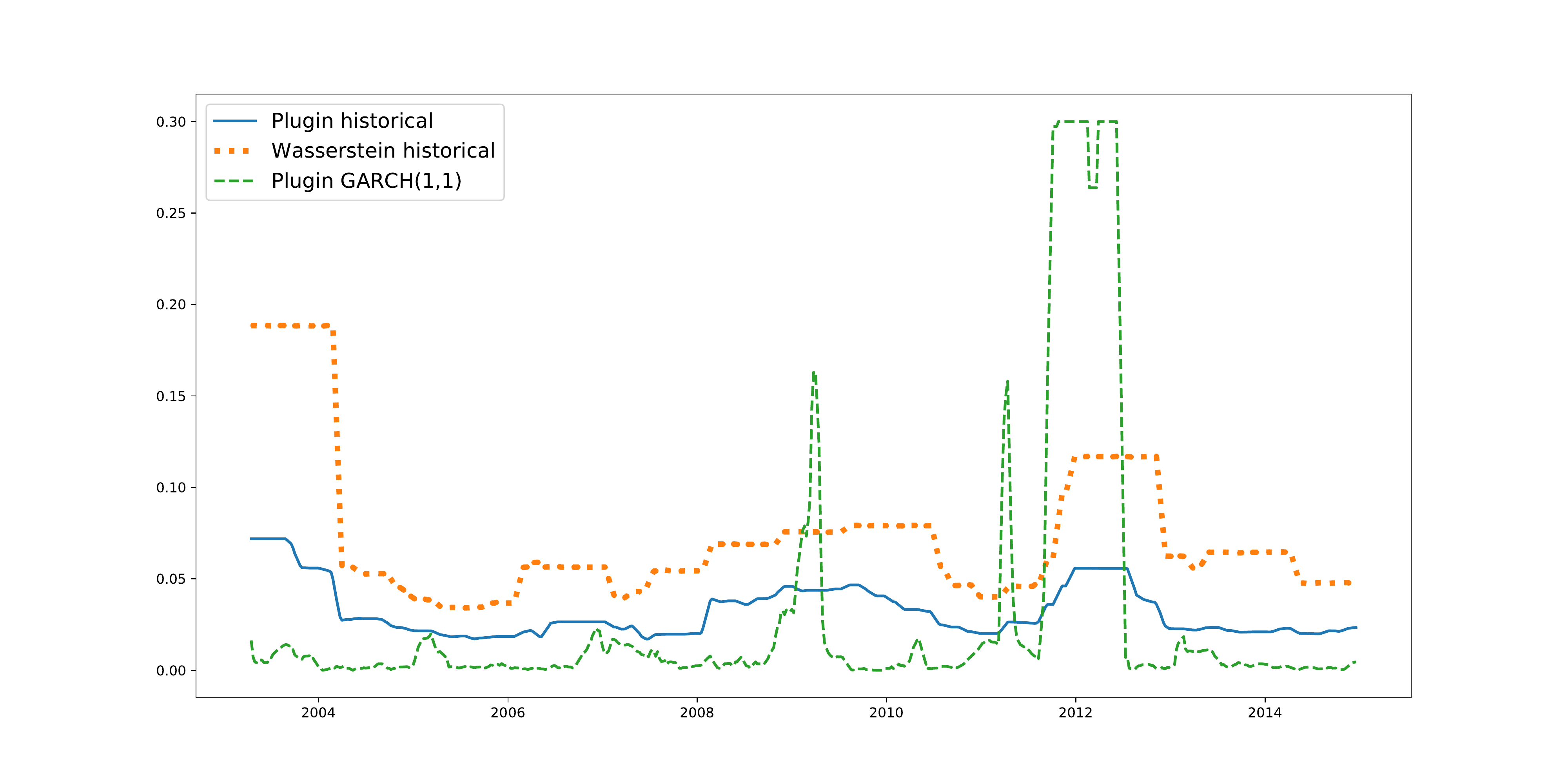}
  \includegraphics[width=\textwidth]{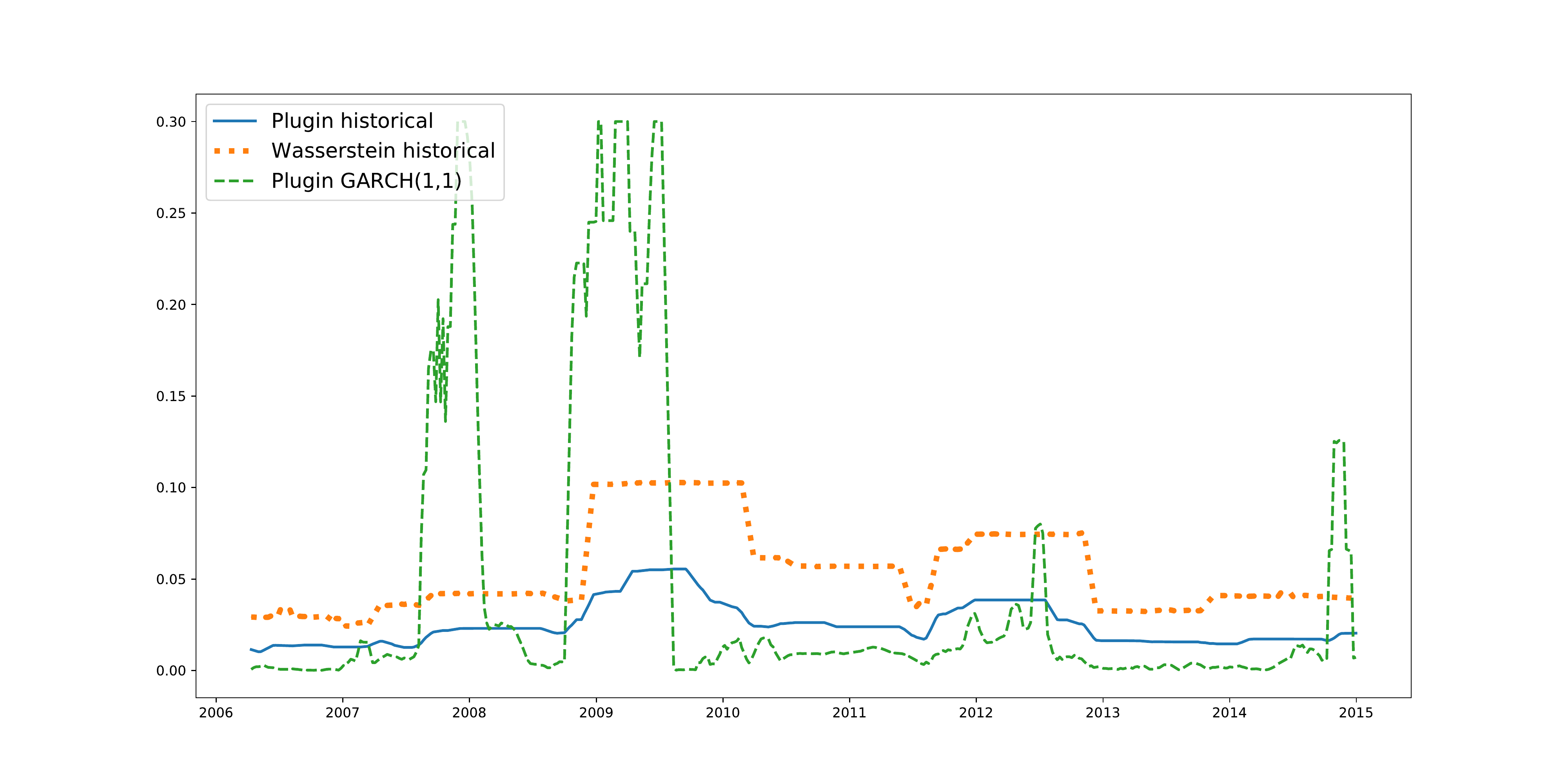}
\caption{Comparison of estimates for $\pi^{\mathrm{AV@R}_{0.95}^{\P}}((r-1)^+)$ estimated on log-returns for the GARCH(1,1) model. Estimates use a rolling window of $50$ data points and we plot the average of the last $5$ estimates. The first pane uses DAX30  returns while the second one uses S\&P500 returns.} \label{fig. avar4}
\end{figure}

\clearpage

\bibliographystyle{abbrvnat}
\bibliography{bib}

\begin{thebibliography}{67}
\providecommand{\natexlab}[1]{#1}
\providecommand{\url}[1]{\texttt{#1}}
\expandafter\ifx\csname urlstyle\endcsname\relax
  \providecommand{\doi}[1]{doi: #1}\else
  \providecommand{\doi}{doi: \begingroup \urlstyle{rm}\Url}\fi

\bibitem[Backhoff et~al.(2020)Backhoff, Bartl, Beiglb{\"o}ck, and
  Wiesel]{BBBW:20}
H.~Backhoff, D.~Bartl, M.~Beiglb{\"o}ck, and J.~Wiesel.
\newblock Estimating processes in adapted {W}asserstein distance.
\newblock arXiv:2002:07261, 2020.

\bibitem[Bartl(2019)]{bartl2016exponential}
D.~Bartl.
\newblock Exponential utility maximization under model uncertainty for
  unbounded endowments.
\newblock \emph{Ann. Appl. Probab.}, 29\penalty0 (1):\penalty0 577--612, 2019.

\bibitem[Basrak et~al.(2002)Basrak, Davis, and Mikosch]{basrak2002regular}
B.~Basrak, R.~A. Davis, and T.~Mikosch.
\newblock Regular variation of garch processes.
\newblock \emph{Stoch. Proc. Appl.}, 99\penalty0 (1):\penalty0 95--115, 2002.

\bibitem[Bertsekas and Shreve(1978)]{bertsekas1978stochastic}
D.~Bertsekas and S.~Shreve.
\newblock \emph{Stochastic optimal control: The discrete time case}, volume~23.
\newblock Academic Press New York, 1978.

\bibitem[Boussama(1998)]{boussama1998ergodicite}
F.~Boussama.
\newblock \emph{Ergodicit{\'e}, m{\'e}lange et estimation dans les modeles
  GARCH}.
\newblock PhD thesis, Paris 7, 1998.

\bibitem[Burzoni et~al.(2019)Burzoni, Frittelli, Hou, Maggis, and
  Ob\l{}\'{o}j]{bfhmo}
M.~Burzoni, M.~Frittelli, Z.~Hou, M.~Maggis, and J.~Ob\l{}\'{o}j.
\newblock Pointwise arbitrage pricing theory in discrete time.
\newblock \emph{Math. Oper. Res.}, 44\penalty0 (3):\penalty0 1034--1057, 2019.

\bibitem[Carassus et~al.(2019)Carassus, Ob\l\'{o}j, and Wiesel]{cow}
L.~Carassus, J.~Ob\l\'{o}j, and J.~Wiesel.
\newblock The robust superreplication problem: A dynamic approach.
\newblock \emph{SIAM J.\ Financial Math.}, 10\penalty0 (4):\penalty0 907--941,
  2019.

\bibitem[Carrasco and Chen(2002)]{carrasco2002mixing}
M.~Carrasco and X.~Chen.
\newblock Mixing and moment properties of various garch and stochastic
  volatility models.
\newblock \emph{Econometric Theory}, 18\penalty0 (1):\penalty0 17--39, 2002.

\bibitem[Casal(2007)]{casal2007set}
A.~Casal.
\newblock Set estimation under convexity type assumptions.
\newblock \emph{Ann. Inst. H. Poincar\'e Probab. Statist.}, 43\penalty0
  (6):\penalty0 763--774, 2007.

\bibitem[Cheridito et~al.(2017)Cheridito, Kupper, and
  Tangpi]{cheridito2017duality}
P.~Cheridito, M.~Kupper, and L.~Tangpi.
\newblock Duality formulas for robust pricing and hedging in discrete time.
\newblock \emph{SIAM J.\ Financial Math.}, 8\penalty0 (1):\penalty0 738--765,
  2017.

\bibitem[Chevalier(1976)]{chevalier}
J.~Chevalier.
\newblock Estimation du support et du contenu de support d'une loi de
  probabilit{\'e}.
\newblock \emph{Ann. Inst. H. Poincar\'e Sec. B}, 12:\penalty0 339--364, 1976.

\bibitem[Cohn(1980)]{cohn2013measure}
D.~Cohn.
\newblock \emph{Measure theory}.
\newblock Birkh{\"a}user, 1980.

\bibitem[Cont et~al.(2010)Cont, Deguest, and Scandolo]{cont2010robustness}
R.~Cont, R.~Deguest, and G.~Scandolo.
\newblock Robustness and sensitivity analysis of risk measurement procedures.
\newblock \emph{Quant. Finance}, 10\penalty0 (6):\penalty0 593--606, 2010.

\bibitem[Cuevas(1990)]{cuevas1990pattern}
A.~Cuevas.
\newblock On pattern analysis in the non-convex case.
\newblock \emph{Kybernetes}, 19\penalty0 (6):\penalty0 26--33, 1990.

\bibitem[Cuevas and Rodr{\'\i}guez-Casal(2004)]{cuevas2004boundary}
A.~Cuevas and A.~Rodr{\'\i}guez-Casal.
\newblock On boundary estimation.
\newblock \emph{Adv. Appl. Prob.}, 36\penalty0 (2):\penalty0 340--354, 2004.

\bibitem[Devroye and Wise(1980)]{devroye1980detection}
L.~Devroye and G.~Wise.
\newblock Detection of abnormal behavior via nonparametric estimation of the
  support.
\newblock \emph{SIAM J. Appl. Math.}, 38\penalty0 (3):\penalty0 480--488, 1980.

\bibitem[Dudley(2004)]{dudley2018real}
R.~M. Dudley.
\newblock \emph{Real analysis and probability}.
\newblock Chapman and Hall/CRC, 2004.

\bibitem[Eckstein et~al.(2018)Eckstein, Kupper, and Pohl]{eckstein2018robust}
S.~Eckstein, M.~Kupper, and M.~Pohl.
\newblock Robust risk aggregation with neural networks.
\newblock arXiv:1811.00304, 2018.

\bibitem[Embrechts et~al.(2015)Embrechts, Wang, and
  Wang]{embrechts2015aggregation}
P.~Embrechts, B.~Wang, and R.~Wang.
\newblock Aggregation-robustness and model uncertainty of regulatory risk
  measures.
\newblock \emph{Finance Stoch.}, 19\penalty0 (4):\penalty0 763--790, 2015.

\bibitem[Embrechts et~al.(2018)Embrechts, Schied, and
  Wang]{embrechts2018robustness}
P.~Embrechts, A.~Schied, and R.~Wang.
\newblock Robustness in the optimization of risk measures.
\newblock arXiv:1809.09268, 2018.

\bibitem[Esfahani and Kuhn(2018)]{esfahani2015data}
P.~Esfahani and D.~Kuhn.
\newblock Data-driven distributionally robust optimization using the
  {W}asserstein metric: Performance guarantees and tractable reformulations.
\newblock \emph{Math. Programming}, 171:\penalty0 115--166, 2018.

\bibitem[F{\"o}llmer and Schied(2004)]{follmer2011stochastic}
H.~F{\"o}llmer and A.~Schied.
\newblock \emph{Stochastic {F}inance: {A}n {I}ntroduction in {D}iscrete
  {T}ime}.
\newblock Walter de Gruyter, 2004.

\bibitem[Fournier and Guillin(2015)]{fournier2015rate}
N.~Fournier and A.~Guillin.
\newblock On the rate of convergence in {W}asserstein distance of the empirical
  measure.
\newblock \emph{Probab. Theory Relat. Fields}, 162\penalty0 (3-4):\penalty0
  707--738, 2015.

\bibitem[Francq and Zako{\"\i}an(2012)]{francq2012strict}
C.~Francq and J.-M. Zako{\"\i}an.
\newblock Strict stationarity testing and estimation of explosive and
  stationary generalized autoregressive conditional heteroscedasticity models.
\newblock \emph{Econometrica}, 80\penalty0 (2):\penalty0 821--861, 2012.

\bibitem[Francq and Zakoian(2019)]{francq2019garch}
C.~Francq and J.-M. Zakoian.
\newblock \emph{GARCH models: structure, statistical inference and financial
  applications}.
\newblock Wiley, 2019.

\bibitem[Gatheral et~al.(2018)Gatheral, Jaisson, and Rosenbaum]{Rough}
J.~Gatheral, T.~Jaisson, and M.~Rosenbaum.
\newblock Volatility is rough.
\newblock \emph{Quant. Finance}, 18\penalty0 (6):\penalty0 933--949, 2018.

\bibitem[Geffroy(1964)]{geffroy}
J.~Geffroy.
\newblock Sur un probl{\`e}me d'estimation g{\'e}om{\'e}trique.
\newblock \emph{Publications de l'Institut de Statistique des Universit{\'e}s
  de Paris}, 13:\penalty0 191--210, 1964.

\bibitem[Grenander(1981)]{grenander1981abstract}
U.~Grenander.
\newblock \emph{Abstract inference}.
\newblock Wiley, 1981.

\bibitem[Hall and Yao(2003)]{hall2003inference}
P.~Hall and Q.~Yao.
\newblock Inference in arch and garch models with heavy--tailed errors.
\newblock \emph{Econometrica}, 71\penalty0 (1):\penalty0 285--317, 2003.

\bibitem[Hampel(1971)]{hampel1971general}
F.~Hampel.
\newblock A general qualitative definition of robustness.
\newblock \emph{Ann. Math. Stat}, 42\penalty0 (6):\penalty0 1887--1896, 1971.

\bibitem[Hardle et~al.(1995)Hardle, Park, and Tsybakov]{hardle1995estimation}
W.~Hardle, B.~Park, and A.~Tsybakov.
\newblock Estimation of non-sharp support boundaries.
\newblock \emph{J. Multivariate Anal.}, 55\penalty0 (2):\penalty0 205--218,
  1995.

\bibitem[He and Ter{\"a}svirta(1999)]{he1999fourth}
C.~He and T.~Ter{\"a}svirta.
\newblock Fourth moment structure of the garch (p, q) process.
\newblock \emph{Econometric Theory}, 15\penalty0 (6):\penalty0 824--846, 1999.

\bibitem[Hou and Ob\l{}\'{o}j(2018)]{hou2015robust}
Z.~Hou and J.~Ob\l{}\'{o}j.
\newblock Robust pricing-hedging dualities in continuous time.
\newblock \emph{Finance Stoch.}, 22:\penalty0 511--567, 2018.

\bibitem[Hubalek and Schachermayer(1998)]{hubalek1998does}
F.~Hubalek and W.~Schachermayer.
\newblock When does convergence of asset price processes imply convergence of
  option prices?
\newblock \emph{Math. Finance}, 8\penalty0 (4):\penalty0 385--403, 1998.

\bibitem[Huber(1996)]{huber1996robust}
P.~Huber.
\newblock \emph{Robust statistical procedures}.
\newblock SIAM, 1996.

\bibitem[Huber and Ronchetti(2009)]{huber2011robust}
P.~Huber and E.~Ronchetti.
\newblock \emph{Robust statistics}.
\newblock Wiley, 2009.

\bibitem[Jouini et~al.(2006)Jouini, Schachermayer, and Touzi]{jouini2006law}
E.~Jouini, W.~Schachermayer, and N.~Touzi.
\newblock Law invariant risk measures have the {F}atou property.
\newblock In \emph{Adv. Math. Econ.}, pages 49--71. Springer, 2006.

\bibitem[Kabanov and Kramkov(1995)]{kabanov1995large}
Y.~Kabanov and D.~Kramkov.
\newblock Large financial markets: asymptotic arbitrage and contiguity.
\newblock \emph{Theory Probab. Appl.}, 39\penalty0 (1):\penalty0 182--187,
  1995.

\bibitem[Kallenberg(2002)]{kallenberg2002foundations}
O.~Kallenberg.
\newblock \emph{Foundations of {M}odern {P}robability}.
\newblock Springer Berlin, 2002.

\bibitem[Kellerer(1982)]{kellerer1982duality}
H.~Kellerer.
\newblock Duality theorems and probability metrics.
\newblock In \emph{Proceedings of the seventh conference on probability theory
  (Brasov, 1982)}, pages 211--220, 1982.

\bibitem[Korostel{\"e}v et~al.(1995)Korostel{\"e}v, Simar, and
  Tsybakov]{korostelev1995efficient}
A.~Korostel{\"e}v, L.~Simar, and A.~Tsybakov.
\newblock Efficient estimation of monotone boundaries.
\newblock \emph{Ann. Statist.}, pages 476--489, 1995.

\bibitem[Kosorok(2008)]{kosorok2008introduction}
M.~Kosorok.
\newblock \emph{Introduction to empirical processes and semiparametric
  inference.}
\newblock Springer, 2008.

\bibitem[Kou et~al.(2013)Kou, Peng, and Heyde]{kou2013external}
S.~Kou, X.~Peng, and C.~Heyde.
\newblock External risk measures and basel accords.
\newblock \emph{Math. Oper. Res.}, 38\penalty0 (3):\penalty0 393--417, 2013.

\bibitem[Kr{\"a}tschmer et~al.(2012)Kr{\"a}tschmer, Schied, and
  Z{\"a}hle]{kratschmer2012qualitative}
V.~Kr{\"a}tschmer, A.~Schied, and H.~Z{\"a}hle.
\newblock Qualitative and infinitesimal robustness of tail-dependent
  statistical functionals.
\newblock \emph{J. Multivariate Anal.}, 103\penalty0 (1):\penalty0 35--47,
  2012.

\bibitem[Kr{\"a}tschmer et~al.(2014)Kr{\"a}tschmer, Schied, and
  Z{\"a}hle]{kratschmer2014comparative}
V.~Kr{\"a}tschmer, A.~Schied, and H.~Z{\"a}hle.
\newblock Comparative and qualitative robustness for law-invariant risk
  measures.
\newblock \emph{Finance Stoch.}, 18\penalty0 (2):\penalty0 271--295, 2014.

\bibitem[Kusuoka(2001)]{kusuoka2001law}
S.~Kusuoka.
\newblock On law invariant coherent risk measures.
\newblock In K.~S. and M.~T., editors, \emph{Adv. Math. Econ.}, pages 83--95.
  Springer, 2001.

\bibitem[Le~Cam and Yang(1990)]{le1990locally}
L.~Le~Cam and G.~Yang.
\newblock \emph{Asymptotics in Statistics}.
\newblock Springer, 1990.

\bibitem[Li and Lin(2017)]{li2017generalized}
X.~Li and Y.~Lin.
\newblock Generalized {W}asserstein distance and weak convergence of sublinear
  expectations.
\newblock \emph{J. Theoret. Probab.}, 30\penalty0 (2):\penalty0 581--593, 2017.

\bibitem[Lindner(2009)]{lindner2009stationarity}
A.~M. Lindner.
\newblock Stationarity, mixing, distributional properties and moments of garch
  (p, q)--processes.
\newblock In \emph{Handbook of financial time series}, pages 43--69. Springer,
  2009.

\bibitem[Mammen and Tsybakov(1995)]{mammen1995asymptotical}
E.~Mammen and A.~Tsybakov.
\newblock Asymptotical minimax recovery of sets with smooth boundaries.
\newblock \emph{Ann. Statist.}, pages 502--524, 1995.

\bibitem[Mokkadem(1990)]{mokkadem1990proprietes}
A.~Mokkadem.
\newblock Propri{\'e}t{\'e}s de m{\'e}lange des processus autor{\'e}gressifs
  polynomiaux.
\newblock \emph{Annales de l'IHP Probabilit{\'e}s et statistiques}, 26\penalty0
  (2):\penalty0 219--260, 1990.

\bibitem[Mykland(2000)]{mykland2000conservative}
P.~Mykland.
\newblock Conservative delta hedging.
\newblock \emph{Ann. Appl. Probab.}, pages 664--683, 2000.

\bibitem[Mykland(2003{\natexlab{a}})]{Mykland_prediction_set}
P.~Mykland.
\newblock Financial options and statistical prediction intervals.
\newblock \emph{Ann. Statist.}, 31\penalty0 (5):\penalty0 1413--1438,
  2003{\natexlab{a}}.

\bibitem[Mykland(2003{\natexlab{b}})]{mykland2003interpolation}
P.~Mykland.
\newblock The interpolation of options.
\newblock \emph{Finance Stoch.}, 7\penalty0 (4):\penalty0 417--432,
  2003{\natexlab{b}}.

\bibitem[Ob\l{}\'{o}j and Wiesel()]{stats2}
J.~Ob\l{}\'{o}j and J.~Wiesel.
\newblock Supplement to ``{R}obust {E}stimation of {S}uperhedging {P}rices".

\bibitem[on~Banking~Supervision(2013)]{basel2013fundamental}
B.~C. on~Banking~Supervision.
\newblock Fundamental review of the trading book: A revised market risk
  framework.
\newblock Technical report, 2013.

\bibitem[on~Banking~Supervision(2019)]{basel2019minimum}
B.~C. on~Banking~Supervision.
\newblock Minimum capital requirements for market risk.
\newblock \emph{Bank for International Settlements, Basel, Switzerland}, 2019.

\bibitem[Pichler(2013)]{pichler2013evaluations}
A.~Pichler.
\newblock Evaluations of risk measures for different probability measures.
\newblock \emph{{SIAM} J. Optim.}, 23\penalty0 (1):\penalty0 530--551, 2013.

\bibitem[Polonik(1995)]{polonik1995measuring}
W.~Polonik.
\newblock Measuring mass concentrations and estimating density contour
  clusters-an excess mass approach.
\newblock \emph{Ann. Statist.}, 23\penalty0 (3):\penalty0 855--881, 1995.

\bibitem[R{\'a}sonyi(2002)]{rasonyi2002note}
M.~R{\'a}sonyi.
\newblock A note on martingale measures with bounded densities.
\newblock \emph{Tr. Mat. Inst. Steklova}, 237\penalty0 (0):\penalty0 212--216,
  2002.

\bibitem[Sartipizadeh and Vincent(2016)]{sartipizadeh2016computing}
H.~Sartipizadeh and T.~Vincent.
\newblock Computing the approximate convex hull in high dimensions.
\newblock arXiv:1603.04422, 2016.

\bibitem[Shorack and Wellner(2009)]{shorack2009empirical}
G.~Shorack and J.~Wellner.
\newblock \emph{Empirical processes with applications to statistics},
  volume~59.
\newblock SIAM, 2009.

\bibitem[Takahashi(1987)]{takahashi1987normalizing}
R.~Takahashi.
\newblock Normalizing constants of a distribution which belongs to the domain
  of attraction of the gumbel distribution.
\newblock \emph{Stat. Probab. Lett.}, 5\penalty0 (3):\penalty0 197--200, 1987.

\bibitem[Trillos and Slep{\v{c}}ev(2015)]{trillos2014rate}
N.~Trillos and D.~Slep{\v{c}}ev.
\newblock On the rate of convergence of empirical measures in
  infinity-transportation distance.
\newblock \emph{Canad. J. Math.}, 67:\penalty0 1358--1383, 2015.

\bibitem[Tsybakov(1997)]{tsybakov1997nonparametric}
A.~Tsybakov.
\newblock On nonparametric estimation of density level sets.
\newblock \emph{Ann. Statist.}, 25\penalty0 (3):\penalty0 948--969, 1997.

\bibitem[Van~der Vaart(1998)]{van1998asymptotic}
A.~Van~der Vaart.
\newblock \emph{Asymptotic statistics}, volume~3.
\newblock Cambridge University Press, 1998.

\bibitem[Villani(2008)]{villani2008optimal}
C.~Villani.
\newblock \emph{Optimal Transport: Old and New}, volume 338.
\newblock Springer, 2008.

\end{thebibliography}

\end{document}